\documentclass[10pt]{article}

\usepackage{fullpage}
\usepackage{enumitem}
\usepackage[page]{appendix}
\usepackage{amssymb}
\usepackage{mathrsfs} 
\usepackage{natbib}
\usepackage{algorithm}
\usepackage{algorithmic}
\usepackage{hyperref}
\usepackage{comment}
\usepackage{multirow}
\usepackage{amsthm}
\usepackage{mathtools}
\usepackage{epstopdf,xcolor}
\usepackage{bbm}
\usepackage{dsfont}
\usepackage{pifont}
\usepackage{subcaption}
\usepackage{tikz, subcaption}
\usetikzlibrary{shapes, arrows, positioning}
\usepackage{authblk}
\usepackage{listings}
\usepackage{framed}
\usepackage{titletoc}

\usepackage{tabularx,booktabs,array,rotating} % in preamble
\newcolumntype{Y}{>{\raggedright\arraybackslash}X} % ragged-right, auto-wrapping
\allowdisplaybreaks

\definecolor{royalblue}{rgb}{0.25, 0.41, 0.88}

\newtheorem{theorem}{Theorem}
\newtheorem{lemma}{Lemma}

\newtheorem{proposition}{Proposition}
\newtheorem{corollary}{Corollary}
\newtheorem{definition}{Definition}
\newtheorem{example}{Example}
\newtheorem{remark}{Remark}

\newtheorem{assumption}{Assumption}

\newcommand{\E}{\mathbb{E}}

\usepackage{chngcntr}
\usepackage{apptools}
\AtAppendix{\counterwithin{theorem}{subsection}}
\AtAppendix{\counterwithin{lemma}{subsection}}
\AtAppendix{\counterwithin{conjecture}{subsection}}
\AtAppendix{\counterwithin{proposition}{subsection}}
\AtAppendix{\counterwithin{corollary}{subsection}}
\AtAppendix{\counterwithin{definition}{subsection}}
\AtAppendix{\counterwithin{example}{subsection}}
\AtAppendix{\counterwithin{remark}{subsection}}
\AtAppendix{\counterwithin{claim}{subsection}}
\AtAppendix{\counterwithin{assumption}{subsection}}

\newcommand{\independent}{\perp\mkern-9.5mu\perp}

\usepackage[normalem]{ulem}
\newcommand{\stkout}[1]{\ifmmode\text{\sout{\ensuremath{#1}}}\else\sout{#1}\fi}

\setlength{\parindent}{0cm}
\setlength{\parskip}{2mm}

\usepackage{hyperref}
\hypersetup{
    colorlinks= true,
    linkcolor= black,
    filecolor= cyan,      
    %urlcolor= cyan,
    citecolor= royalblue
}

\title{\bf \Large Combining Experimental and Observational Data for Identification and Estimation of Long-Term Causal Effects\\~\\}  

\author[1]{AmirEmad Ghassami\thanks{Corresponding author (\texttt{ghassami@bu.edu)}}}
\author[1]{Chang Liu}
\author[2]{Alan Yang}
\author[3]{David Richardson}
\author[4]{Ilya Shpitser}
\author[5]{Eric Tchetgen Tchetgen}

\affil[1]{Department of Mathematics and Statistics, Boston University}
\affil[2]{Department of Electrical Engineering, Stanford University}
\affil[3]{Department of Environmental and Occupational Health, University of California, Irvine}
\affil[4]{Department of Computer Science, Johns Hopkins University}
\affil[5]{Department of Statistics and Data Science, University of Pennsylvania}

\date{\vspace{-3mm}First Version: January 26, 2022; Current Version: September 28, 2025\vspace{-3mm}}

\begin{document}

\maketitle

\begin{abstract}
We study identifying and estimating the causal effect of a treatment variable on a long-term outcome using data from an observational and an experimental domain. The observational data are subject to unobserved confounding. Furthermore, subjects in the experiment are only followed for a short period; thus, long-term effects are unobserved, though short-term effects are available. Consequently, neither data source alone suffices for causal inference on the long-term outcome, necessitating a principled fusion of the two. We propose three approaches for data fusion for the purpose of identifying and estimating the causal effect. The first assumes equal confounding bias for short-term and long-term outcomes. The second weakens this assumption by leveraging an observed confounder for which the short-term and long-term potential outcomes share the same partial additive association with this confounder. The third approach employs proxy variables of the latent confounder of the treatment-outcome relationship, extending the proximal causal inference framework to the data fusion setting. For each approach, we develop influence function-based estimators and analyze their robustness properties. We illustrate our methods by estimating the effect of class size on 8th-grade SAT scores using data from the Project STAR experiment combined with observational data from the Early Childhood Longitudinal Study.\\

\noindent \textbf{Keywords:} Causal Inference; Data Fusion; Equi-Confounding Bias; Proximal Causal Inference; Influence Functions\\
\end{abstract}	

\section{Introduction}
\label{sec:intro}

The gold standard for estimating the causal effect of a treatment variable on an outcome variable of interest is to conduct a randomized experiment. This is due to the fact that randomization ensures (conditional) exchangeability (also known as ignorability), which implies that the treatment and control groups are comparable. However, conducting such experiments can be costly and time-consuming, often resulting in limited or incomplete experimental data. In contrast, large-scale observational data is frequently available, yet there may be unobserved confounders of the treatment-outcome relationship in the setting, rendering causal inference impossible without positing extra assumptions. Given the limitations of both data sources, a natural question arises: \emph{Can we improve our causal inference by combining experimental and observational data?}

In some applications, experiments can be run on the target variable of interest and data fusion is merely for the sake of improving estimation efficiency \citep{kallus2020role}. However, in many settings, the experiment is run with another target variable different from the primary target. Especially, the practicalities of conducting an experiment with human subjects dictates that observations on subjects (and their compliance with the study protocol) only extends over a relatively short period of time from enrollment in the experiment.  Hence, information is often missing on long-term outcomes (primary target) in randomized experiments. In this case, the experimental data alone cannot be used for estimating the causal effect of the treatment on the long-term outcome variable as it only includes the short-term outcome variable, and if the observational domain is confounded, the observational data alone cannot be used for causal inference either. This is the setup that we focus on in this work. An example of such a setup, discussed by \cite{athey2020combining}, is the estimation of the effect of class size on eighth-grade test scores in New York schools, where we have access to observational data from New York schools, and Project STAR \citep{DVN/SIWH9F_2008} serves as the experimental dataset, including test scores only through third grade.

In their work, \cite{athey2020combining} proposed a method for combining data from experimental and observational domains to enable causal inference. They showed that, under exchangeability-type assumptions for ensuring internal and external validity of the experimental data, along with an extra novel assumption termed latent unconfoundedness, the average treatment effect (ATE) on the long-term outcome in the observational data can be identified. In this paper, we first review their proposed approach and discuss the latent unconfoundedness assumption. We then propose three alternative approaches for data fusion to estimate ATE as well as the effect of treatment on the treated (ETT). 

Our first proposed data fusion approach is based on assuming equal confounding bias for the short-term and long-term outcomes, which we refer to as the \emph{equi-confounding assumption}. We consider both additive and quantile-quantile equi-confounding. Roughly speaking, the equi-confounding assumption posits that the magnitude of the confounding bias for the short-term and the long-term outcome variables are the same. This approach draws inspiration from the literature of difference-in-differences (DiD) framework \citep{card1990impact, angrist2008mostly}, change-in-changes framework \citep{athey2006identification}, and negative control-based causal inference \citep{lipsitch2010negative,tchetgen2014control,sofer2016negative}. 
Our second proposed data fusion approach is based on assuming the existence of an observed confounder in the system called the \emph{bespoke instrumental variable} (BSIV), such that the short-term and long-term potential outcome variables have the same partial additive association with that confounder. The existence of such a variable allows us to relax the equi-confounding assumption in the first approach by removing the need for assuming restrictions on the selection bias for the short-term and the long-term outcomes. Hence, in the presence of a BSIV in the setting, the researcher should prefer the use of this method over the equi-confounding method. This method builds on the BSIV causal inference framework recently introduced by \cite{richardson2022bespoke}.
Finally, our third proposed data fusion approach relies on the presence of a \emph{proxy variable} of the latent confounder, which is independent of the outcome variables conditional on the treatment and all observed and unobserved confounder variables. This approach extends the proximal causal inference framework \citep{miao2018identifying,tchetgen2020introduction,cui2020semiparametric} to the data fusion setting, but requires only a single proxy variable--unlike the standard proximal framework, which relies on two.

We formally establish that the unbiased information encoded in the experimental data \emph{enables the relaxation of key identification assumptions} in standard DiD methods, the original BSIV approach, and proximal causal inference methods:
Standard DiD methods applied to our setting would require that the treatment cannot causally impact the short-term outcome. However, as the short-term outcome is a post-treatment variable, this assumption may not hold. We demonstrate that by leveraging experimental data, we can relax this requirement by anchoring the short-term causal effect at that observed in the experimental sample, thus enabling identification of the treatment’s impact on the long-term outcome.
Similarly, in the original BSIV causal inference framework, it is assumed that there exists a reference domain in which the treatment is not applied and hence, the outcome in that domain in fact represents the potential outcome under no treatments. In our setup, we relax this assumption by utilizing the internal validity of the experimental domain.
Likewise, in the proximal approach, the standard framework would require exclusion restriction of no treatment effect on the short-term outcome. We demonstrate that by leveraging experimental data, such requirement can also be relaxed.

To the best of our knowledge, we are the first to propose nonparametric identification methods for long-term causal effects that allow for latent confounders influencing the treatment as well as both short- and long-term outcomes. After the release of the first draft of our work, \cite{imbens2022long} also proposed an approach for identifying long-term causal effects based on the proximal causal inference framework. We discuss that work in the Supplementary Materials. Its main identification result relies on stronger assumptions; however, the authors also present an extension to their setup that resembles our proximal data fusion approach.  Moreover, their method requires access to three short-term outcome variables, each influenced only by its immediate predecessor--an assumption that may not be feasible in many real-world applications.

The rest of the paper is organized as follows. We describe the model and parameters of interest in Section \ref{sec:desc}. In Section \ref{sec:athey}, we review the approach of \cite{athey2020combining} and discuss their latent unconfoundedness assumption. Our proposed alternative methods, the equi-confounding, BSIV, and proximal causal data fusion approaches, are presented in Sections \ref{sec:AltIDAss}, \ref{sec:bsiv}, and \ref{sec:proximal}, respectively. In Section \ref{sec:estimation}, we focus on the estimation aspect of the parameters of interest, and propose influence function-based estimation strategies for each of our data fusion approaches and study the robustness properties of the proposed estimators.
We evaluated our proposed methods on synthetic data in Section \ref{sec:simulation}. In Section \ref{sec:application}, we apply our proposed methods to estimate the effect of class size on long-term educational outcomes, measured by 8th-grade SAT scores, by combining data from the Project STAR experiment with observational data from the Early Childhood Longitudinal Study \citep{tourangeau2009eclsk}. Our concluding remarks are provided in Section \ref{sec:conc}. All the proofs are provided in the Supplementary Materials.

\section{Problem Description}
\label{sec:desc}

Let \( A \) denote a binary treatment variable, \( X \) the vector of pre-treatment covariates, \( M \) the short-term outcome, \( Y \) the long-term outcome, and \( U \) the set of unobserved (latent) confounders of the treatment--outcome relationship. Let \( G \in \{O, E\} \) indicate the data domain, with \( G = O \) corresponding to the observational domain and \( G = E \) to the experimental domain. We denote the set of all observed variables by \( V \). The potential short-term and long-term outcomes under treatment level \( a \in \{0,1\} \) are denoted by \( M^{(a)} \) and \( Y^{(a)} \), respectively. We observe independent and identically distributed (i.i.d.) data from \( \{A, X, M\} \) in the experimental domain, where treatment is (conditionally) randomized. In contrast, i.i.d. data from \( \{A, X, M, Y\} \) are available in the observational domain, where both short- and long-term outcomes are observed, but the treatment--outcome relationship is confounded by the unobserved variables \( U \). Thus, while the experimental data provide unconfounded information on the short-term outcome, only the observational data contain information about the long-term outcome, albeit subject to confounding.

Our objective is to identify two causal parameters of interest in the observational domain: the average treatment effect (ATE) in the observational population,
\[
\theta_{\text{ATE}} = \E\left[Y^{(1)} - Y^{(0)} \mid G = O\right],
\]
and the effect of treatment on the treated (ETT),
\[
\theta_{\text{ETT}} = \E\left[Y^{(1)} - Y^{(0)} \mid A = 1,\, G = O\right].
\]
Clearly, these parameters are not identifiable from the experimental data alone, as the long-term outcome \( Y \) is unobserved in that domain, and also the information regarding the treatment effect in the experimental domain may be not relevant to the observational domain.
Moreover, due to potential unobserved confounding, the parameters are also not identifiable from the observational data alone without additional assumptions. To proceed, we assume that the data-generating process satisfies the following conditions.

\begin{figure}[t!]
\centering
\includegraphics[scale=0.8]{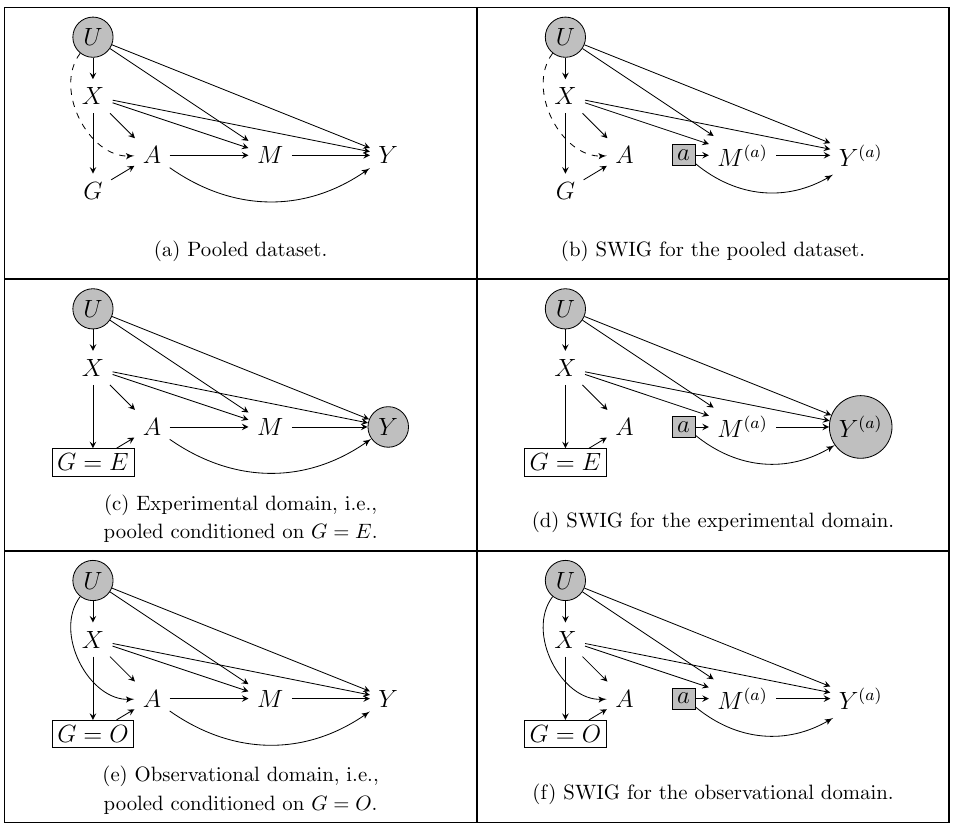}
\caption{A graphical model that satisfies Assumptions \ref{assumption:exchangeU}-\ref{assumption:ExVal}. Dashed edge is \emph{context-specific}: it only exists in the observational domain. Variables in gray circles are unobserved; variables in white boxes are conditioned on.}
\label{fig:GMP}
\end{figure}

\begin{assumption}[$\{X,U,G\}$--conditional exchangeability w.r.t. $A$]
\label{assumption:exchangeU}
\[
A\independent\{Y^{(a)},M^{(a)}\}\mid \{X,U,G\}\qquad \forall a\in\{0,1\}.
\]
\end{assumption}
\begin{assumption}[$X$--conditional exchangeability  w.r.t. $A$ in the experimental domain]
\label{assumption:IntVal}
\[
A\independent\{Y^{(a)},M^{(a)}\}\mid \{X,G=E\}\qquad \forall a\in\{0,1\}.
\]	
\end{assumption}
\begin{assumption}[$X$--conditional exchangeability  w.r.t. $G$]
\label{assumption:ExVal}
\[
G\independent\{Y^{(a)},M^{(a)}\}\mid X\qquad \forall a\in\{0,1\}.
\]	
\end{assumption}
Assumption \ref{assumption:exchangeU} is a much milder version of the standard conditional exchangeability assumption \citep{hernan2020causal} as it is stated conditional on any unobserved confounder $U$.
Assumption \ref{assumption:IntVal} is a context-specific conditional independence assumption. That is, a conditional independence which is only realized conditional on a certain event.
We also posit the standard consistency and positivity assumptions. Figure \ref{fig:GMP} represents a graphical model that satisfies Assumptions \ref{assumption:exchangeU}-\ref{assumption:ExVal}. Variables in gray circle are unobserved.
 Figures \ref{fig:GMP} (a) and \ref{fig:GMP} (b) represent the pooled dataset, Figures \ref{fig:GMP} (c) and \ref{fig:GMP} (d) represent the experimental dataset, which is the pooled data set conditioned on $G=E$, and Figures \ref{fig:GMP} (e) and \ref{fig:GMP} (f) represent the observational dataset, which is the pooled data set conditioned on $G=O$. Figures \ref{fig:GMP} (b), \ref{fig:GMP} (d), and \ref{fig:GMP} (f) represent single world intervention graphs (SWIGs), which are graphical models obtained from the original graphs, which include potential outcome variables, and hence, can be used for representing conditional independences involving potential outcome variables. See \citep{richardson2013single} for the definition and details.

Assumptions \ref{assumption:IntVal} and \ref{assumption:ExVal} are not sufficient for identification of the causal parameters of interest. In the following, we first review an approach proposed by \cite{athey2020combining} for identification based on an extra assumption called latent unconfoundedness, and then propose our alternative approaches.

\section{\cite{athey2020combining} Approach}
\label{sec:athey}

\cite{athey2020combining} introduced an approach for combining experimental and observational data to identify the average treatment effect parameter, $\theta_{\text{ATE}}$. Their approach relies on Assumptions \ref{assumption:IntVal} and \ref{assumption:ExVal}, along with an additional identifying condition stated below. As we show in this section, these assumptions also suffice to identify the effect of treatment on the treated, $\theta_{\text{ETT}}$.

\begin{assumption}[Latent Unconfoundedness]
\label{assumption:LaUn}
\[
A \independent Y^{(a)} \mid \{X, M^{(a)}, G = O\} \qquad \text{for all } a \in \{0,1\}.
\]	
\end{assumption}

\begin{theorem}
\label{thm:athey}
Under Assumptions \ref{assumption:IntVal}--\ref{assumption:LaUn}, the parameters $\theta_{\text{ATE}}$ and $\theta_{\text{ETT}}$ are identified. The identification formulae are presented in Supplementary Material \ref{sec:athey::supp}.
\end{theorem}
\begin{remark}
One can show that under Assumptions \ref{assumption:IntVal}--\ref{assumption:LaUn}, the full distribution \( p(Y^{(a)} \mid G = O) \) is identified. Hence, the latent unconfoundedness assumption in \citep{athey2020combining} is sufficiently strong to identify a broad class of causal estimands beyond mean contrasts, including functionals of the entire counterfactual distribution. The same holds for our proposed method in Section~\ref{sec:QQCondEqui} and the proximal data fusion approach in Section~\ref{sec:proximal}.
\end{remark}

\noindent
{\bf Discussion of Assumption \ref{assumption:LaUn}.}
Assume the data-generating process is governed by a nonparametric structural equation model. We consider two cases:
\begin{enumerate}
\item The distribution is \emph{faithful} \citep{spirtes2000causation} to the causal model. That is, all conditional independence relations are reflected in causal relations among the variables and no conditional independence arises from specific cancellations or alignment of the modules in the causal model. Under faithfulness, a graphical model is a reliable diagnostic tool for the existence of a \emph{natural sequential data generating processes}, in which edges represent direct causal relations. In this setting, Assumption~\ref{assumption:LaUn} implies the absence of a directed edge from the latent variable \( U \) to the outcome \( Y \). In other words, there must be no unobserved confounder that causally influences both \( A \) and \( Y \). Consequently, Assumption \ref{assumption:LaUn} may be overly strong for many real-world settings, where violations of faithfulness are unlikely.
\item The distribution violates faithfulness. In this case, it is possible for \( A \) and \( Y \) to share a latent confounder, yet still satisfy Assumption \ref{assumption:LaUn} due to structural cancellation. However, such scenarios are typically regarded as pathological or non-generic. An example of this case is discussed below.
\end{enumerate}
\begin{example}
\label{ex:1}
Suppose that in the observational domain, variables $M$ and $Y$ are generated via the following linear structural equation model:
\[
M = \theta_M A+\gamma_M X+U,\qquad\qquad Y = \theta_Y A+\zeta_Y M+\gamma_Y X	+\delta_Y U +\epsilon,
\]
such that $A\independent\epsilon\mid \{X,U,G=O\}$. 
Importantly, the non-generic assumption here is excluding an independent exogenous noise from the structural equation corresponding to the variable $M$.
Note that $Y^{(a)}=\theta_Y a+\zeta_Y M^{(a)}+\gamma_Y X+\delta_Y U+\epsilon = (\theta_Y+\zeta_Y\theta_M) a+(\gamma_Y+\zeta_Y\gamma_M) X+(\delta_Y+\zeta_Y) U+\epsilon$.
Therefore, the assumption $A\independent\epsilon\mid \{X,U,G=O\}$ implies that $A\independent Y^{(a)}\mid \{X,U,G=O\}$. Also, $M^{(a)}=\theta_M a+\gamma_M X+U$ and hence $U=\theta_M a+\gamma_M X-M^{(a)}$. Therefore, $A\independent Y^{(a)}\mid \{X,U,G=O\}$ implies that $A\independent Y^{(a)}\mid \{X,M^{(a)},G=O\}$, which is Assumption \ref{assumption:LaUn}.
Hence, the assumption basically implies that adjusting for $\{M^{(a)},X\}$ is equivalent to adjusting for $\{U,X\}$, i.e., roughly speaking, observing $\{M^{(a)},X\}$ is equivalent to observing $\{U,X\}$.
Note that excluding an independent exogenous noise from the structural equation corresponding to the variable $M$ is essential for this equivalence to hold. Here, the deterministic dependence of $M$ on $X$, $A$ and $U$ violates faithfulness.
\end{example}

In the following three sections, we present our alternative identification approaches. In the main text, we focus on results for the parameter $\theta_{\text{ETT}}$, while the corresponding results for $\theta_{\text{ATE}}$ are provided in the Supplementary Materials.

\section{Approach 1: Equi-Confounding Data Fusion}
\label{sec:AltIDAss}

Our first proposed approach is based on the assumption that the conditional confounding bias is equal for the short-term and long-term outcomes. This restriction is inspired by assumptions used in identification strategies involving negative outcome controls \citep{lipsitch2010negative,tchetgen2014control,sofer2016negative}, and is conceptually similar to the parallel trends assumption in the difference-in-differences (DiD) framework \citep{card1990impact,angrist2008mostly}.

\begin{assumption}[Conditional Additive Equi-Confounding Bias]
\label{assumption:CondEqui}
With probability one,
\begin{align*}
\E[M^{(0)}\mid X,A=0,&G=O]-\E[M^{(0)}\mid  X,A=1,G=O]\\
&=\E[Y^{(0)}\mid  X,A=0,G=O]-\E[Y^{(0)}\mid  X,A=1,G=O].
\end{align*}	
\end{assumption}
Unlike the standard conditional exchangeability assumption, Assumption~\ref{assumption:CondEqui} allows for the presence of latent confounders. It does not require that the treated and control groups have identical conditional potential outcome distributions. Rather, it asserts that the difference of the expected value of the short-term potential outcome across these two groups (that is the bias due to confounding on an additive scale) is the same as that of the long-term potential outcome variable. Equivalently, it implies that the expected change from the short-term to long-term potential outcome is the same in each stratum of $X$ across the treated and control groups.
\begin{example}
Assumption~\ref{assumption:CondEqui} is satisfied if the data are generated from the following model:
\[
M = g_M(A, X) + f_M(X, U) + \epsilon_M, 
\qquad \qquad Y = g_Y(A, X) + f_Y(X, M, U) + \epsilon_Y,
\]
where $\epsilon_M$ and $\epsilon_Y$ are independent noise terms and we have $\E[f(X,M,U)\mid X,A=1]=\E[f(X,M,U)\mid X,A=0]$, where $f(X,M,U)=f_Y(X,M,Y)-f_M(X,U)$, and $g_M$, $f_M$, $g_Y$, and $f_Y$ can be stochastic functions.
\end{example}

\begin{remark}
As noted above, Assumption~\ref{assumption:CondEqui} is similar in flavor to the parallel trends assumption in the DiD framework. However, in that setting, the counterpart of $M$ is assumed to be a pre-treatment variable, and thus cannot be causally affected by the treatment. In contrast, in our setting, $M$ is post-treatment and may be influenced by $A$.
\end{remark}

We now present our identification result for $\theta_{\text{ETT}}$ under conditional equi-confounding assumption.

\begin{theorem}
\label{thm:equiETTCond}
Under Assumptions~\ref{assumption:IntVal}, \ref{assumption:ExVal}, and~\ref{assumption:CondEqui}, the parameter $\theta_{\text{ETT}}$ is identified as:
{\footnotesize
\begin{equation}
\label{eq:param1}
\begin{aligned}
&\theta_{\text{ETT}} 
= \E[Y \mid A=1, G=O] 
+ \E\left[ \frac{1}{p(A=1 \mid X, G=O)} \E[M \mid X, A=0, G=O] \mid A=1, G=O \right] \\
&\quad - \E\left[ \frac{1}{p(A=1 \mid X, G=O)} \E[M \mid X, A=0, G=E] + \E[Y \mid X, A=0, G=O] \mid A=1, G=O \right].
\end{aligned}
\end{equation}
}
\end{theorem}
The corresponding identification result for the parameter $\theta_{\text{ATE}}$ is provided in Supplementary Material~\ref{sec:AltIDAss::supp}. 

\begin{remark}
We have provided the unconditional counterpart of the conditional equi-confounding bias assumption in Assumption~\ref{assumption:Equi} and the corresponding identification result in Supplementary Material~\ref{sec:AltIDAss::supp}. Importantly, Assumptions~\ref{assumption:Equi} and~\ref{assumption:CondEqui} are not nested and do not imply one another. Each allows for different forms of heterogeneity and interactions among variables. Assumption \ref{assumption:Equi} posits that the confounding bias for the short-term and the long-term outcome variables are the same. However, it might be the case that the researcher does not believe that the bias for the two outcomes are equal marginally, but this equality holds in each stratum of $X$. In this case, Assumption \ref{assumption:CondEqui} is more appropriate. In this sense, Assumption~\ref{assumption:CondEqui} may be viewed as a generally weaker assumption.
\end{remark}

\subsection{Quantile-Quantile Equi-Confounding Data Fusion}
\label{sec:QQCondEqui}

We note that the (conditional) additive equi-confounding bias assumption may be restrictive, as it requires the short-term and long-term outcomes to be measured on the same scale. While this is not a limitation in our specific application---where \( M \) and \( Y \) represent short- and long-term versions of the same outcome---it may be problematic in other settings. To address this, we propose a generalization of the additive equi-confounding framework, inspired by the changes-in-changes approach in panel data analysis \citep{athey2006identification} and its analogue in the negative control inference literature \citep{sofer2016negative}. This generalization also enables identification of causal parameters beyond mean-based quantities such as ATE and ETT. We present this extension in Supplementary Material~\ref{sec:AltIDAss::supp}.

\section{Approach 2: Bespoke IV Data Fusion}
\label{sec:bsiv}

In this section, we show that the (conditional) equi-confounding bias assumption can be relaxed in the presence of a variable \( B \) among the observed pre-treatment covariates, provided it satisfies a specific condition regarding its association with the potential outcomes. The key idea is that, under such conditions, \( B \) can play a role analogous to that of an instrumental variable. This approach is inspired by the recently proposed Bespoke Instrumental Variable (BSIV) framework of \cite{richardson2022bespoke}. Following that work, we refer to the variable \( B \) as a BSIV and refer to the resulting method as the BSIV data fusion approach. We present the framework for a binary BSIV, though as discussed in Remark~\ref{rmk:nonbinary_BSIV}, the approach naturally extends to non-binary variables. To maintain consistency with previous notation, we denote the remaining observed covariates by \( X \). The requirements on the BSIV variable \( B \) are as follows.
\begin{assumption}[BSIV Relevance]
\label{assumption:BSIVrelevance}
\[
\E[A \mid B=0, X, G=O] \neq \E[A \mid B=1, X, G=O].
\]
\end{assumption}
\begin{assumption}[BSIV Partial Additive Equi-Association]
\label{assumption:BSEqui}
With probability one,
\begin{align*}
\E[M^{(0)}\mid X,B=1,&G=O]-\E[M^{(0)}\mid  X,B=0,G=O]\\
&=\E[Y^{(0)}\mid  X,B=1,G=O]-\E[Y^{(0)}\mid  X,B=0,G=O].
\end{align*}	
\end{assumption}
Assumption~\ref{assumption:BSIVrelevance} is the standard, testable instrumental variable (IV) relevance condition, common to all IV frameworks. Assumption~\ref{assumption:BSEqui} posits that the short-term and long-term potential outcomes share the same partial additive association with the covariate \( B \). This condition is a weaker version of Assumption~\ref{assumption:CondEqui}. While Assumption~\ref{assumption:CondEqui} requires the selection bias to be equal for both \( M \) and \( Y \), Assumption~\ref{assumption:BSEqui} does not constrain the selection bias directly. Instead, it focuses solely on the additive association of the potential outcomes with the observed variable \( B \), thereby giving the researcher flexibility to select a variable they believe is most likely to satisfy this condition. A similar assumption appears in the framework of \cite{richardson2022bespoke}. However, their setting assumes that, in the reference population, treatment assignment is withheld through external intervention, so the counterfactual outcome under \( A = 0 \) coincides with the observed outcome.

Note that Assumption~\ref{assumption:BSEqui} can be equivalently expressed as:
\[
\E[M^{(0)} - Y^{(0)} \mid X, B=1, G=O] 
= \E[M^{(0)} - Y^{(0)} \mid X, B=0, G=O].
\]
This resembles the restriction imposed on an instrumental variable in a standard IV framework when the outcome is defined as the difference \( Y - M \). However, unlike a classical IV, the covariate \( B \) in our setting does not satisfy the two key assumptions of standard IV analysis---namely, unconfoundedness and the exclusion restriction. This motivates the term \emph{bespoke instrumental variable} for \( B \).

To investigate identification, we consider the following nonparametric reparametrization of the outcome regression function, which is a variation of the approach proposed in \citep{robins1994correcting, tchetgen2013alternative}:
\begin{align*}
&\E[Y - M \mid A = a, B = b, X, G = O] \\
& = \beta_1(b, X) a + \gamma_0(b, X) \{a - p(A = 1 \mid B = b, X, G = O)\} + \E[Y^{(0)} - M^{(0)} \mid B = b, X, G = O],
\end{align*}
where
\begin{align*}
\beta_1(B, X) &\coloneqq \E\big[ \{Y^{(1)} - M^{(1)}\} - \{Y^{(0)} - M^{(0)}\} \mid A = 1, B, X, G = O \big], \\
\gamma_0(B, X) &\coloneqq \E[Y^{(0)} - M^{(0)} \mid A = 1, B, X, G = O] - \E[Y^{(0)} - M^{(0)} \mid A = 0, B, X, G = O].
\end{align*}

We use this reparametrization to identify the causal effect contrast function \( \beta_1 \), and in turn the parameter \( \theta_{\text{ETT}} \). An additional reparametrization is used to identify \( \theta_{\text{ATE}} \), as discussed in Supplementary Material~\ref{sec:bsiv::supp}.
By Assumption~\ref{assumption:BSEqui}, the term \( \E[Y^{(0)} - M^{(0)} \mid B = b, X, G = O] \) does not depend on \( b \). Therefore, for each value of \( X \), the reparametrization yields four equations and five unknowns: two for \( \beta_1(b, X) \), two for \( \gamma_0(b, X) \), and one for \( \E[Y^{(0)} - M^{(0)} \mid B = b, X, G = O] \). To achieve point identification, we impose one of two additional restrictions: either \( \beta_1 \) does not depend on \( b \), or \( \gamma_0 \) does not depend on \( b \). These are formalized below.

\begin{assumption}[Partial Homogeneity of Causal Effect Contrast]
\label{assumption:homogeneityETT}
With probability one,
\begin{align*}
&\E[Y^{(1)} - Y^{(0)} \mid A = 1, B = 1, X, G = O] - \E[M^{(1)} - M^{(0)} \mid A = 1, B = 1, X, G = O] \\
&= \E[Y^{(1)} - Y^{(0)} \mid A = 1, B = 0, X, G = O] - \E[M^{(1)} - M^{(0)} \mid A = 1, B = 0, X, G = O].
\end{align*}
\end{assumption}

\begin{assumption}[Partial Homogeneity of Bias Contrast]
\label{assumption:homogeneityBias}
With probability one,
\begin{equation*}
%\label{eq:HomoBias2}
\begin{aligned}
&\Big\{ \E[Y^{(0)} \mid A = 1, B = 1, X, G = O] - \E[Y^{(0)} \mid A = 0, B = 1, X, G = O] \Big\} \\
&\quad - \Big\{ \E[M^{(0)} \mid A = 1, B = 1, X, G = O] - \E[M^{(0)} \mid A = 0, B = 1, X, G = O] \Big\} \\
&= \Big\{ \E[Y^{(0)} \mid A = 1, B = 0, X, G = O] - \E[Y^{(0)} \mid A = 0, B = 0, X, G = O] \Big\} \\
&\quad - \Big\{ \E[M^{(0)} \mid A = 1, B = 0, X, G = O] - \E[M^{(0)} \mid A = 0, B = 0, X, G = O] \Big\}.
\end{aligned}
\end{equation*}
\end{assumption}
Note that neither Assumption~\ref{assumption:homogeneityETT} nor Assumption~\ref{assumption:homogeneityBias} restricts the set of causes or effects for the involved variables, and the variable \( B \) may act as a confounder of \( A \), \( M \), and \( Y \). 
Assumption~\ref{assumption:homogeneityETT} posits that the difference between the conditional in-group causal effects for outcomes \( Y \) and \( M \) is invariant to \( B \). For instance, the ETT for the outcome \( Y - M \) does not vary with \( B \). A special case where this holds is when the conditional in-group causal effect for each of \( Y \) and \( M \) is individually independent of \( B \). Assumption~\ref{assumption:homogeneityBias} requires that the selection bias for the contrast \( Y - M \) is constant across levels of \( B \). One special case where this assumption holds is when the selection bias for both \( Y \) and \( M \) is individually invariant to \( B \). Another special case is when the curly brackets on the right hand side are equal, and also the curly brackets on the left hand side are equal---i.e., both sides of the equation are zero. This situation corresponds precisely to the conditional equi-confounding bias condition stated in Assumption~\ref{assumption:CondEqui}. Therefore, Assumption~\ref{assumption:homogeneityBias} is strictly weaker than Assumption~\ref{assumption:CondEqui}. Hence, in the presence of a BSIV the researcher should prefer using the BSIV data fusion approach over the equi-confounding approach.
\begin{theorem}
\label{thm:BSIV_ett}
Define $\pi(B,X) \coloneqq p(A=1 \mid B,X,G=O)$, and for $a,b \in \{0,1\}$ define 
\[
E_{ab}^O(X) \coloneqq \E[Y - M \mid A=a, B=b, X, G=O], \quad
P_{ab}^O(X) \coloneqq p(A=a \mid B=b, X, G=O).
\]	

\begin{itemize}
\item[(a)] 
Under Assumptions \ref{assumption:IntVal}, \ref{assumption:ExVal}, \ref{assumption:BSIVrelevance}, \ref{assumption:BSEqui}, and \ref{assumption:homogeneityETT}, the parameter $\theta_{\text{ETT}}$ is identified by:
{\small
\begin{equation}
\label{eq:param2}
\begin{aligned}
\theta_{\text{ETT}}
&= \E\Bigg[
\frac{\E[Y - M \mid B=1, X, G=O] - \E[Y - M \mid B=0, X, G=O]}{P_{11}^O(X) - P_{10}^O(X)} \\
&\quad - \frac{\E[M \mid A=0, B, X, G=E]}{\pi(B, X)}
\ \Big| \ A=1, G=O
\Bigg]
+ \frac{\E[M \mid G=O]}{p(A=1 \mid G=O)}.
\end{aligned}
\end{equation}
}
\item[(b)]
Under Assumptions \ref{assumption:IntVal}, \ref{assumption:ExVal}, \ref{assumption:BSIVrelevance}, \ref{assumption:BSEqui}, and \ref{assumption:homogeneityBias}, the parameter $\theta_{\text{ETT}}$ is identified by:
{\small
\begin{equation}
\label{eq:param3}
\begin{aligned}
\theta_{\text{ETT}} 
&= \E\Bigg[
\big\{ E_{11}^O(X) - E_{01}^O(X) - E_{10}^O(X) + E_{00}^O(X) \big\} B + E_{10}^O(X) - E_{00}^O(X) \\
&\quad - \frac{E_{01}^O(X) \!-\! E_{00}^O(X)}{P_{01}^O(X) \!-\! P_{00}^O(X)}
\!+\! \E[M \!\mid \!A=1, B, X, G=O] 
\!-\! \frac{\E[M \!\mid \!A=0, B, X, G=E]}{\pi(B, X)} \\
&\quad + \frac{\E[M \mid A=0, B, X, G=O](1 - \pi(B, X))}{\pi(B, X)}
\ \Big| \ A=1, G=O
\Bigg].
\end{aligned}
\end{equation}
}
\end{itemize}
\end{theorem}
See Supplementary Material~\ref{sec:bsiv::supp} for the corresponding identification results for the ATE.
\begin{remark}
\label{rmk:real_ex_val}
Assumption~\ref{assumption:IntVal} is formulated as $A \independent \{Y^{(a)}, M^{(a)}\} \mid B, X, G=E$, and  Assumption~\ref{assumption:ExVal} is expressed as $G \independent M^{(a)} \mid B, X$. However, it is worth noting that the identification results in Theorem~\ref{thm:BSIV_ett} only require the weaker conditions, e.g., $\E[M^{(a)} \mid X, B, G=O] = \E[M^{(a)} \mid X, B, G=E]$, which is an equality of conditional expectations rather than a full conditional independence assumption.
\end{remark}
\begin{remark}
\label{rmk:nonbinary_BSIV}
As noted earlier, the bespoke instrumental variable \( B \) need not be binary. In Supplementary Material~\ref{sec:bsiv::supp}, we provide the counterparts of the assumptions in our BSIV data fusion framework for non-binary \( B \). The identification arguments extend straightforwardly to this setting with no substantive changes to the proofs.
\end{remark}

\section{Approach 3: Proximal Data Fusion}
\label{sec:proximal}

In this section, we present our third identification strategy, which leverages the presence of a variable that serves as a \emph{proxy} for the latent confounder. This approach is inspired by the proximal causal inference framework \citep{miao2018identifying, tchetgen2020introduction, cui2020semiparametric}. However, unlike that framework---which requires two distinct proxies---we only require a single proxy variable.\footnote{A comparison to other proximal setups is presented in Supplementary Material \ref{sec:proximal::supp}.} Specifically, we assume the existence of a variable \( Z \) in the observational domain that satisfies the following condition.

\begin{assumption}[Existence of Proxy Variable]
\label{assumption:proxy}
There exists a proxy variable \( Z \) for the latent confounder in the observational domain such that
\[
Z \independent \{M, Y\} \mid \{U, X, A, G=O\}.
\]
\end{assumption}

Figure~\ref{fig:prox} provides an example of a causal graph that satisfies this proxy variable condition. Importantly, no assumptions are required about \( Z \) in the experimental domain---indeed, \( Z \) need not even exist in that domain.

\begin{remark}
\label{rmk:proxyflexy}
Figure~\ref{fig:prox} illustrates just one of many possible graphical structures that can satisfy Assumption~\ref{assumption:proxy}. For example, the assumption still holds if the edge from \( Z \) to \( A \) is reversed, making \( Z \) a post-treatment variable. Alternatively, the edge between \( Z \) and \( A \) can be removed entirely, or \( Z \) and \( A \) can share a latent confounder. These alternatives illustrate the flexibility available to the researcher in selecting a variable to serve as the proxy \( Z \).
\end{remark}

In order to achieve identifiability, we impose the following additional assumption.

\begin{assumption}
\label{assumption:compexist1}
~
\vspace{-3mm}
\begin{itemize}
    \item[(i)] For any square-integrable function \( g \), and for all \( a \) and \( x \), if \( \E[g(U) \mid Z, a, x, G=O] = 0 \) almost surely, then \( g(U) = 0 \) almost surely.
    
    \item[(ii)] There exists a function \( h(m,a,x) \), called the \emph{outcome bridge function}, that solves the following integral equation:
    \begin{equation}
    \label{eq:ORproxexisth}
    \E[Y \mid Z, A, X, G=O] = \E[h(M, A, X) \mid Z, A, X, G=O].
    \end{equation}
\end{itemize}
\end{assumption}

We now state the identification result.
\begin{theorem}
\label{thm:POR}
Under Assumptions \ref{assumption:exchangeU}--\ref{assumption:ExVal}, \ref{assumption:proxy}, and \ref{assumption:compexist1}, the parameter \( \theta_{\text{ETT}} \) is identified as:
\begin{align}\label{eq:POR}
\theta_{\text{ETT}}
&= \frac{\E[Y \mid G=O]}{p(A=1 \mid G=O)} -
\frac{\E\big[\E[h(M,0,X) \mid A=0, X, G=E] \mid G=O\big]}{p(A=1 \mid G=O)}.
\end{align}
\end{theorem}
See Supplementary Material~\ref{sec:proximal::supp} for the corresponding identification results for the ATE.

\begin{figure}[t!]
\centering
		\tikzstyle{block} = [draw, circle, inner sep=2.5pt, fill=lightgray]
		\tikzstyle{input} = [coordinate]
		\tikzstyle{output} = [coordinate]
        \begin{tikzpicture}
            \tikzset{edge/.style = {->,> = latex'}}
            % vertices
            \node[] (a) at  (-2,0) {$A$};
            \node[] (n) at  (0,0) {$M$};
            \node[] (x) at  (-3,1) {$X$};
            \node[] (y) at  (2,0) {$Y$};   
            \node[] (z) at  (-3,-.5) {$Z$};                     
            \node[block] (u) at  (-3,2) {$U$};                       
            %edges
            \draw[-stealth] (x) to (a);
            \draw[-stealth][edge, bend left=-65] (u) to (a);                        
            \draw[-stealth] (a) to (n);
            \draw[-stealth][edge, bend left=-35] (a) to (y);            
            \draw[-stealth] (u) to (y);
            \draw[-stealth][edge, bend left=-65] (u) to (z);            
            \draw[-stealth] (x) to (z);
            \draw[-stealth] (z) to (a);                                                
            \draw[-stealth] (u) to (n);
            \draw[-stealth] (n) to (y);            	  
            \draw[-stealth] (u) to (x);	              
            \draw[-stealth] (x) to (n);
            \draw[-stealth] (x) to (y);            	                                      
        \end{tikzpicture}
        \caption{Example of a model satisfying the Assumption \ref{assumption:proxy}.}
        \label{fig:prox}
\end{figure}

\subsection{An Alternative Proximal Identification Method}

We now present an alternative proximal identification strategy based on a different set of completeness and bridge function assumptions.

\newpage
\begin{assumption}
\label{assumption:compexist2}
~
\vspace{-3mm}
\begin{itemize}
    \item[(i)] For any square-integrable function \( g \), and for all \( a \) and \( x \), if \( \E[g(U) \mid M, a, x, G=O] = 0 \) almost surely, then \( g(U) = 0 \) almost surely.

    \item[(ii)] There exists a function \( q(z, a, x) \), called the \emph{treatment bridge function}, that solves the following integral equation:
    \begin{equation*}
    %\label{eq:IPWproxexistq}
    \E[q(Z, A, X) \mid M, A, X, G=O] = \frac{p(M \mid A, X, G=E)}{p(M \mid A, X, G=O) \cdot p(A \mid X, G=O)}.
    \end{equation*}
\end{itemize}
\end{assumption}

The corresponding identification result is as follows.
\begin{theorem}
\label{thm:PIPW}
Under Assumptions \ref{assumption:exchangeU}--\ref{assumption:ExVal}, \ref{assumption:proxy}, and \ref{assumption:compexist2}, the parameter \( \theta_{\text{ETT}} \) is identified as:
\begin{align}\label{eq:PIPW}
\theta_{\text{ETT}}
&= \frac{\E[Y \mid G=O]}{p(A=1 \mid G=O)} -
\frac{\E[I(A=0) Y q(Z, A, X) \mid G=O]}{p(A=1 \mid G=O)}.
\end{align}
\end{theorem}
See Supplementary Material~\ref{sec:proximal::supp} for the corresponding identification results for the ATE.

\section{Estimation Strategies}
\label{sec:estimation}

In this section, we turn to the problem of estimation and propose influence function (IF)-based estimators for the parameters of interest under each of our data fusion frameworks. IF-based estimators are advantageous in that their bias is of second order with respect to the error in estimating the nuisance functions—an important property in the presence of complex data-generating processes. Furthermore, under certain convergence rate conditions on the nuisance function estimators—which are weaker than those required for non-IF-based estimators—the resulting IF-based estimator is consistent and asymptotically normal. This enables valid inference without requiring resampling methods such as the bootstrap.

An additional benefit is that our IF-based estimators exhibit \emph{multiple robustness}: the estimator remains unbiased even when some of the nuisance functions are misspecified. While the asymptotic normality of IF-based estimators is now standard in the semiparametric literature (see, e.g., \citep{bickel1993efficient,newey1990semiparametric,tsiatis2007semiparametric,robins2017minimax,kennedy2024semiparametric}), the analysis of multiple robustness is more context-specific and depends on the parameter being estimated. Therefore, we will focus on establishing multiple robustness results specific to each framework.
As in the previous sections, 
we only provide the results regarding the ETT in the main text; the corresponding results for the ATE are provided in Supplementary Material~\ref{sec:estimation::supp}.

In the subsections that follow, we derive the influence functions and analyze the robustness properties of the proposed estimators under each data fusion framework. Given an IF-based moment function, we adopt the cross-fitting procedure of \cite{chernozhukov2018double} to decouple the estimation of nuisance functions from that of the target parameter. This procedure allows for weaker regularity conditions on the nuisance estimators while still ensuring asymptotic normality. The estimation procedure proceeds as follows:
\begin{itemize}
\item {\bf Step 1.} Partition the sample into \( L \) equally sized folds: \( \{I_1, \ldots, I_L\} \).
\item {\bf Step 2.} For each \( \ell \in \{1, \ldots, L\} \), estimate the set of nuisance functions \( \zeta^{\text{approach}} \) using the data from all folds except \( I_\ell \), yielding \( \hat\zeta_\ell^{\text{approach}} \).
\item {\bf Step 3.} For each \( \ell \in \{1, \ldots, L\} \), estimate the parameter of interest \( \hat\psi_\ell \) by solving:
\[
\frac{1}{|I_\ell|}\sum_{i \in I_\ell} \Phi^{\text{approach}}(V_i; \hat\zeta_\ell^{\text{approach}}, \hat\psi_\ell) = 0,
\]
where \( \Phi^{\text{approach}} \) is the IF-based moment function, which differs across our approaches.
\item {\bf Step 4.} Define the final estimator of the parameter as the average across folds:
\begin{center}
    $\hat\psi^{\text{approach}} = \frac{1}{L} \sum_{\ell=1}^L \hat\psi_\ell$.
\end{center} 
\end{itemize}

\subsection{Estimation under Equi-Confounding Data Fusion}

In this subsection, we consider estimation under the equi-confounding data fusion framework, focusing on the case of conditional equi-confounding. Recall from Theorem~\ref{thm:equiETTCond} that, under the assumptions of this framework, for the parameter $\theta_{\text{ETT}}$, we only need to focus on the right hand side of Equation \eqref{eq:param1}, which is a functional of the observed data distribution. We denote this functional by $\psi^{\text{equi}}_{\text{ETT}}$. One can use this functional to construct a simple plug-in estimator. However, such an estimator generally suffers from first-order bias with respect to the error in the estimation of the nuisance functions. Hence, as mentioned earlier, we use an IF-based estimator instead.

We begin by deriving the influence function for the parameter $\psi^{\text{equi}}_{\text{ETT}}$. Let the collection of nuisance functions be defined as:
$\zeta^{\text{equi}}:=
\Big\{
\mu^{G=E}_M(A=\cdot,X=\cdot)\coloneqq \E[M\mid X=\cdot,A=\cdot,G=E], 
\mu^{G=O}_M(A=\cdot,X=\cdot)\coloneqq \E[M\mid X=\cdot,A=\cdot,G=O], 
\mu^{G=O}_Y(A=\cdot,X=\cdot)\coloneqq \E[Y\mid X=\cdot,A=\cdot,G=O],
\pi^{G=E}(\cdot)\coloneqq p(A=1\mid X=\cdot,G=E), 
\pi^{G=O}(\cdot)\coloneqq p(A=1\mid X=\cdot,G=O),
p(G=E\mid X=\cdot)
\Big\}$.
\begin{theorem}
\label{thm:IF:equi-conf:ETT}
Under a nonparametric model, the efficient influence function for the parameter $\psi^{\text{equi}}_{\text{ETT}}$ is given by:
{\small
\begin{align*}
&IF_{\psi^{\text{equi}}_{\text{ETT}}}(V)=
\frac{1}{p(A=1,G=O)}\Big\{
\frac{I(G=O)I(A=0)}{1-\pi^{G=O}(X)}\{M-\mu^{G=O}_M(A=0,X)\}\\
&\qquad-\frac{I(G=E)I(A=0)}{1-\pi^{G=E}(X)}\left\{\frac{1}{p(G=E\mid X)}-1\right\}\{M-\mu^{G=E}_M(A=0,X)\}\\
&\qquad+\frac{I(G=O)I(A=0)\pi^{G=O}(X)}{1-\pi^{G=O}(X)}\{Y-\mu^{G=O}_Y(A=0,X)\}\\
&\qquad+I(G=O)\Big[
\mu^{G=O}_M(A=0,X)
\!-\!\mu^{G=E}_M(A=0,X)\!+\!I(A=1)\{Y\!-\!\mu^{G=O}_Y(A=0,X)\!-\!\psi^{\text{equi}}_{\text{ETT}}\}
\Big]\Big\}.
\end{align*}
}
\end{theorem}
Let $\hat\zeta^{\text{equi}}$ be an estimator for $\zeta^{\text{equi}}$, and define the moment function $\Phi^{\text{equi}}(V; \hat\zeta^{\text{equi}}, \hat\psi)$ to be the expression for $IF_{\psi^{\text{equi}}_{\text{ETT}}}(V)$ with nuisance components replaced by their corresponding estimates from $\hat\zeta^{\text{equi}}$, and $\psi^{\text{equi}}_{\text{ETT}}$ replaced by $\hat\psi$. We use this moment function in the cross-fitting procedure to obtain the estimator $\hat\psi^{\text{equi}}_{\text{ETT}}$. We have the following multiple robustness result for $\hat\psi^{\text{equi}}_{\text{ETT}}$.
\begin{proposition}
\label{prop:DR:equi-conf}
The IF-based estimator $\hat\psi^{\text{equi}}_{\text{ETT}}$ is multiply robust in the sense that it is unbiased if at least one of the following subsets of nuisance functions is correctly specified:
(i) $\{\mu^{G=E}_M,\mu^{G=O}_M,\mu^{G=O}_Y\}$; (ii) $\{\pi^{G=E},\pi^{G=O},p(G=E\mid X=\cdot)\}$; (iii) $\{\mu^{G=E}_M,\pi^{G=O}\}$; (iv) $\{\pi^{G=E},\mu^{G=O}_M,\mu^{G=O}_Y,p(G=E\mid X=\cdot)\}$.
\end{proposition}

\subsection{Estimation under Bespoke IV Data Fusion}

In this subsection, we consider estimation under the BSIV data fusion framework. Recall from Theorem~\ref{thm:BSIV_ett} that, under the assumptions of this framework, for the parameter $\theta_{\text{ETT}}$, we only need to focus on the right hand sides of Equations \eqref{eq:param2} and \eqref{eq:param3}, which are functionals of the observed data distribution. We denote these functionals by $\psi^{\text{bsiv1}}_{\text{ETT}}$ and $\psi^{\text{bsiv2}}_{\text{ETT}}$, respectively.

We begin by deriving the influence functions for the parameters. Let the collection of nuisance functions be defined as:
$\zeta^{bsiv}:=
\Big\{
E_{ab}^O(X):=\mathbb{E}[Y-M\mid A=a, B=b,X,G=O], 
e_{b}^O(X):=\mathbb{E}[Y-M\mid B=b,X,G=O], 
M_{a}^{E}(B,X):=\mathbb{E}[M\mid A=a,B,X,G=E], 
\mu_{b}^O(X):=\mathbb{E}[M\mid B=b,X,G=O], 
\pi^O(X):=p(A=1\mid X,G=O), 
P_{ab}^O(X):=p(A=a\mid B=b,X,G=O), 
P_{ab}^E(X):=p(A=a\mid B=b,X,G=E), 
\rho_{b}^O(X):=p(B=b\mid X,G=O), 
\rho_{b}^E(X):=p(B=b\mid X,G=E), 
\tau(B,X):=p(G=E\mid B,X)
\Big\}$.
\begin{theorem}
\label{thm:IF:BSIV-1_ett}
Under a nonparametric model, the efficient influence functions for the parameters $\psi^{\text{bsiv1}}_{\text{ETT}}$ and $\psi^{\text{bsiv2}}_{\text{ETT}}$ are given by $IF_{\psi^{\text{bsiv1}}_{\text{ETT}}}(V)$ and $IF_{\psi^{\text{bsiv2}}_{\text{ETT}}}(V)$, respectively, 
where the closed-form expressions for the influence functions are deferred to  Supplementary Material \ref{sec:estimation::supp}.
\end{theorem}

Let $\hat\zeta^{\text{bsiv}}$ be an estimator for $\zeta^{\text{bsiv}}$, and define the moment functions $\Phi^{\text{bsiv1}}(V; \hat\zeta^{\text{bsiv}}, \hat\psi)$ and $\Phi^{\text{bsiv2}}(V; \hat\zeta^{\text{bsiv}}, \hat\psi)$ to be the expressions for $IF_{\psi^{\text{bsiv1}}_{\text{ETT}}}(V)$ and $IF_{\psi^{\text{bsiv2}}_{\text{ETT}}}(V)$ with nuisance components replaced by their corresponding estimates from $\hat\zeta^{\text{bsiv}}$, and $\psi^{\text{bsiv1}}_{\text{ETT}}$ and $\psi^{\text{bsiv2}}_{\text{ETT}}$ replaced by $\hat\psi$. We use these moment functions in the cross-fitting procedure to obtain the estimators $\hat\psi^{\text{bsiv1}}_{\text{ETT}}$ and $\hat\psi^{\text{bsiv2}}_{\text{ETT}}$. We have the following multiple robustness result for these estimators.
\begin{proposition}
\label{prop:DR:BSIV_ett}
The IF-based estimator $\hat\psi^{\text{bsiv1}}_{\text{ETT}}$ is multiply robust in the sense that it is unbiased if at least one of the following subsets of nuisance functions is correctly specified:\\
        (i) $\{P_{AB}^O(X),\mathbb{E}[M\mid B,X,G=O],\mathbb{E}[Y\mid B,X,G=O],\mathbb{E}[M\mid A,B,X,G=E]\}$;\\
        (ii) $\{\tau(B,X),\rho_{B}^O(X),\rho_{B}^E(X),P_{AB}^O(X),P_{AB}^E(X)\}$;\\
        (iii) $\{\tau(B,X),P_{AB}^O(X),P_{AB}^E(X),\mathbb{E}[M\mid B,X,G=O],\mathbb{E}[Y\mid B,X,G=O]\}$;\\
        (iv) $\{\rho_{B}^O(X),\rho_{B}^E(X),P_{AB}^O(X),\mathbb{E}[M\mid A,B,X,G=E]\}$.\\
Same result holds for the estimator $\hat\psi^{\text{bsiv2}}_{\text{ETT}}$ 
but with replacing outcome regression functions with
$\mathbb{E}[M\mid A,B,X,G=E]$,
$\mathbb{E}[M\mid A,B,X,G=O]$, and
$\mathbb{E}[Y\mid A,B,X,G=O]\}$.
\end{proposition}

\subsection{Estimation under Proximal Data Fusion}

In this subsection, we consider estimation under the proximal data fusion framework. As in
our other two frameworks, we propose IF-based estimator. This estimator requires the estimation of both bridge functions $h$ and $q$. As seen earlier, these nuisance functions
are solutions to conditional moment equations, and hence they cannot be estimated by a
simple standard regression. Therefore, in Subsection \ref{sec:alt_est_strategy}, we also present simpler (yet not robust) estimation strategies which require the estimation of only one of the bridge functions. We will discuss the estimation of bridge functions in Subsection \ref{sec:est:bridge}.

Recall from Theorem \eqref{eq:POR} that, under the assumptions of this framework, for the parameter $\theta_{\text{ETT}}$, we only need to focus on the right hand side of Equation \eqref{eq:POR}, which is a functional of the observed data distribution. We denote this functional by $\psi_{\text{ETT}}^{\text{proxy}}$. We begin by deriving the influence function for this parameter. 
We assume that the integral equations in Assumptions \ref{assumption:compexist1} $(ii)$ and \ref{assumption:compexist2} $(ii)$ have unique solutions $h(\cdot)$ and $q(\cdot)$, and let the collection of nuisance functions be defined as $\zeta^{\text{proxy}}:=\Big\{
h,q, p(M=\cdot\mid A=\cdot,X=\cdot,G=E), p(A=1\mid X=\cdot,G=E),  p(G=E\mid X=\cdot)\Big\}$.

\begin{theorem}
\label{thm:IF:proximal-1_ett}	
Under a semiparametric model that Assumptions \ref{assumption:compexist1} $(ii)$ and \ref{assumption:compexist2} $(ii)$ have unique solutions, an influence function of the parameter $\psi_{\text{ETT}}^{\text{proxy}}$ is given by
{\small
\begin{align*}
&IF_{\psi_{\text{ETT}}^{\text{proxy}}}(V)
=\frac{1}{p(G=O)p(A=1\mid G=O)}\Big\{I(G=O)\big\{Y-I(A=0)q(Z,0,X)\{Y-h(M,0,X)\}\\
&\hspace{90mm}-\eta(0,X)-I(A=1)\psi_{\text{ETT}}^{\text{proxy}}\big\}\\
&\quad-\frac{I(G=E)I(A=0)}{1-p(A=1\mid X,G=E)}
\{h(M,0,X)-\eta(0,X)\}\{\frac{1}{p(G=E\mid X)}-1\}\Big\},
\end{align*}
}
where
$\eta(a,x)\coloneqq\E[h(M,A,X)\mid A=a,X=x,G=E]$.

\end{theorem}
\subsubsection{Estimation Strategies for ETT}
\label{sec:alt_est_strategy}
Let $\hat\zeta^{\text{proxy}}$ be an estimator for $\zeta^{\text{proxy}}$. Based on identification formulae \eqref{eq:POR} and \eqref{eq:PIPW}, along with Theorem \ref{thm:IF:proximal-1_ett}, we propose the following estimation strategies for the parameter $\theta_{\text{ETT}}$. Strategies 1–3 require estimation of only one of the bridge functions, $h$ or $q$, whereas Strategy 4, the IF-based estimator, requires estimation of both $h$ and $q$.
\begin{itemize}
	
\item{\bf Estimation Strategy 1.} 
From \eqref{eq:POR}, define the moment function as,
{\small
\begin{align*}
\Phi^{\text{proxy1}}(V;\hat{\zeta}^{\text{proxy}},\hat\psi)
\!:=\!
\frac{I(G=O)}{p(G=O)p(A=1\!\mid \!G=O)}\{Y
\!-\!\sum_m\hat{h}(m,0,X)\hat{p}(m\!\mid \!0,X,G=E)\}\!-\!\hat\psi.
\end{align*}
}
\item{\bf Estimation Strategy 2.}
Also from \eqref{eq:POR}, define the moment function as,
{\small
\begin{align*}
\Phi^{\text{proxy2}}(V;\hat{\zeta}^{\text{proxy}},\hat\psi)
:=&
\frac{1}{p(G=O)p(A=1\mid G=O)}\Big\{I(G=O)Y\\
&-\frac{I(G=E)I(A=0)}{1-\hat{p}(A=1\mid X,G=E)}
\hat{h}(M,0,X)\{\frac{1}{\hat{p}(G=E\mid X)}-1\}\Big\}-\hat\psi.
\end{align*}
}
\item{\bf Estimation Strategy 3.}
From \eqref{eq:PIPW}, define the moment function as,
{\small
\begin{align*}
\Phi^{\text{proxy3}}(V;\hat{\zeta}^{\text{proxy}},\hat\psi)
:=&
\frac{I(G=O)}{p(G=O)p(A=1\mid G=O)}\{Y-I(A=0)Y\hat{q}(Z,0,X)\}-\hat\psi.
\end{align*}
}
\begin{sloppypar}
\item{\bf Estimation Strategy 4 (IF-based Strategy).} 
Define the moment function $\Phi^{\text{proxy}}(V; \hat\zeta^{\text{proxy}}, \hat\psi)$ to be the expression for $IF_{\psi^{\text{proxy}}_{\text{ETT}}}(V)$ with nuisance components replaced by their corresponding estimates from $\hat\zeta^{\text{proxy}}$, and $\psi^{\text{proxy}}_{\text{ETT}}$ replaced by $\hat\psi$.
\end{sloppypar}
\end{itemize}
These moment functions are implemented in the cross-fitting procedure to obtain $\hat\psi^{\text{proxy1}}_{\text{ETT}}$, $\hat\psi^{\text{proxy2}}_{\text{ETT}}$, $\hat\psi^{\text{proxy3}}_{\text{ETT}}$, and the IF-based estimator $\hat\psi^{\text{proxy}}_{\text{ETT}}$.

\begin{proposition}
\label{prop:DR:proximal_ett}	
The IF-based estimator $\hat\psi^{\text{proxy}}_{\text{ETT}}$ is multiply robust in the sense that it is unbiased if at least one of the following subsets of nuisance functions is correctly specified:
        (i) $\{h,p(M=\cdot\mid A=0,X=\cdot,G=E))\}$;
        (ii) $\{h,p(A=1\mid X=\cdot,G=E),p(G=E\mid X=\cdot)\}$;
        (iii) $\{q,p(A=1\mid X=\cdot,G=E),p(G=E\mid X=\cdot)\}$.
\end{proposition}

\subsubsection{Estimating the Bridge Functions}
\label{sec:est:bridge}
In all proposed estimation strategies, estimation of at least one of the nuisance functions \(h\) or \(q\) is required to estimate the target parameter. However, these nuisance functions are defined as solutions to conditional moment (integral) equations and therefore cannot be obtained via simple regression. A recent line of work proposes nonparametric adversarial (minimax) estimators for such equations \citep{dikkala2020minimax}; these ideas have been adapted to the original semiparametric proximal causal inference framework in \citep{ghassami2022minimax,kallus2021causal}. We employ the same technique to estimate the bridge functions \(h\) and \(q\) in our proximal data fusion setup.

To proceed, note that the bridge function \(q\) satisfies the following conditional moment equation, which we use to design an estimator for \(q\).

\begin{proposition}
\label{prop:qcondmomeq}
The bridge function \(q\) satisfies
{\small
\begin{equation}
\label{eq:IPWcondmomeqq}
\E\!\left[ \frac{I(G=O)}{p(G=O\mid X)}\,q(Z,A,X)
      - \frac{I(G=E)}{p(A,G=E\mid X)}
      \,\Bigm|\, M,A,X \right] = 0.
\end{equation}
}
\end{proposition}

Let \(\mathcal{H}\), \(\mathcal{Q}\), and \(\mathcal{F}\) be normed function spaces. Based on the conditional moment equations \eqref{eq:ORproxexisth} and \eqref{eq:IPWcondmomeqq}, we propose the following regularized minimax estimators for the bridge functions \(h\) and \(q\):
{\small
\begin{align*}
\hat{h}
&= \arg\min_{h\in\mathcal{H}}\sup_{f\in\mathcal{F}}\;
\mathbb{P}_n\!\left[ \frac{I(G=O)}{p(G=O)}\{h(M,A,X)-Y\}\,f(Z,A,X) - f^2(Z,A,X) \right]
- \lambda^{h}_{\mathcal{F}}\|f\|_{\mathcal{F}}^{2} + \lambda^{h}_{\mathcal{H}}\|h\|_{\mathcal{H}}^{2},\\[4pt]
\hat{q}
&=\! \arg\min_{q\in\mathcal{Q}}\sup_{f\in\mathcal{F}}\;
\mathbb{P}_n\!\left[\! \left\{\!\frac{I(G=O)}{\hat{p}(G=O\!\mid \!X)}\,q(Z,A,X)
\!- \!\frac{I(G=E)}{\hat{p}(A,G=E\!\mid \!X)}\!\right\} f(M,A,X) \!-\! f^2(M,A,X) \right]
\!-\! \lambda^{q}_{\mathcal{F}}\|f\|_{\mathcal{F}}^{2} \!+\! \lambda^{q}_{\mathcal{Q}}\|q\|_{\mathcal{Q}}^{2}.
\end{align*}
}
We refer the reader to \citep{dikkala2020minimax,ghassami2022minimax} for convergence analyses of these minimax estimators.

\section{Simulation Studies}
\label{sec:simulation}

\begin{table}[t]
\centering
\begin{tabular}{c | c c c c c} 
 \hline
 Sample Size & Methods & CI (Coverage) & Bias & RMSE & SD   \\ [0.5ex] 
 \hline
n=1000 & BSIV & [-3.30, 4.34] (0.99)  & 0.16 & 3.03  & 3.02  \\
 & proximal & [0.64, 0.99] (0.71) & -0.13 & 0.16 & 0.09 \\
\hline
n=2000 & BSIV & [-0.98, 2.36] (0.98)  & -0.01 & 0.97  & 0.97 \\
 & proximal & [0.62, 0.88] (0.76) & -0.07 & 0.10 & 0.07 \\
 \hline
 n=4000 & BSIV & [-0.25, 1.62] (0.967)  & 0.00  & 0.49  & 0.49 \\
& proximal & [0.62, 0.81] (0.87) & -0.03 & 0.06 & 0.05 \\
 \hline
\end{tabular}
\caption{Inference results for influence-function-based estimators.}
\label{inference_table}
\end{table}

\begin{table}[t]\footnotesize
\centering
\begin{tabular}{c | c c c c c c c} 
 \hline
Sample size & estimators & all true & case 1 & case 2 & case 3 & case 4  & all false\\ [0.5ex] 
\hline
n=1000 & plug-in & -0.07 (0.89)  & -0.07 (0.89) & 0.56 (0.84) & 0.32 (0.96) & 0.16 (0.63) & 0.49 (0.78) \\
 & IF-based & 0.16 (3.03) & 0.17 (3.07) & -0.08 (2.13) & 0.17 (3.07) & -0.07 (2.12) & 0.61 (1.85) \\
\hline
n=2000 & plug-in & 0.02 (0.61) & 0.02 (0.61) & 0.61 (0.77) & 0.38 (0.74) & 0.25 (0.48) & 0.55 (0.71) \\
 & IF-based & -0.01 (0.97) & 0.03 (0.96) & -0.08 (0.77) & 0.03 (0.96) & -0.07 (0.76) & -0.44 (1.05)\\
 \hline
 n=4000 & plug-in & -0.02 (0.41) & -0.02 (0.41) & 0.58 (0.66) & 0.35 (0.55) & 0.22 (0.35) & 0.52 (0.60) \\
 & IF-based & -0.00 (0.49) & -0.00 (0.66) & -0.04 (0.43) & -0.01 (0.66) & -0.04 (0.43) & -0.42 (0.76) \\
\hline
\end{tabular}
\caption{Robustness results for BSIV estimators: Bias (RMSE).}
\label{BSIV_robustness_table}
\end{table}

\begin{table}[t]
\centering
\begin{tabular}{c | c c c c c c} 
 \hline
 sample size & estimators & all true & case 1 & case 2 & case 3 & all false\\ [0.5ex] 
 \hline
n=1000 & estimator 1 & -0.16 (0.18)  & -0.16 (0.18) & 0.21 (0.29) & -1.42 (1.42) & -1.42 (1.42) \\
 & estimator 2 & -0.11 (0.14) & -0.16 (0.24) & -0.11 (0.14) & -1.42 (1.42) & -1.42 (1.42) \\
 & estimator 3 & -0.63 (0.64) & -1.27 (1.28) & -1.27 (1.28) & -0.73 (0.73) & -1.27 (1.28)\\
 & IF-based & -0.13 (0.16) & -0.17 (0.18) & -0.18 (0.20) & -0.70 (0.70) & -1.21 (1.21)\\
\hline
n=2000 & estimator 1 & -0.09 (0.10) & -0.09 (0.10) & 0.42 (0.45) & -1.31 (1.31) & -1.31 (1.31) \\
 & estimator 2 & -0.07 (0.09) & -0.13 (0.17) & -0.07 (0.09) & -1.32 (1.32) & -1.31 (1.32) \\
 & estimator 3 & -0.46 (0.46) & -1.12 (1.12) & -1.12 (1.12) & -0.56 (0.56) & -1.12 (1.12)\\
 & IF-based & -0.07 (0.10) & -0.10 (0.12) & -0.11 (0.13) & -0.52 (0.52) & -1.00 (1.00)\\
 \hline
 n=4000 & estimator 1 & -0.05 (0.07) & -0.05 (0.07) & 0.54 (0.56) & -1.13 (1.13) & -1.13 (1.13) \\
 & estimator 2 & -0.04 (0.07) & -0.12 (0.15) & -0.04 (0.07) & -1.16 (1.16) & -1.16 (1.16) \\
 & estimator 3 & -0.32 (0.33) & -0.93 (0.94) & -0.93 (0.94) & -0.44 (0.44) & -0.93 (0.94)\\
 & IF-based & -0.03 (0.06) & -0.06 (0.07) & -0.06 (0.08) & -0.37 (0.37) & -0.77 (0.78)\\
\hline
\end{tabular}
\caption{Robustness results for proximal estimators: Bias (RMSE).}
\label{proximal_robustness_table}
\end{table}

We evaluated our BSIV and proximal data fusion frameworks on synthetic data for estimating the ETT. We omit the equi-confounding framework because, in the presence of a BSIV, it is nested as a special case of the BSIV framework. Binary variables were generated from logistic (expit) models, and continuous variables from normal models. To enable a comparison between the proximal and BSIV frameworks, data-generating parameters were chosen so that the assumptions of \emph{both} frameworks hold; see Supplementary Material~\ref{sec:simulation::supp} for full details.

We considered sample sizes \(n \in \{1000, 2000, 4000\}\) and conducted 300 Monte Carlo replications for each \(n\). For cross-fitting, we used four folds. Hyper-parameters for nuisance estimators were tuned via cross-validation with a 75/25 train–validation split. For the BSIV approach, we report both plug-in and IF-based estimators; for the proximal approach, we consider the four estimation strategies described in Section~\ref{sec:alt_est_strategy}.

Table~\ref{inference_table} compares the IF-based estimators for the BSIV and proximal approaches. The BSIV estimator exhibits slightly lower bias but notably higher variance. Coverage for both methods approaches the nominal \(95\%\) as \(n\) increases.

We also conducted a robustness study; results for the BSIV and proximal approaches appear in Tables~\ref{BSIV_robustness_table} and~\ref{proximal_robustness_table}, respectively. Each scenario corresponds to a setting in which specific subsets of nuisance functions (matching the cases in Propositions~\ref{prop:DR:BSIV_ett} and~\ref{prop:DR:proximal_ett}) are correctly specified; see Supplementary Material~\ref{sec:simulation::supp} for details. As expected, the two IF-based estimators display multiple robustness: whenever the corresponding subset of nuisance functions is correctly specified, the estimators have small bias that diminishes with increasing \(n\).

\section{Application: Effect of Class Size on SAT Scores}
\label{sec:application}

\begin{table}[t]
\centering
\begin{tabular}{c | c c} 
 \hline
  Method & Estimate & 95\% CI \\ [0.5ex] 
 \hline
Proximal estimator 1 & 87.85 & [28.94, 153.68] \\
Proximal estimator 2 & 69.89 & [5.40, 139.67]\\
 \hline
Estimator of \cite{athey2020combining} & -25.11 & [-76.01, 20.64] \\
Naive IF-based estimator & 0.36 & [-2.23, 2.97]\\
Equi-confounding IF-based estimator & -25.28 & [-30.13, -19.94]\\
\hline
\end{tabular}
\caption{Real data result for ETT.}
\label{realdata_table}
\end{table}

We applied our proposed methods to evaluate the effect of class size on long-term educational outcomes. The observational (target) data come from the Early Childhood Longitudinal Study, Kindergarten Class of 1998–99 (ECLS-K) \citep{tourangeau2009eclsk}, which tracks a nationally representative cohort from kindergarten through eighth grade. As the experimental domain, we use the Project STAR experiment \citep{DVN/SIWH9F_2008}—a large-scale randomized study conducted in Tennessee. Both datasets report class size; following the Project STAR design, we define \emph{small} classes (treatment) as those with at most 19 students and \emph{regular} classes (control) as those with more than 19 students. We use third-grade Mathematics SAT scores as short-term outcomes and eighth-grade Mathematics SAT scores as long-term outcomes. We used observed covariates gender, ethnicity, access to free lunch, and school location (rural, suburban, urban). Family socioeconomic status (SES) is a plausible unobserved confounder; school location is considered a potential proxy for SES, as higher-SES families are more likely to live in suburban or urban areas. To ensure support overlap across domains, we restrict the short-term score to the 570–630 range and discretize it into four categories. Restricting to complete cases yields 2,172 observations from ECLS-K and 2,584 from Project STAR.

\begin{sloppypar}
We apply only proximal estimation strategies 1 and 2. 
We do not employ our BSIV approach because we lack a covariate that plausibly satisfies the BSIV conditions; for example, free-lunch status and school location are correlated with factors that drive changes in achievement over time, making equal partial associations with short- and long-term outcomes unlikely. 
We also do not use proximal methods relying on the bridge function \(q(\cdot)\) because Assumption~\ref{assumption:compexist2}(i) requires the third-grade SAT score to be sufficiently informative about SES, which may not hold. Since all covariates and the proxy variable are discrete, the outcome bridge \(h\) admits a closed-form solution \citep{miao2018identifying}: 
$h(m,a,x)\;=\;\E\!\left[Y \,\middle|\, B, A=a, X=x, G=O\right]\; P\!\left(M \,\middle|\, B, A=a, X=x, G=O\right)^{\dagger}$,
where \({\dagger}\) denotes the matrix pseudoinverse; here \(h(m,a,x)\) and \(\E[Y \mid B, A=a, X=x, G=O]\) are row vectors of sizes \(|\mathcal{M}|\) and \(|\mathcal{B}|\), respectively, and \(P(M \mid B, A=a, X=x, G=O)\) is a \(|\mathcal{M}|\times|\mathcal{B}|\) probability matrix.
\end{sloppypar}

We compare our estimators with those of \cite{athey2020combining}, the equi-confounding estimator, and a naive IF-based estimator that assumes no unobserved confounding, and is based on the influence function of the parameter, $\psi_{\text{ETT, naive}}=\mathbb{E}[Y\mid A=1,G=O]-\mathbb{E}[\mathbb{E}[Y\mid X,A=0,G=O]\mid A=1,G=O]$ (which is equal to the true ETT if there were no latent confounders present in the system). Results are summarized in Table~\ref{realdata_table}. Our estimators indicate a positive effect of smaller class sizes on SAT scores, consistent with \cite{krueger1999experimental}, which found that smaller classes improve student test performance relative to regular-sized classes. By contrast, the estimator of \cite{athey2020combining} and the naive IF-based estimator yield a negative effect or an effect close to zero. Yet these are unlikely to be plausible due to the presence of unobserved confounder in the setting which is likely to be directly affecting the long-term outcome, and hence violating the assumptions of these two frameworks. Moreover, the equi-confounding IF-based estimator also yields a negative effect which is not plausible as the association with the potential outcome is likely to change over time.

\vspace{-3mm}
\section{Conclusion and Discussion}
\label{sec:conc}

In many real-world settings, available observational data are confounded by latent variables and therefore cannot be used to identify the causal effect of a treatment on an outcome of interest. At the same time, experimental data may be available, but due to practical constraints, the observed outcome may only reflect a short-term version of the long-term outcome of interest. Individually, neither the observational nor the experimental dataset suffices to identify the causal parameter. This raises the central question: can we combine information from the two sources to achieve identification?
In this work, we proposed three data fusion frameworks under which the long-term causal effect can be identified:
(1) Equi-confounding method, which assumes equal confounding bias for the short-term and long-term outcomes;
(2) BSIV method, which leverages an observed confounder for which the short-term and long-term potential outcomes share the same partial additive association;
(3) Proximal method, which relies on a proxy variable of the latent confounder in the treatment-outcome relationship, extending the proximal causal inference framework to the data fusion setting.
For each approach, we developed influence function-based estimation strategies and analyzed the robustness properties of the resulting estimators. 

A natural question is how a practitioner should choose among our three proposed data-fusion strategies. The main requirements of both equi-confounding and BSIV approaches for connecting \(M\) and \(Y\) are of the \emph{equal additive-association} nature. Equi-confounding approach requires this in terms of requiring equal association of the treatment variables with the short-term and long-term potential outcomes. On the other hand, the BSIV approach allows the treatment variable to be substituted with some confounder $B$ that the practitioner believes is more likely to satisfy the equal additive-association restriction. Moreover, as mentioned earlier, the partial homogeneity assumption of the BSIV approach is weaker than the assumption of the equi-confounding approach. Therefore, when a credible BSIV exists, practitioners should prefer BSIV over equi-confounding. The proximal data fusion approach takes on a more generalized method for connecting \(M\) and \(Y\) by modeling their potentially complex and non-linear relation beyond additive association via bridge functions that solve conditional moment (integral) equations. This yields substantial model flexibility and generality compared to the BSIV approach. The trade-offs are practical: identifying a valid proxy \(Z\) may be harder than finding a credible BSIV, and because bridge functions are not structural outcome models, parametric forms are difficult to justify—hence they are best estimated nonparametrically (as in Section~\ref{sec:est:bridge}). In practice, the main challenge is constructing and reliably estimating the bridge function. A table summarizing the comparison of the three proposed methods is provided in Supplementary Material~\ref{sec:summary::supp}.

\bibliographystyle{apalike}
\bibliography{Bibliography-MM-MC.bib}

\newpage
\begin{center}
{\Large\bf SUPPLEMENTARY MATERIALS for\\ ``Combining Experimental and Observational Data for Identification and Estimation of Long-Term Causal Effects''}
\end{center}

\vspace{10mm}

\allowdisplaybreaks

\appendix
\startcontents[appendix] % start new local TOC
\printcontents[appendix]{}{1}{\section*{Supplementary Materials Contents}}

\newpage
\section{Supplementary Materials for Section \ref{sec:athey}}
\label{sec:athey::supp}

\subsection{Identification Formulae}

\begin{theorem}
\label{thm:athey::supp}
Under Assumptions \ref{assumption:IntVal}--\ref{assumption:LaUn}, the parameters $\theta_{\text{ATE}}$ and $\theta_{\text{ETT}}$ are identified via the following expressions:
\begin{align*}
\theta_{\text{ATE}} &= \E\left[ \E\left[ \E[Y \mid M, A=1, X, G=O] \mid X, G=E \right] \mid G=O \right] \\
&\quad - \E\left[ \E\left[ \E[Y \mid M, A=0, X, G=O] \mid X, G=E \right] \mid G=O \right], \\
\theta_{\text{ETT}} &= \E[Y \mid A=1, G=O] \\
&\quad - \frac{\E\left[ \E\left[ \E[Y \mid M, A=0, X, G=O] \mid X, G=E \right] \mid G=O \right]}{p(A=1 \mid G=O)} \\
&\quad + \frac{p(A=0 \mid G=O)}{p(A=1 \mid G=O)} \E[Y \mid A=0, G=O].
\end{align*} 
\end{theorem}

\section{Supplementary Materials for Section \ref{sec:AltIDAss}}
\label{sec:AltIDAss::supp}

\subsection{Identification of ATE}
\label{sec:EqConf:ATE::supp}

\subsubsection{Equi-Confounding Data Fusion}

\begin{assumption}[Additive Equi-Confounding Bias]
\label{assumption:Equi::supp}
~
\begin{enumerate}[label=(\roman*)]
\item \makebox[\linewidth]{\(\begin{aligned}[t]
\E[M^{(1)}\mid  A=0, &G=O]-\E[M^{(1)}\mid  A=1, G=O]\\
&=\E[Y^{(1)}\mid  A=0, G=O]-\E[Y^{(1)}\mid  A=1, G=O],
\end{aligned}\)}
\item\makebox[\linewidth]{\(\begin{aligned}[t]
\E[M^{(0)}\mid A=0,&G=O]-\E[M^{(0)}\mid  A=1,G=O]\\
&=\E[Y^{(0)}\mid  A=0,G=O]-\E[Y^{(0)}\mid  A=1,G=O].
\end{aligned}	\)}
\end{enumerate}
\end{assumption}
See Figure \ref{fig:ours} for a schematic representation of Assumption \ref{assumption:Equi::supp}.

\begin{theorem}
\label{thm:equiATE}
Under Assumptions \ref{assumption:IntVal}, \ref{assumption:ExVal}, and \ref{assumption:Equi::supp}, for $a\in\{0,1\}$, the parameter $\E[Y^{(a)}\mid G=O]$ and hence the parameter $\theta_{\text{ATE}}$ are identified. $\theta_{\text{ATE}}$ is identified using the following formula
\begin{align*}
\theta_{\text{ATE}} 
&=
\E[Y\mid A=1,G=O]-\E[Y\mid A=0,G=O]\\
&\quad+\E[\E[M\mid X,A=1,G=E]\mid G=O]-\E[\E[M\mid X,A=0,G=E]\mid G=O]\\
&\quad-\E[M\mid A=1,G=O]+\E[M\mid A=0,G=O].
\end{align*}
\end{theorem}

\begin{figure}[t]
\centering
		\tikzstyle{block} = [draw, circle, inner sep=2.5pt, fill=lightgray]
		\tikzstyle{input} = [coordinate]
		\tikzstyle{output} = [coordinate]
        \begin{tikzpicture}
            \tikzset{edge/.style = {->,> = latex'}}
            % vertices
            \node[] (a1) at  (0,4.17) {$A\ref{assumption:Equi::supp}(ii)$};
            \node[] (a1) at  (-.1,2.77) {$A\ref{assumption:Equi::supp}(i)$};                        
            \node[] (axl) at  (-5.5,0) {};
            \node[] (axr) at  (5.5,0) {};
            \node[] (l1) at  (-1,1.5) {$\circ$};
            \node[] (l1t1) at  (-3.5,1.5) {$\E[M^{(1)}\mid A=0,G=O]$};
            \node[] (l1t2) at  (-3.5,2.1) {$\E[M^{(0)}\mid A=0,G=O]$};                                              
            \node[] (l2) at  (-1,3.5) {$\circ$};
            \node[] (l2t1) at  (-3.5,3.5) {$\E[M^{(0)}\mid A=1,G=O]$};
            \node[] (l2t2) at  (-3.5,4.5) {$\E[M^{(1)}\mid A=1,G=O]$};  
            \node[] (l3) at  (-1,2) {$\circ$};
            \node[] (l4) at  (-1,4.5) {$\circ$};                                              
            \node[] (l0) at  (-1,0) {$ $};            
            \node[] (r1) at  (1,2.5) {$\circ$};
            \node[] (r1t) at  (3.5,2.5) {$\E[Y^{(0)}\mid A=0,G=O]$};                        
            \node[] (r2) at  (1,3) {$\circ$};
            \node[] (r2t) at  (3.5,3.1) {$\E[Y^{(1)}\mid A=0,G=O]$};            
            \node[] (r3) at  (1,4) {$\circ$};
            \node[] (r3t) at  (3.5,4) {$\E[Y^{(0)}\mid A=1,G=O]$};            
            \node[] (r4) at  (1,6) {$\circ$};
            \node[] (r4t) at  (3.5,6) {$\E[Y^{(1)}\mid A=1,G=O]$};            
            \node[] (r0) at  (1,0) {$ $};
            %edges
            \draw[-] (axl) to (axr);
            \draw[-,thick] (l3) to (r1);
            \draw[dashed,thick] (l1) to (r2);
            \draw[dashed,thick] (l2) to (r3);                                    
            \draw[-,thick] (l4) to (r4);            
            \draw[dotted] (r4) to (r3);
            \draw[dotted] (r3) to (r2);
            \draw[dotted] (r2) to (r1);
            \draw[dotted] (r1) to (r0);
            \draw[dotted] (l3) to (l1);
            \draw[dotted] (l1) to (l0);
			\draw[dotted] (l4) to (l2);
            \draw[dotted] (l2) to (l3);            
        \end{tikzpicture}
        \caption{Schematic representation of Assumption \ref{assumption:Equi::supp}. The dashed lines, representing the difference between two unobserved parameters, are the assumptions.}
        \label{fig:ours}
\end{figure}

\subsubsection{Conditional Equi-Confounding Data Fusion}

\begin{assumption}[Conditional Additive Equi-Confounding Bias]
\label{assumption:CondEqui::supp}
~
\begin{enumerate}[label=(\roman*)]
\item With probability one, we have
\begin{align*}
\E[M^{(1)}\mid  X,A=0,&G=O]-\E[M^{(1)}\mid  X,A=1,G=O]\\
&=\E[Y^{(1)}\mid  X,A=0,G=O]-\E[Y^{(1)}\mid  X,A=1,G=O],
\end{align*}
\item With probability one, we have
\begin{align*}
\E[M^{(0)}\mid X,A=0,&G=O]-\E[M^{(0)}\mid  X,A=1,G=O]\\
&=\E[Y^{(0)}\mid  X,A=0,G=O]-\E[Y^{(0)}\mid  X,A=1,G=O].
\end{align*}	
\end{enumerate}
\end{assumption}

\begin{theorem}
\label{thm:equiATECond}
Under Assumptions \ref{assumption:IntVal}, \ref{assumption:ExVal}, and \ref{assumption:CondEqui::supp}, for $a\in\{0,1\}$, the parameter $\E[Y^{(a)}\mid G=O]$  and hence the parameter $\theta_{\text{ATE}}$ are identified. $\theta_{\text{ATE}}$ is identified using the following formula
\begin{align*}
\theta_{\text{ATE}}
&=\E\big[
\E[Y\mid X,A=1,G=O]-\E[Y\mid X,A=0,G=O]\\
&\quad\quad+\E[M\mid X,A=1,G=E]-\E[M\mid X,A=0,G=E]\\
&\quad\quad+\E[M\mid X,A=0,G=O]-\E[M\mid X,A=1,G=O]
\big| G=O\big].
\end{align*}
\end{theorem}

\subsection{Unconditional Equi-Confounding Data Fusion}
\label{sec:UNCondEqui::supp}

In this section, we provide an unconditional version of Assumption \ref{assumption:CondEqui} as the basis for identification.

\begin{assumption}[Additive Equi-Confounding Bias]
\label{assumption:Equi}
{\small
\[
\E[M^{(0)} \mid A=0, G=O] \!-\! \E[M^{(0)} \mid A=1, G=O]
\!=\! \E[Y^{(0)} \mid A=0, G=O] \!-\! \E[Y^{(0)} \mid A=1, G=O].
\]
}
\end{assumption}
\begin{example}
Assumption~\ref{assumption:Equi} holds if the data are generated from the following model:
\[
M = \tau A + f_M(X, U) + \epsilon_M, 
\qquad \qquad Y = \theta A + f_Y(X, M, U) + \epsilon_Y,
\]
where $\epsilon_M$ and $\epsilon_Y$ are independent noise terms, and the function 
$f(X, M, U) := f_Y(X, M, U) - f_M(X, U)$ satisfies
$\E[f(X, M, U) \mid A=1] = \E[f(X, M, U) \mid A=0]$.
Here, $f_M$ and $f_Y$ may be stochastic functions.
\end{example}

To identify the parameter $\theta_{\text{ETT}}$, we begin by noting that under Assumption~\ref{assumption:Equi},
\begin{align*}
\theta_{\text{ETT}}
&= \E[Y^{(1)} \mid A=1, G=O] - \E[Y^{(0)} \mid A=0, G=O] \\
&\quad + \E[Y^{(0)} \mid A=0, G=O] - \E[Y^{(0)} \mid A=1, G=O] \\
&= \E[Y \mid A=1, G=O] - \E[Y \mid A=0, G=O] \\
&\quad + \E[M \mid A=0, G=O] - \E[M^{(0)} \mid A=1, G=O].
\end{align*}
Thus, identification of $\theta_{\text{ETT}}$ reduces to identification of $\E[M^{(0)} \mid A=1, G=O]$.
One might attempt to argue that
\begin{align*}
\E[M^{(0)} \mid A=1, G=O]
&= \E[ \E[M^{(0)} \mid A=1, X, G=O] \mid A=1, G=O ] \\
&\overset{(*)}{=} \E[ \E[M^{(0)} \mid A=1, X, G=E] \mid A=1, G=O ] \\
&\overset{\text{A\ref{assumption:IntVal}}}{=} \E[ \E[M^{(0)} \mid A=0, X, G=E] \mid A=1, G=O ] \\
&= \E[ \E[M \mid A=0, X, G=E] \mid A=1, G=O ],
\end{align*}
and thereby conclude identifiability. However, step $(*)$ is not generally valid. Although Assumption~\ref{assumption:ExVal} states that for all $a \in \{0,1\}$, $G \independent \{Y^{(a)}, M^{(a)}\} \mid X$, this does not imply that $G \independent \{Y^{(a)}, M^{(a)}\} \mid X, A=1$. The reason is that $A$ is a collider on the path between $G$ and the potential outcomes, and conditioning on it may induce spurious associations.

We now present our identification result for $\theta_{\text{ETT}}$ under equi-confounding assumption.

\begin{theorem}
\label{thm:equiETT}
Under Assumptions~\ref{assumption:IntVal}, \ref{assumption:ExVal}, and~\ref{assumption:Equi}, the parameter $\theta_{\text{ETT}}$ is identified by:
\begin{align*}
\theta_{\text{ETT}}
&= \E[Y \mid A=1, G=O] - \E[Y \mid A=0, G=O] + \E[M \mid A=0, G=O] \\
&\quad - \frac{ \E[ \E[M \mid X, A=0, G=E] \mid G=O ] - \E[M \mid A=0, G=O] \cdot p(A=0 \mid G=O) }{ p(A=1 \mid G=O) }.
\end{align*}
\end{theorem}

The corresponding identification result for the parameter $\theta_{\text{ATE}}$ is provided in Supplementary Material~\ref{sec:EqConf:ATE::supp}.

\subsection{Quantile-Quantile Equi-Confounding Data Fusion}
\label{sec:QQCondEqui::supp}

We note that the (conditional) additive equi-confounding bias assumption may be restrictive, as it requires the short-term and long-term outcomes to be measured on the same scale. While this is not a limitation in our specific application---where \( M \) and \( Y \) represent short- and long-term versions of the same outcome---it may be problematic in other settings. To address this, we propose a generalization of the additive equi-confounding framework, inspired by the changes-in-changes approach in panel data analysis \citep{athey2006identification} and its analogue in the negative control inference literature \citep{sofer2016negative}. This generalization also enables identification of causal parameters beyond mean-based quantities such as ATE and ETT. We only provide the result for the treated sub-population; it can be extended to marginal parameters similar to our approach in the previous subsections.

To proceed, we first introduce the quantile–quantile  association, as a measure of association between two variables, which we will use to encode confounding bias. 

\begin{definition}
The quantile–quantile association between any variable $W$ and the binary treatment variable $A$ conditional on $X$ is defined as 
\[
q_W(v\mid x)\coloneqq F_{W\mid A=0,X=x}\circ F^{-1}_{W\mid A=1,X=x}(v),\quad\quad v\in[0,1],
\]
where for random variables $X_1$ and $X_2$, we denote the cumulative distribution function of $X_1$ conditioned on $X_2$ by $F_{X_1\mid X_2}$, and the operator $\circ$ denotes function composition.
\end{definition}

\begin{assumption}[Quantile-Quantile Equi-Confounding Bias]
\label{assumption:QQEqui}
For all $v\in[0,1]$, with probability one, we have
\begin{align*}
q_{M^{(0)}}(v\mid X,G=O)=q_{Y^{(0)}}(v\mid X,G=O).
\end{align*}	
\end{assumption}

\begin{example}
Assumption \ref{assumption:QQEqui} is satisfied if the data is generated from the following structural equations.
\begin{align*}
&M=g_M(A,X,U),\\
&Y= g_Y(A,X,U),
\end{align*}
where $g_M$ and $g_Y$ are monotonically increasing functions of $U$ for any $A,X$.

To see this, we note that for $a\in\{0,1\}$,
\begin{align*}
F_{Y^{(a)}\mid A=1,X}(y)
&=Pr\big( Y^{(a)}\le y \big| A=1,X\big)\\
&=Pr\big( g_Y(a,X,U)\le y \big| A=1,X\big)\\
&=Pr\big( U\le g^{-1}_Y(a,X,y) \big| A=1,X\big)\\
&=F_{U\mid A=1,X}(g^{-1}_Y(a,X,y)).
\end{align*}
Also,
\[
F_{Y^{(a)}\mid A=1,X}(y)=v\Rightarrow F^{-1}_{Y^{(a)}\mid A=1,X}=y,
\]
and
\[
F_{U\mid A=1,X}(g^{-1}_Y(a,X,y))=v\Rightarrow g_Y(a,X,F^{-1}_{U\mid A=1,X}(v))=y.
\]
Therefore, we have
\begin{align*}
F_{Y^{(a)}\mid A=0,X}\circ F^{-1}_{Y^{(a)}\mid A=1,X}(v)
&=Pr\big( Y^{(a)}\le F^{-1}_{Y^{(a)}\mid A=1,X}(v) \big| A=0,X\big)\\
&=Pr\big( g_Y(a,X,U)\le F^{-1}_{Y^{(a)}\mid A=1,X}(v) \big| A=0,X\big)\\
&=Pr\big( g_Y(a,X,U)\le g_Y(a,X,F^{-1}_{U\mid A=1,X}(v)) \big| A=0,X\big)\\
&=Pr\big( U\le F^{-1}_{U\mid A=1,X}(v) \big| A=0,X\big)\\
&=F_{U\mid A=0,X}\circ F^{-1}_{U\mid A=1,X}(v).
\end{align*}
Similarly,
\begin{align*}
F_{M^{(a)}\mid A=0,X}\circ F^{-1}_{M^{(a)}\mid A=1,X}(v)
&=F_{U\mid A=0,X}\circ F^{-1}_{U\mid A=1,X}(v).
\end{align*}
Therefore,
\begin{align*}
q_{M^{(a)}}(v\mid X,G=O)
&=F_{M^{(a)}\mid A=0,X}\circ F^{-1}_{M^{(a)}\mid A=1,X}(v)\\
&=F_{Y^{(a)}\mid A=0,X}\circ F^{-1}_{Y^{(a)}\mid A=1,X}(v)\\
&=q_{Y^{(a)}}(v\mid X,G=O).
\end{align*}

\end{example}

\begin{theorem}
\label{thm:equiETTQQ}
Under Assumptions \ref{assumption:IntVal}, \ref{assumption:ExVal}, and \ref{assumption:QQEqui}, the conditional distribution $F_{Y^{(0)}\mid X,A=1,G=O}$ is identified using the following formula
\begin{align*}
F_{Y^{(0)}\mid A=1,X,G=O}(y)
&=\frac{F_{M\mid X,A=0,G=E}\circ F^{-1}_{M\mid A=0,X,G=O} \circ F_{Y\mid A=0,X,G=O}(y)}{p(A=1\mid X,G=O)}\\
&\quad-\frac{p(A=0\mid X,G=O)}{p(A=1\mid X,G=O)} F_{Y\mid A=0,X,G=O}(y).
\end{align*}
\end{theorem}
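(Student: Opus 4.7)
The plan is to treat Theorem \ref{thm:equiETTQQ} as the distributional analogue of Theorem \ref{thm:equiETT}, replacing means by full CDFs and the additive equi-confounding bias by its quantile-quantile counterpart. The heart of the argument is a distributional version of Lemma \ref{lem:idcond}: identifying the cross-world object $F_{M^{(0)}\mid A=1,X,G=O}$ from the experimental CDF $F_{M\mid A=0,X,G=E}$, the observational CDF $F_{M\mid A=0,X,G=O}$, and the propensity score $p(A\mid X,G=O)$.

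First I would unwind Assumption \ref{assumption:QQEqui} and solve for $F_{Y^{(0)}\mid A=1,X,G=O}(y)$. Setting $v=F_{Y^{(0)}\mid A=1,X,G=O}(y)$ in the definitional equality $F_{M^{(0)}\mid A=0,X,G=O}\circ F^{-1}_{M^{(0)}\mid A=1,X,G=O}(v)=F_{Y^{(0)}\mid A=0,X,G=O}\circ F^{-1}_{Y^{(0)}\mid A=1,X,G=O}(v)$ and then composing both sides with $F_{M^{(0)}\mid A=1,X,G=O}\circ F^{-1}_{M^{(0)}\mid A=0,X,G=O}$, I obtain
\[
F_{Y^{(0)}\mid A=1,X,G=O}(y) \;=\; F_{M^{(0)}\mid A=1,X,G=O}\circ F^{-1}_{M^{(0)}\mid A=0,X,G=O}\circ F_{Y^{(0)}\mid A=0,X,G=O}(y).
\]
By consistency, $F_{Y^{(0)}\mid A=0,X,G=O}=F_{Y\mid A=0,X,G=O}$ and $F_{M^{(0)}\mid A=0,X,G=O}=F_{M\mid A=0,X,G=O}$, so only the term $F_{M^{(0)}\mid A=1,X,G=O}$ remains to be identified.

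To identify this latter object, I would mirror the proof of Lemma \ref{lem:idcond} at the level of the indicators $\mathbbm{1}\{M^{(0)}\le m\}$ instead of $M^{(0)}$ itself. Using external validity, internal validity, and consistency in turn yields $F_{M^{(0)}\mid X,G=O}=F_{M^{(0)}\mid X,G=E}=F_{M^{(0)}\mid A=0,X,G=E}=F_{M\mid A=0,X,G=E}$. The decomposition
\[
F_{M\mid A=0,X,G=E}(m) = p(A=0\mid X,G=O)\,F_{M\mid A=0,X,G=O}(m) + p(A=1\mid X,G=O)\,F_{M^{(0)}\mid A=1,X,G=O}(m)
\]
then produces $F_{M^{(0)}\mid A=1,X,G=O}(m)$ in closed form. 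Evaluating this expression at $m=F^{-1}_{M\mid A=0,X,G=O}\circ F_{Y\mid A=0,X,G=O}(y)$, the simplification $F_{M\mid A=0,X,G=O}(m)=F_{Y\mid A=0,X,G=O}(y)$ collapses the formula into exactly the stated form.

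The main obstacle I anticipate is the technical handling of generalized inverses at the cancellation step: strict equality $F_{M\mid A=0,X,G=O}\circ F^{-1}_{M\mid A=0,X,G=O}(u)=u$ requires either that the conditional CDFs involved be strictly increasing (e.g.\ absolute continuity of $M$ and $Y$ given $(A,X,G)$), or else a careful argument using right-continuous generalized inverses. This is a standard regularity assumption in the change-in-changes and negative-control literature cited by the authors, and I would state it explicitly at the start of the proof. Everything else is bookkeeping.
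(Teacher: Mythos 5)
Your proposal is correct and follows essentially the same route as the paper's proof: invert the quantile-quantile equi-confounding relation to express $F_{Y^{(0)}\mid A=1,X,G=O}$ as $F_{M^{(0)}\mid A=1,X,G=O}\circ F^{-1}_{M\mid A=0,X,G=O}\circ F_{Y\mid A=0,X,G=O}$, then identify $F_{M^{(0)}\mid A=1,X,G=O}$ via the mixture decomposition together with external and internal validity, exactly as in Lemma \ref{lem:idcond} applied to indicators. Your remark about strict monotonicity of the conditional CDFs for the inverse-cancellation step is a fair point of care that the paper's proof leaves implicit.
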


\section{Supplementary Materials for Section \ref{sec:bsiv}}
\label{sec:bsiv::supp}

\subsection{Identification of ATE}

\begin{assumption}[BSIV Partial Additive Equi-Association]
\label{assumption:BSEqui::supp}
~
\begin{enumerate}[label=(\roman*)]
\item With probability one, we have
\begin{align*}
\E[M^{(1)}\mid  X,B=1,&G=O]-\E[M^{(1)}\mid  X,B=0,G=O]\\
&=\E[Y^{(1)}\mid  X,B=1,G=O]-\E[Y^{(1)}\mid  X,B=0,G=O],
\end{align*}
\item With probability one, we have
\begin{align*}
\E[M^{(0)}\mid X,B=1,&G=O]-\E[M^{(0)}\mid  X,B=0,G=O]\\
&=\E[Y^{(0)}\mid  X,B=1,G=O]-\E[Y^{(0)}\mid  X,B=0,G=O].
\end{align*}	
\end{enumerate}
\end{assumption}

\begin{assumption}[Partial Homogeneity of Causal Effect Contrast]
\label{assumption:homogeneityETT::supp}
~
\begin{enumerate}[label=(\roman*)]
\item With probability one, we have
\begin{align*}
&\E[Y^{(1)}-Y^{(0)}\mid A=0,B=1,X,G=O]-\E[M^{(1)}-M^{(0)}\mid A=0,B=1,X,G=O]\\
&=\E[Y^{(1)}-Y^{(0)}\mid A=0,B=0,X,G=O]-\E[M^{(1)}-M^{(0)}\mid A=0,B=0,X,G=O].
\end{align*}
\item With probability one, we have
\begin{align*}
&\E[Y^{(1)}-Y^{(0)}\mid A=1,B=1,X,G=O]-\E[M^{(1)}-M^{(0)}\mid A=1,B=1,X,G=O]\\
&=\E[Y^{(1)}-Y^{(0)}\mid A=1,B=0,X,G=O]-\E[M^{(1)}-M^{(0)}\mid A=1,B=0,X,G=O].
\end{align*}
\end{enumerate}
\end{assumption}

\begin{assumption}[Partial Homogeneity of Bias Contrast]
\label{assumption:homogeneityBias::supp}
~
\begin{enumerate}[label=(\roman*)]
\item With probability one, we have
\begin{equation*}
%\label{eq:HomoBias1}
\begin{aligned}
&\big\{\E[Y^{(1)}\mid A=1,B=1,X,G=O]-\E[Y^{(1)}\mid A=0,B=1,X,G=O]\big\}\\
&\quad-\big\{\E[M^{(1)}\mid A=1,B=1,X,G=O]-\E[M^{(1)}\mid A=0,B=1,X,G=O]\big\}\\
&=\big\{\E[Y^{(1)}\mid A=1,B=0,X,G=O]-\E[Y^{(1)}\mid A=0,B=0,X,G=O]\big\}\\
&\quad-\big\{\E[M^{(1)}\mid A=1,B=0,X,G=O]-\E[M^{(1)}\mid A=0,B=0,X,G=O]\big\}.
\end{aligned}
\end{equation*}
\item With probability one, we have
\begin{equation*}
%\label{eq:HomoBias2}
\begin{aligned}
&\big\{\E[Y^{(0)}\mid A=1,B=1,X,G=O]-\E[Y^{(0)}\mid A=0,B=1,X,G=O]\big\}\\
&\quad-\big\{\E[M^{(0)}\mid A=1,B=1,X,G=O]-\E[M^{(0)}\mid A=0,B=1,X,G=O]\big\}\\
&=\big\{\E[Y^{(0)}\mid A=1,B=0,X,G=O]-\E[Y^{(0)}\mid A=0,B=0,X,G=O]\big\}\\
&\quad-\big\{\E[M^{(0)}\mid A=1,B=0,X,G=O]-\E[M^{(0)}\mid A=0,B=0,X,G=O]\big\}.
\end{aligned}
\end{equation*}
\end{enumerate}
\end{assumption}

\begin{theorem}
\label{thm:BSIV_ATE}
Define $\pi(B,X) \coloneqq p(A=1 \mid B,X,G=O)$, and for $a,b\in\{0,1\}$ define $E_{ab}^O(X)\coloneqq\E[Y-M\mid A=a,B=b,X,G=O]$ and $P_{ab}^O(X)\coloneqq p(A=a\mid B=b,X,G=O)$.	
\begin{itemize}
\item[(a)]
Under Assumptions \ref{assumption:IntVal}, \ref{assumption:ExVal},\ref{assumption:BSIVrelevance}, \ref{assumption:BSEqui::supp}, and \ref{assumption:homogeneityETT::supp}, the parameter $\theta_{ATE}$ is identified by
\begin{align*}
\theta_{ATE}
&=\E\Big[
\frac{\E[Y-M\mid B=1,X,G=O]-\E[Y-M\mid B=0,X,G=O]}{P_{11}^O(X)-P_{10}^O(X)}\\
&\qquad+\E[M\mid A=1,B,X,G=E]-\E[M\mid A=0,B,X,G=E]\Big|G=O\Big].
\end{align*}
\item[(b)]
Under Assumptions \ref{assumption:IntVal}, \ref{assumption:ExVal},\ref{assumption:BSIVrelevance}, \ref{assumption:BSEqui::supp}, and  \ref{assumption:homogeneityBias::supp}, the parameter $\theta_{ATE}$ is identified by
\begin{align*}
\theta_{ATE}
&=\E\Big[
\{E_{11}^O(X)-E_{01}^O(X)-E_{10}^O(X)+E_{00}^O(X)\}B+E_{10}^O(X)-E_{00}^O(X)\\
&\qquad-\frac{\{E_{01}^O(X)-E_{00}^O(X)\}\pi(B,X)+\{E_{11}^O(X)-E_{10}^O(X)\}(1-\pi(B,X))}{P_{01}^O(X)-P_{00}^O(X)}\\
&\qquad+\E[M\mid A=1,B,X,G=E]-\E[M\mid A=0,B,X,G=E]\Big|G=O\Big].
\end{align*}
\end{itemize}
\end{theorem}

\subsection{Non-Binary Bespoke Instrumental Variable}

The bespoke instrumental variable \( B \) need not be binary. In the following, we provide the counterparts of the assumptions in our BSIV data fusion framework for non-binary \( B \). The identification arguments extend straightforwardly to this setting with no substantive changes to the proofs.

\begin{itemize}
\item {\bf \emph{BSIV Relevance.}} For any $b\neq 0$,
\[
\E[A\mid B=b,X,G=O]\neq \E[A\mid B=0,X,G=O].
\]	

\item {\bf \emph{BSIV Partial Additive Equi-Association.}} 
\begin{enumerate}[label=(\roman*)]
\item With probability one, we have
\begin{align*}
\E[M^{(1)}\mid  X,B,&~G=O]-\E[M^{(1)}\mid  X,B=0,G=O]\\
&=\E[Y^{(1)}\mid  X,B,G=O]-\E[Y^{(1)}\mid  X,B=0,G=O],
\end{align*}
\item With probability one, we have
\begin{align*}
\E[M^{(0)}\mid X,B,&~G=O]-\E[M^{(0)}\mid  X,B=0,G=O]\\
&=\E[Y^{(0)}\mid  X,B,G=O]-\E[Y^{(0)}\mid  X,B=0,G=O].
\end{align*}	
\end{enumerate}

\item {\bf \emph{Partial Homogeneity of Causal Effect Contrast.}}
\begin{enumerate}[label=(\roman*)]
\item The following quantity is not a function of $B$.
\begin{align*}
&\E[Y^{(1)}-Y^{(0)}\mid A=0,B,X,G=O]-\E[M^{(1)}-M^{(0)}\mid A=0,B,X,G=O].
\end{align*}
\item The following quantity is not a function of $B$.
\begin{align*}
&\E[Y^{(1)}-Y^{(0)}\mid A=1,B,X,G=O]-\E[M^{(1)}-M^{(0)}\mid A=1,B,X,G=O].
\end{align*}
\end{enumerate}

\item {\bf \emph{Partial Homogeneity of Bias Contrast.}}
\begin{enumerate}[label=(\roman*)]
\item The following quantity is not a function of $B$.
\begin{align*}
&\big\{\E[Y^{(1)}\mid A=1,B,X,G=O]-\E[Y^{(1)}\mid A=0,B,X,G=O]\big\}\\
&-\big\{\E[M^{(1)}\mid A=1,B,X,G=O]-\E[M^{(1)}\mid A=0,B,X,G=O]\big\}.
\end{align*}
\item The following quantity is not a function of $B$.
\begin{align*}
&\big\{\E[Y^{(0)}\mid A=1,B,X,G=O]-\E[Y^{(0)}\mid A=0,B,X,G=O]\big\}\\
&-\big\{\E[M^{(0)}\mid A=1,B,X,G=O]-\E[M^{(0)}\mid A=0,B,X,G=O]\big\}.
\end{align*}
\end{enumerate}

\end{itemize}

\section{Supplementary Materials for Section \ref{sec:proximal}}
\label{sec:proximal::supp}

\subsection{Identification of ATE}

\begin{theorem}
\label{thm:POR::supp}
Under Assumptions \ref{assumption:exchangeU}-\ref{assumption:ExVal}, \ref{assumption:proxy}, and \ref{assumption:compexist1}, the parameter $\theta_{ATE}$ is identified by
\begin{equation*}
\begin{aligned}
&\theta_{ATE}=
\E\big[\E[h(M,A,X)\mid A=1,X,G=E]\big|  G=O\big]
-\E\big[\E[h(M,A,X)\mid A=0,X,G=E]\big|  G=O\big].
\end{aligned}
\end{equation*}
\end{theorem}

\begin{theorem}
\label{thm:PIPW::supp}
Under Assumptions \ref{assumption:exchangeU}-\ref{assumption:ExVal}, \ref{assumption:proxy}, and \ref{assumption:compexist2}, the parameter $\theta_{ATE}$ is 
identified by
\begin{equation*}
\begin{aligned}
&\theta_{ATE}=
\E[I(A=1)Yq(Z,A,X)\mid G=O]-\E[I(A=0)Yq(Z,A,X)\mid G=O].
\end{aligned}
\end{equation*}
\end{theorem}

\subsection{Comparison to Other Proximal Setups}

\subsubsection{Comparison to the Original Proximal Causal Inference Setup}
\label{sec:prox:orig}

Our approach can be viewed as an extension of the  proximal causal inference framework \citep{miao2018identifying,tchetgen2020introduction,cui2020semiparametric} to the data fusion setup.
The original proximal causal inference framework only considers data from one domain and besides assuming existence of a proxy variable $Z$ which satisfies $Z\independent Y\mid\{U,X,A\}$, it also assumes existence of a second proxy variable $M$ of the latent confounder in the system which satisfies $M\independent \{A,Z\}\mid\{U,X\}$. Figures \ref{fig:origprox} demonstrates an example of a graphical model which satisfies the proxy variables conditions in the original setup. Importantly, that setup requires no treatment effect on the variable $M$. Our work shows that if this condition is violated, experimental data still allows for identification of the causal effect of the treatment on $Y$ by essentially anchoring  the short-term causal impact at that observed in the experimental sample.

\begin{figure}[t]
\centering
		\tikzstyle{block} = [draw, circle, inner sep=2.5pt, fill=lightgray]
		\tikzstyle{input} = [coordinate]
		\tikzstyle{output} = [coordinate]
        \begin{tikzpicture}
            \tikzset{edge/.style = {->,> = latex'}}
            % vertices
            \node[] (a) at  (-2,0) {$A$};
            \node[] (n) at  (0,0) {$M$};
            \node[] (x) at  (-3,1) {$X$};
            \node[] (y) at  (2,0) {$Y$};   
            \node[] (z) at  (-3,-.5) {$Z$};                     
            \node[block] (u) at  (-3,2) {$U$};                       
            %edges
            \draw[-stealth] (x) to (a);
            \draw[-stealth][edge, bend left=-65] (u) to (a);                        
            %\draw[-stealth] (a) to (n);
            \draw[-stealth][edge, bend left=-35] (a) to (y);            
            \draw[-stealth] (u) to (y);
            \draw[-stealth][edge, bend left=-65] (u) to (z);            
            \draw[-stealth] (x) to (z);
            \draw[-stealth] (z) to (a);                                             
            \draw[-stealth] (u) to (n);
            \draw[-stealth] (n) to (y);            	  
            \draw[-stealth] (u) to (x);	              
            \draw[-stealth] (x) to (n);
            \draw[-stealth] (x) to (y);            	                                      
        \end{tikzpicture}
        \caption{The original proximal causal inference model.}
        \label{fig:origprox}	
\end{figure}

\subsubsection{Comparison to \citep{imbens2022long}}
\label{sec:comImbens}

After the release of the first draft of our work, \cite{imbens2022long} also proposing an approach for identification of long-term causal effects based on proximal causal inference framework. Its main identification result in that work relies on stronger assumptions. Specifically, as for the internal validity of the experimental domain, the authors assume that for $a\in\{0,1\}$,
\[
\{Y^{(a)},M^{(a)},U,X\}\independent A\mid G=E,
\]
for the external validity, the authors assume that for $a\in\{0,1\}$,
\[
\{M^{(a)},U,X\}\independent G.
\]
Our assumption for internal validity is weaker in the sense that we allow that the researcher assigns treatments in the trial based on observed covariates. Clearly choosing the treatment independent of the observed covariates is a special case.
More importantly, our assumption for external validity is much weaker as we allow the distribution of the covariates to be different in the experimental and observational datasets. The data in these two datasets may have been collected in completely geographically separated places, and hence it is important to have the capability of allowing different distributions for the covariates in the two domains. 
However, we note that the authors also present an extension to their setup that resembles our proximal data fusion approach.  

In the place of our Assumption \ref{assumption:proxy} (i.e., existence of proximal variable) the authors of that work assume there exists three short-term outcomes $S=(S_1,S_2,S_3)$ in the system, sorted in the temporal order, and posit the ``sequential outcomes'' assumption, which states that for $a\in\{0,1\}$,
\[
\{Y^{(a)},S_3^{(a)}\}\independent S_1^{(a)}\mid \{S_2^{(a)},U,X,G=O\}.
\]
For example, this assumption is satisfied if the short-term and long-term outcomes can be directly affected by \emph{only} outcomes immediately preceding them.
The assumption is designed in a way that if $S_2$ is in the conditioning set, $S_1$ becomes independent of $S_3$. Therefore, by including $S_2$ in the conditioning set, a setup similar to our requirement of conditional independence of $Z$ and $\{M,Y\}$ is created.
However, it may be challenging in real-world setups to find three post-treatment variables that satisfy the specific sequential conditional independence required by the sequential outcomes assumption, whereas as clarified in Remark \ref{rmk:proxyflexy}, our model allows for great flexibility in terms of choosing the proxy variable $Z$.

\section{Supplementary Materials for Section \ref{sec:estimation}}
\label{sec:estimation::supp}

\subsection{Influence Functions for the Bespoke IV Data Fusion}

Below is the complete statement of Theorem \ref{thm:IF:BSIV-1_ett}.

\begin{theorem}

Under a nonparametric model, the efficient influence function for the parameter $\psi^{\text{bsiv1}}_{\text{ETT}}$ is given by:
{\footnotesize
\begin{align*}
&IF_{\psi^{\text{bsiv1}}_{\text{ETT}}}(V)\\
&=\frac{I(G=O)}{P_{11}^O(X)-P_{10}^O(X)}\frac{1}{p(A=1,G=O)}\frac{\pi^O(X)}{\rho_B^O(X)}\bigg\{I(B=1)\{Y-M-e_1^O(X)\}-I(B=0)\{Y-M-e_0^O(X)\}\\
&\quad\quad+\frac{e_1^O(X)-e_0^O(X)}{P_{11}^O(X)-P_{10}^O(X)}\big\{I(B=0)\{I(A=1)-P_{10}^O(X)\}-I(B=1)\{I(A=1)-P_{11}^O(X)\}\big\}\bigg\}\\
&\quad+\frac{I(A=1)I(G=O)}{p(A=1,G=O)}\Big\{\frac{e_1^O(X)-e_0^O(X)}{P_{11}^O(X)-P_{10}^O(X)}\Big\}-\frac{I(A=0)I(G=E)}{p(A=1,G=O)}\cdot\frac{1-\tau(B,X)}{\tau(B,X)}\cdot
\frac{M-M_{0}^{E}(B,X)}{1-P_{1B}^E(X)}\\
&\quad+\frac{I(G=O)}{p(A=1,G=O)}\cdot \{M-M_{0}^{E}(B,X)-I(A=1)\psi^{\text{bsiv1}}_{\text{ETT}}\},
\end{align*}
}
the efficient influence function for the parameter $\psi^{\text{bsiv2}}_{\text{ETT}}$ is given by:
{\footnotesize
\begin{align*}
&IF_{\psi^{\text{bsiv2}}_{\text{ETT}}}(V)\\
&=\frac{I(G=O)}{p(A=1,G=O)}\Bigg\{I(A=1)I(B=1)\{Y-M-E_{01}^O(X)-E_{10}^O(X)+E_{00}^O(X)\}\\
&\quad-I(A=0)I(B=1)\frac{P_{11}^O(X)}{P_{01}^O(X)}\{Y-M-E_{01}^O(X)\}+I(A=1)\{E_{10}^O(X)-E_{00}^O(X)\}\\
&\quad-I(B=0)\Big\{\frac{\rho_{1}^O(X)P_{11}^O(X)}{\rho_{0}^O(X)P_{10}^O(X)}I(A=1)\{Y-M-E_{10}^O(X)\}-\frac{\rho_{1}^O(X)P_{11}^O(X)}{\rho_{0}^O(X)P_{00}^O(X)}I(A=0)\{Y-M-E_{00}^O(X)\}\Big\}\\
&\quad+\frac{I(B=0)}{P_{10}^O(X)}\frac{\pi^O(X)}{\rho_{0}^O(X)}\cdot I(A=1)\{Y-M-E_{10}^O(X)\}-\frac{I(B=0)}{P_{00}^O(X)}\frac{\pi^O(X)}{\rho_{0}^O(X)}\cdot I(A=0)\{Y-M-E_{00}^O(X)\}\\
&\quad+\frac{\pi^O(X)}{P_{01}^O(X)-P_{00}^O(X)}\Big\{I(A=0)\big\{\frac{I(B=1)}{P_{01}^O(X)\rho_1^O(X)}\{Y-M-E_{01}^O(X)\}-\frac{I(B=0)}{P_{00}^O(X)\rho_0^O(X)}\{Y-M-E_{00}^O(X)\}\big\}\\
&\quad\quad+\frac{E_{01}^O(X)-E_{00}^O(X)}{P_{01}^O(X)-P_{00}^O(X)}\big\{-\frac{I(B=1)}{\rho_1^O(X)}\{I(A=0)-P_{01}^O(X)\}+\frac{I(B=0)}{\rho_0^O(X)}\{I(A=0)-P_{00}^O(X)\}\big\}\Big\}\\
&\quad+I(A=1)\{\frac{E_{01}^O(X)-E_{00}^O(X)}{P_{01}^O(X)-P_{00}^O(X)}\}+M-M_0^E(B,X)-I(A=1)\psi^{\text{bsiv2}}_{\text{ETT}}\Bigg\}\\
&\quad-\frac{1}{p(A=1,G=O)}\cdot\frac{I(A=0)I(G=E)}{1-P_{1B}^E(X)}\cdot\frac{1-\tau(B,X)}{\tau(B,X)}\cdot\{M-M_0^E(B,X)\}.
\end{align*}
}
\end{theorem}

\subsection{Estimation of ATE}

In this section, we provide estimation strategies for ATE under the three data fusion frameworks. We derive the influence functions for ATE and analyze the robustness properties of the proposed estimators under each case. We adopt the cross-fitting procedure described in Section \ref{sec:estimation}.

\subsubsection{Estimation under Equi-Confounding Data Fusion}

From Theorem \ref{thm:equiATECond}, the task of inference for $\theta_{ATE}$ reduces to the estimation of the following functionals of the observed data distribution respectively.
\begin{align*}
\psi_{\text{ATE}}^{\text{equi}} 
&=\E\big[
\E[Y\mid X,A=1,G=O]-\E[Y\mid X,A=0,G=O]\\
&\quad\quad+\E[M\mid X,A=1,G=E]-\E[M\mid X,A=0,G=E]\\
&\quad\quad+\E[M\mid X,A=0,G=O]-\E[M\mid X,A=1,G=O]
\big| G=O\big].
\end{align*}
Again, the bias of the resulting estimator will be of first order with respect to the bias in the estimation of the nuisance functions. Hence, we use IF-based estimators instead.

We first derive the influence functions of the parameter $\psi_{\text{ATE}}^{\text{equi}}$. Based on the obtained influence functions, we propose new identification formulae for $\theta_{ATE}$ as well as a multiply robust estimation strategies for these parameters. Let the collection of nuisance unctions be defined as: $\zeta^{\text{equi}}:=\Big\{\mathbb{E}[M\mid X,A,G=E], \mathbb{E}[M\mid X,A,G=O], \mathbb{E}[Y\mid X,A,G=O], p(A\mid X,G=E), p(A\mid X,G=O),p(G=E\mid X)\Big\}$.

\begin{theorem}
\label{thm:IF:equi-conf:ATE}
Under a non-parametric model, the efficient influence function of the parameter $\psi_{\text{ATE}}^{\text{equi}} $ is given by
\begin{align*}
IF_{\psi_{\text{ATE}}^{\text{equi}} }(V)=
&\frac{(-1)^{1-A}}{p(A\mid X,G=E)}\cdot\frac{I(G=E)}{p(G=O)}
\{M-\E[M\mid A,X,G]\}\{\frac{1}{p(G=E\mid X)}-1\}\\
&+\frac{I(G=O)}{p(G=O)}\big\{
\frac{(-1)^{1-A}}{p(A\mid X,G=O)}\{Y-\E[Y\mid A,X,G]-M+\E[M\mid A,X,G]\}\\
&+
\E[Y\mid X,A=1,G=O]-\E[Y\mid X,A=0,G=O]\\
&+\E[M\mid X,A=1,G=E]-\E[M\mid X,A=0,G=E]\\
&+\E[M\mid X,A=0,G=O]-\E[M\mid X,A=1,G=O]
-\psi_{\text{ATE}}^{\text{equi}} \big\}.
\end{align*}
\end{theorem}

Let $\hat\zeta^{\text{equi}}$ be and estimator for $\zeta^{\text{equi}}$, and define the moment function $\Phi^{\text{equi}}(V; \hat\zeta^{\text{equi}}, \hat\psi)$ to be the expression for $IF_{\psi^{\text{equi}}_{\text{ATE}}}(V)$ with nuisance components replaced by their corresponding estimates from $\hat\zeta^{\text{equi}}$, and $\psi^{\text{equi}}_{\text{ATE}}$ replaced by $\hat\psi$. We use this moment function in the cross-fitting procedure to obtain the estimator $\hat\psi^{\text{equi}}_{\text{ATE}}$. We have the following multiple robustness result for $\hat\psi^{\text{equi}}_{\text{ATE}}$.

\begin{proposition}
\label{prop:DR:equi-conf_ATE}
The IF-based estimator $\hat\psi^{\text{equi}}_{\text{ATE}}$ is multiply robust, in the sense that it is unbiased if at least one of the following subsets of nuisance functions is correctly specified:
\begin{itemize}
\item $\{\mathbb{E}[M\mid X,A,G=E], \mathbb{E}[M\mid X,A,G=O], \mathbb{E}[Y\mid X,A,G=O]\}$
\item $\{p(A\mid X,G=E), p(A\mid X,G=O),p(G=E\mid X)\}$
\item $\{\mathbb{E}[M\mid X,A,G=E], p(A\mid X,G=O)\}$
\item $\{\mathbb{E}[M\mid X,A,G=O], \mathbb{E}[Y\mid X,A,G=O], p(A\mid X,G=E),p(G=E\mid X)\}$
\end{itemize}
\end{proposition}

\subsubsection{Estimation under Bespoke IV Data Fusion}

Recall from Theorem \ref{thm:BSIV_ATE} that under the assumptions of that framework, the task of inference for $\theta_{ATE}$ reduces to the estimation of the following  functionals of the observed data distribution.
\begin{align*}
\psi^{\text{bsiv1}}_{\text{ATE}}
&=\E\Big[
\frac{\E[Y-M\mid B=1,X,G=O]-\E[Y-M\mid B=0,X,G=O]}{p(A=1\mid B=1,X,G=O)-p(A=1\mid B=0,X,G=O)}\\
&\qquad+\E[M\mid A=1,B,X,G=E]-\E[M\mid A=0,B,X,G=E]\Big|G=O\Big],\\\\
\psi^{\text{bsiv2}}_{\text{ATE}}
&=\E\Big[
\{E_{11}^O(X)-E_{01}^O(X)-E_{10}^O(X)+E_{00}^O(X)\}B+E_{10}^O(X)-E_{00}^O(X)\\
&\qquad-\frac{\{E_{01}^O(X)-E_{00}^O(X)\}P_{1B}^O(X)+\{E_{11}^O(X)-E_{10}^O(X)\}P_{0B}^O(X)}{P_{01}^O(X)-P_{00}^O(X)}\\
&\qquad+\E[M\mid A=1,B,X,G=E]-\E[M\mid A=0,B,X,G=E]\Big|G=O\Big].
\end{align*}
We first derive the influence functions of the parameters above. Based on the obtained influence functions, we propose new identification formulae for $\theta_{ATE}$ as well a multiply robust estimation strategy for these parameters. Let the collection of nuisance unctions be defined as: $\zeta^{bsiv}:=
\Big\{
E_{ab}^O(X):=\mathbb{E}[Y-M\mid A=a, B=b,X,G=O], 
e_{b}^O(X):=\mathbb{E}[Y-M\mid B=b,X,G=O], 
M_{a}^{E}(B,X):=\mathbb{E}[M\mid A=a,B,X,G=E], 
\mu_{b}^O(X):=\mathbb{E}[M\mid B=b,X,G=O], 
P_{ab}^O(X):=p(A=a\mid B=b,X,G=O), 
P_{ab}^E(X):=p(A=a\mid B=b,X,G=E), 
\rho_{b}^O(X):=p(B=b\mid X,G=O), 
\rho_{b}^E(X):=p(B=b\mid X,G=E), 
\tau(B,X):=p(G=E\mid B,X)
\Big\}$.

\begin{theorem}
\label{thm:IF:BSIV-1}
Under a non-parametric model, the efficient influence function of the parameter $\psi^{\text{bsiv1}}_{\text{ATE}}$ is given by
\begin{align*}
&IF_{\psi^{\text{bsiv1}}_{\text{ATE}}}(V)\\
&=\frac{I(G=O)}{p(G=O)}\Bigg\{\frac{1}{P_{11}^O(X)-P_{10}^O(X)}
\frac{1}{\rho_B^O(X)}\bigg\{I(B=1)\{Y-M-e_1^O(X)\}
-I(B=0)\{Y-M-e_0^O(X)\}\\
&\quad\quad\quad\quad\quad+\frac{e_1^O(X)-e_0^O(X)}{P_{11}^O(X)-P_{10}^O(X)}\big\{I(B=0)\{I(A=1)-P_{10}^O(X)\}-I(B=1)\{I(A=1)-P_{11}^O(X)\}\big\}\bigg\}\\
&\quad\quad\quad\quad\quad+\frac{e_1^O(X)-e_0^O(X)}{P_{11}^O(X)-P_{10}^O(X)}+M_{1}^{E}(B,X)-M_{0}^{E}(B,X)-\psi^{\text{bsiv1}}_{\text{ATE}}\Bigg\}\\
&\quad+\frac{I(G=E)}{p(G=O)}\cdot\frac{1-\tau(B,X)}{\tau(B,X)}\bigg\{\frac{I(A=1)}{P_{1B}^E(X)}\cdot\{M-M_{1}^{E}(B,X)\}-\frac{I(A=0)}{1-P_{1B}^E(X)}\cdot\{M-M_{0}^{E}(B,X)\}\bigg\}
,
\end{align*}
and the efficient influence function of the parameter $\psi^{\text{bsiv2}}_{\text{ATE}}$ is given by
\begin{align*}
&IF_{\psi^{\text{bsiv2}}_{\text{ATE}}}(V)\\
&=\frac{I(G=O)}{p(G=O)}\Bigg\{\frac{I(A=1)I(B=1)}{P_{11}^O(X)}\{Y-M-E_{11}^O(X)\}-\frac{I(A=0)I(B=1)}{P_{01}^O(X)}\{Y-M-E_{01}^O(X)\}\\
    &\ \ \ \ \ \ \ \ \ \ \ \ \ \ \ \ \ \ -\frac{I(A=0)I(B=0)}{P_{00}^O(X)}\{Y-M-E_{00}^O(X)\}+\frac{I(A=1)I(B=0)}{P_{10}^O(X)}\{Y-M-E_{10}^O(X)\}\\
    &\ \ \ \ \ \ \ \ \ \ \ \ \ \ \ \ \ \ +I(B=1)\{E_{11}^O(X)-E_{01}^O(X)-E_{10}^O(X)+E_{00}^O(X)\}+E_{10}^O(X)-E_{00}^O(X)\\
    &\ \ \ \ \ \ \ \ \ \ \ \ \ \ \ \ \ \ -\frac{\{E_{01}^O(X)-E_{00}^O(X)\}P_{1B}^O(X)+\{E_{11}^O(X)-E_{10}^O(X)\}(1-P_{1B}^O(X))}{P_{01}^O(X)-P_{00}^O(X)}\\
    &\ \ \ \ \ \ \ \ \ \ \ \ \ \ \ \ \ \ +M_1^E(B,X)-M_0^E(B,X)-\psi^{\text{bsiv2}}_{\text{ATE}}\\
    &\ \ \ \ \ \ \ \ \ \ \ \ \ \ \ \ \ \ +\frac{1}{P_{01}^O(X)-P_{00}^O(X)}\bigg\{-\frac{I(A=1)I(B=1)P_{01}^O(X)}{P_{11}^O(X)\rho_1^O(X)}\{Y-M-E_{11}^O(X)\}\\
    &~~~~~~~~~~~~~~~~~~~~~~~~~~~-\frac{I(A=0)I(B=1)P_{11}^O(X)}{P_{01}^O(X)\rho_1^O(X)}\{Y-M-E_{01}^O(X)\}\\
    &~~~~~~~~~~~~~~~~~~~~~~~~~~~+\frac{I(A=1)I(B=0)P_{00}^O(X)}{P_{10}^O(X)\rho_0^O(X)}\{Y-M-E_{10}^O(X)\}\\
    &~~~~~~~~~~~~~~~~~~~~~~~~~~~+\frac{I(A=0)I(B=0)P_{10}^O(X)}{P_{00}^O(X)\rho_0^O(X)}\{Y-M-E_{00}^O(X)\}\\
    &~~~~~~~~~~~~~~~~~~~~~~~~~~~+\frac{E_{11}^O(X)-E_{01}^O(X)-E_{10}^O(X)+E_{00}^O(X)}{\rho_B^O(X)}\big\{I(A=1)-P_{1B}^O(X)\big\}\\
    &~~~~~~~~~~~~~~~~~~~~~~~~~~~ +\frac{\{E_{11}^O(X)-E_{01}^O(X)-E_{10}^O(X)+E_{00}^O(X)\}P_{1B}^O(X)}{P_{01}^O(X)-P_{00}^O(X)}\\
    &~~~~~~~~~~~~~~~~~~~~~\cdot\big\{-\frac{I(B=1)}{\rho_1^O(X)}\{I(A=0)-P_{01}^O(X)\}+\frac{I(B=0)}{\rho_0^O(X)}\{I(A=0)-P_{00}^O(X)\}\big\}\bigg\}\Bigg\}\\
    &+\frac{I(G=E)}{p(G=O)}\frac{1-\tau(B,X)}{\tau(B,X)}\bigg\{\frac{I(A=1)}{P_{1B}^E(X)}\{M-M_1^E(B,X)\}-\frac{I(A=0)}{1-P_{1B}^E(X)}\{M-M_0^E(B,X)\}\bigg\}
.
\end{align*}

\end{theorem}

Let $\hat\zeta^{\text{bsiv}}$ be and estimator for $\zeta^{\text{bsiv}}$, and define the moment functions $\Phi^{\text{bsiv1}}(V; \hat\zeta^{\text{bsiv}}, \hat\psi)$ and $\Phi^{\text{bsiv2}}(V; \hat\zeta^{\text{bsiv}}, \hat\psi)$ to be the expressions for $IF_{\psi^{\text{bsiv1}}_{\text{ATE}}}(V)$ and $IF_{\psi^{\text{bsiv2}}_{\text{ATE}}}(V)$ with nuisance components replaced by their corresponding estimates from $\hat\zeta^{\text{bsiv}}$, and $\psi^{\text{bsiv1}}_{\text{ATE}}$ and $\psi^{\text{bsiv2}}_{\text{ATE}}$ replaced by $\hat\psi$. We use these moment functions in the cross-fitting procedure to obtain the estimators $\hat\psi^{\text{bsiv1}}_{\text{ATE}}$ and $\hat\psi^{\text{bsiv2}}_{\text{ATE}}$. We have the following multiple robustness result for these estimators.

\begin{proposition}
\label{prop:DR:BSIV_ATE}
The IF-based estimator $\hat\psi^{\text{bsiv1}}_{\text{ATE}}$ is multiply robust in the sense that it is unbiased if at least one of the following subsets of nuisance functions is correctly specified:\\
(i) $\{P_{AB}^O(X),\mathbb{E}[M\mid B,X,G=O],\mathbb{E}[Y\mid B,X,G=O],\mathbb{E}[M\mid A,B,X,G=E]\}$;\\
(ii) $\{\tau(B,X),\rho_B^O(X),\rho_B^E(X),P_{AB}^O(X),P_{AB}^E(X)\}$;\\
(iii) $\{\tau(B,X),P_{AB}^O(X),P_{AB}^E(X),\mathbb{E}[M\mid B,X,G=O],\mathbb{E}[Y\mid B,X,G=O]\}$;\\
(iv) $\{\rho_B^O(X),\rho_E^O(X),P_{AB}^O(X),\mathbb{E}[M\mid A,B,X,G=E]\}$.\\
Same result holds for the estimator $\hat\psi^{\text{bsiv2}}_{\text{ATE}}$ 
but with replacing outcome regression functions with
$\mathbb{E}[M\mid A,B,X,G=E]$,
$\mathbb{E}[M\mid A,B,X,G=O]$, and
$\mathbb{E}[Y\mid A,B,X,G=O]\}$.
\end{proposition}

\subsubsection{Estimation under Proximal Data Fusion}
For $a\in\{0,1\}$, we define the parameter $\theta^{(a)}\coloneqq\E[Y^{(a)}\mid G=O]$, which with arguments similar to those in Theorems \ref{thm:POR::supp} and \ref{thm:PIPW::supp}, can be identified using the functional
\begin{align}
\psi^a
&=\E\big[\E[h(M,A,X)\mid A=a,X,G=E]~\big|~  G=O\big]\label{eq:theta-id1}\\
&=\E\big[I(A=a)Yq(Z,A,X)~\big|~G=O\big]\label{eq:theta-id2}.
\end{align}
From the identification formulae of the parameter $\theta_{ATE}$, we note that $\theta^{(a)}$ is the main piece which is needed to be estimated for obtaining an estimator for $\theta_{ATE}$. Therefore, we focus on the estimation of the functional $\psi^a$.
We first derive the influence function of the parameter $\psi^a$. Based on the obtained influence function, we propose a new identification formula for $\theta^{(a)}$ as well as several estimation strategies for this parameter. In the following, we assume that the integral equations in Assumptions \ref{assumption:compexist1} $(ii)$ and \ref{assumption:compexist2} $(ii)$ have unique solutions. Let the collection of nuisance functions be defined as $\zeta^{\text{proxy}}:=\Big\{
h(M,A,X),q(Z,A,X), p(m\mid A,X,G=E), p(A=1\mid X,G=E),  p(G=E\mid X)\Big\}$.

\begin{theorem}
\label{thm:IF:proximal}	
Under a semiparametric model that Assumptions \ref{assumption:compexist1} $(ii)$ and \ref{assumption:compexist2} $(ii)$ have unique solutions, for $a\in\{0,1\}$, an influence function of the parameter $\psi^a$ is given by
\begin{align*}
IF_{\psi^a}(V)&=
\frac{I(G=O)}{p(G=O)}I(A=a)q(Z,A,X)\{Y-h(M,A,X)\}\\
&\quad+\frac{I(G=E)}{p(G=O)}\cdot\frac{I(A=a)}{p(A=a\mid X,G=E)}
\{h(M,A,X)-\eta(A,X)\}\{\frac{1}{p(G=E\mid X)}-1\}\\
&\quad+\frac{I(G=O)}{p(G=O)}\{\eta(a,X)-\psi^a\},
\end{align*}
where
\begin{align*}
\eta(a,x)\coloneqq&\E[h(M,A,X)\mid A=a,X=x,G=E]\\
=&\E[ I(A=a)h(M,A,X)q(Z,A,X)\mid X=x,G=E ].
\end{align*}
\end{theorem}

Let $\hat\zeta^{\text{proxy}}$ be and estimator for $\zeta^{\text{proxy}}$. Based on identification formulae \eqref{eq:theta-id1} and \eqref{eq:theta-id2}, along with Theorem \ref{thm:IF:proximal}, we propose the following estimation strategies for the parameter $\theta^{(a)}$.

\begin{itemize}
	
\item{\bf Estimation Strategy 1.} From \eqref{eq:theta-id1}, define the moment function as,
\begin{align*}
\Phi^{\text{proxy1}}(V;\hat{\zeta}^{\text{proxy}},\hat\psi):= \sum_m\frac{I(G=O)}{p(G=O)}\hat{h}(m,a,X)\hat{p}(m\mid a,X,G=E)-\hat\psi.
\end{align*}

\item{\bf Estimation Strategy 2.} Also from \eqref{eq:theta-id1}, define the moment function as,
\begin{align*}
&\Phi^{\text{proxy2}}(V;\hat{\zeta}^{\text{proxy}},\hat\psi)\\
&:=\frac{I(G=E)I(A=a)}{p(G=O)}\cdot\frac{1}{1-a+(-1)^{1-a}\hat{p}(A=1\mid X,G=E)}
\hat{h}(M,a,X)\{\frac{1}{\hat{p}(G=E\mid X)}-1\}
-\hat\psi.
\end{align*}

\item{\bf Estimation Strategy 3.} From \eqref{eq:theta-id2}, define the moment function as,

\begin{align*}
&\Phi^{\text{proxy2}}(V;\hat{\zeta}^{\text{proxy}},\hat\psi):=\frac{I(G=O)I(A=a)}{p(G=O)}Y\hat{q}(Z,a,X)-\hat\psi.
\end{align*}

\item{\bf Estimation Strategy 4 (IF-based Strategy).} From Theorem \ref{thm:IF:proximal}, define the moment function $\Phi^{\text{proxy}}(V; \hat\zeta^{\text{proxy}}, \hat\psi)$ to be the expression for $IF_{\psi^a}(V)$ with nuisance components replaced by their corresponding estimates from $\hat\zeta^{\text{proxy}}$, and $\psi^a$ replaced by $\hat\psi$.

\end{itemize}

These moment functions are implemented in the cross-fitting procedure to obtain $\hat\psi^{a,\text{proxy1}}$, $\hat\psi^{a,\text{proxy2}}$, $\hat\psi^{a,\text{proxy3}}$, and the IF-based estimator $\hat\psi^{a,\text{proxy}}$.

\begin{proposition}
\label{prop:DR:proximal_ate}	
Estimation Strategy 4 is multiply robust, in the sense that it is unbiased if at least one of the following subsets of nuisance functions is correctly specified:
(i) $\{h,p(m\mid a,x,G=E)\}$;
(ii) $\{h,p(A=1\mid x,G=E),p(G=E\mid x)\}$;
(iii) $\{q,p(A=1\mid x,G=E),p(G=E\mid x)\}$.
\end{proposition}

\section{Supplementary Materials for Section \ref{sec:simulation}}
\label{sec:simulation::supp}

\paragraph{Data Generating Process.}
We consider a two-dimensional covariate $X=(X_1,X_2)$ generated from a multivariate normal distribution $\mathcal{N}(\mu,\Sigma)$, where $\mu=\begin{pmatrix}
    0.1 & -0.1
\end{pmatrix}^T$, and $\Sigma=\begin{pmatrix}
    1 & 0\\
    0 & 1
\end{pmatrix}$. We then generate the bespoke instrument variable $B$ conditional on $X$ from a Bernoulli distribution, with $P(B=1\mid X_1,X_2)=expit(-0.43+0.15X_1+0.18X_2)$, and $U$ conditional on $(X,B)$ from a normal distribution $\mathcal{N}(0.15-0.35X_1+0.8X_2+0.6B,1)$. Next, we generate domain assignment $G$ from a Bernoulli distribution, with $P(G=E\mid X_1,X_2,B)=expit(0.2+0.15X_1+0.1X_2-0.35B)$. For treatment assignment $A$ of observational data, we have
\begin{align*}
    P(A=1\mid X_1,X_2,B,U)=expit(-0.3-0.1U+1.3B+0.1X_1+0.15X_2),
\end{align*}
and for $A$ of experimental data, we have
\begin{align*}
    P(A=1\mid X_1,X_2,B)=expit(-0.23+0.68B-0.13X_1).
\end{align*}

Then we generate the proxy variable $Z$ from the following normal distribution,
\begin{align*}
    Z\mid U,A,X,B \sim \mathcal{N}(0.2+1.4U+1.5A+0.1X_1-0.5X_2+1.3B,1).
\end{align*}
and we generate $M$ from the following normal distribution,
\begin{align*}
    M\mid U,A,X,B\sim \mathcal{N}(0.75U+0.4A+0.2X_1-0.5X_2+0.12B,1).
\end{align*}

Finally, $Y$ is generated from $\mathcal{N}(E[M\mid A,B,X]+W,1)$, where
\begin{align*}
    &W=\beta_1(X_1,X_2,B)A+\gamma_0(X_1,X_2)(A-P(A=1\mid X,B))+E[Y^{(0)}-M^{(0)}\mid B,X_1,X_2,G=O]\\
    &\beta_1(X_1,X_2,B)=0.25+0.3X_1-0.6X_2-0.1B\\
    &\gamma_0(X_1,X_2)=0.12-0.5X_1+0.45X_2\\
    &E[Y^{(0)}-M^{(0)}\mid B,X_1,X_2,G=O]=0.54-0.28X_1+0.35X_2.
\end{align*}

\paragraph{Case Specification.}
\begin{itemize}
    \item BSIV
    \begin{itemize}
        \item All true: all models.
        \item Case 1: $\{P_{1B}^O(X),\mathbb{E}[M\mid A,B,X,G=O],\mathbb{E}[Y\mid A,B,X,G=O],\mathbb{E}[M\mid A,B,X,G=E]\}$.
        \item Case 2: $\{\tau(B,X),\rho_B^O(X),\rho_B^E(X),P_{1B}^O(X),P_{1B}^E(X)\}$.
        \item Case 3: $\{\tau(B,X),P_{1B}^O(X),P_{1B}^E(X),\mathbb{E}[M\mid A,B,X,G=O],\mathbb{E}[Y\mid A,B,X,G=O]\}$.
        \item Case 4: $\{\rho_B^O(X),\rho_B^E(X),P_{1B}^O(X),\mathbb{E}[M\mid A,B,X,G=E]\}$.
        \item All false: none.
    \end{itemize}
    \item Proximal
    \begin{itemize}
        \item All true: all models.
        \item Case 1: $\{h,p(m\mid A=0,x,G=E))\}$.
        \item Case 2: $\{h,p(A=1\mid x,G=E),p(G=E\mid x)\}$.
        \item Case 3: $\{q,p(A=1\mid x,G=E),p(G=E\mid x)\}$.
        \item All false: none.
    \end{itemize}
\end{itemize}

\section{Comparison of the Proposed Approaches}
\label{sec:summary::supp}

% --- Table ---
\begingroup
% 1) shrink horizontal padding (left/right inside cells)
\setlength{\tabcolsep}{6pt}        % default ~6pt
% 2) shrink row padding above/below rules (booktabs)
\setlength{\aboverulesep}{10pt}
\setlength{\belowrulesep}{10pt}
% 3) shrink row height multiplier (between rows)
\renewcommand{\arraystretch}{1.1}
% 4) shrink baseline inside wrapped cells (intra-cell line spacing)
{\fontsize{9}{10}\selectfont       % font size 9pt, baseline 10pt
% \begin{landscape}
\begin{sidewaystable}[p]\scriptsize
\centering
\caption{Comparison of the equi-confounding, BSIV, and proximal data fusion approaches.}
\begin{tabularx}{\textwidth}{@{} l Y Y Y Y Y @{}}
\toprule
\textbf{Approach} & \textbf{Key identifying assumption} & \textbf{Auxiliary variable} &
\textbf{Nature of assumption} & \textbf{Benefit} & \textbf{Challenge} \\
\midrule
Equi-\\confounding &
Equal additive association with the treatment \(A\) for \(M^{(0)}\) and \(Y^{(0)}\)&
None&
Equality of additive association&
No need for auxiliary variables&
Assuming equi-additive association w.r.t. $A$ may be hard to justify\\
\addlinespace\midrule
BSIV   &
Equal additive association with the BSIV \(B\) for \(M^{(0)}\) and \(Y^{(0)}\) and homogeneity&
Bespoke IV \(B\) &
Equality of additive association&
Weaker assumptions compared to equi-confounding approach&
Might be challenging to find credible $B$;
assuming equi-additive association w.r.t. $B$ may be hard to justify\\
\addlinespace\midrule
Proximal   &
Proxy variable assumption; completeness; existence of the bridge function&
Proxy variable \(Z\)&
Connecting \(M\) and \(Y\) via bridge functions&
Generalizes beyond additive association; allows greater modeling flexibility compared to equi-confounding/BSIV approaches&
Might be challenging to find credible $Z$;
parametric estimation approach is hard to justify;
might be challenging to estimate the bridge function nonparametrically\\
\bottomrule
\end{tabularx}
\end{sidewaystable}
% \end{landscape}
} % end fontsize
\endgroup

\newpage
\section{Proofs}

\subsection{Proofs of Section \ref{sec:athey}}

\begin{proof}[Proof of Theorem \ref{thm:athey}]
\cite{athey2020combining} only considered $\theta_{\text{ATE}}$ as the parameter of interest and proved the following proof:
\begin{align*}
\E[Y^{(a)}\mid G=O]
&=\E[ \E[Y^{(a)}\mid X,G=O] \mid G=O]\\
&\overset{A\ref{assumption:ExVal}}{=}\E[ \E[Y^{(a)}\mid X,G=E] \mid G=O]\\
&=\E[ \E[\E[Y^{(a)}\mid M^{(a)},X,G=E]\mid X,G=E] \mid G=O]\\
&\overset{A\ref{assumption:ExVal}}{=}\E[ \E[\E[Y^{(a)}\mid M^{(a)},X,G=O]\mid X,G=E] \mid G=O]\\
&\overset{A\ref{assumption:LaUn}}{=}\E[ \E[\E[Y^{(a)}\mid M^{(a)},A=a,X,G=O]\mid X,G=E] \mid G=O]\\
&=\E[ \E[\E[Y\mid M,A=a,X,G=O]\mid X,G=E] \mid G=O].
\end{align*}
$f(M,X)=\E[Y\mid M,A=a,X,G=O]$ is identified from the observational data.

Note that the middle expectation is over $M^{(a)}$. By Assumption \ref{assumption:IntVal}, 
\[
\E[f(M^{(a)},X)\mid X,G=E]
=\E[f(M^{(a)},X)\mid X,A=a,G=E]
=\E[f(M,X)\mid X,A=a,G=E]
\]
is identified from the experimental data.

To see the back-and-forths between the two domains better, it is easier to look at the following alternative presentation of the proof:
\begin{align*}
p(Y^{(a)}\mid G=O)	
&=\sum_{X} p(Y^{(a)}\mid X,G=O)p(X\mid G=O)\\
&=\sum_{X} p(Y^{(a)}\mid X,G=E)p(X\mid G=O)\\
&=\sum_{X,M^{(a)}} p(Y^{(a)}\mid M^{(a)},X,G=E)p(M^{(a)}\mid X,G=E)p(X\mid G=O)\\
&=\sum_{X,M^{(a)}} p(Y^{(a)}\mid M^{(a)},X,G=O)p(M^{(a)}\mid X,G=E)p(X\mid G=O)\\
&=\sum_{X,M} p(Y\mid A=a,M,X,G=O)p(M\mid A=a,X,G=E)p(X\mid G=O).
\end{align*}
Therefore, $\E[Y^{(1)}\mid G=0]$, $\E[Y^{(0)}\mid G=0]$, and hence ATE is identified.

Realizing that $\E[Y^{(0)}\mid G=O]$ is identified, we see that $\E[Y^{(0)}\mid A=1, G=O]$ is also identified as 
\begin{align*}
&\E[Y^{(0)}\mid G=O]\\
&=\E[Y^{(0)}\mid A=1, G=0]p(A=1\mid G=O)
+\E[Y^{(0)}\mid A=0, G=O]p(A=0\mid G=O)\\
&\Rightarrow
\E[Y^{(0)}\mid A=1, G=O]
=\frac{\E[Y^{(0)}\mid G=O]-\E[Y\mid A=0, G=O]p(A=0\mid G=O)}{p(A=1\mid G=O)}.
\end{align*}
Therefore, $\theta_{\text{ETT}}=\E[Y\mid A=1, G=O]-\E[Y^{(0)}\mid A=1, G=O]$ is also identified.

\end{proof}

\subsection{Proofs of Sections \ref{sec:AltIDAss} and \ref{sec:AltIDAss::supp}}

\begin{proof}[Proof of Theorem \ref{thm:equiETT}]

\begin{lemma}
\label{lem:idcond}	
Under Assumptions \ref{assumption:IntVal} and \ref{assumption:ExVal}, for $a\in\{0,1\}$, the parameter $\E[M^{(a)}\mid X,A=1-a,G=O]$ is identified as
\begin{align*}
\E[M^{(a)}\mid X,&A=1-a, G=O]\\
&=\frac{\E[M\mid X,A=a,G=E]-\E[M\mid X,A=a, G=O]p(A=a\mid X,G=O)}{p(A=1-a\mid X,G=O)}.
\end{align*}
\end{lemma}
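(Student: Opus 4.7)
The plan is to express $\E[M^{(a)}\mid X, G=O]$ in two different ways and then solve for the unknown counterfactual conditional mean by algebra. The key observation is that the ``experimental'' conditional mean $\E[M\mid X, A=a, G=E]$ is really a stand-in for $\E[M^{(a)}\mid X, G=O]$, because internal validity within $G=E$ removes the dependence on $A$ and external validity transports the resulting counterfactual mean across domains.

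First I would show the identity
\[
\E[M^{(a)}\mid X, G=O] = \E[M\mid X, A=a, G=E].
\]
The right-hand side equals $\E[M^{(a)}\mid X, A=a, G=E]$ by consistency, which by Assumption \ref{assumption:IntVal} equals $\E[M^{(a)}\mid X, G=E]$ (internal validity allows dropping $A$), and finally by Assumption \ref{assumption:ExVal} equals $\E[M^{(a)}\mid X, G=O]$ (external validity allows dropping $G$). Each of these three steps is a one-line application of a stated assumption.

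Next, in the observational domain I would decompose $\E[M^{(a)}\mid X, G=O]$ by conditioning on $A$:
\begin{align*}
\E[M^{(a)}\mid X, G=O] &= \E[M^{(a)}\mid X, A=a, G=O]\,p(A=a\mid X, G=O)\\
&\quad + \E[M^{(a)}\mid X, A=1-a, G=O]\,p(A=1-a\mid X, G=O).
\end{align*}
Applying consistency to the first term replaces $\E[M^{(a)}\mid X, A=a, G=O]$ with the observable $\E[M\mid X, A=a, G=O]$. Combining with the first display and solving for the remaining term yields the claimed formula; positivity of $p(A=1-a\mid X, G=O)$ is the standard assumption making the division legitimate.

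There is no real obstacle: the only subtle point — and the one I would emphasize in writing — is that we are \emph{not} allowed to condition on $A$ before transporting between domains, since $A$ is a collider on a path from $G$ to the potential outcomes (exactly the caveat that the authors flag just before stating the lemma). The proof succeeds precisely because we transport the unconditional-in-$A$ quantity $\E[M^{(a)}\mid X, G=\cdot\,]$ across domains first, and only afterward reintroduce $A$ through the law of total expectation within $G=O$.
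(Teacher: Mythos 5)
Your proof is correct and follows essentially the same route as the paper's: both rest on transporting $\E[M^{(a)}\mid X,G=O]$ to the experimental domain via external then internal validity plus consistency, and on the law of total expectation over $A$ within $G=O$ to isolate the counterfactual term. The only difference is cosmetic ordering (you establish the transport identity first, the paper decomposes first and transports afterward), and your remark about not conditioning on the collider $A$ before transporting is exactly the caveat the authors themselves emphasize.
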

\begin{proof}[Proof of Lemma \ref{lem:idcond}]
We only show that $\E[M^{(0)}\mid X,A=1,G=O]$ is identified.
\begin{align*}
\E[M^{(0)}\mid X,G=O]
&=\E[M^{(0)}\mid X,A=1, G=O]p(A=1\mid X,G=O)\\
&\quad+\E[M^{(0)}\mid X,A=0, G=O]p(A=0\mid X,G=O).
\end{align*}
Therefore,
\begin{align*}
&\E[M^{(0)}\mid X,A=1, G=O]\\
&=\frac{\E[M^{(0)}\mid X,G=O]-\E[M\mid X,A=0, G=O]p(A=0\mid X,G=O)}{p(A=1\mid X,G=O)}\\
&\overset{A\ref{assumption:ExVal}}{=}\frac{\E[M^{(0)}\mid X,G=E]-\E[M\mid X,A=0, G=O]p(A=0\mid X,G=O)}{p(A=1\mid X,G=O)}\\
&\overset{A\ref{assumption:IntVal}}{=}\frac{\E[M^{(0)}\mid X,A=0,G=E]-\E[M\mid X,A=0, G=O]p(A=0\mid X,G=O)}{p(A=1\mid X,G=O)}\\
&=\frac{\E[M\mid X,A=0,G=E]-\E[M\mid X,A=0, G=O]p(A=0\mid X,G=O)}{p(A=1\mid X,G=O)}.
\end{align*}

\end{proof}

\begin{lemma}
\label{lem:id1}
Under Assumptions \ref{assumption:IntVal} and \ref{assumption:ExVal}, for $a\in\{0,1\}$, the parameter $\E[M^{(a)}\mid A=1-a,G=O]$ is identified as
\begin{align*}
&\E[M^{(a)}\mid A=1-a, G=O]\\
&=\frac{\E[\E[M\mid X,A=a,G=E]\mid G=O]-\E[M\mid A=a, G=O]p(A=a\mid G=O)}{p(A=1-a\mid G=O)}.
\end{align*}
\end{lemma}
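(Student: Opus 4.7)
The plan is to derive the identity by first identifying the unconditional potential-outcome mean $\E[M^{(a)}\mid G=O]$, and then using the law of total expectation together with consistency to solve for the desired conditional mean $\E[M^{(a)}\mid A=1-a,G=O]$.

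First I would identify $\E[M^{(a)}\mid G=O]$ as follows. Applying external validity (Assumption \ref{assumption:ExVal}) conditionally on $X$ gives $\E[M^{(a)}\mid X,G=O]=\E[M^{(a)}\mid X,G=E]$. Next, internal validity (Assumption \ref{assumption:IntVal}) allows us to condition on the event $A=a$ inside the experimental domain without changing the expectation, so $\E[M^{(a)}\mid X,G=E]=\E[M^{(a)}\mid X,A=a,G=E]$, and consistency then yields $\E[M\mid X,A=a,G=E]$. Marginalizing over $X$ given $G=O$ produces
\begin{equation*}
\E[M^{(a)}\mid G=O]=\E\bigl[\E[M\mid X,A=a,G=E]\,\big|\,G=O\bigr].
\end{equation*}

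Second, I would decompose $\E[M^{(a)}\mid G=O]$ over the two levels of $A$ within the observational domain:
\begin{equation*}
\E[M^{(a)}\mid G=O]=\E[M^{(a)}\mid A=a,G=O]\,p(A=a\mid G=O)+\E[M^{(a)}\mid A=1-a,G=O]\,p(A=1-a\mid G=O).
\end{equation*}
Consistency replaces $\E[M^{(a)}\mid A=a,G=O]$ by $\E[M\mid A=a,G=O]$. Solving this linear equation for $\E[M^{(a)}\mid A=1-a,G=O]$ and plugging in the expression for $\E[M^{(a)}\mid G=O]$ derived in the previous step yields exactly the claimed formula, provided $p(A=1-a\mid G=O)>0$, which is the positivity assumption invoked in the problem description.

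The only subtle point (and the step most likely to trip one up) is resisting the temptation to condition directly on $\{X,A=1-a,G=O\}$ and transport to the experimental domain, because $A$ is a collider on a $G$-to-$M^{(a)}$ path, and so $G\independent M^{(a)}\mid X,A$ generally fails. Our derivation sidesteps this by invoking external validity only conditionally on $X$ (before touching $A$), and bringing $A$ back in afterward through the marginalization identity above; this is precisely the maneuver that makes Lemma \ref{lem:idcond} true and drives the present lemma as well.
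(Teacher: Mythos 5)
Your proposal is correct and follows essentially the same route as the paper's proof: identify $\E[M^{(a)}\mid G=O]$ by transporting to the experimental domain conditionally on $X$ via external validity, then internal validity and consistency, and combine with the law of total expectation over $A$ in the observational domain to solve for $\E[M^{(a)}\mid A=1-a,G=O]$. The only difference is the order in which the two steps are presented, and your closing remark about the collider issue mirrors the discussion the paper gives just before the lemma.
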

\begin{proof}[Proof of Lemma \ref{lem:id1}]
We only show that $\E[M^{(0)}\mid A=1,G=O]$ is identified.
\begin{align*}
\E[M^{(0)}\mid G=O]
&=\E[M^{(0)}\mid A=1, G=O]p(A=1\mid G=O)\\
&\quad+\E[M^{(0)}\mid A=0, G=O]p(A=0\mid G=O).
\end{align*}
Therefore,
\begin{equation}
\label{eq:lem1}
\E[M^{(0)}\mid A=1, G=O]
=\frac{\E[M^{(0)}\mid G=O]-\E[M\mid A=0, G=O]p(A=0\mid G=O)}{p(A=1\mid G=O)}.	
\end{equation}
Moreover,
\begin{equation}
\label{eq:lem2}
\begin{aligned}
\E[M^{(0)}\mid G=O]
&=\E[\E[M^{(0)}\mid X,G=O]\mid G=O]\\
&\overset{A\ref{assumption:ExVal}}{=}\E[\E[M^{(0)}\mid X,G=E]\mid G=O]\\
&\overset{A\ref{assumption:IntVal}}{=}\E[\E[M^{(0)}\mid X,A=0,G=E]\mid G=O]\\
&=\E[\E[M\mid X,A=0,G=E]\mid G=O].
\end{aligned}
\end{equation}
\eqref{eq:lem1} and \eqref{eq:lem2} imply that
\begin{align*}
&\E[M^{(0)}\mid A=1, G=O]\\
&=\frac{\E[\E[M\mid X,A=0,G=E]\mid G=O]-\E[M\mid A=0, G=0]p(A=0\mid G=O)}{p(A=1\mid G=O)}.
\end{align*}

\end{proof}

In order to prove Theorems \ref{thm:equiETT} and \ref{thm:equiATE}, we first state the following corollary of Lemma \ref{lem:id1}.

\begin{corollary}
\label{cor:id1}
Under Assumptions \ref{assumption:IntVal}, \ref{assumption:ExVal}, and \ref{assumption:Equi::supp} (\ref{assumption:Equi::supp} weakened to \ref{assumption:Equi} for ETT), for $a\in\{0,1\}$, the parameter $\E[Y^{(a)}\mid A=1-a,G=O]$ is identified. 
\end{corollary}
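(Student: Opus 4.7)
The plan is to simply rearrange the equi-confounding bias assumption to isolate the unidentified counterfactual expectation on the treated-for-counterfactual-control (or vice versa), and then invoke Lemma \ref{lem:id1} together with consistency to identify everything else.

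Concretely, for the case $a = 0$, I would start from Assumption \ref{assumption:Equi}$(ii)$ and solve for the target quantity:
\begin{align*}
\E[Y^{(0)}\mid A=1, G=O]
&= \E[Y^{(0)}\mid A=0, G=O] \\
&\quad - \E[M^{(0)}\mid A=0, G=O] + \E[M^{(0)}\mid A=1, G=O].
\end{align*}
By consistency, $\E[Y^{(0)}\mid A=0, G=O] = \E[Y\mid A=0, G=O]$ and $\E[M^{(0)}\mid A=0, G=O] = \E[M\mid A=0, G=O]$, both of which are functionals of the observational distribution. The remaining term $\E[M^{(0)}\mid A=1, G=O]$ is precisely the quantity identified in Lemma \ref{lem:id1} under Assumptions \ref{assumption:IntVal} and \ref{assumption:ExVal}.

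For $a = 1$, the argument is symmetric: I would start from Assumption \ref{assumption:Equi}$(i)$ and solve for $\E[Y^{(1)}\mid A=0, G=O]$, obtaining
\begin{align*}
\E[Y^{(1)}\mid A=0, G=O]
&= \E[Y^{(1)}\mid A=1, G=O] \\
&\quad + \E[M^{(1)}\mid A=0, G=O] - \E[M^{(1)}\mid A=1, G=O].
\end{align*}
Again consistency handles the two $A=1$ conditionals, and Lemma \ref{lem:id1} identifies $\E[M^{(1)}\mid A=0, G=O]$ by taking $a=1$ in its statement.

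There is essentially no obstacle here: the corollary is a direct algebraic consequence of Lemma \ref{lem:id1} combined with Assumption \ref{assumption:Equi}, which by design is stated precisely so that the confounding bias in the $Y^{(a)}$ contrast equals the identified bias in the $M^{(a)}$ contrast. The only mild subtlety worth flagging is that one needs both parts of Assumption \ref{assumption:Equi} to cover both values of $a$ (part $(ii)$ for $a=0$ and part $(i)$ for $a=1$); if only one part were available, only one of the two counterfactual conditional means would be identified, which would still suffice for $\theta_{ETT}$ but not for $\theta_{ATE}$.
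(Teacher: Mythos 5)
Your proposal is correct and follows essentially the same route as the paper's proof: rearrange Assumption \ref{assumption:Equi} to express $\E[Y^{(a)}\mid A=1-a,G=O]$ in terms of observed conditional means (via consistency) plus $\E[M^{(a)}\mid A=1-a,G=O]$, which is then identified by Lemma \ref{lem:id1} under Assumptions \ref{assumption:IntVal} and \ref{assumption:ExVal}. The paper writes out only the $a=0$ case and notes the other is symmetric, exactly as you do, and your remark on which part of Assumption \ref{assumption:Equi} covers which value of $a$ is consistent with how the paper uses them in Theorems \ref{thm:equiETT} and \ref{thm:equiATE}.
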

\begin{proof}[Proof of Corollary \ref{cor:id1}]
We only show that $\E[Y^{(0)}\mid A=1,G=O]$ is identified.
By Assumption \ref{assumption:Equi},
\begin{align*}
&\E[Y^{(0)}\mid  A=1,G=O]\\
&=\E[Y\mid  A=0,G=O]
-\E[M\mid A=0,G=O]+\E[M^{(0)}\mid A=1,G=O].
\end{align*}
Therefore, by Lemma \ref{lem:id1}, 
\begin{align*}
&\E[Y^{(0)}\mid  A=1,G=O]\\
&=\E[Y\mid  A=0,G=O]
-\E[M\mid A=0,G=O]\\
&\quad+\frac{\E[\E[M\mid X,A=0,G=E]\mid G=O]-\E[M\mid A=0, G=O]p(A=0\mid G=O)}{p(A=1\mid G=O)}.
\end{align*}

\end{proof}

Using Corollary \ref{cor:id1} we have
\begin{align*}
\theta_{\text{ETT}}&=\E[Y^{(1)}\mid A=1, G=O]-\E[Y^{(0)}\mid A=1, G=O]\\
&=\E[Y\mid A=1, G=O]-\E[Y\mid  A=0,G=O]
+\E[M\mid A=0,G=O]\\
&\quad-\frac{\E[\E[M\mid X,A=0,G=E]\mid G=O]-\E[M\mid A=0, G=O]p(A=0\mid G=O)}{p(A=1\mid G=O)}.
\end{align*}

\end{proof}

\begin{proof}[Proof of Theorem \ref{thm:equiATE}]
Using Corollary \ref{cor:id1} we have
\begin{equation}
\label{eq:thm31}
\begin{aligned}
\E[Y^{(1)}\mid G=O]
&=\E[Y^{(1)}\mid A=1, G=O]p(A=1\mid G=O)\\
&\quad +\E[Y^{(1)}\mid A=0, G=O]p(A=0\mid G=O)\\
&=\E[Y\mid A=1, G=O]p(A=1\mid G=O)\\
&\quad +\E[Y\mid A=1,G=O]p(A=0\mid G=O)\\
&\quad -\E[M\mid A=1,G=O]p(A=0\mid G=O)\\
&\quad +\E[\E[M\mid X,A=1,G=E]\mid G=O]\\
&\quad -\E[M\mid A=1,G=O]p(A=1\mid G=O).
\end{aligned}
\end{equation}
Similarly,
\begin{equation}
\label{eq:thm32}
\begin{aligned}
\E[Y^{(0)}\mid G=O]
&=\E[Y^{(0)}\mid A=1, G=O]p(A=1\mid G=O)\\
&\quad +\E[Y^{(0)}\mid A=0, G=O]p(A=0\mid G=O)\\
&=\E[Y\mid A=0,G=O]p(A=1\mid G=O)\\
&\quad -\E[M\mid A=0,G=O]p(A=1\mid G=O)\\
&\quad +\E[\E[M\mid X,A=0,G=E]\mid G=O]\\
&\quad -\E[M\mid A=0,G=O]p(A=0\mid G=O)\\
&\quad +\E[Y\mid A=0, G=O]p(A=0\mid G=O).
\end{aligned}
\end{equation}
\eqref{eq:thm31} and \eqref{eq:thm32} conclude that 
\begin{align*}
\theta_{\text{ATE}} &=\E[Y^{(1)}\mid G=O]-\E[Y^{(0)}\mid G=O]\\
&=
\E[Y\mid A=1,G=O]-\E[Y\mid A=0,G=O]\\
&\quad+\E[\E[M\mid X,A=1,G=E]\mid G=O]-\E[\E[M\mid X,A=0,G=E]\mid G=O]\\
&\quad-\E[M\mid A=1,G=O]+\E[M\mid A=0,G=O].
\end{align*}
\end{proof}

\begin{proof}[Proof of Theorem \ref{thm:equiETTCond}]
	
By Lemma \ref{lem:idcond} and Assumption \ref{assumption:CondEqui}, 
\begin{align*}
-\E[&Y^{(0)}\mid  X,A=1,G=O]\\
&=\E[M\mid X,A=0,G=O]
-\E[Y\mid  X,A=0,G=O]
-\E[M^{(0)}\mid  X,A=1,G=O]\\
&=\E[M\mid X,A=0,G=O]
-\E[Y\mid  X,A=0,G=O]\\
&\quad-\frac{\E[M\mid X,A=0,G=E]-\E[M\mid X,A=0, G=O]p(A=0\mid X,G=O)}{p(A=1\mid X,G=O)}\\
&=\frac{1}{p(A=1\mid X,G=O)}\E[M\mid X,A=0,G=O]\\
&\quad-\frac{1}{p(A=1\mid X,G=O)}\E[M\mid X,A=0,G=E]-\E[Y\mid  X,A=0,G=O].
\end{align*}

Therefore,
\begin{align*}
\theta_{\text{ETT}}
&=\E[Y^{(1)}\mid A=1,G=O]-\E[Y^{(0)}\mid A=1,G=O]\\
&=\E[Y\mid A=1,G=O]+\E[-\E[Y^{(0)}\mid X,A=1,G=O]\mid A=1,G=O]\\
&=\E[Y\mid A=1,G=O]
+\E[\frac{1}{p(A=1\mid X,G=O)}\E[M\mid X,A=0,G=O]\mid A=1,G=O]\\
&\quad-\E[\frac{1}{p(A=1\mid X,G=O)}\E[M\mid X,A=0,G=E]+\E[Y\mid  X,A=0,G=O]\mid A=1,G=O].
\end{align*}

\end{proof}

\begin{proof}[Proof of Theorem \ref{thm:equiATECond}]
As seen in the proof of Theorem \ref{thm:equiETTCond},
\begin{align*}
\E[&Y^{(0)}\mid  X,A=1,G=O]\\
&=-\frac{1}{p(A=1\mid X,G=O)}\E[M\mid X,A=0,G=O]\\
&\quad+\frac{1}{p(A=1\mid X,G=O)}\E[M\mid X,A=0,G=E]+\E[Y\mid  X,A=0,G=O].
\end{align*}
Similarly,
\begin{align*}
\E[&Y^{(1)}\mid  X,A=0,G=O]\\
&=-\frac{1}{p(A=0\mid X,G=O)}\E[M\mid X,A=1,G=O]\\
&\quad+\frac{1}{p(A=0\mid X,G=O)}\E[M\mid X,A=1,G=E]+\E[Y\mid  X,A=1,G=O].
\end{align*}

Therefore,
\begin{equation}
\label{eq:thm61}
\begin{aligned}
\E[Y^{(1)}\mid G=O]
&=\E[\E[Y^{(1)}\mid X,G=O]\mid G=O]\\
&=\E[\E[Y\mid X,A=1,G=O]p(A=1\mid X,G=O)\mid G=O]\\
&\quad+\E[\E[Y^{(1)}\mid X,A=0,G=O]p(A=0\mid X,G=O)\mid G=O]\\
&=\E[\E[Y\mid X,A=1,G=O]p(A=1\mid X,G=O)\mid G=O]\\
&\quad+\E[\E[M\mid X,A=1,G=E]-\E[M\mid X,A=1,G=O]\\
&\quad\quad\quad+\E[Y\mid X,A=1,G=O]p(A=0\mid X,G=O)\mid G=O].
\end{aligned}
\end{equation}
Similarly,
\begin{equation}
\label{eq:thm62}
\begin{aligned}
\E[Y^{(0)}\mid G=O]
&=\E[\E[Y^{(0)}\mid X,G=O]\mid G=O]\\
&=\E[\E[Y^{(0)}\mid X,A=1,G=O]p(A=1\mid X,G=O)\mid G=O]\\
&\quad+\E[\E[Y\mid X,A=0,G=O]p(A=0\mid X,G=O)\mid G=O]\\
&=\E[\E[M\mid X,A=0,G=E]-\E[M\mid X,A=0,G=O]\\
&\quad\quad\quad+\E[Y\mid X,A=0,G=O]p(A=1\mid X,G=O)\mid G=O]\\
&\quad+\E[\E[Y\mid X,A=0,G=O]p(A=0\mid X,G=O)\mid G=O].
\end{aligned}
\end{equation}
\eqref{eq:thm61} and \eqref{eq:thm62} conclude that 
\begin{align*}
\theta_{\text{ATE}} &=\E[Y^{(1)}\mid G=O]-\E[Y^{(0)}\mid G=O]\\
&=\E\big[
\E[Y\mid X,A=1,G=O]-\E[Y\mid X,A=0,G=O]\\
&\quad\quad+\E[M\mid X,A=1,G=E]-\E[M\mid X,A=0,G=E]\\
&\quad\quad+\E[M\mid X,A=0,G=O]-\E[M\mid X,A=1,G=O]
\big| G=O\big].
\end{align*}
\end{proof}

\begin{proof}[Proof of Theorem \ref{thm:equiETTQQ}]
By Assumption \ref{assumption:QQEqui},
\[
F_{Y^{(0)}\mid A=0,X,G=O}\circ F^{-1}_{Y^{(0)}\mid A=1,X,G=O}(v)
=F_{M^{(0)}\mid A=0,X,G=O}\circ F^{-1}_{M^{(0)}\mid A=1,X,G=O}(v),
\]	
which implies that
\[
F^{-1}_{Y^{(0)}\mid A=1,X,G=O}(v)
=F^{-1}_{Y\mid A=0,X,G=O}\circ F_{M\mid A=0,X,G=O}\circ F^{-1}_{M^{(0)}\mid A=1,X,G=O}(v).
\]
Note that
\begin{align*}
&F^{-1}_{Y^{(0)}\mid A=1,X,G=O}(v)=y\\
&\Rightarrow F_{Y^{(0)}\mid A=1,X,G=O}(y)=v,
\end{align*}
and
\begin{align*}
&F^{-1}_{Y\mid A=0,X,G=O}\circ F_{M\mid A=0,X,G=O}\circ F^{-1}_{M^{(0)}\mid A=1,X,G=O}(v)=y\\
&\Rightarrow 
F_{M^{(0)}\mid A=1,X,G=O}\circ F^{-1}_{M\mid A=0,X,G=O} \circ F_{Y\mid A=0,X,G=O}(y)=v.
\end{align*}
Therefore,
\[
F_{Y^{(0)}\mid A=1,X,G=O}(y)=F_{M^{(0)}\mid A=1,X,G=O}\circ F^{-1}_{M\mid A=0,X,G=O} \circ F_{Y\mid A=0,X,G=O}(y).
\]
Finally, we note that
\begin{align*}
F_{M^{(0)}\mid X,G=O}(m)
&=F_{M^{(0)}\mid X,A=1, G=O}(m)p(A=1\mid X,G=O)\\
&\quad+F_{M^{(0)}\mid X,A=0, G=O}(m)p(A=0\mid X,G=O),
\end{align*}
and hence,
\begin{align*}
F_{M^{(0)}\mid X,A=1, G=O}(m)
&=\frac{F_{M^{(0)}\mid X,G=O}(m)-F_{M\mid X,A=0, G=O}(m)p(A=0\mid X,G=O)}{p(A=1\mid X,G=O)}\\
&\overset{A\ref{assumption:ExVal}}{=}\frac{F_{M^{(0)}\mid X,G=E}(m)-F_{M\mid X,A=0, G=O}(m)p(A=0\mid X,G=O)}{p(A=1\mid X,G=O)}\\
&\overset{A\ref{assumption:IntVal}}{=}\frac{F_{M^{(0)}\mid X,A=0,G=E}(m)-F_{M\mid X,A=0, G=O}(m)p(A=0\mid X,G=O)}{p(A=1\mid X,G=O)}\\
&=\frac{F_{M\mid X,A=0,G=E}(m)-F_{M\mid X,A=0, G=O}(m)p(A=0\mid X,G=O)}{p(A=1\mid X,G=O)}.
\end{align*}

This concludes that
\begin{align*}
F_{Y^{(0)}\mid A=1,X,G=O}(y)
&=\frac{F_{M\mid X,A=0,G=E}\circ F^{-1}_{M\mid A=0,X,G=O} \circ F_{Y\mid A=0,X,G=O}(y)}{p(A=1\mid X,G=O)}\\
&\quad-\frac{p(A=0\mid X,G=O)}{p(A=1\mid X,G=O)} F_{Y\mid A=0,X,G=O}(y).
\end{align*}

\end{proof}

\subsection{Proofs of Sections \ref{sec:bsiv} and \ref{sec:bsiv::supp}}

\begin{proof}[Proof of Theorem \ref{thm:BSIV_ett} and \ref{thm:BSIV_ATE}]
Define
\begin{align*}
&b_0(X)\coloneqq \E[Y^{(0)}\mid B=0,X,G=O],\\
&b_1(X)\coloneqq \E[Y^{(1)}\mid B=0,X,G=O],\\
&\omega_0(X)\coloneqq \E[M\mid A=0,B=0,X,G=E],\\
&\omega_1(X)\coloneqq \E[M\mid A=1,B=0,X,G=E],\\
&\beta_0(B,X)\coloneqq\E[\{Y^{(1)}-M^{(1)}\}-\{Y^{(0)}-M^{(0)}\}\mid A=0,B,X,G=O],\\
&\beta_1(B,X)\coloneqq\E[\{Y^{(1)}-M^{(1)}\}-\{Y^{(0)}-M^{(0)}\}\mid A=1,B,X,G=O],\\
&\gamma_0(B,X)\coloneqq \E[Y^{(0)}-M^{(0)}\mid A=1,B,X,G=O]-\E[Y^{(0)}-M^{(0)}\mid A=0,B,X,G=O],\\
&\gamma_1(B,X)\coloneqq \E[Y^{(1)}-M^{(1)}\mid A=1,B,X,G=O]-\E[Y^{(1)}-M^{(1)}\mid A=0,B,X,G=O],\\
&\pi(B,X)\coloneqq p(A=1\mid B,X,G=O).
\end{align*}
We note that by Assumptions \ref{assumption:IntVal}, \ref{assumption:ExVal}, and \ref{assumption:BSEqui::supp}(ii), we have
\begin{align*}
\E[Y^{(0)}-M^{(0)}\mid B,X,G=O]
&\overset{A\ref{assumption:BSEqui::supp}(ii)}{=}\E[Y^{(0)}-M^{(0)}\mid B=0,X,G=O]\\
&=b_0(X)-\E[M^{(0)}\mid B=0,X,G=O]\\
&\overset{A\ref{assumption:ExVal}}{=}b_0(X)-\E[M^{(0)}\mid B=0,X,G=E]\\
&\overset{A\ref{assumption:IntVal}}{=}b_0(X)-\E[M^{(0)}\mid A=0,B=0,X,G=E]\\
&=b_0(X)-\omega_0(X).
\end{align*}
Similarly, by Assumptions \ref{assumption:IntVal}, \ref{assumption:ExVal}, and \ref{assumption:BSEqui::supp}(i), we have
\begin{align*}
\E[Y^{(1)}-M^{(1)}\mid B,X,G=O]
&=b_1(X)-\omega_1(X).
\end{align*}
Note that $\omega_0(X)$ and $\omega_1(X)$ are identified.	

Using a nonparametric reparametrization of the outcome conditional mean function similar to \citep{robins1994correcting,tchetgen2013alternative}, we have
%\begingroup
%\allowdisplaybreaks
\begin{equation}
\label{eq:thm9-1}
\begin{aligned}
&\E[Y-M\mid A=a,B,X,G=O]\\
&=\E[Y^{(a)}-M^{(a)}\mid A=a,B,X,G=O]-\E[Y^{(0)}-M^{(0)}\mid A=a,B,X,G=O]\\
&\quad+\E[Y^{(0)}-M^{(0)}\mid A=a,B,X,G=O]-\E[Y^{(0)}-M^{(0)}\mid A=0,B,X,G=O]\\
&\quad-\{\E[Y^{(0)}-M^{(0)}\mid A=1,B,X,G=O]\\
&\quad\quad-\E[Y^{(0)}-M^{(0)}\mid A=0,B,X,G=O]\}p(A=1\mid B,X,G=O)\\
&\quad+\E[Y^{(0)}-M^{(0)}\mid B,X,G=O]\\
&=\E[\{Y^{(1)}-M^{(1)}\}-\{Y^{(0)}-M^{(0)}\}\mid A=1,B,X,G=O]a\\
&\quad+\{\E[Y^{(0)}-M^{(0)}\mid A=1,B,X,G=O]-\E[Y^{(0)}-M^{(0)}\mid A=0,B,X,G=O]\}\\
&\quad\times\{a-p(A=1\mid B,X,G=O)\}\\
&\quad+\E[Y^{(0)}-M^{(0)}\mid B,X,G=O]\\
&=\beta_1(B,X)a+\gamma_0(B,X)\{a-\pi(B,X)\}+b_0(X)-\omega_0(X).
\end{aligned}
\end{equation}
%\endgroup

Note that for every fixed $X$, the left hand side identifies 4 parameters.
Under Assumption \ref{assumption:homogeneityETT::supp}$(ii)$, $\beta_1$ is not a function of $B$. Therefore, we also have 4 unknown parameters on the right hand side: 1 corresponding to $\beta_1$, 2 corresponding to $\gamma_0$, and 1 corresponding to $b_0$.
Similarly, under Assumption \ref{assumption:homogeneityBias::supp}$(ii)$, $\gamma_0$ is not a function of $B$. Therefore, we also have 4 unknown parameters on the right hand side: 2 corresponding to $\beta_1$, 1 corresponding to $\gamma_0$, and 1 corresponding to $b_0$.
Therefore, under either of these two assumptions, the parameter $\beta_1(B,X)$ is identified. 

Formally, define
\begin{align*}
&E_{ab}^O(X)\coloneqq\E[Y-M\mid A=a,B=b,X,G=O],\\
&P_{ab}^O(X)\coloneqq p(A=a\mid B=b,X,G=O).	
\end{align*}

From \eqref{eq:thm9-1}, we have
\begin{align*}
&E_{00}^O(X)=-P_{10}^O(X)\gamma_0(0,X)+b_0(X)-\omega_0(X),\\
&E_{01}^O(X)=-P_{11}^O(X)\gamma_0(1,X)+b_0(X)-\omega_0(X),\\
&E_{10}^O(X)=\beta_1(0,X)+P_{00}^Ov(X)\gamma_0(0,X)+b_0(X)-\omega_0(X),\\
&E_{11}^O(X)=\beta_1(1,X)+P_{01}^O(X)\gamma_0(1,X)+b_0(X)-\omega_0(X).
\end{align*}

Under Assumption \ref{assumption:homogeneityETT::supp}$(ii)$, $\beta_1(X)\coloneqq\beta_1(0,X)=\beta_1(1,X)$, which can be obtained as follows. Noting that
\begin{align*}
\E[Y-M\mid B=1,X,G=O]
&=E_{11}^O(X)P_{11}^O(X)+E_{01}^O(X)P_{01}^O(X)\\
&=P_{11}^O(X)\beta_1(X)+b_0(X)-\omega_0(X),
\end{align*}
and
\begin{align*}
\E[Y-M\mid B=0,X,G=O]
&=E_{10}^O(X)P_{10}^O(X)+E_{00}^O(X)P_{00}^O(X)\\
&=P_{10}^O(X)\beta_1(X)+b_0(X)-\omega_0(X),
\end{align*}
we have
\begin{align*}
\beta_1(X)=\frac{\E[Y-M\mid B=1,X,G=O]-\E[Y-M\mid B=0,X,G=O]}{P_{11}^O(X)-P_{10}^O(X)}.
\end{align*}
Note that
\begin{align*}
\beta_1(B,X)
&=\E[Y^{(1)}-Y^{(0)}\mid A=1,B,X,G=O]\\
&\quad-\E[M\mid A=1,B,X,G=O]\\
&\quad+\E[M^{(0)}\mid A=1,B,X,G=O].
\end{align*}	
By Lemma \ref{lem:idcond}, the last term is identified. Therefore, the first term on the right hand side is identified, and hence, the parameter $\theta_{ETT}$ is also identified as follows.
\begin{align*}
\theta_{ETT}
&=\E\Big[
\frac{\E[Y-M\mid B=1,X,G=O]-\E[Y-M\mid B=0,X,G=O]}{P_{11}^O(X)-P_{10}^O(X)}\\
&\qquad+\E[M\mid A=1,B,X,G=O]-\frac{\E[M\mid A=0,B,X,G=E]}{p(A=1\mid B,X,G=O)}\\
&\qquad+\frac{\E[M\mid A=0,B,X,G=O]p(A=0\mid B,X,G=O)}{p(A=1\mid B,X,G=O)}\Big|A=1, G=O\Big].
\end{align*}
	
Under Assumption \ref{assumption:homogeneityBias::supp}$(ii)$, $\gamma_0(X)\coloneqq\gamma_0(0,X)=\gamma_0(1,X)$, which can be obtained as 
\[
\gamma_0(X)=\frac{E_{01}^O(X)-E_{00}^O(X)}{P_{01}^O(X)-P_{00}^O(X)}.
\]
We also note that
\begin{align*}
b_0(X)-\omega_0(X)
&=E_{00}^O(X)+P_{10}^O(X)\gamma_0(X)\\
&=E_{01}^O(X)+P_{11}^O(X)\gamma_0(X),
\end{align*}
by which we have
\begin{align*}
\beta_1(0,X)
&=E_{10}^O(X)-P_{00}^O(X)\gamma_0(X)-E_{00}^O(X)-P_{10}^O(X)\gamma_0(X)\\
&=E_{10}^O(X)-E_{00}^O(X)-\gamma_0(X),
\end{align*}
and	
\begin{align*}
\beta_1(1,X)
&=E_{11}^O(X)-P_{01}^O(X)\gamma_0(X)-E_{01}^O(X)-P_{11}^O(X)\gamma_0(X)\\
&=E_{11}^O(X)-E_{01}^O(X)-\gamma_0(X),
\end{align*}	
which implies that
\[
\beta_1(B,X)=\{E_{11}^O(X)-E_{01}^O(X)-E_{10}^O(X)+E_{00}^O(X)\}B+E_{10}^O(X)-E_{00}^O(X)-\gamma_0(X).
\]
Note that
\begin{align*}
\beta_1(B,X)
&=\E[Y^{(1)}-Y^{(0)}\mid A=1,B,X,G=O]\\
&\quad-\E[M\mid A=1,B,X,G=O]\\
&\quad+\E[M^{(0)}\mid A=1,B,X,G=O].
\end{align*}	
By Lemma \ref{lem:idcond}, the last term is identified. Therefore, the first term on the right hand side is identified, and hence, the parameter $\theta_{ETT}$ is also identified as follows.
\begin{align*}
\theta_{ETT}
&=\E\Big[
\{E_{11}^O(X)-E_{01}^O(X)-E_{10}^O(X)+E_{00}^O(X)\}B+E_{10}^O(X)-E_{00}^O(X)\\
&\qquad-\frac{E_{01}^O(X)-E_{00}^O(X)}{P_{01}^O(X)-P_{00}^O(X)}+\E[M\mid A=1,B,X,G=O]-\frac{\E[M\mid A=0,B,X,G=E]}{p(A=1\mid B,X,G=O)}\\
&\qquad+\frac{\E[M\mid A=0,B,X,G=O]p(A=0\mid B,X,G=O)}{p(A=1\mid B,X,G=O)}\Big|A=1,G=O\Big].
\end{align*}

In order to show the identifiability of $\theta_{ATE}$, we note that using a similar parametrization, we have
\begin{align*}
&\E[Y-M\mid A=a,B,X,G=O]\\
&=\beta_0(B,X)(a-1)+\gamma_1(B,X)\{a-\pi(B,X)\}+b_1(X)-\omega_1(X).
\end{align*}
A similar argument regarding counting the parameters as before shows that under either Assumption \ref{assumption:homogeneityETT::supp}$(i)$, or \ref{assumption:homogeneityBias::supp}$(i)$,
	 the parameter $\beta_0(B,X)$ and consequently, the parameter $\E[Y^{(1)}-Y^{(0)}\mid A=0,B,X,G=O]$	 
	  are identified. 
	  
Formally, we have
\begin{align*}
&E_{00}^O(X)=-\beta_0(0,X)-P_{10}^O(X)\gamma_1(0,X)+b_1(X)-\omega_1(X),\\
&E_{01}^O(X)=-\beta_0(1,X)-P_{11}^O(X)\gamma_1(1,X)+b_1(X)-\omega_1(X),\\
&E_{10}^O(X)=P_{00}^O(X)\gamma_1(0,X)+b_1(X)-\omega_1(X),\\
&E_{11}^O(X)=P_{01}^O(X)\gamma_1(1,X)+b_1(X)-\omega_1(X).
\end{align*}

Under Assumption \ref{assumption:homogeneityETT::supp}$(i)$, $\beta_0(X)\coloneqq\beta_0(0,X)=\beta_0(1,X)$, which can be obtained similar to the previous case as	  
\begin{align*}
\beta_0(X)=\frac{\E[Y-M\mid B=1,X,G=O]-\E[Y-M\mid B=0,X,G=O]}{P_{11}^O(X)-P_{10}^O(X)}.
\end{align*}	  
Note that
\begin{align*}
\beta_0(B,X)
&=\E[Y^{(1)}-Y^{(0)}\mid A=0,B,X,G=O]\\
&\quad+\E[M\mid A=0,B,X,G=O]\\
&\quad-\E[M^{(1)}\mid A=0,B,X,G=O].
\end{align*}	
By Lemma \ref{lem:idcond}, the last term is identified. Therefore, the first term on the right hand side is identified. The parameter $\E[Y^{(1)}-Y^{(0)}\mid A=0,B,X,G=O]$ is identified as follows.
\begin{align*}
\E[Y^{(1)}-&Y^{(0)}\mid A=0,B,X,G=O]\\
&=%\E\Big[
\frac{\E[Y-M\mid B=1,X,G=O]-\E[Y-M\mid B=0,X,G=O]}{P_{11}^O(X)-P_{10}^O(X)}\\
&\qquad-\E[M\mid A=0,B,X,G=O]+\frac{\E[M\mid A=1,B,X,G=E]}{p(A=0\mid B,X,G=O)}\\
&\qquad-\frac{\E[M\mid A=1,B,X,G=O]p(A=1\mid B,X,G=O)}{p(A=0\mid B,X,G=O)}.%\Big|G=O\Big].
\end{align*}	  
  
Finally, note that
\begin{align*}
\theta_{ATE}&=\E\big[\E[Y^{(1)}-Y^{(0)}\mid B,X,G=O]\big| G=O\big]\\
&=\E\big[\E[Y^{(1)}-Y^{(0)}\mid A=1,B,X,G=O]\pi(B,X)\\
&\quad+\E[Y^{(1)}-Y^{(0)}\mid A=0,B,X,G=O](1-\pi(B,X))\big|G=O\big]\\
&=\E\Big[
\frac{\E[Y-M\mid B=1,X,G=O]-\E[Y-M\mid B=0,X,G=O]}{P_{11}^O(X)-P_{10}^O(X)}\\
&\qquad+\E[M\mid A=1,B,X,G=E]-\E[M\mid A=0,B,X,G=E]\Big|G=O\Big].
\end{align*}	

Under Assumption \ref{assumption:homogeneityBias::supp}$(i)$, $\gamma_1(X)\coloneqq\gamma_1(0,X)=\gamma_1(1,X)$, which can be obtained as
\[
\gamma_1(X)=\frac{E_{11}^O(X)-E_{10}^O(X)}{P_{01}^O(X)-P_{00}^O(X)}.
\]
This by an approach similar to the previous case leads to
\[
\beta_0(B,X)=\{E_{11}^O(X)-E_{01}^O(X)-E_{10}^O(X)+E_{00}^O(X)\}B+E_{10}^O(X)-E_{00}^O(X)-\gamma_1(X).
\]
Note that
\begin{align*}
\beta_0(B,X)
&=\E[Y^{(1)}-Y^{(0)}\mid A=0,B,X,G=O]\\
&\quad+\E[M\mid A=0,B,X,G=O]\\
&\quad-\E[M^{(1)}\mid A=0,B,X,G=O].
\end{align*}	
By Lemma \ref{lem:idcond}, the last term is identified. Therefore, the first term on the right hand side is identified. The parameter $\E[Y^{(1)}-Y^{(0)}\mid A=0,B,X,G=O]$ is identified as follows.
\begin{align*}
\E[Y^{(1)}-&Y^{(0)}\mid A=0,B,X,G=O]\\
&=%\E\Big[
\{E_{11}^O(X)-E_{01}^O(X)-E_{10}^O(X)+E_{00}^O(X)\}B+E_{10}^O(X)-E_{00}^O(X)\\
&\qquad-\frac{E_{11}^O(X)-E_{10}^O(X)}{P_{01}^O(X)-P_{00}^O(X)}-\E[M\mid A=0,B,X,G=O]+\frac{\E[M\mid A=1,B,X,G=E]}{p(A=0\mid B,X,G=O)}\\
&\qquad-\frac{\E[M\mid A=1,B,X,G=O]p(A=1\mid B,X,G=O)}{p(A=0\mid B,X,G=O)}.%\Big|G=O\Big].
\end{align*}	
Finally, note that
\begin{align*}
\theta_{ATE}&=\E\big[\E[Y^{(1)}-Y^{(0)}\mid B,X,G=O]\big|G=O\big]\\
&=\E\big[\E[Y^{(1)}-Y^{(0)}\mid A=1,B,X,G=O]\pi(B,X)\\
&\quad+\E[Y^{(1)}-Y^{(0)}\mid A=0,B,X,G=O](1-\pi(B,X))\big]\\
&=\E\Big[
\{E_{11}^O(X)-E_{01}^O(X)-E_{10}^O(X)+E_{00}^O(X)\}B+E_{10}^O(X)-E_{00}^O(X)\\
&\qquad-\frac{\{E_{01}^O(X)-E_{00}^O(X)\}\pi(B,X)+\{E_{11}^O(X)-E_{10}^O(X)\}(1-\pi(B,X))}{P_{01}^O(X)-P_{00}^O(X)}\\
&\qquad+\E[M\mid A=1,B,X,G=E]-\E[M\mid A=0,B,X,G=E]\Big|G=O\Big].
\end{align*}	

\end{proof}

\subsection{Proofs of Sections \ref{sec:proximal} and \ref{sec:proximal::supp}}

\begin{proof}[Proof of Theorems \ref{thm:POR} and \ref{thm:POR::supp}]
For identifying ATE, we show that the parameter $\E[Y^{(a)}\mid G=O]$ is identified.
By Assumption \ref{assumption:compexist1} $(ii)$, for any choice of $Z$, $A=a$, $X$, we have
\begin{align*}
&\E[Y\mid Z,A=a,X,G=O]=\E[h(M,a,X)\mid Z,A=a,X,G=O]\\
\Rightarrow~
&\E[\E[Y\mid U,Z,A=a,X,G=O]\mid Z,A=a,X,G=O]\\
&\quad\quad=\E[\E[h(M,a,X)\mid U,Z,A=a,X,G=O]\mid Z,A=a,X,G=O]\\
\overset{A\ref{assumption:proxy}}{\Rightarrow}~
&\E[\E[Y\mid U,A=a,X,G=O]\mid Z,A=a,X,G=O]\\
&\quad\quad=\E[\E[h(M,a,X)\mid U,A=a,X,G=O]\mid Z,A=a,X,G=O]\\
\overset{A\ref{assumption:compexist1}}{\Rightarrow}~
&\E[Y\mid U,A=a,X,G=O]=\E[h(M,a,X)\mid U,A=a,X,G=O]\\
\Rightarrow~
&\E[\E[Y\mid U,A=a,X,G=O]\mid G=O]=\E[\E[h(M,a,X)\mid U,A=a,X,G=O]\mid G=O]\\
\Rightarrow~
&\E[\E[Y^{(a)}\mid U,A=a,X,G=O]\mid G=O]=\E[\E[h(M^{(a)},a,X)\mid U,A=a,X,G=O]\mid G=O]\\
\Rightarrow~
&\E[\E[Y^{(a)}\mid U,X,G=O]\mid G=O]=\E[\E[h(M^{(a)},a,X)\mid U,X,G=O]\mid G=O]\\
\Rightarrow~
&\E[Y^{(a)}\mid G=O]=\E[h(M^{(a)},a,X)\mid G=O].
\end{align*}
Therefore, we have
\begin{equation}
\label{eq:proxproofh1}
\begin{aligned}
\E[Y^{(a)}\mid G=O]
&=\E[\E[h(M^{(a)},a,X)\mid X,G=O]\mid G=O]\\
&\overset{A\ref{assumption:ExVal}}{=}
\E[\E[h(M^{(a)},a,X)\mid X,G=E]\mid G=O]\\	
&\overset{A\ref{assumption:IntVal}}{=}
\E[\E[h(M^{(a)},a,X)\mid A=a,X,G=E]\mid G=O]\\
&=\E[\E[h(M,A,X)\mid A=a,X,G=E]\mid G=O],
\end{aligned}
\end{equation}
which concludes the desired result.

Realizing that $\E[Y^{(0)}\mid G=O]$ is identified, it is easy to see that $\E[Y^{(0)}\mid A=1, G=O]$ is also identified as 
\begin{align*}
&\E[Y^{(0)}\mid G=O]\\
&=\E[Y^{(0)}\mid A=1, G=0]p(A=1\mid G=O)
+\E[Y^{(0)}\mid A=0, G=O]p(A=0\mid G=O)\\
&\Rightarrow
\E[Y^{(0)}\mid A=1, G=O]
=\frac{\E[Y^{(0)}\mid G=O]-\E[Y\mid A=0, G=O]p(A=0\mid G=O)}{p(A=1\mid G=O)}.
\end{align*}
Therefore, $\theta_{ETT}=\E[Y\mid A=1, G=O]-\E[Y^{(0)}\mid A=1, G=O]$ is also identified.

\end{proof}

\begin{proof}[Proof of Theorems \ref{thm:PIPW} and \ref{thm:PIPW::supp}]
For identifying ATE, we show that the parameter $\E[Y^{(a)}\mid G=O]$ is identified.
By Assumption \ref{assumption:compexist2} $(ii)$, for any choice of $M$, $A=a$, $X$, we have
\begin{align*}
\E[q(Z,A,X)\mid & M=m,A=a,X,G=O]\\
&=\frac{p(M=m\mid A=a,X,G=E)}{p(M=m\mid A=a,X,G=O)p(A=a\mid X,G=O)}\\
&=\frac{p(M^{(a)}=m\mid A=a,X,G=E)}{p(M^{(a)}=m\mid A=a,X,G=O)p(A=a\mid X,G=O)}\\
&\overset{A\ref{assumption:IntVal}}{=}\frac{p(M^{(a)}=m\mid X,G=E)}{p(M^{(a)}=m\mid A=a,X,G=O)p(A=a\mid X,G=O)}\\
&\overset{A\ref{assumption:ExVal}}{=}\frac{p(M^{(a)}=m\mid X,G=O)}{p(M^{(a)}=m\mid A=a,X,G=O)p(A=a\mid X,G=O)}\\
&=\frac{1}{p(A=a\mid M^{(a)}=m,X,G=O)}.
\end{align*}
Hence,
\begingroup
\allowdisplaybreaks
\begin{align*}
&\sum_z q(z,a,x)p(z\mid M=m,a,x,G=O)
=\frac{1}{p(A=a\mid M^{(a)}=m,x,G=O)}\\
&\overset{A\ref{assumption:proxy}}{\Rightarrow} \sum_u\sum_z q(z,a,x)p(z\mid u,a,x,G=O)p(u\mid M=m,a,x,G=O)\\
&\quad=\sum_u\frac{1}{p(A=a\mid M^{(a)}=m,x,G=O)}p(u\mid M^{(a)},X,G=O)\\
&\quad=\sum_u\frac{p(u\mid A=a,M^{(a)}=m,x,G=O)}{p(A=a\mid M^{(a)}=m,x,G=O)p(u\mid A=a,M^{(a)}=m,x,G=O)}p(u\mid M^{(a)},X,G=O)\\
&\quad=\sum_u\frac{1}{p(A=a\mid u,M^{(a)}=m,x,G=O)}p(u\mid A=a,M^{(a)}=m,x,G=O)\\
&\quad=\sum_u\frac{1}{p(A=a\mid u,M^{(a)}=m,x,G=O)}p(u\mid A=a,M=m,x,G=O)\\
&\quad=\sum_u\frac{1}{p(A=a\mid u,x,G=O)}p(u\mid A=a,M=m,x,G=O),
\end{align*}
\endgroup
where the last equality is due to the conditional independence $A\independent M^{(a)}\mid\{X,U,G=O\}$. Note that here we are conditioning on $U$, all possible latent confounders.

Therefore, by Assumption \ref{assumption:compexist2} $(i)$, we have
\begin{align*}
\sum_z q(z,a,x)p(z\mid u,a,x,G=O)
=\frac{1}{p(A=a\mid u,x,G=O)}.
\end{align*}

Therefore, we have
\begin{align*}
\E[Y^{(a)}\mid G=O]
&=\sum_y y p(Y^{(a)}=y\mid G=O)\\
&=\sum_{y,u,x} y p(Y^{(a)}=y\mid u,x,G=O)p(u,x,\mid G=O)\\
&=\sum_{y,u,x} y p(Y^{(a)}=y\mid a,u,x,G=O)p(u,x,\mid G=O)\\
&=\sum_{y,u,x} y p(Y=y\mid a,u,x,G=O)\frac{p(a\mid u,x,G=O)}{p(a\mid u,x,G=O)}p(u,x,\mid G=O)\\
&=\sum_{z,y,u,x} yq(z,a,x) p(z\mid a,u,x,G=O)p(y\mid a,u,x,G=O)p(a,u,x,\mid G=O)\\
&\overset{A\ref{assumption:proxy}}{=}\sum_{z,y,u,x} yq(z,a,x) p(y,z\mid a,u,x,G=O)p(a,u,x,\mid G=O)\\
&=\sum_{\tilde{a},z,y,u,x} I(\tilde{a}=a)yq(z,\tilde{a},x) p(y,z,\tilde{a},u,x\mid G=O)\\
&=\sum_{\tilde{a},z,y,x} I(\tilde{a}=a)yq(z,\tilde{a},x) p(y,z,\tilde{a},x\mid G=O)\\
&=\E[I(A=a)Yq(Z,A,X)\mid G=O].
\end{align*}
which concludes the desired result.

Realizing that $\E[Y^{(0)}\mid G=O]$ is identified, it is easy to see that $\E[Y^{(0)}\mid A=1, G=O]$ is also identified as 
\begin{align*}
&\E[Y^{(0)}\mid G=O]\\
&=\E[Y^{(0)}\mid A=1, G=0]p(A=1\mid G=O)
+\E[Y^{(0)}\mid A=0, G=O]p(A=0\mid G=O)\\
&\Rightarrow
\E[Y^{(0)}\mid A=1, G=O]
=\frac{\E[Y^{(0)}\mid G=O]-\E[Y\mid A=0, G=O]p(A=0\mid G=O)}{p(A=1\mid G=O)}.
\end{align*}
Therefore, $\theta_{ETT}=\E[Y\mid A=1, G=O]-\E[Y^{(0)}\mid A=1, G=O]$ is also identified.

\end{proof}

\subsection{Proofs of Sections \ref{sec:estimation} and \ref{sec:estimation::supp}}

\begin{proof}[Proof of Theorem \ref{thm:IF:equi-conf:ETT}]
	
Define
\begin{align*}
&\psi_1=
\E[Y\mid A=1,G=O],\\
&\psi_2=
\E\big[\frac{1}{p(A=1\mid X,G=O)}\E[M\mid X,A=0,G=O]~\big|~ A=1,G=O\big],\\
&\psi_3=
\E\big[\frac{1}{p(A=1\mid X,G=O)}\E[M\mid X,A=0,G=E]~\big|~ A=1,G=O\big],\\
&\psi_4=
\E\big[\E[Y\mid  X,A=0,G=O]~\big|~ A=1,G=O\big].
\end{align*}
We use the notation $\partial_tf(t)$ to denote $\frac{\partial f(t)}{\partial t}\big|_{t=0}$.
For parameter $\psi$, let $\psi_t$ be the parameter under a regular parametric sub-model indexed by $t$, that includes the ground-truth model at $t=0$.
Let $V$ be the set of all observed variables.
In order to obtain an influence function, we need to find a random variable $\Gamma$ with mean zero, that satisfies
\[
\partial_t\psi_t=\E[\Gamma S(V)],
\]
where $S(V)=\partial_t\log p_t(V)$.	

For $\psi_1$, note that
\begin{align*}
\partial_t\psi_{1_t} 
&= \sum_{y} y\partial_t p_t(y\mid A=1,G=O)\\
&= \sum_{y} yS(y\mid A=1,G=O)p(y\mid A=1,G=O)\\
&= \sum_{y,a,g}\frac{I(a=1)I(g=O)}{p(A=1,G=O)} yS(y\mid a,g)p(y,a,g)\\
&= \E\Big[\frac{I(A=1)I(G=O)}{p(A=1,G=O)} YS(Y\mid A,G)\Big]\\
&= \E\Big[\frac{I(A=1)I(G=O)}{p(A=1,G=O)} \{Y-\E[Y\mid A=1,G=O]\}S(Y\mid A,G)\Big]\\
&= \E\Big[\frac{I(A=1)I(G=O)}{p(A=1,G=O)} \{Y-\psi_1\}S(Y\mid A,G)\Big].
\end{align*}
Note that
\begin{align*}
\E\Big[\frac{I(A=1)I(G=O)}{p(A=1,G=O)} \{Y-\psi_1\}S(A,G)\Big]=0.
\end{align*}
Therefore,
\begin{equation*}
\partial_t\psi_{1_t}
=\E\Big[\frac{I(A=1)I(G=O)}{p(A=1,G=O)} \{Y-\psi_1\}S(V)\Big].
\end{equation*}
This implies that 
\begin{align*}
\frac{I(A=1)I(G=O)}{p(A=1,G=O)} \{Y-\psi_1\}	
\end{align*}
is the influence function of $\psi_1$.

For $\psi_2$, note that
\begin{equation}
\label{eq:IF:equi:ett:1}
\begin{aligned}
\partial_t\psi_{2_t} 
&=\partial_t \sum_{m,x} m\frac{1}{p_t(A=1\mid x,G=O)}p_t(m\mid x,A=0,G=O)p_t(x\mid A=1,G=O)\\
&=\partial_t \sum_{m,x} m\frac{1}{p_t(A=1,G=O)}p_t(m\mid x,A=0,G=O)p_t(x,G=O)\\
&= \sum_{m,x} m \partial_t\frac{1}{p_t(A=1,G=O)}p(m\mid x,A=0,G=O)p(x,G=O)\\
&\quad+\sum_{m,x} m \frac{1}{p(A=1,G=O)}\partial_t p_t(m\mid x,A=0,G=O)p(x,G=O)\\
&\quad+\sum_{m,x} m \frac{1}{p(A=1,G=O)}p(m\mid x,A=0,G=O)\partial_t p_t(x,G=O).
\end{aligned}
\end{equation}

For the first term in \eqref{eq:IF:equi:ett:1}, we have
\begin{equation}
\label{eq:IF:equi:ett:2}
\begin{aligned}
&\sum_{m,x} m \partial_t\frac{1}{p_t(A=1,G=O)}p(m\mid x,A=0,G=O)p(x,G=O)\\
&=-\sum_{m,x} m \frac{1}{p(A=1,G=O)}p(m\mid x,A=0,G=O)p(x,G=O)S(A=1,G=O)\\
&=-\psi_2S(A=1,G=O)\\
&=-\E\Big[\frac{I(A=1)I(G=O)}{p(A=1,G=O)}\psi_2S(A,G)\Big]\\
&=-\E\Big[\frac{I(A=1)I(G=O)}{p(A=1,G=O)}\psi_2S(V)\Big].
\end{aligned}
\end{equation}

For the second term in \eqref{eq:IF:equi:ett:1}, we have
\begin{align*}
&\sum_{m,x} m \frac{1}{p(A=1,G=O)}\partial_t p_t(m\mid x,A=0,G=O)p(x,G=O)\\
&=\sum_{m,x} m \frac{1}{p(A=1,G=O)} S(m\mid x,A=0,G=O)p(m\mid x,A=0,G=O)p(x,G=O)\\
&=\sum_{m,x} m \frac{1}{p(A=1,G=O)}\cdot\frac{1}{p(A=0\mid x,G=O)} S(m\mid x,A=0,G=O)p(m,x,A=0,G=O)\\
&=\sum_{m,a,x,g} m \frac{1}{p(A=1,G=O)}\cdot\frac{I(a=0)I(g=O)}{p(A=0\mid x,G=O)} S(m\mid x,a,g)p(m,x,a,g)\\
&=\E\Big[\frac{1}{p(A=1,G=O)}\cdot\frac{I(A=0)I(G=O)}{p(A=0\mid X,G=O)}M S(M\mid X,A,G) \Big]\\
&=\E\Big[\frac{1}{p(A=1,G=O)}\cdot\frac{I(A=0)I(G=O)}{p(A=0\mid X,G=O)}\{M-\E[M\mid X,A,G]\} S(M\mid X,A,G) \Big].
\end{align*}
Note that
\begin{align*}
\E\Big[\frac{1}{p(A=1,G=O)}\cdot\frac{I(A=0)I(G=O)}{p(A=0\mid X,G=O)}\{M-\E[M\mid X,A,G]\} S(X,A,G) \Big]=0.	
\end{align*}
Therefore
\begin{equation}
\label{eq:IF:equi:ett:3}
\begin{aligned}
&\sum_{m,x} m \frac{1}{p(A=1,G=O)}\partial_t p_t(m\mid x,A=0,G=O)p(x,G=O)\\
&=\E\Big[\frac{1}{p(A=1,G=O)}\cdot\frac{I(A=0)I(G=O)}{p(A=0\mid X,G=O)}\{M-\E[M\mid X,A,G]\} S(V) \Big].
\end{aligned}
\end{equation}

For the third term in \eqref{eq:IF:equi:ett:1}, we have
\begin{equation}
\label{eq:IF:equi:ett:4}
\begin{aligned}
&\sum_{m,x} m \frac{1}{p(A=1,G=O)}p(m\mid x,A=0,G=O)\partial_t p_t(x,G=O)\\
&=\sum_{x}\frac{1}{p(A=1,G=O)}\E[M\mid X=x,A=0,G=O] S(x,G=O)p(x,G=O)\\
&=\sum_{x,g}\frac{I(g=O)}{p(A=1,G=O)}\E[M\mid X=x,A=0,G=O] S(x,g)p(x,g)\\
&=\E\Big[\frac{I(G=O)}{p(A=1,G=O)}\E[M\mid X,A=0,G=O] S(X,G)\Big]\\
&=\E\Big[\frac{I(G=O)}{p(A=1,G=O)}\E[M\mid X,A=0,G=O] S(V)\Big].
\end{aligned}
\end{equation}

Combining \eqref{eq:IF:equi:ett:2}-\eqref{eq:IF:equi:ett:4} concludes that
\begin{align*}
\partial_t\psi_{2_t} 
&=\E\Big[
\frac{I(G=O)}{p(A=1,G=O)}\Big\{
\frac{I(A=0)}{p(A=0\mid X,G=O)}\{M-\E[M\mid X,A,G]\}\\
&+\E[M\mid X,A=0,G=O]-I(A=1)\psi_2
\Big\}S(V)
\Big].	
\end{align*}
The variable
\begin{align*}
\frac{I(G=O)}{p(A=1,G=O)}\Big\{&
\frac{I(A=0)}{p(A=0\mid X,G=O)}\{M-\E[M\mid X,A,G]\}\\
&+\E[M\mid X,A=0,G=O]-I(A=1)\psi_2
\Big\}
\end{align*}
is mean zero and hence is the influence function of $\psi_2$.

For $\psi_3$, note that
\begin{equation}
\label{eq:IF:equi:ett:5}
\begin{aligned}
\partial_t\psi_{3_t} 
&=\partial_t \sum_{m,x} m\frac{1}{p_t(A=1\mid x,G=O)}p_t(m\mid x,A=0,G=E)p_t(x\mid A=1,G=O)\\
&=\partial_t \sum_{m,x} m\frac{1}{p_t(A=1,G=O)}p_t(m\mid x,A=0,G=E)p_t(x,G=O)\\
&= \sum_{m,x} m \partial_t\frac{1}{p_t(A=1,G=O)}p(m\mid x,A=0,G=E)p(x,G=O)\\
&\quad+\sum_{m,x} m \frac{1}{p(A=1,G=O)}\partial_t p_t(m\mid x,A=0,G=E)p(x,G=O)\\
&\quad+\sum_{m,x} m \frac{1}{p(A=1,G=O)}p(m\mid x,A=0,G=E)\partial_t p_t(x,G=O).
\end{aligned}
\end{equation}

For the first term in \eqref{eq:IF:equi:ett:5}, we have
\begin{equation}
\label{eq:IF:equi:ett:6}
\begin{aligned}
&\sum_{m,x} m \partial_t\frac{1}{p_t(A=1,G=O)}p(m\mid x,A=0,G=E)p(x,G=O)\\
&=-\sum_{m,x} m \frac{1}{p(A=1,G=O)}p(m\mid x,A=0,G=E)p(x,G=O)S(A=1,G=O)\\
&=-\psi_3S(A=1,G=O)\\
&=-\E\Big[\frac{I(A=1)I(G=O)}{p(A=1,G=O)}\psi_3S(A,G)\Big]\\
&=-\E\Big[\frac{I(A=1)I(G=O)}{p(A=1,G=O)}\psi_3S(V)\Big].
\end{aligned}
\end{equation}

For the second term in \eqref{eq:IF:equi:ett:5}, we have
{\footnotesize \begin{align*}
&\sum_{m,x} m \frac{1}{p(A=1,G=O)}\partial_t p_t(m\mid x,A=0,G=E)p(x,G=O)\\
&=\sum_{m,x} m \frac{1}{p(A=1,G=O)} S(m\mid x,A=0,G=E)p(m\mid x,A=0,G=E)\{\frac{1}{p(G=E\mid x)}-1\}p(x,G=E)\\
&=\sum_{m,x} m \frac{1}{p(A=1,G=O)}\cdot\frac{1}{p(A=0\mid x,G=E)}\{\frac{1}{p(G=E\mid x)}-1\} S(m\mid x,A=0,G=E)p(m,x,A=0,G=E)\\
&=\sum_{m,a,x,g} m \frac{1}{p(A=1,G=O)}\cdot\frac{I(a=0)I(g=E)}{p(A=0\mid x,G=E)}\{\frac{1}{p(G=E\mid x)}-1\} S(m\mid x,a,g)p(m,x,a,g)\\
&=\E\Big[\frac{1}{p(A=1,G=O)}\cdot\frac{I(A=0)I(G=E)}{p(A=0\mid X,G=E)}\{\frac{1}{p(G=E\mid X)}-1\}M S(M\mid X,A,G) \Big]\\
&=\E\Big[\frac{1}{p(A=1,G=O)}\cdot\frac{I(A=0)I(G=E)}{p(A=0\mid X,G=E)}\{\frac{1}{p(G=E\mid X)}-1\}\{M-\E[M\mid X,A,G]\} S(M\mid X,A,G) \Big].
\end{align*}}
Note that
{\footnotesize \begin{align*}
\E\Big[\frac{1}{p(A=1,G=O)}\cdot\frac{I(A=0)I(G=E)}{p(A=0\mid X,G=E)}\{\frac{1}{p(G=E\mid X)}-1\}\{M-\E[M\mid X,A,G]\} S(X,A,G) \Big]=0.	
\end{align*}}
Therefore
\begin{equation}
\label{eq:IF:equi:ett:7}
\begin{aligned}
&\sum_{m,x} m \frac{1}{p(A=1,G=O)}\partial_t p_t(m\mid x,A=0,G=E)p(x,G=O)\\
&=\E\Big[\frac{1}{p(A=1,G=O)}\cdot\frac{I(A=0)I(G=E)}{p(A=0\mid X,G=E)}\{\frac{1}{p(G=E\mid X)}-1\}\{M-\E[M\mid X,A,G]\} S(V) \Big].
\end{aligned}
\end{equation}

For the third term in \eqref{eq:IF:equi:ett:5}, we have
\begin{equation}
\label{eq:IF:equi:ett:8}
\begin{aligned}
&\sum_{m,x} m \frac{1}{p(A=1,G=O)}p(m\mid x,A=0,G=E)\partial_t p_t(x,G=O)\\
&=\sum_{x}\frac{1}{p(A=1,G=O)}\E[M\mid X=x,A=0,G=E] S(x,G=O)p(x,G=O)\\
&=\sum_{x,g}\frac{I(g=O)}{p(A=1,G=O)}\E[M\mid X=x,A=0,G=E] S(x,g)p(x,g)\\
&=\E\Big[\frac{I(G=O)}{p(A=1,G=O)}\E[M\mid X,A=0,G=E] S(X,G)\Big]\\
&=\E\Big[\frac{I(G=O)}{p(A=1,G=O)}\E[M\mid X,A=0,G=E] S(V)\Big].
\end{aligned}
\end{equation}

Combining \eqref{eq:IF:equi:ett:6}-\eqref{eq:IF:equi:ett:8} concludes that
\begin{align*}
\partial_t\psi_{3_t} 
&=\E\Big[
\frac{1}{p(A=1,G=O)}\Big\{
\frac{I(A=0)I(G=E)}{p(A=0\mid X,G=E)}\{\frac{1}{p(G=E\mid X)}-1\}\{M-\E[M\mid X,A,G]\}\\
&+I(G=O)\E[M\mid X,A=0,G=E]-I(A=1)I(G=O)\psi_3
\Big\}S(V)
\Big].	
\end{align*}
The variable
\begin{align*}
&\frac{1}{p(A=1,G=O)}\Big\{
\frac{I(A=0)I(G=E)}{p(A=0\mid X,G=E)}\{\frac{1}{p(G=E\mid X)}-1\}\{M-\E[M\mid X,A,G]\}\\
&+I(G=O)\E[M\mid X,A=0,G=E]-I(A=1)I(G=O)\psi_3
\Big\}
\end{align*}
is mean zero and hence is the influence function of $\psi_3$.

For $\psi_4$, note that
\begin{equation}
\label{eq:IF:equi:ett:9}
\begin{aligned}
\partial_t\psi_{4_t} 
&=\partial_t \sum_{y,x} y p_t(y\mid x,A=0,G=O)p_t(x\mid A=1,G=O)\\
&= \sum_{y,x} y \partial_t p_t(y\mid x,A=0,G=O)p(x\mid A=1,G=O)\\
&\quad+\sum_{y,x} y p(y\mid x,A=0,G=O)\partial_t p_t(x\mid A=1,G=O).
\end{aligned}
\end{equation}

For the first term in \eqref{eq:IF:equi:ett:9}, we have
\begin{align*}
&\sum_{y,x} y \partial_t p_t(y\mid x,A=0,G=O)p(x\mid A=1,G=O)\\	
&=\sum_{y,x} y S(y\mid x,A=0,G=O)p(y\mid x,A=0,G=O)p(x\mid A=1,G=O)\\	
&=\sum_{y,x} \frac{1}{p(A=1,G=O)}\cdot \frac{p(A=1\mid x,G=O)}{p(A=0\mid x,G=O)}y S(y\mid x,A=0,G=O)p(y,x,A=0,G=O)\\	
&=\sum_{y,a,x,g} \frac{I(a=0)I(g=O)}{p(A=1,G=O)}\cdot \frac{p(A=1\mid x,G=O)}{p(A=0\mid x,G=O)}y S(y\mid x,a,g)p(y,x,a,g)\\
&=\E\Big[\frac{I(A=0)I(G=O)}{p(A=1,G=O)}\cdot \frac{p(A=1\mid X,G=O)}{p(A=0\mid X,G=O)}Y S(Y\mid A,X,G)\Big]\\	
&=\E\Big[\frac{I(A=0)I(G=O)}{p(A=1,G=O)}\cdot \frac{p(A=1\mid X,G=O)}{p(A=0\mid X,G=O)}\{Y-\E[Y\mid A,X,G]\} S(Y\mid A,X,G)\Big].
\end{align*}
Note that
\begin{align*}
\E\Big[\frac{I(A=0)I(G=O)}{p(A=1,G=O)}\cdot \frac{p(A=1\mid X,G=O)}{p(A=0\mid X,G=O)}\{Y-\E[Y\mid A,X,G]\} S(A,X,G)\Big]=0.
\end{align*}
Therefore,
\begin{equation}
\label{eq:IF:equi:ett:10}
\begin{aligned}
&\sum_{y,x} y \partial_t p_t(y\mid x,A=0,G=O)p(x\mid A=1,G=O)\\
&=\E\Big[\frac{I(A=0)I(G=O)}{p(A=1,G=O)}\cdot \frac{p(A=1\mid X,G=O)}{p(A=0\mid X,G=O)}\{Y-\E[Y\mid A,X,G]\} S(V)\Big].
\end{aligned}
\end{equation}
For the second term in \eqref{eq:IF:equi:ett:9}, we have
{\footnotesize \begin{align*}
&\sum_{y,x} y p(y\mid x,A=0,G=O)\partial_t p_t(x\mid A=1,G=O)\\
&=\sum_{x} \E[Y\mid x,A=0,G=O] S(x\mid A=1,G=O)p(x\mid A=1,G=O)\\
&=\sum_{a,x,g}\frac{I(a=1)I(g=O)}{p(A=1,G=O)} \E[Y\mid x,A=0,G=O] S(x\mid a,g)p(x,a,g)\\
&=\E\Big[\frac{I(A=1)I(G=O)}{p(A=1,G=O)} \E[Y\mid X,A=0,G=O] S(X\mid A,G)\Big]\\
&=\E\Big[\frac{I(A=1)I(G=O)}{p(A=1,G=O)} \{\E[Y\mid X,A=0,G=O]-\E[\E[Y\mid X,A=0,G=O]\mid A=1,G=O]\} S(X\mid A,G)\Big]\\
&=\E\Big[\frac{I(A=1)I(G=O)}{p(A=1,G=O)} \{\E[Y\mid X,A=0,G=O]-\psi_4\} S(X\mid A,G)\Big].
\end{align*}}
Note that
\begin{align*}
\E\Big[\frac{I(A=1)I(G=O)}{p(A=1,G=O)} \{\E[Y\mid X,A=0,G=O]-\psi_4\} S(A,G)\Big]=0.	
\end{align*}
Therefore,
\begin{equation}
\label{eq:IF:equi:ett:11}
\begin{aligned}
&\sum_{y,x} y p(y\mid x,A=0,G=O)\partial_t p_t(x\mid A=1,G=O)\\
&=\E\Big[\frac{I(A=1)I(G=O)}{p(A=1,G=O)} \{\E[Y\mid X,A=0,G=O]-\psi_4\} S(V)\Big].
\end{aligned}
\end{equation}
Combining \eqref{eq:IF:equi:ett:10} and \eqref{eq:IF:equi:ett:11} concludes that
\begin{align*}
\partial_t\psi_{4_t} 
&=\E\Big[
\frac{I(G=O)}{p(A=1,G=O)}
\Big\{
\frac{I(A=0)p(A=1\mid X,G=O)}{p(A=0\mid X,G=O)}\{Y-\E[Y\mid A,X,G]\}\\
&+I(A=1)\{\E[Y\mid X,A=0,G=O]-\psi_4\} 
\Big\}S(V)\Big].	
\end{align*}
Therefore
\begin{align*}
&\frac{I(G=O)}{p(A=1,G=O)}
\Big\{
\frac{I(A=0)p(A=1\mid X,G=O)}{1-p(A=1\mid X,G=O)}\{Y-\E[Y\mid A,X,G]\}\\
&+I(A=1)\{\E[Y\mid X,A=0,G=O]-\psi_4\} 
\Big\}
\end{align*}
is the influence function of $\psi_4$.

For $i\in\{1,2,3,4\}$, denote the obtained influence functions by $IF_{\psi_i}$.
The influence function for $\psi_{\text{ETT}}^{\text{equi}}$ can be obtained as
$IF_{\psi_{\text{ETT}}^{\text{equi}}}=
\sum_{i=1}^4IF_{\psi_i}$. Therefore,
\begin{align*}
\frac{1}{p(A=1,G=O)}\Big\{
&\frac{I(G=O)I(A=0)}{1-p(A=1\mid X,G=O)}\{M-\E[M\mid X,A,G]\}\\
&+\frac{I(G=E)I(A=0)}{1-p(A=1\mid X,G=E)}\{\frac{1}{p(G=E\mid X)}-1\}\{M-\E[M\mid X,A,G]\}\\
&+\frac{I(G=O)I(A=0)p(A=1\mid X,G=O)}{1-p(A=1\mid X,G=O)}\{Y-\E[Y\mid A,X,G]\}\\
&+I(G=O)\big\{
\E[M\mid X,A=0,G=O]
+\E[M\mid X,A=0,G=E]\\
&\qquad\qquad\qquad+I(A=1)\{Y+\E[Y\mid X,A=0,G=O]-\psi_{\text{ETT}}^{\text{equi}}\}\big\}\Big\}
\end{align*}
is the influence function of $\psi_{\text{ETT}}^{\text{equi}}$.

\end{proof}

\begin{proof}[Proof of Theorem \ref{thm:IF:equi-conf:ATE}]

For $a\in\{0,1\}$, define
\begin{align*}
&\psi_1^{(a)}=
\E\big[\E[M\mid A=a,X,G=E]\big| G=O\big],\\
&\psi_2^{(a)}=
\E\big[\E[M\mid A=a,X,G=O]\big| G=O\big],\\
&\psi_3^{(a)}=
\E\big[\E[Y\mid A=a,X,G=O]\big| G=O\big].
\end{align*}
We use the notation $\partial_tf(t)$ to denote $\frac{\partial f(t)}{\partial t}\big|_{t=0}$.
For parameter $\psi^{(a)}$, let $\psi^{(a)}_t$ be the parameter under a regular parametric sub-model indexed by $t$, that includes the ground-truth model at $t=0$.
Let $V$ be the set of all observed variables.
In order to obtain an influence function, we need to find a random variable $\Gamma$ with mean zero, that satisfies
\[
\partial_t\psi^{(a)}_t=\E[\Gamma S(V)],
\]
where $S(V)=\partial_t\log p_t(V)$.

For $\psi_1^{(a)}$, note that
\begin{equation}
\label{eq:IF:equi:ate:1}
\begin{aligned}
\partial_t\psi^{(a)}_{1_t} 
&=\partial_t \sum_{m,x} mp_t(m\mid a,x,G=E)p_t(x\mid G=O)\\
&= \sum_{m,x} m\partial_t p_t(m\mid a,x,G=E)p(x\mid G=O)\\
&\quad+ \sum_{m,x} m p(m\mid a,x,G=E)\partial_t p_t(x\mid G=O).
\end{aligned}
\end{equation}

For the first term in \eqref{eq:IF:equi:ate:1}, we have
\begin{align*}
&\sum_{m,x} m\partial_t p_t(m\mid a,x,G=E)p(x\mid G=O)\\
&=\sum_{m,x} m\partial_t p_t(m\mid a,x,G=E)\{\frac{1}{p(G=E\mid x)}-1\}\frac{1}{p(G=O)}p(x,G=E)\\
&=\sum_{m,x} mS(m\mid a,x,G=E)\{\frac{1}{p(G=E\mid x)}-1\}\frac{1}{p(A=a\mid x,G=E)}\cdot\frac{1}{p(G=O)}p(m,a,x,G=E)\\
&=\sum_{m,\tilde{a},x,g} mS(m\mid \tilde{a},x,g)\{\frac{1}{p(G=E\mid x)}-1\}\frac{I(\tilde{a}=a)}{p(A=a\mid x,G=E)}\cdot\frac{I(g=E)}{p(G=O)}p(m,\tilde{a},x,g)\\
&=\E\Big[
\frac{I(A=a)}{p(A=a\mid X,G=E)}\cdot\frac{I(G=E)}{p(G=O)}
M\{\frac{1}{p(G=E\mid X)}-1\}S(M\mid A,X,G)
\Big]\\
&=\E\Big[
\frac{I(A=a)}{p(A=a\mid X,G=E)}\cdot\frac{I(G=E)}{p(G=O)}
\{M-\E[M\mid A,X,G]\}\{\frac{1}{p(G=E\mid X)}-1\}S(M\mid A,X,G)
\Big],
\end{align*}
Note that
\begin{align*}
\E\Big[
\frac{I(A=a)}{p(A=a\mid X,G=E)}\cdot\frac{I(G=E)}{p(G=O)}
\{M-\E[M\mid A,X,G]\}\{\frac{1}{p(G=E\mid X)}-1\}S(A,X,G)
\Big]=0.
\end{align*}
Therefore,
\begin{equation}
\label{eq:IF:equi:ate:2}
\begin{aligned}
&\sum_{m,x} m\partial_t p_t(m\mid a,x,G=E)p(x\mid G=O)\\
&=\E\Big[
\frac{I(A=a)}{p(A=a\mid X,G=E)}\cdot\frac{I(G=E)}{p(G=O)}
\{M-\E[M\mid A,X,G]\}\{\frac{1}{p(G=E\mid X)}-1\}S(V)
\Big].
\end{aligned}
\end{equation}

For the second term in \eqref{eq:IF:equi:ate:1}, we have
\begin{align*}
&\sum_{m,x} m p(m\mid a,x,G=E)\partial_t p_t(x\mid G=O)\\
&=\sum_{x} \E[M\mid A=a,X=x,G=E]\partial_t p_t(x\mid G=O)\\
&=\sum_{x} \E[M\mid A=a,X=x,G=E] S(x\mid G=O)p(x\mid G=O)\\
&=\sum_{x,g}\frac{I(g=O)}{p(G=O)}\E[M\mid A=a,X=x,G=E]S(x\mid g)p(x,g)\\
&=\E\Big[\frac{I(G=O)}{p(G=O)}\E[M\mid A=a,X,G=E]S(X\mid G)\Big]\\
&=\E\Big[\frac{I(G=O)}{p(G=O)}\{\E[M\mid A=a,X,G=E]-\E[\E[M\mid A=a,X,G=E]\mid G=O]\}S(X\mid G)\Big]\\
&=\E\Big[\frac{I(G=O)}{p(G=O)}\{\E[M\mid A=a,X,G=E]-\psi_1^{(a)}\}S(X\mid G)\Big].
\end{align*}
Note that
\begin{align*}
&\E\Big[\frac{I(G=O)}{p(G=O)}\{\E[M\mid A=a,X,G=E]-\psi_1^{(a)}\}S(G)\Big]=0.
\end{align*}
Therefore,
\begin{equation}
\label{eq:IF:equi:ate:3}
\begin{aligned}
&\sum_{m,x} m p(m\mid a,x,G=E)\partial_t p_t(x\mid G=O)
=\E\Big[\frac{I(G=O)}{p(G=O)}\{\E[M\mid A=a,X,G=E]-\psi_1^{(a)}\}S(V)\Big].
\end{aligned}
\end{equation}

Combining \eqref{eq:IF:equi:ate:2} and \eqref{eq:IF:equi:ate:3} concludes that
\begin{align*}
\partial_t\psi^{(a)}_{1_t} 
&=\E
\Big[\Big\{
\frac{I(A=a)}{p(A=a\mid X,G=E)}\cdot\frac{I(G=E)}{p(G=O)}
\{M-\E[M\mid A,X,G]\}\{\frac{1}{p(G=E\mid X)}-1\}\\
&\qquad+\frac{I(G=O)}{p(G=O)}\{\E[M\mid A=a,X,G=E]-\psi_1^{(a)}\}
\Big\}S(V)\Big].
\end{align*}

Therefore, 
\begin{align*}
&\frac{I(A=a)}{p(A=a\mid X,G=E)}\cdot\frac{I(G=E)}{p(G=O)}
\{M-\E[M\mid A,X,G]\}\{\frac{1}{p(G=E\mid X)}-1\}\\
&+\frac{I(G=O)}{p(G=O)}\{\E[M\mid A=a,X,G=E]-\psi_1^{(a)}\}
\end{align*}
is the influence function of $\psi_1^{(a)}$.

For $\psi_2^{(a)}$, note that
\begin{equation}
\label{eq:IF:equi:ate:4}
\begin{aligned}
\partial_t\psi^{(a)}_{2_t} 
&=\partial_t \sum_{m,x} mp_t(m\mid a,x,G=O)p_t(x\mid G=O)\\
&= \sum_{m,x} m\partial_t p_t(m\mid a,x,G=O)p(x\mid G=O)\\
&\quad+ \sum_{m,x} m p(m\mid a,x,G=O)\partial_t p_t(x\mid G=O).
\end{aligned}
\end{equation}

For the first term in \eqref{eq:IF:equi:ate:4}, we have
\begin{align*}
&\sum_{m,x} m\partial_t p_t(m\mid a,x,G=O)p(x\mid G=O)\\
&=\sum_{m,x} mS(m\mid a,x,G=O)\frac{1}{p(A=a\mid x,G=O)}\cdot\frac{1}{p(G=O)}p(m,a,x,G=O)\\
&=\sum_{m,\tilde{a},x,g} mS(m\mid \tilde{a},x,g)\frac{I(\tilde{a}=a)}{p(A=a\mid x,G=O)}\cdot\frac{I(g=O)}{p(G=O)}p(m,\tilde{a},x,g)\\
&=\E\Big[
\frac{I(A=a)}{p(A=a\mid X,G=O)}\cdot\frac{I(G=O)}{p(G=O)}
MS(M\mid A,X,G)
\Big]\\
&=\E\Big[
\frac{I(A=a)}{p(A=a\mid X,G=O)}\cdot\frac{I(G=O)}{p(G=O)}
\{M-\E[M\mid A,X,G]\}S(M\mid A,X,G)
\Big],
\end{align*}
Note that
\begin{align*}
\E\Big[
\frac{I(A=a)}{p(A=a\mid X,G=O)}\cdot\frac{I(G=O)}{p(G=O)}
\{M-\E[M\mid A,X,G]\}S(A,X,G)
\Big]=0.
\end{align*}
Therefore,
\begin{equation}
\label{eq:IF:equi:ate:5}
\begin{aligned}
&\sum_{m,x} m\partial_t p_t(m\mid a,x,G=O)p(x\mid G=O)\\
&=\E\Big[
\frac{I(A=a)}{p(A=a\mid X,G=O)}\cdot\frac{I(G=O)}{p(G=O)}
\{M-\E[M\mid A,X,G]\}S(V)
\Big].
\end{aligned}
\end{equation}

For the second term in \eqref{eq:IF:equi:ate:4}, we have
\begin{align*}
&\sum_{m,x} m p(m\mid a,x,G=O)\partial_t p_t(x\mid G=O)\\
&=\sum_{x} \E[M\mid A=a,X=x,G=O]\partial_t p_t(x\mid G=O)\\
&=\sum_{x} \E[M\mid A=a,X=x,G=O] S(x\mid G=O)p(x\mid G=O)\\
&=\sum_{x,g}\frac{I(g=O)}{p(G=O)}\E[M\mid A=a,X=x,G=O]S(x\mid g)p(x,g)\\
&=\E\Big[\frac{I(G=O)}{p(G=O)}\E[M\mid A=a,X,G=O]S(X\mid G)\Big]\\
&=\E\Big[\frac{I(G=O)}{p(G=O)}\{\E[M\mid A=a,X,G=O]-\E[\E[M\mid A=a,X,G=O]\mid G=O]\}S(X\mid G)\Big]\\
&=\E\Big[\frac{I(G=O)}{p(G=O)}\{\E[M\mid A=a,X,G=O]-\psi_2^{(a)}\}S(X\mid G)\Big].
\end{align*}
Note that
\begin{align*}
&\E\Big[\frac{I(G=O)}{p(G=O)}\{\E[M\mid A=a,X,G=O]-\psi_2^{(a)}\}S(G)\Big]=0.
\end{align*}
Therefore,
\begin{equation}
\label{eq:IF:equi:ate:6}
\begin{aligned}
&\sum_{m,x} m p(m\mid a,x,G=O)\partial_t p_t(x\mid G=O)
=\E\Big[\frac{I(G=O)}{p(G=O)}\{\E[M\mid A=a,X,G=O]-\psi_2^{(a)}\}S(V)\Big].
\end{aligned}
\end{equation}

Combining \eqref{eq:IF:equi:ate:5} and \eqref{eq:IF:equi:ate:6} concludes that
\begin{align*}
\partial_t\psi^{(a)}_{2_t} 
&=\E
\Big[\Big\{
\frac{I(A=a)}{p(A=a\mid X,G=O)}\cdot\frac{I(G=O)}{p(G=O)}
\{M-\E[M\mid A,X,G]\}\\
&\qquad+\frac{I(G=O)}{p(G=O)}\{\E[M\mid A=a,X,G=O]-\psi_2^{(a)}\}
\Big\}S(V)\Big].
\end{align*}

Therefore, 
\begin{align*}
\frac{I(G=O)}{p(G=O)}\big\{\frac{I(A=a)}{p(A=a\mid X,G=O)}
\{M-\E[M\mid A,X,G]\}
+\E[M\mid A=a,X,G=O]-\psi_2^{(a)}\big\}
\end{align*}
is the influence function of $\psi_2^{(a)}$.

Similarly, 
\begin{align*}
\frac{I(G=O)}{p(G=O)}\big\{\frac{I(A=a)}{p(A=a\mid X,G=O)}
\{Y-\E[Y\mid A,X,G]\}
+\E[Y\mid A=a,X,G=O]-\psi_3^{(a)}\big\}
\end{align*}
is the influence function of $\psi_3^{(a)}$.

For $i\in\{1,2,3\}$ and $a\in\{0,1\}$, denote the obtained influence functions by $IF_{\psi_i^{(a)}}$.
The influence function for $\psi_{\text{ATE}}^{\text{equi}}$ can be obtained as
$IF_{\psi_{\text{ATE}}^{\text{equi}}}=
IF_{\psi_1^{(1)}}-IF_{\psi_1^{(0)}}
+IF_{\psi_2^{(0)}}-IF_{\psi_2^{(1)}}
+IF_{\psi_3^{(1)}}-IF_{\psi_3^{(0)}}$. Therefore,
\begin{align*}
&\frac{(-1)^{1-A}}{p(A\mid X,G=E)}\cdot\frac{I(G=E)}{p(G=O)}
\{M-\E[M\mid A,X,G]\}\{\frac{1}{p(G=E\mid X)}-1\}\\
&+\frac{I(G=O)}{p(G=O)}\big\{
\frac{(-1)^{1-A}}{p(A\mid X,G=O)}\{Y-\E[Y\mid A,X,G]-M+\E[M\mid A,X,G]\}\\
&+
\E[Y\mid X,A=1,G=O]-\E[Y\mid X,A=0,G=O]\\
&+\E[M\mid X,A=1,G=E]-\E[M\mid X,A=0,G=E]\\
&+\E[M\mid X,A=0,G=O]-\E[M\mid X,A=1,G=O]
-\psi_{\text{ATE}}^{\text{equi}}\big\}
\end{align*}
is the influence function of $\psi_{\text{ATE}}^{\text{equi}}$.

\end{proof}

\begin{proof}[Proof of Proposition \ref{prop:DR:equi-conf} and \ref{prop:DR:equi-conf_ATE}]
	
First, suppose the set $\{\mu^{G=E}_M,\mu^{G=O}_M,\mu^{G=O}_Y\}$ is correctly specified. We have
\begingroup
\allowdisplaybreaks
\begin{align*}
&\mathbb{E}\Big[
\frac{1}{p(A=1,G=O)}\Big\{
\frac{I(G=O)I(A=0)}{1-\hat{\pi}^{G=O}(X)}\{M-\hat{\mu}^{G=O}_M(0,X)\}\\
&\hspace{25mm}+\frac{I(G=E)I(A=0)}{1-\hat{\pi}^{G=E}(X)}\{\frac{1}{\hat{p}(G=E\mid X)}-1\}\{M-\hat{\mu}^{G=E}_M(0,X)\}\\
&\hspace{25mm}+\frac{I(G=O)I(A=0)\hat{\pi}^{G=O}(X)}{1-\hat{\pi}^{G=O}(X)}\{Y-\hat{\mu}^{G=O}_Y(0,X)\}\\
&\hspace{25mm}+I(G=O)\big\{
\hat{\mu}^{G=O}_M(0,X)
+\hat{\mu}^{G=E}_M(0,X)
+I(A=1)\{Y+\hat{\mu}^{G=O}_Y(0,X)\}\big\}\Big\}
\Big]\\
&=\mathbb{E}\Big[
\frac{1}{p(A=1,G=O)}\Big\{
\frac{I(G=O)I(A=0)}{1-\hat{\pi}^{G=O}(X)}\{\E[M\mid X,A=0,G=O]-\hat{\mu}^{G=O}_M(0,X)\}\\
&\hspace{25mm}+\frac{I(G=E)I(A=0)}{1-\hat{\pi}^{G=E}(X)}\{\frac{1}{\hat{p}(G=E\mid X)}-1\}\{\E[M\mid X,A=0,G=E]-\hat{\mu}^{G=E}_M(0,X)\}\\
&\hspace{25mm}+\frac{I(G=O)I(A=0)\hat{\pi}^{G=O}(X)}{1-\hat{\pi}^{G=O}(X)}\{\E[Y\mid X,A=0,G=O]-\hat{\mu}^{G=O}_Y(0,X)\}\\
&\hspace{25mm}+I(G=O)\big\{
\hat{\mu}^{G=O}_M(0,X)
+\hat{\mu}^{G=E}_M(0,X)
+I(A=1)\{Y+\hat{\mu}^{G=O}_Y(0,X)\}\big\}\Big\}
\Big]\\
&=\mathbb{E}\Big[
\frac{1}{p(A=1,G=O)}\Big\{
I(G=O)\big\{
\hat{\mu}^{G=O}_M(0,X)
+\hat{\mu}^{G=E}_M(0,X)
+I(A=1)\{Y+\hat{\mu}^{G=O}_Y(0,X)\}\big\}\Big\}
\Big]\\
&=\theta_{ETT}.
\end{align*}
\endgroup

\begingroup
\allowdisplaybreaks
{\footnotesize \begin{align*}
&\mathbb{E}\Big[
\frac{(-1)^{1-A}}{1-A+(-1)^{1-A}\hat{\pi}^{G=E}(X)}\cdot\frac{I(G=E)}{p(G=O)}
\{M-\hat{\mu}^{G=E}_M(A,X)\}\{\frac{1}{\hat{p}(G=E\mid X)}-1\}\\
&\quad+\frac{I(G=O)}{p(G=O)}\big\{
\frac{(-1)^{1-A}}{1-A+(-1)^{1-A}\hat{\pi}^{G=O}(X)}\{Y-\hat{\mu}^{G=O}_Y(A,X)-M+\hat{\mu}^{G=O}_M(A,X)\}\\
&\quad+
\hat{\mu}^{G=O}_Y(1,X)-\hat{\mu}^{G=O}_Y(0,X)
+\hat{\mu}^{G=E}_M(1,X)-\hat{\mu}^{G=E}_M(0,X)
+\hat{\mu}^{G=O}_M(0,X)-\hat{\mu}^{G=O}_M(1,X)
\big\}
\Big]\\
&=\mathbb{E}\Big[
\frac{(-1)^{1-A}}{1-A+(-1)^{1-A}\hat{\pi}^{G=E}(X)}\cdot\frac{I(G=E)}{p(G=O)}
\{\E[M\mid X,A,G=E]-\hat{\mu}^{G=E}_M(A,X)\}\{\frac{1}{\hat{p}(G=E\mid X)}-1\}\\
&\quad+\frac{I(G=O)}{p(G=O)}\big\{
\frac{(-1)^{1-A}}{1-A+(-1)^{1-A}\hat{\pi}^{G=O}(X)}\{\E[Y\mid X,A,G=O]-\hat{\mu}^{G=O}_Y(A,X)-\E[M\mid X,A,G=O]+\hat{\mu}^{G=O}_M(A,X)\}\\
&\quad+
\hat{\mu}^{G=O}_Y(1,X)-\hat{\mu}^{G=O}_Y(0,X)
+\hat{\mu}^{G=E}_M(1,X)-\hat{\mu}^{G=E}_M(0,X)
+\hat{\mu}^{G=O}_M(0,X)-\hat{\mu}^{G=O}_M(1,X)
\big\}
\Big]\\
&=\mathbb{E}\Big[
\hat{\mu}^{G=O}_Y(1,X)-\hat{\mu}^{G=O}_Y(0,X)
+\hat{\mu}^{G=E}_M(1,X)-\hat{\mu}^{G=E}_M(0,X)
+\hat{\mu}^{G=O}_M(0,X)-\hat{\mu}^{G=O}_M(1,X)
\big\}
\Big]\\
&=\theta_{ATE}.
\end{align*}}
\endgroup

Second, suppose the set $\{\pi^{G=E},\pi^{G=O},p(G=E\mid x)\}$ is correctly specified. We have
\begingroup
\allowdisplaybreaks
\begin{align*}
&\mathbb{E}\Big[
\frac{1}{p(A=1,G=O)}\Big\{
\frac{I(G=O)I(A=0)}{1-\hat{\pi}^{G=O}(X)}\{M-\hat{\mu}^{G=O}_M(0,X)\}\\
&\hspace{25mm}+\frac{I(G=E)I(A=0)}{1-\hat{\pi}^{G=E}(X)}\{\frac{1}{\hat{p}(G=E\mid X)}-1\}\{M-\hat{\mu}^{G=E}_M(0,X)\}\\
&\hspace{25mm}+\frac{I(G=O)I(A=0)\hat{\pi}^{G=O}(X)}{1-\hat{\pi}^{G=O}(X)}\{Y-\hat{\mu}^{G=O}_Y(0,X)\}\\
&\hspace{25mm}+I(G=O)\big\{
\hat{\mu}^{G=O}_M(0,X)
+\hat{\mu}^{G=E}_M(0,X)
+I(A=1)\{Y+\hat{\mu}^{G=O}_Y(0,X)\}\big\}\Big\}
\Big]\\
&=\mathbb{E}\Big[
\frac{1}{p(A=1,G=O)}\Big\{
\frac{I(G=O)I(A=0)}{1-\hat{\pi}^{G=O}(X)}M\\
&\hspace{25mm}+\frac{I(G=E)I(A=0)}{1-\hat{\pi}^{G=E}(X)}\{\frac{1}{\hat{p}(G=E\mid X)}-1\}M\\
&\hspace{25mm}+\frac{I(G=O)I(A=0)\hat{\pi}^{G=O}(X)}{1-\hat{\pi}^{G=O}(X)}Y\\
&\hspace{25mm}+\frac{I(G=O)\E[I(A=0)\mid X,G=O]}{1-\hat{\pi}^{G=O}(X)}\{-\hat{\mu}^{G=O}_M(0,X)\}\\
&\hspace{25mm}+\frac{I(G=E)\E[I(A=0)\mid X,G=E]}{1-\hat{\pi}^{G=E}(X)}\{\frac{1}{\hat{p}(G=E\mid X)}-1\}\{-\hat{\mu}^{G=E}_M(0,X)\}\\
&\hspace{25mm}+\frac{I(G=O)\E[I(A=0)\mid X,G=O]\hat{\pi}^{G=O}(X)}{1-\hat{\pi}^{G=O}(X)}\{-\hat{\mu}^{G=O}_Y(0,X)\}\\
&\hspace{25mm}+I(G=O)\big\{
\hat{\mu}^{G=O}_M(0,X)
+\hat{\mu}^{G=E}_M(0,X)
+I(A=1)\{Y+\hat{\mu}^{G=O}_Y(0,X)\}\big\}\Big\}
\Big]\\
&=\mathbb{E}\Big[
\frac{1}{p(A=1,G=O)}\Big\{
\frac{I(G=O)I(A=0)}{1-\hat{\pi}^{G=O}(X)}M\\
&\hspace{25mm}+\frac{I(G=E)I(A=0)}{1-\hat{\pi}^{G=E}(X)}\{\frac{1}{\hat{p}(G=E\mid X)}-1\}M\\
&\hspace{25mm}+\frac{I(G=O)I(A=0)\hat{\pi}^{G=O}(X)}{1-\hat{\pi}^{G=O}(X)}Y\\
&\hspace{25mm}+I(G=O)\{-\hat{\mu}^{G=O}_M(0,X)\}\\
&\hspace{25mm}+\E[I(G=E)\mid X]\{\frac{1}{\hat{p}(G=E\mid X)}-1\}\{-\hat{\mu}^{G=E}_M(0,X)\}\\
&\hspace{25mm}+I(G=O)\E[I(A=1)\{-\hat{\mu}^{G=O}_Y(0,X)\}\mid X,G=O]\\
&\hspace{25mm}+I(G=O)\big\{
\hat{\mu}^{G=O}_M(0,X)
+\hat{\mu}^{G=E}_M(0,X)
+I(A=1)\{Y+\hat{\mu}^{G=O}_Y(0,X)\}\big\}\Big\}
\Big]\\
&=\mathbb{E}\Big[
\frac{1}{p(A=1,G=O)}\Big\{
\frac{I(G=O)I(A=0)}{1-\hat{\pi}^{G=O}(X)}M\\
&\hspace{25mm}+\frac{I(G=E)I(A=0)}{1-\hat{\pi}^{G=E}(X)}\{\frac{1}{\hat{p}(G=E\mid X)}-1\}M\\
&\hspace{25mm}+\frac{I(G=O)I(A=0)\hat{\pi}^{G=O}(X)}{1-\hat{\pi}^{G=O}(X)}Y\\
&\hspace{25mm}+I(G=O)I(A=1)Y\Big\}
\Big]\\
&=\theta_{ETT}.
\end{align*}
\endgroup

\begingroup
\allowdisplaybreaks
\begin{align*}
&\mathbb{E}\Big[
\frac{(-1)^{1-A}}{1-A+(-1)^{1-A}\hat{\pi}^{G=E}(X)}\cdot\frac{I(G=E)}{p(G=O)}
\{M-\hat{\mu}^{G=E}_M(A,X)\}\{\frac{1}{\hat{p}(G=E\mid X)}-1\}\\
&\quad+\frac{I(G=O)}{p(G=O)}\big\{
\frac{(-1)^{1-A}}{1-A+(-1)^{1-A}\hat{\pi}^{G=O}(X)}\{Y-\hat{\mu}^{G=O}_Y(A,X)-M+\hat{\mu}^{G=O}_M(A,X)\}\\
&\quad+
\hat{\mu}^{G=O}_Y(1,X)-\hat{\mu}^{G=O}_Y(0,X)
+\hat{\mu}^{G=E}_M(1,X)-\hat{\mu}^{G=E}_M(0,X)
+\hat{\mu}^{G=O}_M(0,X)-\hat{\mu}^{G=O}_M(1,X)
\big\}
\Big]\\
&=\mathbb{E}\Big[
\frac{(-1)^{1-A}}{1-A+(-1)^{1-A}\hat{\pi}^{G=E}(X)}\cdot\frac{I(G=E)}{p(G=O)}
M\{\frac{1}{\hat{p}(G=E\mid X)}-1\}\\
&\quad+\frac{I(G=O)}{p(G=O)}\cdot
\frac{(-1)^{1-A}}{1-A+(-1)^{1-A}\hat{\pi}^{G=O}(X)}\{Y-M\}\\
&\quad+\frac{(-1)^{1-A}}{1-A+(-1)^{1-A}\hat{\pi}^{G=E}(X)}\cdot\frac{I(G=E)}{p(G=O)}
\{-\hat{\mu}^{G=E}_M(A,X)\}\{\frac{1}{\hat{p}(G=E\mid X)}-1\}\\
&\quad+\frac{I(G=O)}{p(G=O)}\big\{
\frac{(-1)^{1-A}}{1-A+(-1)^{1-A}\hat{\pi}^{G=O}(X)}\{-\hat{\mu}^{G=O}_Y(A,X)+\hat{\mu}^{G=O}_M(A,X)\}\\
&\quad+
\hat{\mu}^{G=O}_Y(1,X)-\hat{\mu}^{G=O}_Y(0,X)
+\hat{\mu}^{G=E}_M(1,X)-\hat{\mu}^{G=E}_M(0,X)
+\hat{\mu}^{G=O}_M(0,X)-\hat{\mu}^{G=O}_M(1,X)
\big\}
\Big]\\
&=\mathbb{E}\Big[
\frac{(-1)^{1-A}}{1-A+(-1)^{1-A}\hat{\pi}^{G=E}(X)}\cdot\frac{I(G=E)}{p(G=O)}
M\{\frac{1}{\hat{p}(G=E\mid X)}-1\}\\
&\quad+\frac{I(G=O)}{p(G=O)}\cdot
\frac{(-1)^{1-A}}{1-A+(-1)^{1-A}\hat{\pi}^{G=O}(X)}\{Y-M\}\\
&\quad+\frac{I(G=E)}{p(G=O)}
\{-\hat{\mu}^{G=E}_M(1,X)+\hat{\mu}^{G=E}_M(0,X)\}\{\frac{1}{\hat{p}(G=E\mid X)}-1\}\\
&\quad+\frac{I(G=O)}{p(G=O)}\big\{
\{-\hat{\mu}^{G=O}_Y(1,X)+\hat{\mu}^{G=O}_M(1,X)+\hat{\mu}^{G=O}_Y(0,X)-\hat{\mu}^{G=O}_M(0,X)\}\\
&\quad+
\hat{\mu}^{G=O}_Y(1,X)-\hat{\mu}^{G=O}_Y(0,X)
+\hat{\mu}^{G=E}_M(1,X)-\hat{\mu}^{G=E}_M(0,X)
+\hat{\mu}^{G=O}_M(0,X)-\hat{\mu}^{G=O}_M(1,X)
\big\}
\Big]\\
&=\mathbb{E}\Big[
\frac{(-1)^{1-A}}{1-A+(-1)^{1-A}\hat{\pi}^{G=E}(X)}\cdot\frac{I(G=E)}{p(G=O)}
M\{\frac{1}{\hat{p}(G=E\mid X)}-1\}\\
&\quad+\frac{I(G=O)}{p(G=O)}\cdot
\frac{(-1)^{1-A}}{1-A+(-1)^{1-A}\hat{\pi}^{G=O}(X)}\{Y-M\}
\Big]\\
&=\theta_{ATE}.
\end{align*}
\endgroup

Parts 3 and 4 can be proven by combining the techniques used in parts 1 and 2, and thus we omit here.

\end{proof}

\begin{proof}[Proof of Theorem \ref{thm:IF:BSIV-1_ett}: $\psi_{\text{ETT}}^{\text{bsiv1}}$]
	
Define
\begin{align*}
&\psi_1=
\E\big[\frac{\E[Y-M\mid B=1,X,G=O]-\E[Y-M\mid B=0,X,G=O]}{p(A=1\mid B=1,X,G=O)-p(A=1\mid B=0,X,G=O)}~\big|~ A=1,G=O\big],\\
&\psi_2=
\E\big[\frac{\E[M\mid A=0,B,X,G=E]}{p(A=1\mid B,X,G=O)}~\big|~ A=1,G=O\big],\\
&\psi_3=
\E\big[\frac{\E[M\mid A=0,B,X,G=O]p(A=0\mid B,X,G=O)}{p(A=1\mid B,X,G=O)}~\big|~ A=1,G=O\big],\\
&\psi_4=
\E\big[\E[M\mid A=1,B,X,G=O]~\big|~ A=1,G=O\big]\\
&\psi_5=\psi_3+\psi_4.
\end{align*}
We use the notation $\partial_tf(t)$ to denote $\frac{\partial f(t)}{\partial t}\big|_{t=0}$.
For parameter $\psi$, let $\psi_t$ be the parameter under a regular parametric sub-model indexed by $t$, that includes the ground-truth model at $t=0$.
Let $V$ be the set of all observed variables.
In order to obtain an influence function, we need to find a random variable $\Gamma$ with mean zero, that satisfies
\[
\partial_t\psi_t=\E[\Gamma S(V)],
\]
where $S(V)=\partial_t\log p_t(V)$.	

For $\psi_1$, note that
\begin{equation}
\label{eq:IF-BSIV-1-1}
\begin{aligned}
\partial_t{\psi_1}_t	
&=\partial_t\sum_x\frac{\E_t[Y-M\mid B=1,x,G=O]-\E_t[Y-M\mid B=0,x,G=O]}{p_t(A=1\mid B=1,x,G=O)-p_t(A=1\mid B=0,x,G=O)}p_t(x\mid A=1,G=O)\\
&=\sum_x\partial_t\frac{\E_t[Y-M\mid B=1,x,G=O]-\E_t[Y-M\mid B=0,x,G=O]}{p_t(A=1\mid B=1,x,G=O)-p_t(A=1\mid B=0,x,G=O)}p(x\mid A=1,G=O)\\
&\quad+\sum_x\frac{\E[Y-M\mid B=1,x,G=O]-\E[Y-M\mid B=0,x,G=O]}{p(A=1\mid B=1,x,G=O)-p(A=1\mid B=0,x,G=O)}\partial_tp_t(x\mid A=1,G=O).
\end{aligned}
\end{equation}

For the first term in \eqref{eq:IF-BSIV-1-1}, note that
\begingroup
\allowdisplaybreaks
\begin{align*}
&\sum_x\partial_t\frac{\E_t[Y-M\mid B=1,x,G=O]}{p_t(A=1\mid B=1,x,G=O)-p_t(A=1\mid B=0,x,G=O)}p(x\mid A=1,G=O)\\
&=\sum_x\frac{\partial_t\E_t[Y-M\mid B=1,x,G=O]}{p(A=1\mid B=1,x,G=O)-p(A=1\mid B=0,x,G=O)}p(x\mid A=1,G=O)\\
&\quad-\sum_x\frac{\E[Y-M\mid B=1,x,G=O]\partial_tp_t(A=1\mid B=1,x,G=O)}{\{p(A=1\mid B=1,x,G=O)-p(A=1\mid B=0,x,G=O)\}^2}p(x\mid A=1,G=O)\\
&\quad+\sum_x\frac{\E[Y-M\mid B=1,x,G=O]\partial_tp_t(A=1\mid B=0,x,G=O)}{\{p(A=1\mid B=1,x,G=O)-p(A=1\mid B=0,x,G=O)\}^2}p(x\mid A=1,G=O)\\
&=\sum_{y,m,b,x,g}\frac{I(b=1)I(g=O)}{p(A=1\mid B=1,x,G=O)-p(A=1\mid B=0,x,G=O)}\cdot\frac{1}{p(A=1,G=O)}\\
&\qquad\cdot\frac{p(A=1\mid x,G=O)}{p(B=1\mid x,G=O)}\{y-m\}S(y,m\mid b,x,g)p(y,m,b,x,g)\\
&\quad-\sum_{a,b,x,g}\frac{\E[Y-M\mid B=1,x,G=O]}{\{p(A=1\mid B=1,x,G=O)-p(A=1\mid B=0,x,G=O)\}^2}\cdot\frac{I(b=1)I(g=O)}{p(A=1,G=O)}\\
&\qquad\cdot \frac{p(A=1\mid x,G=O)}{p(B=1\mid x,G=O)}I(a=1) S(a\mid b,x,g)p(a,b,x,g)\\
&\quad+\sum_{a,b,x,g}\frac{\E[Y-M\mid B=1,x,G=O]}{\{p(A=1\mid B=1,x,G=O)-p(A=1\mid B=0,x,G=O)\}^2}\cdot\frac{I(b=0)I(g=O)}{p(A=1,G=O)}\\
&\qquad\cdot \frac{p(A=1\mid x,G=O)}{p(B=0\mid x,G=O)}I(a=1) S(a\mid b,x,g)p(a,b,x,g)\\
&=\E\big[\frac{I(B=1)I(G=O)}{p(A=1\mid B=1,X,G=O)-p(A=1\mid B=0,X,G=O)}\cdot\frac{1}{p(A=1,G=O)}\\
&\qquad\cdot\frac{p(A=1\mid X,G=O)}{p(B=1\mid X,G=O)}\{Y-M\}S(Y,M\mid B,X,G)\big]\\
&\quad-\E\big[\frac{\E[Y-M\mid B=1,X,G=O]}{\{p(A=1\mid B=1,X,G=O)-p(A=1\mid B=0,X,G=O)\}^2}\cdot\frac{I(B=1)I(G=O)}{p(A=1,G=O)}\\
&\qquad\cdot \frac{p(A=1\mid X,G=O)}{p(B=1\mid X,G=O)}I(A=1) S(A\mid B,X,G)\big]\\
&\quad+\E\big[\frac{\E[Y-M\mid B=1,X,G=O]}{\{p(A=1\mid B=1,X,G=O)-p(A=1\mid B=0,X,G=O)\}^2}\cdot\frac{I(B=0)I(G=O)}{p(A=1,G=O)}\\
&\qquad\cdot \frac{p(A=1\mid X,G=O)}{p(B=0\mid X,G=O)}I(A=1) S(A\mid B,X,G)\big]\\
&=\E\big[\frac{I(B=1)I(G=O)}{p(A=1\mid B=1,X,G=O)-p(A=1\mid B=0,X,G=O)}\cdot\frac{1}{p(A=1,G=O)}\\
&\qquad\cdot\frac{p(A=1\mid X,G=O)}{p(B=1\mid X,G=O)}\{Y-M-\E[Y-M\mid B=1,X,G=O]\}S(V)\big]\\
&\quad-\E\big[\frac{\E[Y-M\mid B=1,X,G=O]}{\{p(A=1\mid B=1,X,G=O)-p(A=1\mid B=0,X,G=O)\}^2}\cdot\frac{I(B=1)I(G=O)}{p(A=1,G=O)}\\
&\qquad\cdot \frac{p(A=1\mid X,G=O)}{p(B=1\mid X,G=O)}\{I(A=1)-p(A=1\mid B=1,X,G=O)\} S(V)\big]\\
&\quad+\E\big[\frac{\E[Y-M\mid B=1,X,G=O]}{\{p(A=1\mid B=1,X,G=O)-p(A=1\mid B=0,X,G=O)\}^2}\cdot\frac{I(B=0)I(G=O)}{p(A=1,G=O)}\\
&\qquad\cdot \frac{p(A=1\mid X,G=O)}{p(B=0\mid X,G=O)}\{I(A=1)-p(A=1\mid B=0,X,G=O)\} S(V)\big].
\end{align*}
\endgroup
Similarly,
\begingroup
\allowdisplaybreaks
\begin{align*}
&\sum_x\partial_t\frac{\E_t[Y-M\mid B=0,x,G=O]}{p_t(A=1\mid B=1,x,G=O)-p_t(A=1\mid B=0,x,G=O)}p(x\mid A=1,G=O)\\
&=\E\big[\frac{I(B=0)I(G=O)}{p(A=1\mid B=1,X,G=O)-p(A=1\mid B=0,X,G=O)}\cdot\frac{1}{p(A=1,G=O)}\\
&\qquad\cdot\frac{p(A=1\mid X,G=O)}{p(B=0\mid X,G=O)}\{Y-M-\E[Y-M\mid B=0,X,G=O]\}S(V)\big]\\
&\quad-\E\big[\frac{\E[Y-M\mid B=0,X,G=O]}{\{p(A=1\mid B=1,X,G=O)-p(A=1\mid B=0,X,G=O)\}^2}\cdot\frac{I(B=1)I(G=O)}{p(A=1,G=O)}\\
&\qquad\cdot \frac{p(A=1\mid X,G=O)}{p(B=1\mid X,G=O)}\{I(A=1)-p(A=1\mid B=1,X,G=O)\} S(V)\big]\\
&\quad+\E\big[\frac{\E[Y-M\mid B=0,X,G=O]}{\{p(A=1\mid B=1,X,G=O)-p(A=1\mid B=0,X,G=O)\}^2}\cdot\frac{I(B=0)I(G=O)}{p(A=1,G=O)}\\
&\qquad\cdot \frac{p(A=1\mid X,G=O)}{p(B=0\mid X,G=O)}\{I(A=1)-p(A=1\mid B=0,X,G=O)\} S(V)\big].
\end{align*}
\endgroup
Therefore,
\begingroup
\allowdisplaybreaks
{\footnotesize \begin{align*}
&\sum_x\partial_t\frac{\E_t[Y-M\mid B=1,x,G=O]-\E_t[Y-M\mid B=0,x,G=O]}{p_t(A=1\mid B=1,x,G=O)-p_t(A=1\mid B=0,x,G=O)}p(x\mid A=1,G=O)\\
&=\E\Big[\frac{I(G=O)}{p(A=1\mid B=1,X,G=O)-p(A=1\mid B=0,X,G=O)}\cdot\frac{1}{p(A=1,G=O)}
\cdot\frac{p(A=1\mid X,G=O)}{p(B\mid X,G=O)}\\
&\quad\Big\{I(B=1)\{Y-M-\E[Y-M\mid B=1,X,G=O]\}
-I(B=0)\{Y-M-\E[Y-M\mid B=0,X,G=O]\}\\
&\quad+\frac{\E[Y-M\mid B=1,X,G=O]-\E[Y-M\mid B=0,X,G=O]}{p(A=1\mid B=1,X,G=O)-p(A=1\mid B=0,X,G=O)}\\
&\quad\big\{-I(B=1)\{I(A=1)-p(A=1\mid B=1,X,G=O)\}+I(B=0)\{I(A=1)-p(A=1\mid B=0,X,G=O)\}\big\}\Big\}S(V)\Big].
\end{align*}}
\endgroup
For the second term in \eqref{eq:IF-BSIV-1-1}, we have
\begingroup
\allowdisplaybreaks
\begin{align*}
&\sum_x\frac{\E[Y-M\mid B=1,x,G=O]-\E[Y-M\mid B=0,x,G=O]}{p(A=1\mid B=1,x,G=O)-p(A=1\mid B=0,x,G=O)}\partial_tp_t(x\mid A=1,G=O)\\
&=\E\Big[\frac{I(A=1)I(G=O)}{p(A=1,G=O)}\cdot\frac{\E[Y-M\mid B=1,x,G=O]-\E[Y-M\mid B=0,x,G=O]}{p(A=1\mid B=1,x,G=O)-p(A=1\mid B=0,x,G=O)}S(X\mid A,G)\Big]\\
&=\E\Big[\frac{I(A=1)I(G=O)}{p(A=1,G=O)}\{\frac{\E[Y-M\mid B=1,x,G=O]-\E[Y-M\mid B=0,x,G=O]}{p(A=1\mid B=1,x,G=O)-p(A=1\mid B=0,x,G=O)}-\psi_1\}S(V)\Big].
\end{align*}
\endgroup
Therefore,
\begingroup
\allowdisplaybreaks
{\footnotesize \begin{align*}
&\partial_t{\psi_1}_t	\\
&=\E\Bigg[\Bigg\{\frac{I(G=O)}{p(A=1\mid B=1,X,G=O)-p(A=1\mid B=0,X,G=O)}\cdot\frac{1}{p(A=1,G=O)}
\cdot\frac{p(A=1\mid X,G=O)}{p(B\mid X,G=O)}\\
&\quad\Big\{I(B=1)\{Y-M-\E[Y-M\mid B=1,X,G=O]\}
-I(B=0)\{Y-M-\E[Y-M\mid B=0,X,G=O]\}\\
&\quad+\frac{\E[Y-M\mid B=1,X,G=O]-\E[Y-M\mid B=0,X,G=O]}{p(A=1\mid B=1,X,G=O)-p(A=1\mid B=0,X,G=O)}\\
&\quad\big\{-I(B=1)\{I(A=1)-p(A=1\mid B=1,X,G=O)\}+I(B=0)\{I(A=1)-p(A=1\mid B=0,X,G=O)\}\big\}\Big\}\\
&\quad+\frac{I(A=1)I(G=O)}{p(A=1,G=O)}\{\frac{\E[Y-M\mid B=1,x,G=O]-\E[Y-M\mid B=0,x,G=O]}{p(A=1\mid B=1,x,G=O)-p(A=1\mid B=0,x,G=O)}-\psi_1\}\Bigg\}S(V)\Bigg].
\end{align*}}
\endgroup

For $\psi_2$, similar to the parameter $\psi_3$ in the proof of Theorem \ref{thm:IF:equi-conf:ETT}, we have
{\footnotesize \begin{align*}
&\partial_t{\psi_2}_t\\
&=\E\Big[
\frac{1}{p(A=1,G=O)}\Big\{
\frac{I(A=0)I(G=E)}{p(A=0\mid X,B,G=E)}\{\frac{1}{p(G=E\mid X,B)}-1\}\{M-\E[M\mid A=0,X,B,G=E]\}\\
&~~~~~~~~~~+I(G=O)\E[M\mid A=0,X,B,G=E]-I(A=1)I(G=O)\psi_2
\Big\}S(V)
\Big].	
\end{align*}}

For $\psi_3$ and $\psi_4$, note that
\begin{align*}
\psi_3
&=\E\big[\frac{\E[M\mid A=0,B,X,G=O]p(A=0\mid B,X,G=O)}{p(A=1\mid B,X,G=O)}~\big|~ A=1,G=O\big]\\	
&=\sum_{x,b}\E[M\mid A=0,b,x,G=O]\frac{p(A=0\mid b,x,G=O)}{p(A=1\mid b,x,G=O)}p(b,x\mid A=1,G=O)\\
&=\sum_{x,b}\E[M\mid A=0,b,x,G=O]\frac{p(A=0\mid b,x,G=O)}{p(A=1,G=O)}p(b,x,G=O)\\
&=\frac{p(A=0,G=O)}{p(A=1,G=O)}\E[M\mid A=0,G=O],
\end{align*}
and
\begin{align*}
\psi_4
&=\E\big[\E[M\mid A=1,B,X,G=O]~\big|~ A=1,G=O\big]=\E[M\mid A=1,G=O].
\end{align*}
Therefore,
\begin{align*}
\psi_5:=\psi_3+\psi_4
=\frac{1}{p(A=1\mid G=O)}\E[M\mid G=O].
\end{align*}
This implies that
\begin{align*}
&\partial_t{\psi_5}_t=\E\Big[
\frac{I(G=O)}{p(A=1,G=O)}\big\{
M-I(A=1)\psi_5
\big\}S(V)
\Big].	
\end{align*}

For $i\in\{1,2,5\}$, denote the obtained influence functions by $IF_{\psi_i}$. The influence function for $\psi_{\text{ETT}}^{\text{bsiv1}}$ can be obtained as $IF_{\psi_{\text{ETT}}^{\text{bsiv1}}}=IF_{\psi_1}-IF_{\psi_2}+IF_{\psi_5}$. Therefore, using the notations specified in Theorem \ref{thm:IF:BSIV-1_ett},

\begin{align*}
    &\frac{I(G=O)}{P_{11}^O(X)-P_{10}^O(X)}\frac{1}{p(A=1,G=O)}\frac{\pi^O(X)}{\rho_B^O(X)}\bigg\{I(B=1)\{Y-M-e_1^O(X)\}-I(B=0)\{Y-M-e_0^O(X)\}\\
&\quad\quad+\frac{e_1^O(X)-e_0^O(X)}{P_{11}^O(X)-P_{10}^O(X)}\big\{I(B=0)\{I(A=1)-P_{10}^O(X)\}-I(B=1)\{I(A=1)-P_{11}^O(X)\}\big\}\bigg\}\\
&\quad+\frac{I(A=1)I(G=O)}{p(A=1,G=O)}\Big\{\frac{e_1^O(X)-e_0^O(X)}{P_{11}^O(X)-P_{10}^O(X)}\Big\}-\frac{I(A=0)I(G=E)}{p(A=1,G=O)}\cdot\frac{1-\tau(B,X)}{\tau(B,X)}\cdot
\frac{M-M_{0}^{E}(B,X)}{1-P_{1B}^E(X)}\\
&\quad+\frac{I(G=O)}{p(A=1,G=O)}\cdot \{M-M_{0}^{E}(B,X)-I(A=1)\psi^1_{ETT}\}
\end{align*}
is the influence function of $\psi_{\text{ETT}}^{\text{bsiv1}}$.

\end{proof}

\begin{proof}[Proof of Theorem \ref{thm:IF:BSIV-1_ett} $\psi_{\text{ETT}}^{\text{bsiv2}}$]
	
Define
\begin{align*}
    &\psi_1^{(ab)}=\mathbb{E}\big[\mathbb{E}[Y-M\mid A=a,B=b,X,G=O]B\mid A=1,G=O\big],\\
    &\psi_2^{(ab)}=\mathbb{E}\big[\mathbb{E}[Y-M\mid A=a,B=b,X,G=O]\mid A=1,G=O\big],\\
    &\psi_3=\mathbb{E}\bigg[\frac{\mathbb{E}[Y-M\mid A=0,B=1,X,G=O]-\mathbb{E}[Y-M\mid A=0,B=0,X,G=O]}{p(A=0\mid B=1,X,G=O)-p(A=0\mid B=0,X,G=O)}\bigg|A=1,G=O\bigg],\\
    &\psi_4^{(ab)}=\mathbb{E}\bigg[\frac{\mathbb{E}[Y-M\mid A=a,B=b,X,G=O]p(A=1\mid B,X,G=O)}{p(A=0\mid B=1,X,G=O)-p(A=0\mid B=0,X,G=O)}\bigg|G=O\bigg],\\
    &\psi_5=\mathbb{E}\bigg[\frac{\mathbb{E}[M\mid A=0,B,X,G=E]}{p(A=1\mid B,X,G=O)}\mid A=1,G=O\bigg],\\
    &\psi_6=\mathbb{E}\big[\mathbb{E}[M\mid A=1,B,X,G=O]\mid A=1,G=O\big],\\
    &\psi_7=\mathbb{E}\bigg[\frac{\mathbb{E}[M\mid A=0,B,X,G=O]p(A=0\mid B,X,G=O)}{p(A=1\mid B,X,G=O)}\mid A=1,G=O\bigg],\\
    &\psi_8=\psi_6+\psi_7=\frac{1}{P(A=1\mid G=O)}\mathbb{E}[M\mid G=O].\\
\end{align*}
We use the notation $\partial_tf(t)$ to denote $\frac{\partial f(t)}{\partial t}\big|_{t=0}$.
For parameter $\psi$, let $\psi_t$ be the parameter under a regular parametric sub-model indexed by $t$, that includes the ground-truth model at $t=0$.
Let $V$ be the set of all observed variables.
In order to obtain an influence function, we need to find a random variable $\Gamma$ with mean zero, that satisfies
\[
\partial_t\psi_t=\E[\Gamma S(V)],
\]
where $S(V)=\partial_t\log p_t(V)$.	

For $\psi_1^{(ab)}$, note that
\begin{align*}
    \partial_t\psi_{1t}^{(ab)}&=\partial_t\sum_{x,\Tilde{b}}\mathbb{E}_t[Y-M\mid A=a,B=b,x,G=O]\Tilde{b}p_t(x,\Tilde{b}\mid A=1, G=O)\\
    &=\sum_{x,\Tilde{b}}\partial_t\mathbb{E}_t[Y-M\mid A=a,B=b,x,G=O]\Tilde{b}p(x,\Tilde{b}\mid A=1, G=O)\\
    &\ \ \ \ + \sum_{x,\Tilde{b}}\mathbb{E}[Y-M\mid A=a,B=b,x,G=O]\Tilde{b}\partial_tp_t(x,\Tilde{b}\mid A=1, G=O)\\
    &=\sum_{x}\partial_t\mathbb{E}_t[Y-M\mid A=a,B=b,x,G=O]p(x,B=1\mid A=1, G=O)\\
    &\ \ \ \ + \sum_{x,\Tilde{b}}\mathbb{E}[Y-M\mid A=a,B=b,x,G=O]\Tilde{b}\partial_tp_t(x,\Tilde{b}\mid A=1, G=O)\\
\end{align*}

For the first term, note that
{\footnotesize \begin{align*}
    &\sum_{x}\partial_t\mathbb{E}_t[Y-M\mid A=a,B=b,x,G=O]p(x,B=1\mid A=1, G=O)\\
    &=\sum_{y,m,x}(y-m)S(y,m\mid A=a,B=b,x,G=O)p(y,m\mid A=a,B=b,x,G=O)p(x,B=1\mid A=1, G=O)\\
    &=\sum_{y,m,x,\Tilde{b},\Tilde{a},g}I(\Tilde{b}=b)I(\Tilde{a}=a)I(g=O)(y-m)S(y,m\mid \Tilde{a},\Tilde{b},x,g)p(y,m\mid \Tilde{a},\Tilde{b},x,g)p(x,B=1\mid A=1, G=O)\\
    &=\sum_{y,m,x,\Tilde{b},\Tilde{a},g}\frac{I(\Tilde{b}=b)I(\Tilde{a}=a)I(g=O)}{p(A=1,G=O)}(y-m)S(y,m\mid \Tilde{a},\Tilde{b},x,g)\\
    &~~~~~~~~~~~~~~~~~\cdot \frac{p(B=1\mid x,G=O)p(A=1\mid B=1, x, G=O)}{p(\Tilde{b}\mid x,G=O)p(\Tilde{a}\mid\Tilde{b},x,G=O)}p(y,m, \Tilde{a},\Tilde{b},x,g)\\
    &=\mathbb{E}\bigg[\frac{I(B=b)I(A=a)I(G=O)}{p(A=1,G=O)}\frac{p(B=1\mid X,G=O)p(A=1\mid B=1, X, G=O)}{p(B\mid X,G=O)p(A\mid B, X,G=O)}(Y-M)S(Y,M\mid A,B,X,G)\bigg]\\
    &=\mathbb{E}\bigg[\frac{I(B=b)I(A=a)I(G=O)}{p(A=1,G=O)}\frac{p(B=1\mid X,G=O)p(A=1\mid B=1, X, G=O)}{p(B\mid X,G=O)p(A\mid B, X,G=O)}\\
    &\ \ \ \ \ \ \ \ \ \ \{Y-M-\mathbb{E}[Y-M\mid A=a,B=b,X,G=O]\}S(Y,M\mid A,B,X,G)\bigg]
\end{align*}}

Note that,
\begin{align*}
    &\mathbb{E}\bigg[\frac{I(B=b)I(A=a)I(G=O)}{p(A=1,G=O)}\frac{p(B=1\mid X,G=O)p(A=1\mid B=1, X, G=O)}{p(B\mid X,G=O)p(A\mid B, X,G=O)}\\
    &\ \ \ \ \ \ \ \ \ \ \{Y-M-\mathbb{E}[Y-M\mid A=a,B=b,X,G=O]\}S(A,B,X,G)\bigg]=0.
\end{align*}

Therefore,
\begin{align*}
    &\sum_{x}\partial_t\mathbb{E}_t[Y-M\mid A=a,B=b,x,G=O]p(x,B=1\mid A=1, G=O)\\
    &=\mathbb{E}\bigg[\frac{I(B=b)I(A=a)I(G=O)}{p(A=1,G=O)}\frac{p(B=1\mid X,G=O)p(A=1\mid B=1, X, G=O)}{p(B\mid X,G=O)p(A\mid B, X,G=O)}\\
    &\ \ \ \ \ \ \ \ \ \ \{Y-M-\mathbb{E}[Y-M\mid A=a,B=b,X,G=O]\}S(V)\bigg]
\end{align*}

For the second term, note that,
{\footnotesize \begin{align*}
    &\sum_{x,\tilde{b}}\mathbb{E}[Y-M\mid A=a,B=b,x,G=O]\tilde{b}\partial_tp_t(x,\tilde{b}\mid A=1, G=O)\\
    &=\sum_{x,\tilde{b}}\mathbb{E}[Y-M\mid A=a,B=b,x,G=O]\tilde{b}p(x,\tilde{b}\mid A=1, G=O)S(x,\tilde{b}\mid A=1, G=O)\\
    &=\sum_{x,\Tilde{b},\Tilde{a},g}\frac{I(\Tilde{a}=1)I(g=O)}{p(A=1,G=O)}\mathbb{E}[Y-M\mid A=a,B=b,x,G=O]\tilde{b}p(x,\Tilde{b}, \Tilde{a}, g)S(x,\tilde{b}\mid A=1, G=O)\\
    &=\mathbb{E}\bigg[\frac{I(A=1)I(G=O)}{p(A=1,G=O)}\mathbb{E}[Y-M\mid A=a,B=b,X,G=O]ZS(X,B\mid A, G)\bigg]\\
    &=\mathbb{E}\bigg[\frac{I(A=1)I(G=O)}{p(A=1,G=O)}\{\mathbb{E}[Y-M\mid A=a,B=b,X,G=O]B\\
    &~~~~~~~~~~-\mathbb{E}[\mathbb{E}[Y-M\mid A=a,B=b,X,G=O]B\mid A=1, G=O]\}S(X,B\mid A, G)\bigg]\\
    &=\mathbb{E}\bigg[\frac{I(A=1)I(G=O)}{p(A=1,G=O)}\{\mathbb{E}[Y-M\mid A=a,B=b,X,G=O]B-\psi_1^{(az)}\}S(X,B\mid A, G)\bigg]
\end{align*}}

Note that,
\begin{align*}
    \mathbb{E}\bigg[\frac{I(A=1)I(G=O)}{p(A=1,G=O)}\{\mathbb{E}[Y-M\mid A=a,B=b,X,G=O]B-\psi_1^{(ab)}\}S(A, G)\bigg]=0.
\end{align*}

Therefore,
\begin{align*}
    &\sum_{x}\mathbb{E}[Y-M\mid A=a,B=b,x,G=O]\partial_tp_t(x,B=1\mid A=1, G=O)\\
    &=\mathbb{E}\bigg[\frac{I(A=1)I(G=O)}{p(A=1,G=O)}\{\mathbb{E}[Y-M\mid A=a,B=b,X,G=O]B-\psi_1^{(ab)}\}S(V)\bigg]
\end{align*}

Combining concludes that,
\begin{align*}
    \partial_t\psi_{1t}^{(b)}&=\mathbb{E}\bigg[\frac{I(G=O)}{p(A=1,G=O)}\bigg\{I(A=a)I(B=b)\frac{p(B=1\mid X,G=O)p(A=1\mid B=1, X, G=O)}{p(B\mid X,G=O)p(A\mid B, X,G=O)}\\
    &\ \ \ \ \ \ \ \ \ \{Y-M-\mathbb{E}[Y-M\mid A=a,B=b,X,G=O]\}\\
    &\ \ \ \ \ \ \ +I(A=1)\{\mathbb{E}[Y-M\mid A=a,B=b,X,G=O]B-\psi_1^{(ab)}\}\bigg\}S(V)\bigg]
\end{align*}

For $\psi_2^{(ab)}$, note that
\begin{align*}
    \partial_t\psi_{2t}^{(ab)}&=\partial_t\sum_{x}\mathbb{E}_t[Y-M\mid A=a,B=b,x,G=O]p_t(x\mid A=1, G=O)\\
    &=\sum_{x}\partial_t\mathbb{E}_t[Y-M\mid A=a,B=b,x,G=O]p(x\mid A=1, G=O)\\
    &\ \ \ \ + \sum_{x}\mathbb{E}[Y-M\mid A=a,B=b,x,G=O]\partial_tp_t(x\mid A=1, G=O)\\
\end{align*}

For the first term, note that
{\footnotesize \begin{align*}
    &\sum_{x}\partial_t\mathbb{E}_t[Y-M\mid A=a,B=b,x,G=O]p(x\mid A=1, G=O)\\
    &=\sum_{y,m,x}(y-m)S(y,m\mid B=b,A=a,x,G=O)p(y,m\mid B=b,A=a,x,G=O)p(x\mid A=1, G=O)\\
    &=\sum_{y,m,x}\frac{1}{p(A=1, G=O)}\frac{p(x, A=1, G=O)}{p(B=b,A=a,x,G=O)}(y-m)S(y,m\mid B=b,A=a,x,G=O)p(y,m, B=b,A=a,x,G=O)\\
    &=\sum_{y,m,x,\Tilde{b},\Tilde{a},g}\frac{I(\Tilde{b}=b)I(\Tilde{a}=a)I(g=O)}{p(A=1, G=O)}\frac{1}{p(A=a\mid B=b,x,G=O)}\frac{p(A=1\mid x, G=O)}{p(B=b\mid x,G=O)}\\
    &~~~~~~~~~~~~~~~~~~\cdot (y-m)S(y,m\mid \Tilde{b},\Tilde{a},x,g)p(y,m, \Tilde{b},\Tilde{a},x,g)\\
    &=\mathbb{E}\bigg[\frac{I(B=b)I(A=a)I(G=O)}{p(A=1, G=O)}\frac{1}{p(A=a\mid B=b,X,G=O)}\frac{p(A=1\mid X, G=O)}{p(B=b\mid X,G=O)}(Y-M)S(Y,M\mid B,A,X,G)\bigg]\\
    &=\mathbb{E}\bigg[\frac{I(B=b)I(A=a)I(G=O)}{p(A=1, G=O)}\frac{1}{p(A=a\mid B=b,X,G=O)}\frac{p(A=1\mid X, G=O)}{p(B=b\mid X,G=O)}\times\\
    &\ \ \ \ \ \ \ \ (Y-M-\mathbb{E}[Y-M\mid B=b,A=a,X,G=O])S(Y,M\mid B,A,X,G)\bigg].
\end{align*}}

Note that,
\begin{align*}
    &\mathbb{E}\bigg[\frac{I(B=b)I(A=a)I(G=O)}{p(A=1, G=O)}\frac{1}{p(A=a\mid B=b,X,G=O)}\frac{p(A=1\mid X, G=O)}{p(B=b\mid X,G=O)}\times\\
    &\ \ \ \ \ \ \ \ (Y-M-\mathbb{E}[Y-M\mid B=b,A=a,X,G=O])S(B,A,X,G)\bigg]=0.
\end{align*}

Therefore,
\begin{align*}
    &\sum_{x}\partial_t\mathbb{E}_t[Y-M\mid A=a,B=b,x,G=O]p(x\mid A=1, G=O)\\
    &=\mathbb{E}\bigg[\frac{I(B=b)I(A=a)I(G=O)}{p(A=1, G=O)}\frac{1}{p(A=a\mid B=b,X,G=O)}\frac{p(A=1\mid X, G=O)}{p(B=b\mid X,G=O)}\times\\
    &\ \ \ \ \ \ \ \ (Y-M-\mathbb{E}[Y-M\mid B=b,A=a,X,G=O])S(V)\bigg].
\end{align*}

For the second term, note that
\begin{align*}
    &\sum_{x}\mathbb{E}[Y-M\mid A=a,B=b,x,G=O]\partial_tp_t(x\mid A=1, G=O)\\
    &=\sum_{x}\mathbb{E}[Y-M\mid A=a,B=b,x,G=O]S(x\mid A=1, G=O)p(x\mid A=1, G=O)\\
    &=\sum_{x,\tilde{a},g}\frac{I(\tilde{a}=1)I(g=O)}{p(A=1, G=O)}\mathbb{E}[Y-M\mid A=a,B=b,x,G=O]S(x\mid \tilde{a},g)p(x,\tilde{a},g)\\
    &=\mathbb{E}\bigg[\frac{I(A=1)I(G=O)}{p(A=1, G=O)}\mathbb{E}[Y-M\mid A=a,B=b,X,G=O]S(X\mid A,G)\bigg]\\
    &=\mathbb{E}\bigg[\frac{I(A=1)I(G=O)}{p(A=1, G=O)}\bigg\{\mathbb{E}[Y-M\mid A=a,B=b,X,G=O]-\psi_2^{(ab)}\bigg\}S(X\mid A,G)\bigg]
\end{align*}

Note that,
\begin{align*}
    &\mathbb{E}\bigg[\frac{I(A=1)I(G=O)}{p(A=1, G=O)}\bigg\{\mathbb{E}[Y-M\mid A=a,B=b,X,G=O]-\psi_2^{(ab)}\bigg\}S(A,G)\bigg]=0.
\end{align*}

Therefore,
\begin{align*}
    &\sum_{x}\mathbb{E}[Y-M\mid A=a,B=b,x,G=O]\partial_tp_t(x\mid A=1, G=O)\\
    &=\mathbb{E}\bigg[\frac{I(A=1)I(G=O)}{p(A=1, G=O)}\bigg\{\mathbb{E}[Y-M\mid A=a,B=b,X,G=O]-\psi_2^{(ab)}\bigg\}S(V)\bigg]
\end{align*}

Combining concludes that,
\begin{align*}
    \partial_t\psi_{2t}^{(ab)}&=\mathbb{E}\bigg[\frac{I(G=O)}{p(A=1, G=O)}\bigg\{\frac{I(B=b)I(A=a)}{p(A=a\mid B=b,X,G=O)}\frac{p(A=1\mid X, G=O)}{p(B=b\mid X,G=O)}\times\\
    &~~~~~~~~~~~~~~~~~~~~~~~~~\{Y-M-\mathbb{E}[Y-M\mid B=b,A=a,X,G=O]\}\\
    &~~~~~~~~~~~~~~~+I(A=1)\{\mathbb{E}[Y-M\mid A=a,B=b,X,G=O]-\psi_2^{(ab)}\}\bigg\}S(V)\bigg]
\end{align*}

For $\psi_3$, note that,
{\footnotesize \begin{align*}
    \partial_t\psi_{3t}&=\partial_t\sum_{x}\frac{\mathbb{E}_t[Y-M\mid A=0,B=1,x,G=O]-\mathbb{E}_t[Y-M\mid A=0,B=0,x,G=O]}{p_t(A=0\mid B=1,x,G=O)-p_t(A=0\mid B=0,x,G=O)}p_t(x\mid A=1,G=O)\\
    &=\sum_{x}\partial_t\frac{\mathbb{E}_t[Y-M\mid A=0,B=1,x,G=O]-\mathbb{E}_t[Y-M\mid A=0,B=0,x,G=O]}{p_t(A=0\mid B=1,x,G=O)-p_t(A=0\mid B=0,x,G=O)}p(x\mid A=1,G=O)\\
    &\ \ \ \ +\sum_{x}\frac{\mathbb{E}[Y-M\mid A=0,B=1,x,G=O]-\mathbb{E}[Y-M\mid A=0,B=0,x,G=O]}{p(A=0\mid B=1,x,G=O)-p(A=0\mid B=0,x,G=O)}\partial_tp_t(x\mid A=1,G=O)
\end{align*}}

For the first term, note that
{\footnotesize \begin{align*}
    &\sum_{x}\partial_t\frac{\mathbb{E}_t[Y-M\mid A=0,B=1,x,G=O]}{p_t(A=0\mid B=1,x,G=O)-p_t(A=0\mid B=0,x,G=O)}p(x\mid A=1,G=O)\\
    &=\sum_{x}\frac{\partial_t\mathbb{E}_t[Y-M\mid A=0,B=1,x,G=O]}{p(A=0\mid B=1,x,G=O)-p(A=0\mid B=0,x,G=O)}p(x\mid A=1,G=O)\\
    &\ \ \ \ -\sum_{x}\frac{\mathbb{E}[Y-M\mid A=0,B=1,x,G=O]\partial_tp_t(A=0\mid B=1,x,G=O)}{\{p(A=0\mid B=1,x,G=O)-p(A=0\mid B=0,x,G=O)\}^2}p(x\mid A=1,G=O)\\
    &\ \ \ \ +\sum_{x}\frac{\mathbb{E}[Y-M\mid A=0,B=1,x,G=O]\partial_tp_t(A=0\mid B=0,x,G=O)}{\{p(A=0\mid B=1,x,G=O)-p(A=0\mid B=0,x,G=O)\}^2}p(x\mid A=1,G=O)\\
    &=\sum_{y,m,x,a,b,g}\frac{I(a=0)I(b=1)I(g=O)}{p(A=0\mid B=1,x,G=O)-p(A=0\mid B=0,x,G=O)}\cdot\frac{1}{p(A=1,G=O)}\\
    &\ \ \ \ \ \ \ \ \cdot\frac{p(A=1\mid x,G=O)}{p(B=1\mid x,G=O)}\frac{1}{p(A=0\mid B=1,x,G=O)}\{y-m\}S(y,m\mid x,a,b,g)p(y,m,x,a,b,g)\\
    &\ \ \ \ -\sum_{a,b,x,g}\frac{\mathbb{E}[Y-M\mid A=0,B=1,x,G=O]}{\{p(A=0\mid B=1,x,G=O)-p(A=0\mid B=0,x,G=O)\}^2}\frac{I(b=1)I(g=O)}{p(A=1,G=O)}\cdot\\
    &\ \ \ \ \ \ \ \ \cdot\frac{p(A=1\mid x,G=O)}{p(B=1\mid x, G=O)}I(a=0)S(a\mid b,x,g)p(a,b,x,g)\\
    &\ \ \ \ +\sum_{a,b,x,g}\frac{\mathbb{E}[Y-M\mid A=0,B=1,x,G=O]}{\{p(A=0\mid B=1,x,G=O)-p(A=0\mid B=0,x,G=O)\}^2}\frac{I(b=0)I(g=O)}{p(A=1,G=O)}\cdot\\
    &\ \ \ \ \ \ \ \ \cdot\frac{p(A=1\mid x,G=O)}{p(B=0\mid x, G=O)}I(a=0)S(a\mid b,x,g)p(a,b,x,g)\\
    &=\mathbb{E}\bigg[\frac{I(A=0)I(B=1)I(G=O)}{p(A=0\mid B=1,X,G=O)-p(A=0\mid B=0,X,G=O)}\cdot\frac{1}{p(A=1,G=O)}\\
    &\ \ \ \ \ \ \ \ \cdot\frac{p(A=1\mid X,G=O)}{p(B=1\mid X,G=O)}\frac{1}{p(A=0\mid B=1,X,G=O)}\{Y-M\}S(Y,M\mid X,A,B,G)\bigg]\\
    &\ \ \ \ -\mathbb{E}\bigg[\frac{\mathbb{E}[Y-M\mid A=0,B=1,X,G=O]}{\{p(A=0\mid B=1,X,G=O)-p(A=0\mid B=0,X,G=O)\}^2}\frac{I(B=1)I(G=O)}{p(A=1,G=O)}\cdot\\
    &\ \ \ \ \ \ \ \ \cdot\frac{p(A=1\mid X,G=O)}{p(B=1\mid X, G=O)}I(A=0)S(A\mid B,X,G)\bigg]\\
    &\ \ \ \ +\mathbb{E}\bigg[\frac{\mathbb{E}[Y-M\mid A=0,B=1,X,G=O]}{\{p(A=0\mid B=1,X,G=O)-p(A=0\mid B=0,X,G=O)\}^2}\frac{I(B=0)I(G=O)}{p(A=1,G=O)}\cdot\\
    &\ \ \ \ \ \ \ \ \cdot\frac{p(A=1\mid X,G=O)}{p(B=0\mid X, G=O)}I(A=0)S(A\mid B,X,G)\bigg]\\
    &=\mathbb{E}\bigg[\frac{I(A=0)I(B=1)I(G=O)}{p(A=0\mid B=1,X,G=O)-p(A=0\mid B=0,X,G=O)}\cdot\frac{1}{p(A=1,G=O)}\\
    &\ \ \ \ \ \ \ \ \cdot\frac{p(A=1\mid X,G=O)}{p(B=1\mid X,G=O)}\frac{1}{p(A=0\mid B=1,X,G=O)}\{Y-M-\mathbb{E}[Y-M\mid A=0,B=1,X,G=O]\}S(V)\bigg]\\
    &\ \ \ \ -\mathbb{E}\bigg[\frac{\mathbb{E}[Y-M\mid A=0,B=1,X,G=O]}{\{p(A=0\mid B=1,X,G=O)-p(A=0\mid B=0,X,G=O)\}^2}\frac{I(B=1)I(G=O)}{p(A=1,G=O)}\cdot\\
    &\ \ \ \ \ \ \ \ \cdot\frac{p(A=1\mid X,G=O)}{p(B=1\mid X, G=O)}\{I(A=0)-p(A=0\mid B=1,X,G=O)\}S(V)\bigg]\\
    &\ \ \ \ +\mathbb{E}\bigg[\frac{\mathbb{E}[Y-M\mid A=0,B=1,X,G=O]}{\{p(A=0\mid B=1,X,G=O)-p(A=0\mid B=0,X,G=O)\}^2}\frac{I(B=0)I(G=O)}{p(A=1,G=O)}\cdot\\
    &\ \ \ \ \ \ \ \ \cdot\frac{p(A=1\mid X,G=O)}{p(B=0\mid X, G=O)}\{I(A=0)-p(A=0\mid B=0,X,G=O)\}S(V)\bigg].
\end{align*}}

Similarly,
{\footnotesize \begin{align*}
    &\sum_{x}\partial_t\frac{\mathbb{E}_t[Y-M\mid A=0,B=0,x,G=O]}{p_t(A=0\mid B=1,x,G=O)-p_t(A=0\mid B=0,x,G=O)}p(x\mid A=1,G=O)\\
    &=\mathbb{E}\bigg[\frac{I(A=0)I(B=0)I(G=O)}{p(A=0\mid B=1,X,G=O)-p(A=0\mid B=0,X,G=O)}\cdot\frac{1}{p(A=1,G=O)}\\
    &\ \ \ \ \ \ \ \ \cdot\frac{p(A=1\mid X,G=O)}{p(B=0\mid X,G=O)}\frac{1}{p(A=0\mid B=0,X,G=O)}\{Y-M-\mathbb{E}[Y-M\mid A=0,B=0,X,G=O]\}S(V)\bigg]\\
    &\ \ \ \ -\mathbb{E}\bigg[\frac{\mathbb{E}[Y-M\mid A=0,B=0,X,G=O]}{\{p(A=0\mid B=1,X,G=O)-p(A=0\mid B=0,X,G=O)\}^2}\frac{I(B=1)I(G=O)}{p(A=1,G=O)}\cdot\\
    &\ \ \ \ \ \ \ \ \cdot\frac{p(A=1\mid X,G=O)}{p(B=1\mid X, G=O)}\{I(A=0)-p(A=0\mid B=1,X,G=O)\}S(V)\bigg]\\
    &\ \ \ \ +\mathbb{E}\bigg[\frac{\mathbb{E}[Y-M\mid A=0,B=0,X,G=O]}{\{p(A=0\mid B=1,X,G=O)-p(A=0\mid B=0,X,G=O)\}^2}\frac{I(B=0)I(G=O)}{p(A=1,G=O)}\cdot\\
    &\ \ \ \ \ \ \ \ \cdot\frac{p(A=1\mid X,G=O)}{p(B=0\mid X, G=O)}\{I(A=0)-p(A=0\mid B=0,X,G=O)\}S(V)\bigg].
\end{align*}}

Therefore,
{\footnotesize \begin{align*}
    &\sum_{x}\partial_t\frac{\mathbb{E}_t[Y-M\mid A=0,B=1,x,G=O]-\mathbb{E}_t[Y-M\mid A=0,B=0,x,G=O]}{p_t(A=0\mid B=1,x,G=O)-p_t(A=0\mid B=0,x,G=O)}p(x\mid A=1,G=O)\\
    &=\mathbb{E}\bigg[\frac{I(G=O)}{p(A=0\mid B=1,X,G=O)-p(A=0\mid B=0,X,G=O)}\frac{p(A=1\mid X, G=O)}{p(A=1,G=O)}\\
    &\ \ \ \ \ \ \ \bigg\{I(A=0)\big\{\frac{I(B=1)}{p(A=0\mid B=1,X,G=O)p(B=1\mid X,G=O)}\{Y-M-\mathbb{E}[Y-M\mid A=0,B=1,X,G=O]\}\\
    &\ \ \ \ \ \ \ \ \ \ \ -\frac{I(B=0)}{p(A=0\mid B=0,X,G=O)p(B=0\mid X,G=O)}\{Y-M-\mathbb{E}[Y-M\mid A=0,B=0,X,G=O]\}\big\}\\
    &\ \ \ \ \ \ \ \ +\frac{\mathbb{E}[Y-M\mid A=0,B=1,X,G=O]-\mathbb{E}[Y-M\mid A=0,B=0,X,G=O]}{p(A=0\mid B=1,X,G=O)-p(A=0\mid B=0,X,G=O)}\\
    &\ \ \ \ \ \ \ \ \ \ \ \ \cdot\big\{-\frac{I(B=1)}{p(B=1\mid X, G=O)}\{I(A=0)-p(A=0\mid B=1,X,G=O)\}\\
    &\ \ \ \ \ \ \ \ \ \ \ +\frac{I(B=0)}{p(B=0\mid X, G=O)}\{I(A=0)-p(A=0\mid B=0,X,G=O)\}\big\}\bigg\}S(V)\bigg]
\end{align*}}

For the second term, we have
{\footnotesize \begin{align*}
    &\sum_{x}\frac{\mathbb{E}[Y-M\mid A=0,B=1,x,G=O]-\mathbb{E}[Y-M\mid A=0,B=0,x,G=O]}{p(A=0\mid B=1,x,G=O)-p(A=0\mid B=0,x,G=O)}\partial_tp_t(x\mid A=1,G=O)\\
    &=\mathbb{E}\bigg[\frac{I(A=1)I(G=O)}{p(A=1,G=O)}\cdot\frac{\mathbb{E}[Y-M\mid A=0,B=1,X,G=O]-\mathbb{E}[Y-M\mid A=0,B=0,X,G=O]}{p(A=0\mid B=1,X,G=O)-p(A=0\mid B=0,x,G=O)}\cdot S(X\mid A,G)\bigg]\\
    &=\mathbb{E}\bigg[\frac{I(A=1)I(G=O)}{p(A=1,G=O)}\bigg\{\frac{\mathbb{E}[Y-M\mid A=0,B=1,X,G=O]-\mathbb{E}[Y-M\mid A=0,B=0,X,G=O]}{p(A=0\mid B=1,X,G=O)-p(A=0\mid B=0,x,G=O)}-\psi_3\bigg\}\cdot S(V)\bigg].
\end{align*}}

Therefore,
{\footnotesize \begin{align*}
    &\partial_t\psi_{3t}\\
    &=\mathbb{E}\Bigg[\Bigg\{\frac{I(G=O)}{p(A=0\mid B=1,X,G=O)-p(A=0\mid B=0,X,G=O)}\frac{p(A=1\mid X, G=O)}{p(A=1,G=O)}\\
    &\ \ \ \ \ \ \ \ \ \ \bigg\{I(A=0)\big\{\frac{I(B=1)}{p(A=0\mid B=1,X,G=O)p(B=1\mid X,G=O)}\{Y-M-\mathbb{E}[Y-M\mid A=0,B=1,X,G=O]\}\\
    &\ \ \ \ \ \ \ \ \ \ \ \ \ \ \ \ \ \ \ \ \ \ \ \ \ \ \ \ -\frac{I(B=0)}{p(A=0\mid B=0,X,G=O)p(B=0\mid X,G=O)}\{Y-M-\mathbb{E}[Y-M\mid A=0,B=0,X,G=O]\}\big\}\\
    &\ \ \ \ \ \ \ \ \ \ \ \ +\frac{\mathbb{E}[Y-M\mid A=0,B=1,X,G=O]-\mathbb{E}[Y-M\mid A=0,B=0,X,G=O]}{p(A=0\mid B=1,X,G=O)-p(A=0\mid B=0,X,G=O)}\\
    &\ \ \ \ \ \ \ \ \ \ \ \ \ \ \ \ \cdot\big\{-\frac{I(B=1)}{p(B=1\mid X, G=O)}\{I(A=0)-p(A=0\mid B=1,X,G=O)\}\\
    &\ \ \ \ \ \ \ \ \ \ \ \ \ \ \ \ \ \ \ \ +\frac{I(B=0)}{p(B=0\mid X, G=O)}\{I(A=0)-p(A=0\mid B=0,X,G=O)\}\big\}\bigg\}\\
    &\ \ \ \ \ \ \ \ \  +\frac{I(A=1)I(G=O)}{p(A=1,G=O)}\bigg\{\frac{\mathbb{E}[Y-M\mid A=0,B=1,X,G=O]-\mathbb{E}[Y-M\mid A=0,B=0,X,G=O]}{p(A=0\mid B=1,X,G=O)-p(A=0\mid B=0,x,G=O)}-\psi_3\bigg\}\Bigg\}S(V)\Bigg]
\end{align*}}

Finally,
\begin{align*}
    \partial_t\psi_{1t}^{(ab)}&=\mathbb{E}\bigg[\frac{I(G=O)}{p(A=1,G=O)}\bigg\{I(A=a)I(B=b)\frac{p(B=1\mid X,G=O)p(A=1\mid B=1, X, G=O)}{p(B\mid X,G=O)p(A\mid B, X,G=O)}\\
    &\ \ \ \ \ \ \ \ \ \{Y-M-\mathbb{E}[Y-M\mid A=a,B=b,X,G=O]\}\\
    &\ \ \ \ \ \ \ +I(A=1)\{\mathbb{E}[Y-M\mid A=a,B=b,X,G=O]B-\psi_1^{(ab)}\}\bigg\}S(V)\bigg]
\end{align*}

The influence function for $\psi_1^{(11)}$ is:
\begin{align*}
    &\frac{I(G=O)}{p(A=1,G=O)}\bigg\{I(A=1)I(B=1)\{Y-M-\mathbb{E}[Y-M\mid A=1,B=1,X,G=O]\}\\
    &\ \ \ \ \ \ \ +I(A=1)\{\mathbb{E}[Y-M\mid A=1,B=1,X,G=O]B-\psi_1^{(11)}\}\bigg\}
\end{align*}

The influence function for $\psi_1^{(01)}$ is:
\begin{align*}
    &\frac{I(G=O)}{p(A=1,G=O)}\bigg\{I(A=0)I(B=1)\frac{p(A=1\mid B=1, X, G=O)}{p(A=0\mid B=1, X,G=O)}\\
    &\ \ \ \ \ \ \ \ \ \{Y-M-\mathbb{E}[Y-M\mid A=0,B=1,X,G=O]\}\\
    &\ \ \ \ \ \ \ +I(A=1)\{\mathbb{E}[Y-M\mid A=0,B=1,X,G=O]B-\psi_1^{(01)}\}\bigg\}
\end{align*}

The influence function for $\psi_1^{(10)}$ is:
\begin{align*}
    &\frac{I(G=O)}{p(A=1,G=O)}\bigg\{I(A=1)I(B=0)\frac{p(B=1\mid X,G=O)p(A=1\mid B=1, X, G=O)}{p(B=0\mid X,G=O)p(A=1\mid B=0, X,G=O)}\\
    &\ \ \ \ \ \ \ \ \ \{Y-M-\mathbb{E}[Y-M\mid A=1,B=0,X,G=O]\}\\
    &\ \ \ \ \ \ \ +I(A=1)\{\mathbb{E}[Y-M\mid A=1,B=0,X,G=O]B-\psi_1^{(10)}\}\bigg\}
\end{align*}

The influence function for $\psi_1^{(00)}$ is:
\begin{align*}
    &\frac{I(G=O)}{p(A=1,G=O)}\bigg\{I(A=0)I(B=0)\frac{p(B=1\mid X,G=O)p(A=1\mid B=1, X, G=O)}{p(B=0\mid X,G=O)p(A=0\mid B=0, X,G=O)}\\
    &\ \ \ \ \ \ \ \ \ \{Y-M-\mathbb{E}[Y-M\mid A=0,B=0,X,G=O]\}\\
    &\ \ \ \ \ \ \ +I(A=1)\{\mathbb{E}[Y-M\mid A=0,B=0,X,G=O]B-\psi_1^{(00)}\}\bigg\}
\end{align*}

The influence function for $\psi_2^{(10)}$ is:
{\footnotesize \begin{align*}
    &\frac{I(G=O)}{p(A=1, G=O)}\bigg\{\frac{I(A=1)I(B=0)}{p(A=1\mid B=0,X,G=O)}\frac{p(A=1\mid X, G=O)}{p(B=0\mid X,G=O)}\times\\
    &\ \ \ \ \ \ \ \ \{Y-M-\mathbb{E}[Y-M\mid A=1,B=0,X,G=O]\}+I(A=1)\{\mathbb{E}[Y-M\mid A=1,B=0,X,G=O]-\psi_2^{(10)}\}\bigg\}
\end{align*}}

The influence function for $\psi_2^{(00)}$ is:
{\footnotesize \begin{align*}
    &\frac{I(G=O)}{p(A=1, G=O)}\bigg\{\frac{I(A=0)I(B=0)}{p(A=0\mid B=0,X,G=O)}\frac{p(A=1\mid X, G=O)}{p(B=0\mid X,G=O)}\times\\
    &\ \ \ \ \ \ \ \ \{Y-M-\mathbb{E}[Y-M\mid A=0,B=0,X,G=O]\}+I(A=1)\{\mathbb{E}[Y-M\mid A=0,B=0,X,G=O]-\psi_2^{(00)}\}\bigg\}
\end{align*}}

The influence function for $\psi_3$ is:
{\footnotesize \begin{align*}
    &\frac{I(G=O)}{p(A=0\mid B=1,X,G=O)-p(A=0\mid B=0,X,G=O)}\frac{p(A=1\mid X, G=O)}{p(A=1,G=O)}\\
    &\ \ \ \ \ \ \ \ \ \ \bigg\{I(A=0)\big\{\frac{I(B=1)}{p(A=0\mid B=1,X,G=O)p(B=1\mid X,G=O)}\{Y-M-\mathbb{E}[Y-M\mid A=0,B=1,X,G=O]\}\\
    &\ \ \ \ \ \ \ \ \ \ \ \ \ \ \ \ \ \ \ \ \ \ \ \ \ \ -\frac{I(B=0)}{p(A=0\mid B=0,X,G=O)p(B=0\mid X,G=O)}\{Y-M-\mathbb{E}[Y-M\mid A=0,B=0,X,G=O]\}\big\}\\
    &\ \ \ \ \ \ \ \ \ \ \ \ +\frac{\mathbb{E}[Y-M\mid A=0,B=1,X,G=O]-\mathbb{E}[Y-M\mid A=0,B=0,X,G=O]}{p(A=0\mid B=1,X,G=O)-p(A=0\mid B=0,X,G=O)}\\
    &\ \ \ \ \ \ \ \ \ \ \ \ \ \ \ \ \cdot\big\{-\frac{I(B=1)}{p(B=1\mid X, G=O)}\{I(A=0)-p(A=0\mid B=1,X,G=O)\}\\
    &\ \ \ \ \ \ \ \ \ \ \ \ \ \ \ \ \ \ \ \ +\frac{I(B=0)}{p(B=0\mid X, G=O)}\{I(A=0)-p(A=0\mid B=0,X,G=O)\}\big\}\bigg\}\\
    &\ \ \ \ \ \ \ \ \  +\frac{I(A=1)I(G=O)}{p(A=1,G=O)}\bigg\{\frac{\mathbb{E}[Y-M\mid A=0,B=1,X,G=O]-\mathbb{E}[Y-M\mid A=0,B=0,X,G=O]}{p(A=0\mid B=1,X,G=O)-p(A=0\mid B=0,X,G=O)}-\psi_3\bigg\}
\end{align*}}

The influence function for $\psi_5$ is:
\begin{align*}
    &\frac{1}{p(A=1,G=O)}\bigg\{\frac{I(A=0)I(G=E)}{p(A=0\mid X,B,G=E)}\{\frac{1}{p(G=E\mid X,B)}-1\}\{M-\mathbb{E}[M\mid A=0,X,B,G=E]\}\\
    &\ \ \ \ \ \ \ \ \ +I(G=O)\mathbb{E}[M\mid A=0,X,B,G=E]-I(A=1)I(G=O)\psi_5\bigg\}
\end{align*}

The influence function for $\psi_8$ is:
\begin{align*}
    \frac{I(G=O)}{p(A=1,G=O)}\{M-I(A=1)\psi_8\}
\end{align*}

For $i\in\{3,5,6,7,8\}$, denote the obtained influence functions by $IF_{\psi_i}$, for $i\in\{1,2,4\}$ and $a,b\in\{0,1\}$, denote the obtained influence functions by $IF_{\psi_i^{(ab)}}$. The influence function for $\psi_{\text{ETT}}^{\text{bsiv2}}$ can be obtained as $IF_{\psi_{\text{ETT}}^{\text{bsiv2}}}=IF_{\psi_1^{(11)}}-IF_{\psi_1^{(01)}}-IF_{\psi_1^{(10)}}+IF_{\psi_1^{(00)}}+IF_{\psi_2^{(10)}}-IF_{\psi_2^{(00)}}+IF_{\psi_3}-IF_{\psi_5}+IF_{\psi_8}$. Therefore, using the notations specified in Theorem \ref{thm:IF:BSIV-1},
{\footnotesize \begin{align*}
&\frac{I(G=O)}{p(A=1,G=O)}\Bigg\{I(A=1)I(B=1)\{Y-M-E_{01}^O(X)-E_{10}^O(X)+E_{00}^O(X)\}\\
&\quad-I(A=0)I(B=1)\frac{P_{11}^O(X)}{P_{01}^O(X)}\{Y-M-E_{01}^O(X)\}+I(A=1)\{E_{10}^O(X)-E_{00}^O(X)\}\\
&\quad-I(B=0)\frac{\rho_{1}^O(X)P_{11}^O(X)}{\rho_{0}^O(X)P_{10}^O(X)}\big\{I(A=1)\{Y-M-E_{10}^O(X)\}+I(A=0)\{Y-M-E_{00}^O(X)\}\big\}\\
&\quad+\frac{I(B=0)}{P_{10}^O(X)}\frac{\pi^O(X)}{\rho_{0}^O(X)}\cdot\big\{I(A=1)\{Y-M-E_{10}^O(X)\}+I(A=0)\{Y-M-E_{00}^O(X)\}\big\}\\
&\quad+\frac{\pi^O(X)}{P_{01}^O(X)-P_{00}^O(X)}\Big\{I(A=0)\big\{\frac{I(B=1)}{P_{01}^O(X)\rho_1^O(X)}\{Y-M-E_{01}^O(X)\}-\frac{I(B=0)}{P_{00}^O(X)\rho_0^O(X)}\{Y-M-E_{00}^O(X)\}\big\}\\
&\quad\quad+\frac{E_{01}^O(X)-E_{00}^O(X)}{P_{01}^O(X)-P_{00}^O(X)}\big\{-\frac{I(B=1)}{\rho_1^O(X)}\{I(A=0)-P_{01}^O(X)\}+\frac{I(B=0)}{\rho_0^O(X)}\{I(A=0)-P_{00}^O(X)\}\big\}\Big\}\\
&\quad+I(A=1)\{\frac{E_{01}^O(X)-E_{00}^O(X)}{P_{01}^O(X)-P_{00}^O(X)}\}+M-M_0^E(B,X)-I(A=1)\psi^2_{ETT}\Bigg\}\\
&\quad-\frac{1}{p(A=1,G=O)}\cdot\frac{I(A=0)I(G=E)}{1-P_{1B}^E(X)}\cdot\frac{1-\tau(B,X)}{\tau(B,X)}\cdot\{M-M_0^E(B,X)\}
\end{align*}}
is the influence function for $\psi_{\text{ETT}}^{\text{bsiv2}}$.

\end{proof}

\begin{proof}[Proof of Theorem \ref{thm:IF:BSIV-1}: $\psi_{\text{ATE}}^{\text{bsiv1}}$]
	
Define
\begin{align*}
&\psi_1=
\E\big[\frac{\E[Y-M\mid B=1,X,G=O]-\E[Y-M\mid B=0,X,G=O]}{p(A=1\mid B=1,X,G=O)-p(A=1\mid B=0,X,G=O)}~\big|~ G=O\big],\\
&\psi_2^{(a)}=
\E\big[\E[M\mid A=a,B,X,G=E]~\big|~ G=O\big].
\end{align*}
We use the notation $\partial_tf(t)$ to denote $\frac{\partial f(t)}{\partial t}\big|_{t=0}$.
For parameter $\psi$, let $\psi_t$ be the parameter under a regular parametric sub-model indexed by $t$, that includes the ground-truth model at $t=0$.
Let $V$ be the set of all observed variables.
In order to obtain an influence function, we need to find a random variable $\Gamma$ with mean zero, that satisfies
\[
\partial_t\psi_t=\E[\Gamma S(V)],
\]
where $S(V)=\partial_t\log p_t(V)$.	

For $\psi_1$, we have
\begingroup
\allowdisplaybreaks
{\footnotesize \begin{align*}
&\partial_t{\psi_1}_t	\\
&=\E\Bigg[\Bigg\{\frac{I(G=O)}{p(A=1\mid B=1,X,G=O)-p(A=1\mid B=0,X,G=O)}\cdot\frac{1}{p(G=O)}
\cdot\frac{1}{p(B\mid X,G=O)}\\
&\quad\bigg\{I(B=1)\{Y-M-\E[Y-M\mid B=1,X,G=O]\}
-I(B=0)\{Y-M-\E[Y-M\mid B=0,X,G=O]\}\\
&\quad+\frac{\E[Y-M\mid B=1,X,G=O]-\E[Y-M\mid B=0,X,G=O]}{p(A=1\mid B=1,X,G=O)-p(A=1\mid B=0,X,G=O)}\\
&\quad\big\{-I(B=1)\{I(A=1)-p(A=1\mid B=1,X,G=O)\}+I(B=0)\{I(A=1)-p(A=1\mid B=0,X,G=O)\}\big\}\bigg\}\\
&\quad+\frac{I(G=O)}{p(G=O)}\bigg\{\frac{\E[Y-M\mid B=1,x,G=O]-\E[Y-M\mid B=0,x,G=O]}{p(A=1\mid B=1,x,G=O)-p(A=1\mid B=0,x,G=O)}-\psi_1\bigg\}\Bigg\}S(V)\Bigg].
\end{align*}}
\endgroup

For $\psi_2^{(a)}$, similar to the parameter $\psi_1^{(a)}$ in the proof of Theorem \ref{thm:IF:equi-conf:ATE}, we have
{\footnotesize \begin{align*}
\partial_t\psi^{(a)}_{2_t}
&=\E
\Big[\Big\{
\frac{I(A=a)}{p(A=a\mid X,B,G=E)}\cdot\frac{I(G=E)}{p(G=O)}
\{M-\E[M\mid A=a,X,B,G=E]\}\{\frac{1}{p(G=E\mid X,B)}-1\}\\
&\qquad+\frac{I(G=O)}{p(G=O)}\{\E[M\mid A=a,X,B,G=E]-\psi_2^{(a)}\}
\Big\}S(V)\Big].
\end{align*}}

For $i=1$, denote the obtained influence functions by $IF_{\psi_i}$, for $i=2$ and $a\in\{0,1\}$, denote the obtained influence functions by $IF_{\psi_i^{(a)}}$. The influence function for $\psi_{\text{ATE}}^{\text{bsiv1}}$ can be obtained as $IF_{\psi_{\text{ATE}}^{\text{bsiv1}}}=IF_{\psi_1}+IF_{\psi_2^{(1)}}-IF_{\psi_2^{(0)}}$. Therefore, using the notations specified in Theorem \ref{thm:IF:BSIV-1},

\begin{align*}
&\frac{I(G=O)}{p(G=O)}\Bigg\{\frac{1}{P_{11}^O(X)-P_{10}^O(X)}
\frac{1}{\rho_B^O(X)}\bigg\{I(B=1)\{Y-M-e_1^O(X)\}
-I(B=0)\{Y-M-e_0^O(X)\}\\
&\quad\quad\quad\quad\quad+\frac{e_1^O(X)-e_0^O(X)}{P_{11}^O(X)-P_{10}^O(X)}\big\{I(B=0)\{I(A=1)-P_{10}^O(X)\}-I(B=1)\{I(A=1)-P_{11}^O(X)\}\big\}\bigg\}\\
&\quad\quad\quad\quad\quad+\frac{e_1^O(X)-e_0^O(X)}{P_{11}^O(X)-P_{10}^O(X)}+M_{1}^{E}(B,X)-M_{0}^{E}(B,X)-\psi^1_{ATE}\Bigg\}\\
&\quad+\frac{I(G=E)}{p(G=O)}\cdot\frac{1-\tau(B,X)}{\tau(B,X)}\bigg\{\frac{I(A=1)}{P_{1B}^E(X)}\cdot\{M-M_{1}^{E}(B,X)\}-\frac{I(A=0)}{1-P_{1B}^E(X)}\cdot\{M-M_{0}^{E}(B,X)\}\bigg\}
\end{align*}
is the influence function of $\psi_{\text{ATE}}^{\text{bsiv1}}$.

\end{proof}

\begin{proof}[Proof of Theorem \ref{thm:IF:BSIV-1}: $\psi_{\text{ATE}}^{\text{bsiv2}}$]
	
Define
\begin{align*}
        &\psi_1^{(ab)}=\mathbb{E}\big[\mathbb{E}[Y-M\mid A=a,B=b,X,G=O]B\mid G=O\big],\\
    &\psi_2^{(ab)}=\mathbb{E}\big[\mathbb{E}[Y-M\mid A=a,B=b,X,G=O]\mid G=O\big],\\
    &\psi_3=\mathbb{E}\bigg[\frac{\mathbb{E}[Y-M\mid A=1,B=1,X,G=O]-\mathbb{E}[Y-M\mid A=1,B=0,X,G=O]}{p(A=0\mid B=1,X,G=O)-p(A=0\mid B=0,X,G=O)}\bigg|G=O\bigg],\\
    &\psi_4^{(ab)}=\mathbb{E}\bigg[\frac{\mathbb{E}[Y-M\mid A=a,B=b,X,G=O]p(A=1\mid B,X,G=O)}{p(A=0\mid B=1,X,G=O)-p(A=0\mid B=0,X,G=O)}\bigg|G=O\bigg],\\
    &\psi_5^{(a)}=\mathbb{E}[\mathbb{E}[M\mid A=a,B,X,G=E]\mid G=O]
    \end{align*}
We use the notation $\partial_tf(t)$ to denote $\frac{\partial f(t)}{\partial t}\big|_{t=0}$.
For parameter $\psi$, let $\psi_t$ be the parameter under a regular parametric sub-model indexed by $t$, that includes the ground-truth model at $t=0$.
Let $V$ be the set of all observed variables.
In order to obtain an influence function, we need to find a random variable $\Gamma$ with mean zero, that satisfies
\[
\partial_t\psi_t=\E[\Gamma S(V)],
\]
where $S(V)=\partial_t\log p_t(V)$.

    For $\psi_1^{(ab)}$, note that
\begin{align*}
    \partial_t\psi_{1t}^{(ab)}&=\partial_t\sum_{x,\Tilde{b}}\mathbb{E}_t[Y-M\mid A=a,B=b,x,G=O]\Tilde{B}p_t(x,\Tilde{b}\mid G=O)\\
    &=\sum_{x,\Tilde{b}}\partial_t\mathbb{E}_t[Y-M\mid A=a,B=b,x,G=O]\Tilde{b}p(x,\Tilde{b}\mid G=O)\\
    &\ \ \ \ + \sum_{x,\Tilde{b}}\mathbb{E}[Y-M\mid A=a,B=b,x,G=O]\Tilde{b}\partial_tp_t(x,\Tilde{b}\mid G=O)\\
    &=\sum_{x}\partial_t\mathbb{E}_t[Y-M\mid A=a,B=b,x,G=O]p(x,B=1\mid G=O)\\
    &\ \ \ \ + \sum_{x}\mathbb{E}[Y-M\mid A=a,B=b,x,G=O]\partial_tp_t(x,B=1\mid G=O)
\end{align*}

For the first term, note that
{\footnotesize \begin{align*}
    &\sum_{x}\partial_t\mathbb{E}_t[Y-M\mid A=a,B=b,x,G=O]p(x,B=1\mid G=O)\\
    &=\sum_{y,m,x}(y-m)S(y,m\mid A=a,B=b,x,G=O)p(y,m\mid A=a,B=b,x,G=O)p(x,B=1\mid G=O)\\
    &=\sum_{y,m,x,\Tilde{b},\Tilde{a},g}I(\Tilde{b}=b)I(\Tilde{a}=a)I(g=O)(y-m)S(y,m\mid \Tilde{a},\Tilde{b},x,g)p(y,m\mid \Tilde{a},\Tilde{b},x,g)p(x,B=1\mid G=O)\\
    &=\sum_{y,m,x,\Tilde{b},\Tilde{a},g}\frac{I(\Tilde{b}=b)I(\Tilde{a}=a)I(g=O)}{p(G=O)}(y-m)S(y,m\mid \Tilde{a},\Tilde{b},x,g)\frac{p(B=1\mid x,G=O)}{p(\Tilde{b}\mid x,G=O)p(\Tilde{a}\mid\Tilde{b},x,G=O)}p(y,m, \Tilde{a},\Tilde{b},x,g)\\
    &=\mathbb{E}\bigg[\frac{I(B=b)I(A=a)I(G=O)}{p(G=O)}\frac{p(B=1\mid X,G=O)}{p(B\mid X,G=O)p(A\mid B, X,G=O)}(Y-M)S(Y,M\mid A,B,X,G)\bigg]\\
    &=\mathbb{E}\bigg[\frac{I(B=b)I(A=a)I(G=O)}{p(G=O)}\frac{p(B=1\mid X,G=O)}{p(B\mid X,G=O)p(A\mid B, X,G=O)}\\
    &\ \ \ \ \ \ \ \ \ \ \{Y-M-\mathbb{E}[Y-M\mid A=a,B=b,X,G=O]\}S(Y,M\mid A,B,X,G)\bigg]
\end{align*}}

Note that,
\begin{align*}
    &\mathbb{E}\bigg[\frac{I(B=b)I(A=a)I(G=O)}{p(G=O)}\frac{p(B=1\mid X,G=O)}{p(B\mid X,G=O)p(A\mid B, X,G=O)}\\
    &\ \ \ \ \ \ \ \ \ \ \{Y-M-\mathbb{E}[Y-M\mid A=a,B=b,X,G=O]\}S(A,B,X,G)\bigg]=0.
\end{align*}

Therefore,
\begin{align*}
    &\sum_{x}\partial_t\mathbb{E}_t[Y-M\mid A=a,B=b,x,G=O]p(x,B=1\mid A=1, G=O)\\
    &=\mathbb{E}\bigg[\frac{I(B=b)I(A=a)I(G=O)}{p(G=O)}\frac{p(B=1\mid X,G=O)}{p(B\mid X,G=O)p(A\mid B, X,G=O)}\\
    &\ \ \ \ \ \ \ \ \ \ \{Y-M-\mathbb{E}[Y-M\mid A=a,B=b,X,G=O]\}S(V)\bigg]
\end{align*}

For the second term, note that,
{\footnotesize \begin{align*}
    &\sum_{x}\mathbb{E}[Y-M\mid A=a,B=b,x,G=O]\partial_tp_t(x,B=1\mid  G=O)\\
    &=\sum_{x}\mathbb{E}[Y-M\mid A=a,B=b,x,G=O]p(x,B=1\mid G=O)S(x,B=1\mid G=O)\\
    &=\sum_{x,\Tilde{b},g}\frac{I(\Tilde{b}=1)I(g=O)}{p(G=O)}\mathbb{E}[Y-M\mid A=a,B=b,x,G=O]p(x,\Tilde{b}, g)S(x,B=1\mid G=O)\\
    &=\mathbb{E}\bigg[\frac{I(B=1)I(G=O)}{p(G=O)}\mathbb{E}[Y-M\mid A=a,B=b,X,G=O]p(X,B, G)S(X,B\mid G)\bigg]\\
    &=\mathbb{E}\bigg[\frac{I(B=1)I(G=O)}{p(G=O)}\{\mathbb{E}[Y-M\mid A=a,B=b,X,G=O]\\
    &~~~~~~~~-\mathbb{E}[\mathbb{E}[Y-M\mid A=a,B=b,X,G=O]B\mid G=O]\}S(X,B\mid G)\bigg]\\
    &=\mathbb{E}\bigg[\frac{I(B=1)I(G=O)}{p(G=O)}\{\mathbb{E}[Y-M\mid A=a,B=b,X,G=O]-\psi_1^{(ab)}\}S(X,B\mid G)\bigg]
\end{align*}}

Note that,
\begin{align*}
    \mathbb{E}\bigg[\frac{I(B=1)I(G=O)}{p(G=O)}\{\mathbb{E}[Y-M\mid A=a,B=b,X,G=O]-\psi_1^{(ab)}\}S(G)\bigg]=0.
\end{align*}

Therefore,
\begin{align*}
    &\sum_{x}\mathbb{E}[Y-M\mid A=a,B=b,x,G=O]\partial_tp_t(x,B=1\mid G=O)\\
    &=\mathbb{E}\bigg[\frac{I(B=1)I(G=O)}{p(G=O)}\{\mathbb{E}[Y-M\mid A=a,B=b,X,G=O]-\psi_1^{(ab)}\}S(V)\bigg]
\end{align*}

Combining concludes that,
\begin{align*}
    \partial_t\psi_{1t}^{(ab)}&=\mathbb{E}\bigg[\frac{I(G=O)}{p(G=O)}\bigg\{I(A=a)I(B=b)\frac{p(B=1\mid X,G=O)}{p(B\mid X,G=O)p(A\mid B, X,G=O)}\\
    &\ \ \ \ \ \ \ \ \ \{Y-M-\mathbb{E}[Y-M\mid A=a,B=b,X,G=O]\}\\
    &\ \ \ \ \ \ \ +I(B=1)\{\mathbb{E}[Y-M\mid A=a,B=b,X,G=O]-\psi_1^{(ab)}\}\bigg\}S(V)\bigg]
\end{align*}

For $\psi_2^{(ab)}$, note that
\begin{align*}
    \partial_t\psi_{2t}^{(ab)}&=\partial_t\sum_{x}\mathbb{E}_t[Y-M\mid A=a,B=b,x,G=O]p_t(x\mid G=O)\\
    &=\sum_{x}\partial_t\mathbb{E}_t[Y-M\mid A=a,B=b,x,G=O]p(x\mid G=O)\\
    &\ \ \ \ + \sum_{x}\mathbb{E}[Y-M\mid A=a,B=b,x,G=O]\partial_tp_t(x\mid G=O)
\end{align*}

For the first term, note that
{\footnotesize \begin{align*}
    &\sum_{x}\partial_t\mathbb{E}_t[Y-M\mid A=a,B=b,x,G=O]p(x\mid  G=O)\\
    &=\sum_{y,m,x}(y-m)S(y,m\mid B=b,A=a,x,G=O)p(y,m\mid B=b,A=a,x,G=O)p(x\mid G=O)\\
    &=\sum_{y,m,x}\frac{1}{p(G=O)}\frac{p(x, G=O)}{p(B=b,A=a,x,G=O)}(y-m)S(y,m\mid B=b,A=a,x,G=O)p(y,m, B=b,A=a,x,G=O)\\
    &=\sum_{y,m,x,\Tilde{b},\Tilde{a},g}\frac{I(\Tilde{b}=b)I(\Tilde{a}=a)I(g=O)}{p(G=O)}\frac{1}{p(A=a\mid B=b,x,G=O)}\frac{1}{p(B=b\mid x,G=O)}\\
    &~~~~~~~~~~~~~~~~\cdot(y-m)S(y,m\mid \Tilde{b},\Tilde{a},x,g)p(y,m, \Tilde{b},\Tilde{a},x,g)\\
    &=\mathbb{E}\bigg[\frac{I(B=b)I(A=a)I(G=O)}{p(G=O)}\frac{1}{p(A=a\mid B=b,X,G=O)}\frac{1}{p(B=b\mid X,G=O)}(Y-M)S(Y,M\mid B,A,X,G)\bigg]\\
    &=\mathbb{E}\bigg[\frac{I(B=b)I(A=a)I(G=O)}{p(G=O)}\frac{1}{p(A=a\mid B=b,X,G=O)}\frac{1}{p(B=b\mid X,G=O)}\times\\
    &\ \ \ \ \ \ \ \ (Y-M-\mathbb{E}[Y-M\mid B=b,A=a,X,G=O])S(Y,M\mid B,A,X,G)\bigg].
\end{align*}}

Note that,
\begin{align*}
    &\mathbb{E}\bigg[\frac{I(B=b)I(A=a)I(G=O)}{p(G=O)}\frac{1}{p(A=a\mid B=b,X,G=O)}\frac{1}{p(B=b\mid X,G=O)}\times\\
    &\ \ \ \ \ \ \ \ (Y-M-\mathbb{E}[Y-M\mid B=b,A=a,X,G=O])S(B,A,X,G)\bigg]=0.
\end{align*}

Therefore,
\begin{align*}
    &\sum_{x}\partial_t\mathbb{E}_t[Y-M\mid A=a,B=b,x,G=O]p(x\mid G=O)\\
    &=\mathbb{E}\bigg[\frac{I(B=b)I(A=a)I(G=O)}{p(G=O)}\frac{1}{p(A=a\mid B=b,X,G=O)}\frac{1}{p(B=b\mid X,G=O)}\times\\
    &\ \ \ \ \ \ \ \ (Y-M-\mathbb{E}[Y-M\mid B=b,A=a,X,G=O])S(V)\bigg].
\end{align*}

For the second term, note that
\begin{align*}
    &\sum_{x}\mathbb{E}[Y-M\mid A=a,B=b,x,G=O]\partial_tp_t(x\mid  G=O)\\
    &=\sum_{x}\mathbb{E}[Y-M\mid A=a,B=b,x,G=O]S(x\mid  G=O)p(x\mid G=O)\\
    &=\sum_{x,g}\frac{I(g=O)}{p( G=O)}\mathbb{E}[Y-M\mid A=a,B=b,x,G=O]S(x\mid g)p(x,g)\\
    &=\mathbb{E}\bigg[\frac{I(G=O)}{p(G=O)}\mathbb{E}[Y-M\mid A=a,B=b,X,G=O]S(X\mid G)\bigg]\\
    &=\mathbb{E}\bigg[\frac{I(G=O)}{p(G=O)}\bigg\{\mathbb{E}[Y-M\mid A=a,B=b,X,G=O]-\psi_2^{(ab)}\bigg\}S(X\mid G)\bigg]
\end{align*}

Note that,
\begin{align*}
    &\mathbb{E}\bigg[\frac{I(G=O)}{p(G=O)}\bigg\{\mathbb{E}[Y-M\mid A=a,B=b,X,G=O]-\psi_2^{(ab)}\bigg\}S(G)\bigg]=0.
\end{align*}

Therefore,
\begin{align*}
    &\sum_{x}\mathbb{E}[Y-M\mid A=a,B=b,x,G=O]\partial_tp_t(x\mid G=O)\\
    &=\mathbb{E}\bigg[\frac{I(G=O)}{p(G=O)}\bigg\{\mathbb{E}[Y-M\mid A=a,B=b,X,G=O]-\psi_2^{(ab)}\bigg\}S(V)\bigg]
\end{align*}

Combining concludes that,
\begin{align*}
    \partial_t\psi_{2t}^{(ab)}&=\mathbb{E}\bigg[\frac{I(G=O)}{p( G=O)}\bigg\{\frac{I(B=b)I(A=a)}{p(A=a\mid B=b,X,G=O)p(B=b\mid X,G=O)}\times\\
    &~~~~~~~~~~~~~~~~~~~~~~ \{Y-M-\mathbb{E}[Y-M\mid B=b,A=a,X,G=O]\}\\
    &~~~~~~~~~~~~+\{\mathbb{E}[Y-M\mid A=a,B=b,X,G=O]-\psi_2^{(ab)}\}\bigg\}S(V)\bigg]
\end{align*}

For $\psi_3$, note that,
{\footnotesize \begin{align*}
    \partial_t\psi_{3t}&=\partial_t\sum_{x}\frac{\mathbb{E}_t[Y-M\mid A=1,B=1,x,G=O]-\mathbb{E}_t[Y-M\mid A=1,B=0,x,G=O]}{p_t(A=0\mid B=1,x,G=O)-p_t(A=0\mid B=0,x,G=O)}p_t(x\mid G=O)\\
    &=\sum_{x}\partial_t\frac{\mathbb{E}_t[Y-M\mid A=1,B=1,x,G=O]-\mathbb{E}_t[Y-M\mid A=1,B=0,x,G=O]}{p_t(A=0\mid B=1,x,G=O)-p_t(A=0\mid B=0,x,G=O)}p(x\mid G=O)\\
    &\ \ \ \ +\sum_{x}\frac{\mathbb{E}[Y-M\mid A=1,B=1,x,G=O]-\mathbb{E}[Y-M\mid A=1,B=0,x,G=O]}{p(A=0\mid B=1,x,G=O)-p(A=0\mid B=0,x,G=O)}\partial_tp_t(x\mid G=O)
\end{align*}}

For the first term, note that
{\footnotesize \begin{align*}
    &\sum_{x}\partial_t\frac{\mathbb{E}_t[Y-M\mid A=1,B=1,x,G=O]}{p_t(A=0\mid B=1,x,G=O)-p_t(A=0\mid B=0,x,G=O)}p(x\mid G=O)\\
    &=\sum_{x}\frac{\partial_t\mathbb{E}_t[Y-M\mid A=1,B=1,x,G=O]}{p(A=0\mid B=1,x,G=O)-p(A=0\mid B=0,x,G=O)}p(x\mid G=O)\\
    &\ \ \ \ -\sum_{x}\frac{\mathbb{E}[Y-M\mid A=1,B=1,x,G=O]\partial_tp_t(A=0\mid B=1,x,G=O)}{\{p(A=0\mid B=1,x,G=O)-p(A=0\mid B=0,x,G=O)\}^2}p(x\mid G=O)\\
    &\ \ \ \ +\sum_{x}\frac{\mathbb{E}[Y-M\mid A=1,B=1,x,G=O]\partial_tp_t(A=0\mid B=0,x,G=O)}{\{p(A=0\mid B=1,x,G=O)-p(A=0\mid B=0,x,G=O)\}^2}p(x\mid G=O)\\
    &=\sum_{y,m,x,a,b,g}\frac{I(a=1)I(b=1)I(g=O)}{p(A=0\mid B=1,x,G=O)-p(A=0\mid B=0,x,G=O)}\cdot\frac{1}{p(G=O)}\\
    &\ \ \ \ \ \ \ \ \cdot\frac{1}{p(B=1\mid x,G=O)}\frac{1}{p(A=1\mid B=1,x,G=O)}\{y-m\}S(y,m\mid x,a,b,g)p(y,m,x,a,b,g)\\
    &\ \ \ \ -\sum_{a,b,x,g}\frac{\mathbb{E}[Y-M\mid A=1,B=1,x,G=O]}{\{p(A=0\mid B=1,x,G=O)-p(A=0\mid B=0,x,G=O)\}^2}\frac{I(b=1)I(g=O)}{p(G=O)}\cdot\\
    &\ \ \ \ \ \ \ \ \cdot\frac{1}{p(B=1\mid x, G=O)}I(a=0)S(a\mid b,x,g)p(a,b,x,g)\\
    &\ \ \ \ +\sum_{a,b,x,g}\frac{\mathbb{E}[Y-M\mid A=1,B=1,x,G=O]}{\{p(A=0\mid B=1,x,G=O)-p(A=0\mid B=0,x,G=O)\}^2}\frac{I(b=0)I(g=O)}{p(G=O)}\cdot\\
    &\ \ \ \ \ \ \ \ \cdot\frac{1}{p(B=0\mid x, G=O)}I(a=0)S(a\mid b,x,g)p(a,b,x,g)\\
    &=\mathbb{E}\bigg[\frac{I(A=1)I(B=1)I(G=O)}{p(A=0\mid B=1,X,G=O)-p(A=0\mid B=0,X,G=O)}\cdot\frac{1}{p(G=O)}\\
    &\ \ \ \ \ \ \ \ \cdot\frac{1}{p(B=1\mid X,G=O)}\frac{1}{p(A=1\mid B=1,X,G=O)}\{Y-M\}S(Y,M\mid X,A,B,G)\bigg]\\
    &\ \ \ \ -\mathbb{E}\bigg[\frac{\mathbb{E}[Y-M\mid A=1,B=1,X,G=O]}{\{p(A=0\mid B=1,X,G=O)-p(A=0\mid B=0,X,G=O)\}^2}\frac{I(B=1)I(G=O)}{p(G=O)}\cdot\\
    &\ \ \ \ \ \ \ \ \cdot\frac{1}{p(B=1\mid X, G=O)}I(A=0)S(A\mid B,X,G)\bigg]\\
    &\ \ \ \ +\mathbb{E}\bigg[\frac{\mathbb{E}[Y-M\mid A=1,B=1,X,G=O]}{\{p(A=0\mid B=1,X,G=O)-p(A=0\mid B=0,X,G=O)\}^2}\frac{I(B=0)I(G=O)}{p(G=O)}\cdot\\
    &\ \ \ \ \ \ \ \ \cdot\frac{1}{p(B=0\mid X, G=O)}I(A=0)S(A\mid B,X,G)\bigg]\\
    &=\mathbb{E}\bigg[\frac{I(A=1)I(B=1)I(G=O)}{p(A=0\mid B=1,X,G=O)-p(A=0\mid B=0,X,G=O)}\cdot\frac{1}{p(G=O)}\\
    &\ \ \ \ \ \ \ \ \cdot\frac{1}{p(B=1\mid X,G=O)}\frac{1}{p(A=1\mid B=1,X,G=O)}\{Y-M-\mathbb{E}[Y-M\mid A=1,B=1,X,G=O]\}S(V)\bigg]\\
    &\ \ \ \ -\mathbb{E}\bigg[\frac{\mathbb{E}[Y-M\mid A=1,B=1,X,G=O]}{\{p(A=0\mid B=1,X,G=O)-p(A=0\mid B=0,X,G=O)\}^2}\frac{I(B=1)I(G=O)}{p(G=O)}\cdot\\
    &\ \ \ \ \ \ \ \ \cdot\frac{1}{p(B=1\mid X, G=O)}\{I(A=0)-p(A=0\mid B=1,X,G=O)\}S(V)\bigg]\\
    &\ \ \ \ +\mathbb{E}\bigg[\frac{\mathbb{E}[Y-M\mid A=1,B=1,X,G=O]}{\{p(A=0\mid B=1,X,G=O)-p(A=0\mid B=0,X,G=O)\}^2}\frac{I(B=0)I(G=O)}{p(G=O)}\cdot\\
    &\ \ \ \ \ \ \ \ \cdot\frac{1}{p(B=0\mid X, G=O)}\{I(A=0)-p(A=0\mid B=0,X,G=O)\}S(V)\bigg].
\end{align*}}

Similarly,
{\footnotesize \begin{align*}
    &\sum_{x}\partial_t\frac{\mathbb{E}_t[Y-M\mid A=1,B=0,x,G=O]}{p_t(A=0\mid B=1,x,G=O)-p_t(A=0\mid B=0,x,G=O)}p(x\mid G=O)\\
    &=\mathbb{E}\bigg[\frac{I(A=1)I(B=0)I(G=O)}{p(A=0\mid B=1,X,G=O)-p(A=0\mid B=0,X,G=O)}\cdot\frac{1}{p(G=O)}\\
    &\ \ \ \ \ \ \ \ \cdot\frac{1}{p(B=0\mid X,G=O)}\frac{1}{p(A=1\mid B=0,X,G=O)}\{Y-M-\mathbb{E}[Y-M\mid A=1,B=0,X,G=O]\}S(V)\bigg]\\
    &\ \ \ \ -\mathbb{E}\bigg[\frac{\mathbb{E}[Y-M\mid A=1,B=0,X,G=O]}{\{p(A=0\mid B=1,X,G=O)-p(A=0\mid B=0,X,G=O)\}^2}\frac{I(B=1)I(G=O)}{p(G=O)}\cdot\\
    &\ \ \ \ \ \ \ \ \cdot\frac{1}{p(B=1\mid X, G=O)}\{I(A=0)-p(A=0\mid B=1,X,G=O)\}S(V)\bigg]\\
    &\ \ \ \ +\mathbb{E}\bigg[\frac{\mathbb{E}[Y-M\mid A=1,B=0,X,G=O]}{\{p(A=0\mid B=1,X,G=O)-p(A=0\mid B=0,X,G=O)\}^2}\frac{I(B=0)I(G=O)}{p(G=O)}\cdot\\
    &\ \ \ \ \ \ \ \ \cdot\frac{1}{p(B=0\mid X, G=O)}\{I(A=0)-p(A=0\mid B=0,X,G=O)\}S(V)\bigg].
\end{align*}}

Therefore,
{\footnotesize \begin{align*}
    &\sum_{x}\partial_t\frac{\mathbb{E}_t[Y-M\mid A=1,B=1,x,G=O]-\mathbb{E}_t[Y-M\mid A=1,B=0,x,G=O]}{p_t(A=0\mid B=1,x,G=O)-p_t(A=0\mid B=0,x,G=O)}p(x\mid G=O)\\
    &=\mathbb{E}\bigg[\frac{I(G=O)}{p(A=0\mid B=1,X,G=O)-p(A=0\mid B=0,X,G=O)}\frac{1}{p(G=O)}\\
    &\ \ \ \ \ \ \ \bigg\{I(A=1)\big\{\frac{I(B=1)}{p(A=1\mid B=1,X,G=O)p(B=1\mid X,G=O)}\{Y-M-\mathbb{E}[Y-M\mid A=1,B=1,X,G=O]\}\\
    &\ \ \ \ \ \ \ \ \ \ \ -\frac{I(B=0)}{p(A=1\mid B=0,X,G=O)p(B=0\mid X,G=O)}\{Y-M-\mathbb{E}[Y-M\mid A=1,B=0,X,G=O]\}\big\}\\
    &\ \ \ \ \ \ \ \ +\frac{\mathbb{E}[Y-M\mid A=1,B=1,X,G=O]-\mathbb{E}[Y-M\mid A=1,B=0,X,G=O]}{p(A=0\mid B=1,X,G=O)-p(A=0\mid B=0,X,G=O)}\\
    &\ \ \ \ \ \ \ \ \ \ \ \ \cdot\big\{-\frac{I(B=1)}{p(B=1\mid X, G=O)}\{I(A=0)-p(A=0\mid B=1,X,G=O)\}\\
    &\ \ \ \ \ \ \ \ \ \ \ +\frac{I(B=0)}{p(B=0\mid X, G=O)}\{I(A=0)-p(A=0\mid B=0,X,G=O)\}\big\}\bigg\}S(V)\bigg]
\end{align*}}

For the second term, we have
{\footnotesize \begin{align*}
    &\sum_{x}\frac{\mathbb{E}[Y-M\mid A=0,B=1,x,G=O]-\mathbb{E}[Y-M\mid A=0,B=0,x,G=O]}{p(A=0\mid B=1,x,G=O)-p(A=0\mid B=0,x,G=O)}\partial_tp_t(x\mid G=O)\\
    &=\mathbb{E}\bigg[\frac{I(G=O)}{p(G=O)}\cdot\frac{\mathbb{E}[Y-M\mid A=1,B=1,X,G=O]-\mathbb{E}[Y-M\mid A=1,B=0,X,G=O]}{p(A=0\mid B=1,X,G=O)-p(A=0\mid B=0,x,G=O)}\cdot S(X\mid G)\bigg]\\
    &=\mathbb{E}\bigg[\frac{I(G=O)}{p(G=O)}\bigg\{\frac{\mathbb{E}[Y-M\mid A=1,B=1,X,G=O]-\mathbb{E}[Y-M\mid A=1,B=0,X,G=O]}{p(A=0\mid B=1,X,G=O)-p(A=0\mid B=0,x,G=O)}-\psi_3\bigg\}\cdot S(V)\bigg].
\end{align*}}

Therefore,
{\footnotesize \begin{align*}
    &\partial_t\psi_{3t}\\
    &=\mathbb{E}\Bigg[\Bigg\{\frac{I(G=O)}{p(A=0\mid B=1,X,G=O)-p(A=0\mid B=0,X,G=O)}\frac{1}{p(G=O)}\\
    &\ \ \ \ \ \ \ \ \ \ \bigg\{I(A=1)\big\{\frac{I(B=1)}{p(A=1\mid B=1,X,G=O)p(B=1\mid X,G=O)}\{Y-M-\mathbb{E}[Y-M\mid A=1,B=1,X,G=O]\}\\
    &\ \ \ \ \ \ \ \ \ \ \ \ \ \ \ \ \ \ \ \ \ \ \ \ -\frac{I(B=0)}{p(A=1\mid B=0,X,G=O)p(B=0\mid X,G=O)}\{Y-M-\mathbb{E}[Y-M\mid A=1,B=0,X,G=O]\}\big\}\\
    &\ \ \ \ \ \ \ \ \ \ \ \ +\frac{\mathbb{E}[Y-M\mid A=1,B=1,X,G=O]-\mathbb{E}[Y-M\mid A=1,B=0,X,G=O]}{p(A=0\mid B=1,X,G=O)-p(A=0\mid B=0,X,G=O)}\\
    &\ \ \ \ \ \ \ \ \ \ \ \ \ \ \ \ \cdot\big\{-\frac{I(B=1)}{p(B=1\mid X, G=O)}\{I(A=0)-p(A=0\mid B=1,X,G=O)\}\\
    &\ \ \ \ \ \ \ \ \ \ \ \ \ \ \ \ \ \ \ \ +\frac{I(B=0)}{p(B=0\mid X, G=O)}\{I(A=0)-p(A=0\mid B=0,X,G=O)\}\big\}\bigg\}\\
    &\ \ \ \ \ \ \ \ \  +\frac{I(G=O)}{p(G=O)}\bigg\{\frac{\mathbb{E}[Y-M\mid A=1,B=1,X,G=O]-\mathbb{E}[Y-M\mid A=1,B=0,X,G=O]}{p(A=0\mid B=1,X,G=O)-p(A=0\mid B=0,x,G=O)}-\psi_3\bigg\}\Bigg\}S(V)\Bigg]
\end{align*}}

 For $\psi_4^{(ab)}$, note that,
 \begin{align*}
     \partial_t\psi_{4t}^{(ab)}&=\partial_t\sum_{x,\Tilde{b}}\frac{\mathbb{E}_t[Y-M\mid A=a,B=b,x,G=O]p_t(A=1\mid \Tilde{b},x,G=O)}{p_t(A=0\mid B=1,x,G=O)-p_t(A=0\mid B=0,x,G=O)}p_t(x,\Tilde{b}\mid G=O)\\
     &=\sum_{x,\Tilde{b}}\partial_t\frac{\mathbb{E}_t[Y-M\mid A=a,B=b,x,G=O]p_t(A=1\mid \Tilde{b},x,G=O)}{p_t(A=0\mid B=1,x,G=O)-p_t(A=0\mid B=0,x,G=O)}p(x,\Tilde{b}\mid G=O)\\
     &\ \ \ \ +\sum_{x,\Tilde{b}}\frac{\mathbb{E}[Y-M\mid A=a,B=b,x,G=O]p(A=1\mid \Tilde{b},x,G=O)}{p(A=0\mid B=1,x,G=O)-p(A=0\mid B=0,x,G=O)}\partial_tp_t(x,\Tilde{b}\mid G=O)
 \end{align*}

 For the first term, note that
{\footnotesize \begin{align*}
    &\sum_{x,\Tilde{b}}\partial_t\frac{\mathbb{E}_t[Y-M\mid A=a,B=b,x,G=O]p_t(A=1\mid \Tilde{b},x,G=O)}{p_t(A=0\mid B=1,x,G=O)-p_t(A=0\mid B=0,x,G=O)}p(x, \Tilde{b}\mid G=O)\\
    &=\sum_{x,\Tilde{b}}\frac{\partial_t\mathbb{E}_t[Y-M\mid A=a,B=b,x,G=O]p_t(A=1\mid \Tilde{b},x,G=O)}{p(A=0\mid B=1,x,G=O)-p(A=0\mid B=0,x,G=O)}p(x, \Tilde{b}\mid G=O)\\
    &\ \ \ \ -\sum_{x,\Tilde{b}}\frac{\mathbb{E}[Y-M\mid A=a,B=b,x,G=O]p(A=1\mid \Tilde{b},x,G=O)\partial_tp_t(A=0\mid B=1,x,G=O)}{\{p(A=0\mid B=1,x,G=O)-p(A=0\mid B=0,x,G=O)\}^2}p(x, \Tilde{b}\mid G=O)\\
    &\ \ \ \ +\sum_{x,\Tilde{b}}\frac{\mathbb{E}[Y-M\mid A=a,B=b,x,G=O]p(A=1\mid \Tilde{b},x,G=O)\partial_tp_t(A=0\mid B=0,x,G=O)}{\{p(A=0\mid B=1,x,G=O)-p(A=0\mid B=0,x,G=O)\}^2}p(x, \Tilde{b}\mid G=O)\\
    &=\sum_{x,\Tilde{b}}\frac{\partial_t\mathbb{E}_t[Y-M\mid A=a,B=b,x,G=O]p(A=1\mid \Tilde{b},x,G=O)}{p(A=0\mid B=1,x,G=O)-p(A=0\mid B=0,x,G=O)}p(x, \Tilde{b}\mid G=O)\\
    &\ \ \ \ +\sum_{x,\Tilde{b}}\frac{\mathbb{E}[Y-M\mid A=a,B=b,x,G=O]\partial_tp_t(A=1\mid \Tilde{b},x,G=O)}{p(A=0\mid B=1,x,G=O)-p(A=0\mid B=0,x,G=O)}p(x, \Tilde{b}\mid G=O)\\
    &\ \ \ \ -\sum_{x,\Tilde{b}}\frac{\mathbb{E}[Y-M\mid A=a,B=b,x,G=O]p(A=1\mid \Tilde{b},x,G=O)\partial_tp_t(A=0\mid B=1,x,G=O)}{\{p(A=0\mid B=1,x,G=O)-p(A=0\mid B=0,x,G=O)\}^2}p(x, \Tilde{b}\mid G=O)\\
    &\ \ \ \ +\sum_{x,\Tilde{b}}\frac{\mathbb{E}[Y-M\mid A=a,B=b,x,G=O]p(A=1\mid \Tilde{b},x,G=O)\partial_tp_t(A=0\mid B=0,x,G=O)}{\{p(A=0\mid B=1,x,G=O)-p(A=0\mid B=0,x,G=O)\}^2}p(x, \Tilde{b}\mid G=O)\\
    &=\sum_{y,m,x,\Tilde{a},\Tilde{b},\Tilde{\Tilde{b}},g}\frac{I(\Tilde{a}=a)I(\Tilde{\Tilde{b}}=b)I(g=O)}{p(A=0\mid B=1,x,G=O)-p(A=0\mid B=0,x,G=O)}\cdot\frac{1}{p(G=O)}\\
    &\ \ \ \ \ \ \ \ \cdot\frac{1}{p(B=b\mid x,G=O)}\frac{p(A=1\mid \Tilde{b},x,G=O)}{p(A=a\mid B=b,x,G=O)}\{y-m\}S(y,m\mid x,\Tilde{a},\Tilde{\Tilde{b}},g)p(y,m,x,\Tilde{a},\Tilde{b},\Tilde{\Tilde{b}},g)\\
    &\ \ \ \ +\sum_{\Tilde{a},x,\Tilde{b},g}\frac{\mathbb{E}[Y-M\mid A=a,B=b,x,G=O]}{p(A=0\mid B=1,x,G=O)-p(A=0\mid B=0,x,G=O)}\cdot\frac{I(g=O)}{p(G=O)}\\
    &\ \ \ \ \ \ \ \ \cdot\frac{1}{p(B=\Tilde{b}\mid x, G=O)}I(\Tilde{a}=1)S(\Tilde{a}\mid \Tilde{b},x,g)p(\Tilde{a},\Tilde{b},x,g)\\
    &\ \ \ \ -\sum_{\Tilde{a},\Tilde{b},\Tilde{\Tilde{b}},x,g}\frac{\mathbb{E}[Y-M\mid A=a,B=b,x,G=O]p(A=1\mid \Tilde{b},x,G=O)}{\{p(A=0\mid B=1,x,G=O)-p(A=0\mid B=0,x,G=O)\}^2}\frac{I(\Tilde{\Tilde{b}}=1)I(g=O)}{p(G=O)}\cdot\\
    &\ \ \ \ \ \ \ \ \cdot\frac{1}{p(B=1\mid x, G=O)}I(\Tilde{a}=0)S(\Tilde{a}\mid \Tilde{\Tilde{b}},x,g)p(\Tilde{a},x,\Tilde{b},\Tilde{\Tilde{b}},x,g)\\
    &\ \ \ \ +\sum_{\Tilde{a},\Tilde{b},\Tilde{\Tilde{b}},x,g}\frac{\mathbb{E}[Y-M\mid A=a,B=b,x,G=O]p(A=1\mid \Tilde{b},x,G=O)}{\{p(A=0\mid B=1,x,G=O)-p(A=0\mid B=0,x,G=O)\}^2}\frac{I(\Tilde{\Tilde{b}}=0)I(g=O)}{p(G=O)}\cdot\\
    &\ \ \ \ \ \ \ \ \cdot\frac{1}{p(B=0\mid x, G=O)}I(\Tilde{a}=0)S(\Tilde{a}\mid \Tilde{\Tilde{b}},x,g)p(\Tilde{a},\Tilde{b},\Tilde{\Tilde{b}},x,g)\\
    &=\mathbb{E}\bigg[\frac{I(A=a)I(B=b)I(G=O)}{p(A=0\mid B=1,X,G=O)-p(A=0\mid B=0,X,G=O)}\cdot\frac{1}{p(G=O)}\\
    &\ \ \ \ \ \ \ \ \cdot\frac{1}{p(B=b\mid X,G=O)}\frac{p(A=1\mid B,X,G=O)}{p(A=a\mid B=b,X,G=O)}\{Y-M\}S(Y,M\mid X,A,B,G)\bigg]\\
    &\ \ \ \ +\mathbb{E}\bigg[\frac{\mathbb{E}[Y-M\mid A=a,B=b,x,G=O]}{p(A=0\mid B=1,X,G=O)-p(A=0\mid B=0,X,G=O)}\cdot\frac{I(G=O)}{p(G=O)}\\
    &\ \ \ \ \ \ \ \ \cdot\frac{1}{p(B\mid X,G=O)}I(A=1)S(A\mid B,X,G)\bigg]\\
    &\ \ \ \ -\mathbb{E}\bigg[\frac{\mathbb{E}[Y-M\mid A=a,B=b,X,G=O]p(A=1\mid B,X,G=O)}{\{p(A=0\mid B=1,X,G=O)-p(A=0\mid B=0,X,G=O)\}^2}\frac{I(B=1)I(G=O)}{p(G=O)}\cdot\\
    &\ \ \ \ \ \ \ \ \cdot\frac{1}{p(B=1\mid X, G=O)}I(A=0)S(A\mid B,X,G)\bigg]\\
    &\ \ \ \ +\mathbb{E}\bigg[\frac{\mathbb{E}[Y-M\mid A=a,B=b,X,G=O]p(A=1\mid B,X,G=O)}{\{p(A=0\mid B=1,X,G=O)-p(A=0\mid B=0,X,G=O)\}^2}\frac{I(B=0)I(G=O)}{p(G=O)}\cdot\\
    &\ \ \ \ \ \ \ \ \cdot\frac{1}{p(B=0\mid X, G=O)}I(A=0)S(A\mid B,X,G)\bigg]\\
    &=\mathbb{E}\bigg[\frac{I(A=a)I(B=b)I(G=O)}{p(A=0\mid B=1,X,G=O)-p(A=0\mid B=0,X,G=O)}\cdot\frac{1}{p(G=O)}\\
    &\ \ \ \ \ \ \ \ \cdot\frac{1}{p(B=b\mid X,G=O)}\frac{p(A=1\mid B,X,G=O)}{p(A=a\mid B=b,X,G=O)}\{Y-M-\mathbb{E}[Y-M\mid A=a,B=b,X,G=O]\}S(V)\bigg]\\
    &\ \ \ \ +\mathbb{E}\bigg[\frac{\mathbb{E}[Y-M\mid A=a,B=b,X,G=O]}{p(A=0\mid B=1,X,G=O)-p(A=0\mid B=0,X,G=O)}\cdot\frac{I(G=O)}{p(G=O)}\\
    &\ \ \ \ \ \ \ \ \cdot\frac{1}{p(B\mid X,G=O)}\{I(A=1)-p(A=1\mid B,X,G=O)\}S(V)\bigg]\\
    &\ \ \ \ -\mathbb{E}\bigg[\frac{\mathbb{E}[Y-M\mid A=a,B=b,X,G=O]p(A=1\mid B,X,G=O)}{\{p(A=0\mid B=1,X,G=O)-p(A=0\mid B=0,X,G=O)\}^2}\frac{I(B=1)I(G=O)}{p(G=O)}\cdot\\
    &\ \ \ \ \ \ \ \ \cdot\frac{1}{p(B=1\mid X, G=O)}\{I(A=0)-p(A=0\mid B=1,X,G=O)\}S(V)\bigg]\\
    &\ \ \ \ +\mathbb{E}\bigg[\frac{\mathbb{E}[Y-M\mid A=a,B=b,X,G=O]p(A=1\mid B,X,G=O)}{\{p(A=0\mid B=1,X,G=O)-p(A=0\mid B=0,X,G=O)\}^2}\frac{I(B=0)I(G=O)}{p(G=O)}\cdot\\
    &\ \ \ \ \ \ \ \ \cdot\frac{1}{p(B=0\mid X, G=O)}\{I(A=0)-p(A=0\mid B=0,X,G=O)\}S(V)\bigg].
\end{align*}}

For the second term, we have
\begin{align*}
    &\sum_{x,\Tilde{b}}\frac{\mathbb{E}[Y-M\mid A=a,B=b,x,G=O]p(A=1\mid \Tilde{b},x,G=O)}{p(A=0\mid B=1,x,G=O)-p(A=0\mid B=0,x,G=O)}\partial_tp_t(x,\Tilde{b}\mid G=O)\\
    &=\mathbb{E}\bigg[\frac{I(G=O)}{p(G=O)}\cdot\frac{\mathbb{E}[Y-M\mid A=a,B=b,X,G=O]p(A=1\mid B,X,G=O)}{p(A=0\mid B=1,X,G=O)-p(A=0\mid =0,x,G=O)}\cdot S(X,B\mid G)\bigg]\\
    &=\mathbb{E}\bigg[\frac{I(G=O)}{p(G=O)}\bigg\{\frac{\mathbb{E}[Y-M\mid A=a,B=b,X,G=O]p(A=1\mid B,X,G=O)}{p(A=0\mid B=1,X,G=O)-p(A=0\mid B=0,x,G=O)}-\psi_{4}^{(ab)}\bigg\}\cdot S(V)\bigg].
\end{align*}

Therefore,
{\footnotesize \begin{align*}
    &\partial_t\psi_{4t}^{(ab)}\\
    &=\mathbb{E}\Bigg[\Bigg\{\frac{I(G=O)}{p(A=0\mid B=1,X,G=O)-p(A=0\mid B=0,X,G=O)}\frac{1}{p(G=O)}\\
    &\ \ \ \ \ \ \ \ \ \ \cdot\bigg\{\frac{I(A=a)I(B=b)p(A=1\mid B,X,G=O)}{p(A=a\mid B=b,X,G=O)p(B=b\mid X,G=O)}\{Y-M-\mathbb{E}[Y-M\mid A=a,B=b,X,G=O]\}\\
    &\ \ \ \ \ \ \ \ \ \ \ \ \ +\frac{\mathbb{E}[Y-M\mid A=a,B=b,X,G=O]}{p(B\mid X,G=O)}\big\{I(A=1)-p(A=1\mid B,X,G=O)\big\}\\
    &\ \ \ \ \ \ \ \ \ \ \ \ \ +\frac{\mathbb{E}[Y-M\mid A=a,B=b,X,G=O]p(A=1\mid B,X,G=O)}{p(A=0\mid B=1,X,G=O)-p(A=0\mid B=0,X,G=O)}\\
    &\ \ \ \ \ \ \ \ \ \ \ \ \ \ \ \ \cdot\big\{-\frac{I(B=1)}{p(B=1\mid X, G=O)}\{I(A=0)-p(A=0\mid B=1,X,G=O)\}\\
    &\ \ \ \ \ \ \ \ \ \ \ \ \ \ \ \ \ \ \ \ +\frac{I(B=0)}{p(B=0\mid X, G=O)}\{I(A=0)-p(A=0\mid B=0,X,G=O)\}\big\}\bigg\}\\
    &\ \ \ \ \ \ \ \ \  +\frac{I(G=O)}{p(G=O)}\bigg\{\frac{\mathbb{E}[Y-M\mid A=a,B=b,X,G=O]p(A=1\mid B,X,G=O)}{p(A=0\mid B=1,X,G=O)-p(A=0\mid B=0,x,G=O)}-\psi_4^{(ab)}\bigg\}\Bigg\}S(V)\Bigg]
\end{align*}}

For $\psi_5^{(a)}$, we have,
{\footnotesize \begin{align*}
    \partial_{t}\psi_{5t}^{(a)}=\mathbb{E}\Bigg[&\Bigg\{\frac{I(A=a)}{p(A=a\mid X,B, G=E)}\cdot\frac{I(G=E)}{p(G=O)}\{M-\mathbb{E}[M\mid A=a,X,B,G=E]\}\{\frac{1}{p(G=E\mid X,B)}-1\}\\
    &\ \ \ \ \ \ \ \ \ +\frac{I(G=O)}{p(G=O)}\{\mathbb{E}[M\mid A=a,X,B,G=E]-\psi_5^{(a)}\}\Bigg\}S(V)\Bigg]
\end{align*}}

For $i\in\{3\}$, denote the obtained influence functions by $IF_{\psi_i}$, for $i\in\{5\}$ and $a\in\{0,1\}$, denote the obtained influence functions by $IF_{\psi_i^{(a)}}$, for $i\in\{1,2,4\}$ and $a,b\in\{0,1\}$, denote the obtained influence functions by $IF_{\psi_i^{(ab)}}$. The influence function for $\psi_{\text{ATE}}^{\text{bsiv2}}$ can be obtained as $IF_{\psi_1^{(11)}}-IF_{\psi_1^{(01)}}-IF_{\psi_1^{(10)}}+IF_{\psi_1^{(00)}}+IF_{\psi_2^{(10)}}-IF_{\psi_2^{(00)}}-IF_{\psi_3}+IF_{\psi_4^{(11)}}-IF_{\psi_4^{(01)}}-IF_{\psi_4^{(10)}}+IF_{\psi_4^{(00)}}+IF_{\psi_5^{(1)}}-IF_{\psi_5^{(0)}}$. Therefore, using the notations specified in Theorem \ref{thm:IF:BSIV-1},

\begin{align*}
&\frac{I(G=O)}{p(G=O)}\Bigg\{\frac{I(A=1)I(B=1)}{P_{11}^O(X)}\{Y-M-E_{11}^O(X)\}-\frac{I(A=0)I(B=1)}{P_{01}^O(X)}\{Y-M-E_{01}^O(X)\}\\
    &\ \ \ \ \ \ \ \ \ \ \ \ \ \ \ \ \ \ -\frac{I(A=0)I(B=0)}{P_{00}^O(X)}\{Y-M-E_{00}^O(X)\}+\frac{I(A=1)I(B=0)}{P_{10}^O(X)}\{Y-M-E_{10}^O(X)\}\\
    &\ \ \ \ \ \ \ \ \ \ \ \ \ \ \ \ \ \ +I(B=1)\{E_{11}^O(X)-E_{01}^O(X)-E_{10}^O(X)+E_{00}^O(X)\}+E_{10}^O(X)-E_{00}^O(X)\\
    &\ \ \ \ \ \ \ \ \ \ \ \ \ \ \ \ \ \ -\frac{\{E_{01}^O(X)-E_{00}^O(X)\}P_{1B}^O(X)+\{E_{11}^O(X)-E_{10}^O(X)\}(1-P_{1B}^O(X))}{P_{01}^O(X)-P_{00}^O(X)}\\
    &\ \ \ \ \ \ \ \ \ \ \ \ \ \ \ \ \ \ +M_1^E(B,X)-M_0^E(B,X)-\psi_{ATE}^2\\
    &\ \ \ \ \ \ \ \ \ \ \ \ \ \ \ \ \ \ +\frac{1}{P_{01}^O(X)-P_{00}^O(X)}\bigg\{-\frac{I(A=1)I(B=1)P_{01}^O(X)}{P_{11}^O(X)\rho_1^O(X)}\{Y-M-E_{11}^O(X)\}\\
    &\ \ \ \ \ \ \ \ \ \ \ \ \ \ \ \ \ \ \ \ \ \ \ \ \ \ \ \ \ \ \ \ \ -\frac{I(A=0)I(B=1)P_{11}^O(X)}{P_{01}^O(X)\rho_1^O(X)}\{Y-M-E_{01}^O(X)\}\\
    &\ \ \ \ \ \ \ \ \ \ \ \ \ \ \ \ \ \ \ \ \ \ \ \ \ \ \ \ \ \ \ \ \ +\frac{I(A=1)I(B=0)P_{00}^O(X)}{P_{10}^O(X)\rho_0^O(X)}\{Y-M-E_{10}^O(X)\}\\
    &\ \ \ \ \ \ \ \ \ \ \ \ \ \ \ \ \ \ \ \ \ \ \ \ \ \ \ \ \ \ \ \ \ +\frac{I(A=0)I(B=0)P_{10}^O(X)}{P_{00}^O(X)\rho_0^O(X)}\{Y-M-E_{00}^O(X)\}\\
    &\ \ \ \ \ \ \ \ \ \ \ \ \ \ \ \ \ \ \ \ \ \ \ \ \ \ \ \ \ \ \ \ \ +\frac{E_{11}^O(X)-E_{01}^O(X)-E_{10}^O(X)+E_{00}^O(X)}{\rho_B^O(X)}\big\{I(A=1)-P_{1B}^O(X)\big\}\\
    &\ \ \ \ \ \ \ \ \ \ \ \ \ \ \ \ \ \ \ \ \ \ \ \ \ \ \ \ \ \ \ \ \ +\frac{\{E_{11}^O(X)-E_{01}^O(X)-E_{10}^O(X)+E_{00}^O(X)\}P_{1B}^O(X)}{P_{01}^O(X)-P_{00}^O(X)}\\
    &\ \ \ \ \ \ \ \ \ \ \ \ \ \ \ \ \ \ \ \ \ \ \ \ \ \ \ \ \ \ \ \ \ \cdot\big\{-\frac{I(B=1)}{\rho_1^O(X)}\{I(A=0)-P_{01}^O(X)\}+\frac{I(B=0)}{\rho_0^O(X)}\{I(A=0)-P_{00}^O(X)\}\big\}\bigg\}\Bigg\}\\
    &+\frac{I(G=E)}{p(G=O)}\frac{1-\tau(B,X)}{\tau(B,X)}\bigg\{\frac{I(A=1)}{\pi^E(X,B)}\{M-M_1^E(B,X)\}-\frac{I(A=0)}{1-\pi^E(X,B)}\{M-M_0^E(B,X)\}\bigg\}
\end{align*}
is the influence function for $\psi_{\text{ATE}}^{\text{bsiv2}}$.

\end{proof}

\begin{proof}[Proof of Proposition \ref{prop:DR:BSIV_ett}]

\begin{itemize}
    \item $\psi_{\text{ETT}}^{\text{bsiv1}}$:
First, suppose the set $\{P_{AB}^O(X),\mathbb{E}[M\mid B,X,G=O],\mathbb{E}[Y\mid B,X,G=O],\mathbb{E}[M\mid A,B,X,G=E]\}$ is correctly specified. We have
\begingroup
\allowdisplaybreaks
{\footnotesize \begin{align*}
&\E\Big[\frac{I(G=O)}{\hat{P}_{11}^O(X)-\hat{P}_{10}^O(X)}\frac{1}{p(A=1,G=O)}\frac{\hat{\pi}^O(X)}{\hat{\rho}_B^O(X)}\bigg\{I(B=1)\{Y-M-\hat{e}_1^O(X)\}-I(B=0)\{Y-M-\hat{e}_0^O(X)\}\\
&\quad\quad+\frac{\hat{e}_1^O(X)-\hat{e}_0^O(X)}{\hat{P}_{11}^O(X)-\hat{P}_{10}^O(X)}\big\{I(B=0)\{I(A=1)-\hat{P}_{10}^O(X)\}-I(B=1)\{I(A=1)-\hat{P}_{11}^O(X)\}\big\}\bigg\}\\
&\quad+\frac{I(A=1)I(G=O)}{p(A=1,G=O)}\Big\{\frac{\hat{e}_1^O(X)-\hat{e}_0^O(X)}{\hat{P}_{11}^O(X)-\hat{P}_{10}^O(X)}\Big\}-\frac{I(A=0)I(G=E)}{p(A=1,G=O)}\cdot\frac{1-\hat{\tau}(B,X)}{\hat{\tau}(B,X)}\cdot
\frac{M-\hat{M}_{0}^{E}(B,X)}{1-\hat{P}_{AB}^E(X)}\\
&\quad+\frac{I(G=O)}{p(A=1,G=O)}\cdot \{M-\hat{M}_{0}^{E}(B,X)\}\Big]\\
&=\E\Big[\frac{I(G=O)}{\hat{P}_{11}^O(X)-\hat{P}_{10}^O(X)}\frac{1}{p(A=1,G=O)}\frac{\hat{\pi}^O(X)}{\hat{\rho}_B^O(X)}\bigg\{I(B=1)\{\underbrace{\mathbb{E}[Y-M\mid B=1,X,G=O]-\hat{e}_1^O(X)}_{=0}\}\\
&\quad\quad\quad\quad-I(B=0)\{\underbrace{\mathbb{E}[Y-M\mid B=0,X,G=O]-\hat{e}_0^O(X)}_{=0}\}\\
&\quad\quad+\frac{\hat{e}_1^O(X)-\hat{e}_0^O(X)}{\hat{P}_{11}^O(X)-\hat{P}_{10}^O(X)}\big\{I(B=0)\{\underbrace{P(A=1\mid B=0,X,G=O)-\hat{P}_{10}^O(X)}_{=0}\}\\
&\quad\quad\quad\quad-I(B=1)\{\underbrace{P(A=1\mid B=1,X,G=O)-\hat{P}_{11}^O(X)}_{=0}\}\big\}\bigg\}\\
&\quad+\frac{I(A=1)I(G=O)}{p(A=1,G=O)}\Big\{\frac{\hat{e}_1^O(X)-\hat{e}_0^O(X)}{\hat{P}_{11}^O(X)-\hat{P}_{10}^O(X)}\Big\}\\
&\quad\quad-\frac{I(A=0)I(G=E)}{p(A=1,G=O)}\cdot\frac{1-\hat{\tau}(B,X)}{\hat{\tau}(B,X)}\cdot
\underbrace{\frac{\mathbb{E}[M\mid A=0,B,X,G=E]-\hat{M}_{0}^{E}(B,X)}{1-\hat{P}_{AB}^E(X)}}_{=0}\\
&\quad-\frac{I(G=O)}{p(A=1,G=O)}\cdot \hat{M}_{0}^{E}(B,X)+\frac{I(G=O)}{p(A=1,G=O)}\cdot \mathbb{E}[M\mid G=O]\Big]\\
&=\E\Big[\frac{I(A=1)I(G=O)}{p(A=1,G=O)}\Big\{\frac{\hat{e}_1^O(X)-\hat{e}_0^O(X)}{\hat{P}_{11}^O(X)-\hat{P}_{10}^O(X)}\Big\}-\frac{I(G=O)}{p(A=1,G=O)}\cdot \hat{M}_{0}^{E}(B,X)\Big]\\
&\quad\quad+\frac{I(G=O)}{p(A=1,G=O)}\cdot \mathbb{E}[M\mid G=O]\\
&=\E\Big[\frac{I(A=1)I(G=O)}{p(A=1,G=O)}\Big\{\frac{\hat{e}_1^O(X)-\hat{e}_0^O(X)}{\hat{P}_{11}^O(X)-\hat{P}_{10}^O(X)}\Big\}-\frac{I(G=O)I(A=1)}{p(A=1,G=O)}\cdot\frac{\hat{M}_{0}^{E}(B,X)}{P(A=1\mid B,X,G=O)}\Big]\\
&\quad\quad+\frac{I(G=O)}{p(A=1,G=O)}\cdot \mathbb{E}[M\mid G=O]\\
&=\theta_{\text{ETT}}
\end{align*}}
\endgroup

Second, suppose the set $\{\tau(B,X),\rho_B^O(X),\rho_B^E(X),P_{AB}^O(X),P_{AB}^E(X)\}$ is correctly specified. We have
\begingroup
\allowdisplaybreaks
{\footnotesize \begin{align*}
&\E\Big[\frac{I(G=O)}{\hat{P}_{11}^O(X)-\hat{P}_{10}^O(X)}\frac{1}{p(A=1,G=O)}\frac{\hat{\pi}^O(X)}{\hat{\rho}_B^O(X)}\bigg\{I(B=1)\{Y-M-\hat{e}_1^O(X)\}-I(B=0)\{Y-M-\hat{e}_0^O(X)\}\\
&\quad\quad+\frac{\hat{e}_1^O(X)-\hat{e}_0^O(X)}{\hat{P}_{11}^O(X)-\hat{P}_{10}^O(X)}\big\{I(B=0)\{I(A=1)-\hat{P}_{10}^O(X)\}-I(B=1)\{I(A=1)-\hat{P}_{11}^O(X)\}\big\}\bigg\}\\
&\quad+\frac{I(A=1)I(G=O)}{p(A=1,G=O)}\Big\{\frac{\hat{e}_1^O(X)-\hat{e}_0^O(X)}{\hat{P}_{11}^O(X)-\hat{P}_{10}^O(X)}\Big\}-\frac{I(A=0)I(G=E)}{p(A=1,G=O)}\cdot\frac{1-\hat{\tau}(B,X)}{\hat{\tau}(B,X)}\cdot
\frac{M-\hat{M}_{0}^{E}(B,X)}{1-\hat{P}_{AB}^E(X)}\\
&\quad+\frac{I(G=O)}{p(A=1,G=O)}\cdot \{M-\hat{M}_{0}^{E}(B,X)\}\Big]\\
&=\E\Big[\frac{I(G=O)}{\hat{P}_{11}^O(X)-\hat{P}_{10}^O(X)}\frac{1}{p(A=1,G=O)}\underbrace{\frac{\hat{\pi}^O(X)I(A=1)}{P(A=1\mid X,G=O)}}_{=I(A=1)}\bigg\{\underbrace{\frac{I(B=1)}{\hat{\rho}_B^O(X)}}_{=1}\{e_1^O(X)-\hat{e}_1^O(X)\}\\
&\quad\quad\quad\quad-\underbrace{\frac{I(B=0)}{\hat{\rho}_B^O(X)}}_{=1}\{e_0^O(X)-\hat{e}_0^O(X)\}\\
&\quad\quad+\frac{\hat{e}_1^O(X)-\hat{e}_0^O(X)}{\hat{P}_{11}^O(X)-\hat{P}_{10}^O(X)}\big\{I(B=0)\{\underbrace{P(A=1\mid B=0,X,G=O)-\hat{P}_{10}^O(X)}_{=0}\}\\
&\quad\quad\quad\quad-I(B=1)\{\underbrace{P(A=1\mid B=1,X,G=O)-\hat{P}_{11}^O(X)}_{=0}\}\big\}\bigg\}\\
&\quad+\frac{I(A=1)I(G=O)}{p(A=1,G=O)}\Big\{\frac{\hat{e}_1^O(X)-\hat{e}_0^O(X)}{\hat{P}_{11}^O(X)-\hat{P}_{10}^O(X)}\Big\}\\
&\quad-\frac{1}{p(A=1,G=O)}\cdot\underbrace{\frac{P(A=0\mid B,X,G=E)}{1-\hat{P}_{AB}^E(X)}}_{=1}\cdot\underbrace{\Big\{\underbrace{\frac{P(G=E\mid B,X)}{\hat{\tau}(B,X)}}_{=1}-I(G=E)\Big\}}_{=I(G=O)}\cdot
\{\mathbb{E}[M\mid A=0,B,X,G=E]\\
&\quad\quad-\hat{M}_{0}^{E}(B,X)\}\\
&\quad-\frac{I(G=O)}{p(A=1,G=O)}\cdot \hat{M}_{0}^{E}(B,X)+\frac{I(G=O)}{p(A=1,G=O)}\cdot \mathbb{E}[M\mid G=O]\Big]\\
&=\E\Big[\frac{I(G=O)I(A=1)}{p(A=1,G=O)}\frac{1}{\hat{P}_{11}^O(X)-\hat{P}_{10}^O(X)}\bigg\{\{e_1^O(X)-\hat{e}_1^O(X)\}-\{e_0^O(X)-\hat{e}_0^O(X)\}\bigg\}\\
&\quad\quad+\frac{I(A=1)I(G=O)}{p(A=1,G=O)}\Big\{\frac{\hat{e}_1^O(X)-\hat{e}_0^O(X)}{\hat{P}_{11}^O(X)-\hat{P}_{10}^O(X)}\Big\}\\
&\quad-\frac{I(G=O)}{p(A=1,G=O)}\cdot
\{\mathbb{E}[M\mid A=0,B,X,G=E]-\hat{M}_{0}^{E}(B,X)\}-\frac{I(G=O)}{p(A=1,G=O)}\cdot \hat{M}_{0}^{E}(B,X)\Big]\\
&\quad\quad+\frac{\mathbb{E}[M\mid G=O]}{p(A=1\mid G=O)}\\
&=\E\Big[\frac{I(A=1)I(G=O)}{p(A=1,G=O)}\Big\{\frac{e_1^O(X)-e_0^O(X)}{\hat{P}_{11}^O(X)-\hat{P}_{10}^O(X)}\Big\}-\frac{I(G=O)}{p(A=1,G=O)}\cdot
\mathbb{E}[M\mid A=0,B,X,G=E]\Big]\\
&\quad\quad+\frac{\mathbb{E}[M\mid G=O]}{p(A=1\mid G=O)}\\
&=\E\Big[\frac{I(A=1)I(G=O)}{p(A=1,G=O)}\Big\{\frac{e_1^O(X)-e_0^O(X)}{\hat{P}_{11}^O(X)-\hat{P}_{10}^O(X)}\Big\}-\frac{I(A=1)I(G=O)}{p(A=1,G=O)}\cdot\frac{
\mathbb{E}[M\mid A=0,B,X,G=E]}{P(A=1\mid B,X,G=O)}\Big]\\
&\quad\quad+\frac{\mathbb{E}[M\mid G=O]}{p(A=1\mid G=O)}\\
&=\theta_{\text{ETT}}
\end{align*}}
\endgroup

Third, suppose the pair $\{\tau(B,X),P_{AB}^O(X),P_{AB}^E(X),\mathbb{E}[M\mid B,X,G=O],\mathbb{E}[Y\mid B,X,G=O]\}$ is correctly specified. We have
\begingroup
\allowdisplaybreaks
{\footnotesize \begin{align*}
&\E\Big[\frac{I(G=O)}{\hat{P}_{11}^O(X)-\hat{P}_{10}^O(X)}\frac{1}{p(A=1,G=O)}\frac{\hat{\pi}^O(X)}{\hat{\rho}_B^O(X)}\bigg\{I(B=1)\{Y-M-\hat{e}_1^O(X)\}-I(B=0)\{Y-M-\hat{e}_0^O(X)\}\\
&\quad\quad+\frac{\hat{e}_1^O(X)-\hat{e}_0^O(X)}{\hat{P}_{11}^O(X)-\hat{P}_{10}^O(X)}\big\{I(B=0)\{I(A=1)-\hat{P}_{10}^O(X)\}-I(B=1)\{I(A=1)-\hat{P}_{11}^O(X)\}\big\}\bigg\}\\
&\quad+\frac{I(A=1)I(G=O)}{p(A=1,G=O)}\Big\{\frac{\hat{e}_1^O(X)-\hat{e}_0^O(X)}{\hat{P}_{11}^O(X)-\hat{P}_{10}^O(X)}\Big\}-\frac{I(A=0)I(G=E)}{p(A=1,G=O)}\cdot\frac{1-\hat{\tau}(B,X)}{\hat{\tau}(B,X)}\cdot
\frac{M-\hat{M}_{0}^{E}(B,X)}{1-\hat{P}_{AB}^E(X)}\\
&\quad+\frac{I(G=O)}{p(A=1,G=O)}\cdot \{M-\hat{M}_{0}^{E}(B,X)\}\Big]\\
&=\E\Big[\frac{I(G=O)}{\hat{P}_{11}^O(X)-\hat{P}_{10}^O(X)}\frac{1}{p(A=1,G=O)}\frac{\hat{\pi}^O(X)}{\hat{\rho}_B^O(X)}\bigg\{I(B=1)\{\underbrace{\mathbb{E}[Y-M\mid B=1,X,G=O]-\hat{e}_1^O(X)}_{=0}\}\\
&\quad\quad\quad\quad-I(B=0)\{\underbrace{\mathbb{E}[Y-M\mid B=0,X,G=O]-\hat{e}_0^O(X)}_{=0}\}\\
&\quad\quad+\frac{\hat{e}_1^O(X)-\hat{e}_0^O(X)}{\hat{P}_{11}^O(X)-\hat{P}_{10}^O(X)}\big\{I(B=0)\{\underbrace{P(A=1\mid B=0,X,G=O)-\hat{P}_{10}^O(X)}_{=0}\}\\
&\quad\quad\quad\quad-I(B=1)\{\underbrace{P(A=1\mid B=1,X,G=O)-\hat{P}_{11}^O(X)}_{=0}\}\big\}\bigg\}\\
&\quad+\frac{I(A=1)I(G=O)}{p(A=1,G=O)}\Big\{\frac{\hat{e}_1^O(X)-\hat{e}_0^O(X)}{\hat{P}_{11}^O(X)-\hat{P}_{10}^O(X)}\Big\}\\
&\quad-\frac{1}{p(A=1,G=O)}\cdot\underbrace{\frac{P(A=0\mid B,X,G=E)}{1-\hat{P}_{AB}^E(X)}}_{=1}\cdot\underbrace{\Big\{\underbrace{\frac{P(G=E\mid B,X)}{\hat{\tau}(B,X)}}_{=1}-I(G=E)\Big\}}_{=I(G=O)}\cdot
\{\mathbb{E}[M\mid A=0,B,X,G=E]\\
&\quad\quad-\hat{M}_{0}^{E}(B,X)\}\\
&\quad-\frac{I(G=O)}{p(A=1,G=O)}\cdot \hat{M}_{0}^{E}(B,X)+\frac{I(G=O)}{p(A=1,G=O)}\cdot \mathbb{E}[M\mid G=O]\Big]\\
&=\E\Big[\frac{I(A=1)I(G=O)}{p(A=1,G=O)}\Big\{\frac{\hat{e}_1^O(X)-\hat{e}_0^O(X)}{\hat{P}_{11}^O(X)-\hat{P}_{10}^O(X)}\Big\}\\
&\quad-\frac{I(G=O)}{p(A=1,G=O)}\cdot
\{\mathbb{E}[M\mid A=0,B,X,G=E]-\hat{M}_{0}^{E}(B,X)\}-\frac{I(G=O)}{p(A=1,G=O)}\cdot \hat{M}_{0}^{E}(B,X)\Big]\\
&\quad+\frac{\mathbb{E}[M\mid G=O]}{p(A=1\mid G=O)}\\
&=\E\Big[\frac{I(A=1)I(G=O)}{p(A=1,G=O)}\Big\{\frac{\hat{e}_1^O(X)-\hat{e}_0^O(X)}{\hat{P}_{11}^O(X)-\hat{P}_{10}^O(X)}\Big\}-\frac{I(G=O)}{p(A=1,G=O)}\cdot
\mathbb{E}[M\mid A=0,B,X,G=E]\Big]\\
&\quad\quad+\frac{\mathbb{E}[M\mid G=O]}{p(A=1\mid G=O)}\\
&=\E\Big[\frac{I(A=1)I(G=O)}{p(A=1,G=O)}\Big\{\frac{\hat{e}_1^O(X)-\hat{e}_0^O(X)}{\hat{P}_{11}^O(X)-\hat{P}_{10}^O(X)}\Big\}-\frac{I(A=1)I(G=O)}{p(A=1,G=O)}\cdot\frac{
\mathbb{E}[M\mid A=0,B,X,G=E]}{P(A=1\mid B,X,G=O)}\Big]\\
&\quad\quad+\frac{\mathbb{E}[M\mid G=O]}{p(A=1\mid G=O)}\\
&=\theta_{\text{ETT}}
\end{align*}}
\endgroup

Finally, suppose the set $\{\rho_B^O(X),\rho_B^E(X),P_{AB}^O(X),\mathbb{E}[M\mid A,B,X,G=E]\}$ is correctly specified. We have
\begingroup
\allowdisplaybreaks
{\footnotesize \begin{align*}
&\E\Big[\frac{I(G=O)}{\hat{P}_{11}^O(X)-\hat{P}_{10}^O(X)}\frac{1}{p(A=1,G=O)}\frac{\hat{\pi}^O(X)}{\hat{\rho}_B^O(X)}\bigg\{I(B=1)\{Y-M-\hat{e}_1^O(X)\}-I(B=0)\{Y-M-\hat{e}_0^O(X)\}\\
&\quad\quad+\frac{\hat{e}_1^O(X)-\hat{e}_0^O(X)}{\hat{P}_{11}^O(X)-\hat{P}_{10}^O(X)}\big\{I(B=0)\{I(A=1)-\hat{P}_{10}^O(X)\}-I(B=1)\{I(A=1)-\hat{P}_{11}^O(X)\}\big\}\bigg\}\\
&\quad+\frac{I(A=1)I(G=O)}{p(A=1,G=O)}\Big\{\frac{\hat{e}_1^O(X)-\hat{e}_0^O(X)}{\hat{P}_{11}^O(X)-\hat{P}_{10}^O(X)}\Big\}-\frac{I(A=0)I(G=E)}{p(A=1,G=O)}\cdot\frac{1-\hat{\tau}(B,X)}{\hat{\tau}(B,X)}\cdot
\frac{M-\hat{M}_{0}^{E}(B,X)}{1-\hat{P}_{AB}^E(X)}\\
&\quad+\frac{I(G=O)}{p(A=1,G=O)}\cdot \{M-\hat{M}_{0}^{E}(B,X)\}\Big]\\
&=\E\Big[\frac{I(G=O)}{\hat{P}_{11}^O(X)-\hat{P}_{10}^O(X)}\frac{1}{p(A=1,G=O)}\underbrace{\frac{\hat{\pi}^O(X)I(A=1)}{P(A=1\mid X,G=O)}}_{=I(A=1)}\bigg\{\underbrace{\frac{I(B=1)}{\hat{\rho}_B^O(X)}}_{=1}\{e_1^O(X)-\hat{e}_1^O(X)\}\\
&\quad\quad\quad\quad-\underbrace{\frac{I(B=0)}{\hat{\rho}_B^O(X)}}_{=1}\{e_0^O(X)-\hat{e}_0^O(X)\}\\
&\quad\quad+\frac{\hat{e}_1^O(X)-\hat{e}_0^O(X)}{\hat{P}_{11}^O(X)-\hat{P}_{10}^O(X)}\big\{I(B=0)\{\underbrace{P(A=1\mid B=0,X,G=O)-\hat{P}_{10}^O(X)}_{=0}\}\\
&\quad\quad\quad\quad-I(B=1)\{\underbrace{P(A=1\mid B=1,X,G=O)-\hat{P}_{11}^O(X)}_{=0}\}\big\}\bigg\}\\
&\quad+\frac{I(A=1)I(G=O)}{p(A=1,G=O)}\Big\{\frac{\hat{e}_1^O(X)-\hat{e}_0^O(X)}{\hat{P}_{11}^O(X)-\hat{P}_{10}^O(X)}\Big\}\\
&\quad\quad-\frac{I(A=0)I(G=E)}{p(A=1,G=O)}\cdot\frac{1-\hat{\tau}(B,X)}{\hat{\tau}(B,X)}\cdot
\underbrace{\frac{\mathbb{E}[M\mid A=0,B,X,G=E]-\hat{M}_{0}^{E}(B,X)}{1-\hat{P}_{AB}^E(X)}}_{=0}\\
&\quad-\frac{I(G=O)}{p(A=1,G=O)}\cdot \hat{M}_{0}^{E}(B,X)+\frac{I(G=O)}{p(A=1,G=O)}\cdot \mathbb{E}[M\mid G=O]\Big]\\
&=\E\Big[\frac{I(G=O)I(A=1)}{p(A=1,G=O)}\frac{1}{\hat{P}_{11}^O(X)-\hat{P}_{10}^O(X)}\bigg\{\{e_1^O(X)-\hat{e}_1^O(X)\}-\{e_0^O(X)-\hat{e}_0^O(X)\}\bigg\}\\
&\quad\quad+\frac{I(A=1)I(G=O)}{p(A=1,G=O)}\Big\{\frac{\hat{e}_1^O(X)-\hat{e}_0^O(X)}{\hat{P}_{11}^O(X)-\hat{P}_{10}^O(X)}\Big\}\\
&\quad-\frac{I(G=O)}{p(A=1,G=O)}\cdot \hat{M}_{0}^{E}(B,X)\Big]+\frac{\mathbb{E}[M\mid G=O]}{p(A=1\mid G=O)}\\
&=\E\Big[\frac{I(A=1)I(G=O)}{p(A=1,G=O)}\Big\{\frac{e_1^O(X)-e_0^O(X)}{\hat{P}_{11}^O(X)-\hat{P}_{10}^O(X)}\Big\}-\frac{I(G=O)}{p(A=1,G=O)}\cdot \hat{M}_{0}^{E}(B,X)\Big]\\
&\quad\quad+\frac{I(G=O)}{p(A=1,G=O)}\cdot \mathbb{E}[M\mid G=O]\\
&=\E\Big[\frac{I(A=1)I(G=O)}{p(A=1,G=O)}\Big\{\frac{e_1^O(X)-e_0^O(X)}{\hat{P}_{11}^O(X)-\hat{P}_{10}^O(X)}\Big\}-\frac{I(G=O)I(A=1)}{p(A=1,G=O)}\cdot\frac{\hat{M}_{0}^{E}(B,X)}{P(A=1\mid B,X,G=O)}\Big]\\
&\quad\quad+\frac{I(G=O)}{p(A=1,G=O)}\cdot \mathbb{E}[M\mid G=O]\\
&=\theta_{\text{ETT}}
\end{align*}}
\endgroup

    \item $\psi_{\text{ETT}}^{\text{bsiv2}}$:
First, suppose the set $\{P_{AB}^O(X),\mathbb{E}[M\mid A,B,X,G=O],\mathbb{E}[Y\mid A,B,X,G=O],\mathbb{E}[M\mid A,B,X,G=E]\}$ is correctly specified. We have
\begingroup
\allowdisplaybreaks
{\footnotesize \begin{align*}
&\E\Big[\frac{I(G=O)}{p(A=1,G=O)}\Bigg\{I(A=1)I(B=1)\{Y-M-\hat{E}_{01}^O(X)-\hat{E}_{10}^O(X)+\hat{E}_{00}^O(X)\}\\
&\quad-I(A=0)I(B=1)\frac{\hat{P}_{11}^O(X)}{\hat{P}_{01}^O(X)}\{Y-M-\hat{E}_{01}^O(X)\}+I(A=1)\{\hat{E}_{10}^O(X)-\hat{E}_{00}^O(X)\}\\
&\quad-I(B=0)\frac{\hat{\rho}_{1}^O(X)\hat{P}_{11}^O(X)}{\hat{\rho}_{0}^O(X)\hat{P}_{10}^O(X)}I(A=1)\{Y-M-\hat{E}_{10}^O(X)\}+I(B=0)\frac{\hat{\rho}_{1}^O(X)\hat{P}_{11}^O(X)}{\hat{\rho}_{0}^O(X)\hat{P}_{00}^O(X)}I(A=0)\{Y-M-\hat{E}_{00}^O(X)\}\\
&\quad+\frac{I(B=0)}{\hat{P}_{10}^O(X)}\frac{\hat{\pi}^O(X)}{\hat{\rho}_{0}^O(X)}\cdot I(A=1)\{Y-M-\hat{E}_{10}^O(X)\}-\frac{I(B=0)}{\hat{P}_{00}^O(X)}\frac{\hat{\pi}^O(X)}{\hat{\rho}_{0}^O(X)}\cdot I(A=0)\{Y-M-\hat{E}_{00}^O(X)\}\\
&\quad+\frac{\hat{\pi}^O(X)}{\hat{P}_{01}^O(X)-\hat{P}_{00}^O(X)}\Big\{I(A=0)\big\{\frac{I(B=1)}{\hat{P}_{01}^O(X)\hat{\rho}_1^O(X)}\{Y-M-\hat{E}_{01}^O(X)\}-\frac{I(B=0)}{\hat{P}_{00}^O(X)\hat{\rho}_0^O(X)}\{Y-M-\hat{E}_{00}^O(X)\}\big\}\\
&\quad\quad+\frac{\hat{E}_{01}^O(X)-\hat{E}_{00}^O(X)}{\hat{P}_{01}^O(X)-\hat{P}_{00}^O(X)}\big\{-\frac{I(B=1)}{\hat{\rho}_1^O(X)}\{I(A=0)-\hat{P}_{01}^O(X)\}+\frac{I(B=0)}{\hat{\rho}_0^O(X)}\{I(A=0)-\hat{P}_{00}^O(X)\}\big\}\Big\}\\
&\quad+I(A=1)\{\frac{\hat{E}_{01}^O(X)-\hat{E}_{00}^O(X)}{\hat{P}_{01}^O(X)-\hat{P}_{00}^O(X)}\}+M-\hat{M}_0^E(B,X)\Bigg\}\\
&\quad-\frac{1}{p(A=1,G=O)}\cdot\frac{I(A=0)I(G=E)}{1-\hat{P}_{AB}^E(X)}\cdot\frac{1-\hat{\tau}(B,X)}{\hat{\tau}(B,X)}\cdot\{M-\hat{M}_0^E(B,X)\}\Big]\\
&=\E\Big[\frac{I(G=O)}{p(A=1,G=O)}\Bigg\{I(A=1)I(B=1)\{\mathbb{E}[Y-M\mid A=1,B=1,X,G=O]-\hat{E}_{01}^O(X)-\hat{E}_{10}^O(X)+\hat{E}_{00}^O(X)\}\\
&\quad-I(A=0)I(B=1)\frac{\hat{P}_{11}^O(X)}{\hat{P}_{01}^O(X)}\{\underbrace{\mathbb{E}[Y-M\mid A=0,B=1,X,G=O]-\hat{E}_{01}^O(X)}_{=0}\}+I(A=1)\{\hat{E}_{10}^O(X)-\hat{E}_{00}^O(X)\}\\
&\quad-I(B=0)\frac{\hat{\rho}_{1}^O(X)\hat{P}_{11}^O(X)}{\hat{\rho}_{0}^O(X)\hat{P}_{10}^O(X)}I(A=1)\{\underbrace{\mathbb{E}[Y-M\mid A=1,B=0,X,G=O]-\hat{E}_{10}^O(X)}_{=0}\}\\
&\quad+I(B=0)\frac{\hat{\rho}_{1}^O(X)\hat{P}_{11}^O(X)}{\hat{\rho}_{0}^O(X)\hat{P}_{00}^O(X)}I(A=0)\{\underbrace{\mathbb{E}[Y-M\mid A=0,B=0,X,G=O]-\hat{E}_{00}^O(X)}_{=0}\}\\
&\quad+\frac{I(B=0)}{\hat{P}_{10}^O(X)}\frac{\hat{\pi}^O(X)}{\hat{\rho}_{0}^O(X)}\cdot I(A=1)\{\underbrace{\mathbb{E}[Y-M\mid A=1,B=0,X,G=O]-\hat{E}_{10}^O(X)}_{=0}\}\\
&\quad-\frac{I(B=0)}{\hat{P}_{00}^O(X)}\frac{\hat{\pi}^O(X)}{\hat{\rho}_{0}^O(X)}\cdot I(A=0)\{\underbrace{\mathbb{E}[Y-M\mid A=0,B=0,X,G=O]-\hat{E}_{00}^O(X)}_{=0}\}\\
&\quad+\frac{\hat{\pi}^O(X)}{\hat{P}_{01}^O(X)-\hat{P}_{00}^O(X)}\Big\{I(A=0)\big\{\frac{I(B=1)}{\hat{P}_{01}^O(X)\hat{\rho}_1^O(X)}\{\underbrace{\mathbb{E}[Y-M\mid A=0,B=1,X,G=O]-\hat{E}_{01}^O(X)}_{=0}\}\\
&\quad\quad-\frac{I(B=0)}{\hat{P}_{00}^O(X)\hat{\rho}_0^O(X)}\{\underbrace{\mathbb{E}[Y-M\mid A=0,B=0,X,G=O]-\hat{E}_{00}^O(X)}_{=0}\}\big\}\\
&\quad\quad+\frac{\hat{E}_{01}^O(X)-\hat{E}_{00}^O(X)}{\hat{P}_{01}^O(X)-\hat{P}_{00}^O(X)}\big\{-\frac{I(B=1)}{\hat{\rho}_1^O(X)}\{\underbrace{P(A=0\mid B=1,X,G=O)-\hat{P}_{01}^O(X)}_{=0}\}\\
&\quad\quad+\frac{I(B=0)}{\hat{\rho}_0^O(X)}\{\underbrace{P(A=0\mid B=0,X,G=O)-\hat{P}_{00}^O(X)}_{=0}\}\big\}\Big\}\\
&\quad+I(A=1)\{\frac{\hat{E}_{01}^O(X)-\hat{E}_{00}^O(X)}{\hat{P}_{01}^O(X)-\hat{P}_{00}^O(X)}\}+\mathbb{E}[M\mid G=O]-\hat{M}_0^E(B,X)\Bigg\}\\
&\quad-\frac{1}{p(A=1,G=O)}\cdot\frac{I(A=0)I(G=E)}{1-\hat{P}_{AB}^E(X)}\cdot\frac{1-\hat{\tau}(B,X)}{\hat{\tau}(B,X)}\cdot\{\underbrace{\mathbb{E}[M\mid A=0,B,X,G=E]-\hat{M}_0^E(B,X)}_{=0}\}\Big]\\
&=\E\Big[\frac{I(G=O)}{p(A=1,G=O)}\Bigg\{I(A=1)I(B=1)\{E_{11}^O(X)-\hat{E}_{01}^O(X)-\hat{E}_{10}^O(X)+\hat{E}_{00}^O(X)\}+I(A=1)\{\hat{E}_{10}^O(X)-\hat{E}_{00}^O(X)\}\\
&\quad+I(A=1)\{\frac{\hat{E}_{01}^O(X)-\hat{E}_{00}^O(X)}{\hat{P}_{01}^O(X)-\hat{P}_{00}^O(X)}\}+\mathbb{E}[M\mid G=O]-\hat{M}_0^E(B,X)\Bigg\}\Big]\\
&=\E\Big[\frac{I(G=O)}{p(A=1,G=O)}\Bigg\{I(A=1)I(B=1)\{E_{11}^O(X)-\hat{E}_{01}^O(X)-\hat{E}_{10}^O(X)+\hat{E}_{00}^O(X)\}+I(A=1)\{\hat{E}_{10}^O(X)-\hat{E}_{00}^O(X)\}\\
&\quad+I(A=1)\{\frac{\hat{E}_{01}^O(X)-\hat{E}_{00}^O(X)}{\hat{P}_{01}^O(X)-\hat{P}_{00}^O(X)}\}+\mathbb{E}[M\mid G=O]\Bigg\}-\frac{P(G=O\mid B,X)}{p(A=1,G=O)}\mathbb{E}[M\mid A=0,B,X,G=E]\Big]\\
&=\theta_{\text{ETT}}
\end{align*}}
\endgroup

Second, suppose the set $\{\tau(B,X),\rho_B^O(X),\rho_B^E(X),P_{AB}^O(X),P_{AB}^E(X)\}$ is correctly specified. We have
\begingroup
\allowdisplaybreaks
{\footnotesize \begin{align*}
&\E\Big[\frac{I(G=O)}{p(A=1,G=O)}\Bigg\{I(A=1)I(B=1)\{Y-M-\hat{E}_{01}^O(X)-\hat{E}_{10}^O(X)+\hat{E}_{00}^O(X)\}\\
&\quad-I(A=0)I(B=1)\frac{\hat{P}_{11}^O(X)}{\hat{P}_{01}^O(X)}\{Y-M-\hat{E}_{01}^O(X)\}+I(A=1)\{\hat{E}_{10}^O(X)-\hat{E}_{00}^O(X)\}\\
&\quad-I(B=0)\frac{\hat{\rho}_{1}^O(X)\hat{P}_{11}^O(X)}{\hat{\rho}_{0}^O(X)\hat{P}_{10}^O(X)}I(A=1)\{Y-M-\hat{E}_{10}^O(X)\}+I(B=0)\frac{\hat{\rho}_{1}^O(X)\hat{P}_{11}^O(X)}{\hat{\rho}_{0}^O(X)\hat{P}_{00}^O(X)}I(A=0)\{Y-M-\hat{E}_{00}^O(X)\}\\
&\quad+\frac{I(B=0)}{\hat{P}_{10}^O(X)}\frac{\hat{\pi}^O(X)}{\hat{\rho}_{0}^O(X)}\cdot\{I(A=1)\{Y-M-\hat{E}_{10}^O(X)\}-\frac{I(B=0)}{\hat{P}_{00}^O(X)}\frac{\hat{\pi}^O(X)}{\hat{\rho}_{0}^O(X)}\cdot I(A=0)\{Y-M-\hat{E}_{00}^O(X)\}\\
&\quad+\frac{\hat{\pi}^O(X)}{\hat{P}_{01}^O(X)-\hat{P}_{00}^O(X)}\Big\{I(A=0)\big\{\frac{I(B=1)}{\hat{P}_{01}^O(X)\hat{\rho}_1^O(X)}\{Y-M-\hat{E}_{01}^O(X)\}-\frac{I(B=0)}{\hat{P}_{00}^O(X)\hat{\rho}_0^O(X)}\{Y-M-\hat{E}_{00}^O(X)\}\big\}\\
&\quad\quad+\frac{\hat{E}_{01}^O(X)-\hat{E}_{00}^O(X)}{\hat{P}_{01}^O(X)-\hat{P}_{00}^O(X)}\big\{-\frac{I(B=1)}{\hat{\rho}_1^O(X)}\{I(A=0)-\hat{P}_{01}^O(X)\}+\frac{I(B=0)}{\hat{\rho}_0^O(X)}\{I(A=0)-\hat{P}_{00}^O(X)\}\big\}\Big\}\\
&\quad+I(A=1)\{\frac{\hat{E}_{01}^O(X)-\hat{E}_{00}^O(X)}{\hat{P}_{01}^O(X)-\hat{P}_{00}^O(X)}\}+M-\hat{M}_0^E(B,X)\Bigg\}\\
&\quad-\frac{1}{p(A=1,G=O)}\cdot\frac{I(A=0)I(G=E)}{1-\hat{P}_{AB}^E(X)}\cdot\frac{1-\hat{\tau}(B,X)}{\hat{\tau}(B,X)}\cdot\{M-\hat{M}_0^E(B,X)\}\Big]\\
&=\E\Big[\frac{I(G=O)}{p(A=1,G=O)}\Bigg\{I(A=1)I(B=1)E_{11}^O(X)-I(A=1)I(B=1)\{\hat{E}_{01}^O(X)+\hat{E}_{10}^O(X)-\hat{E}_{00}^O(X)\}\\
&\quad-I(A=0)I(B=1)\frac{\hat{P}_{11}^O(X)}{\hat{P}_{01}^O(X)}E_{01}^O(X)+I(A=0)I(B=1)\frac{\hat{P}_{11}^O(X)}{\hat{P}_{01}^O(X)}\hat{E}_{01}^O(X)+I(A=1)\{\hat{E}_{10}^O(X)-\hat{E}_{00}^O(X)\}\\
&\quad+I(B=0)\frac{\hat{\pi}^O(X)-\hat{\rho}_{1}^O(X)\hat{P}_{11}^O(X)}{\hat{\rho}_{0}^O(X)\hat{P}_{10}^O(X)}I(A=1)\{E_{10}^O(X)-\hat{E}_{10}^O(X)\}\\
&\quad-I(B=0)\frac{\hat{\pi}^O(X)-\hat{\rho}_{1}^O(X)\hat{P}_{11}^O(X)}{\hat{\rho}_{0}^O(X)\hat{P}_{00}^O(X)}I(A=0)\{E_{00}^O(X)-\hat{E}_{00}^O(X)\}\\
&\quad+\frac{\hat{\pi}^O(X)}{\hat{P}_{01}^O(X)-\hat{P}_{00}^O(X)}\Big\{\underbrace{\frac{P_{01}^O(X)\rho_1^O(X)}{\hat{P}_{01}^O(X)\hat{\rho}_1^O(X)}}_{=1}\{E_{01}^O(X)-\hat{E}_{01}^O(X)\}-\underbrace{\frac{P_{00}^O(X)\rho_0^O(X)}{\hat{P}_{00}^O(X)\hat{\rho}_0^O(X)}}_{=1}\{E_{00}^O(X)-\hat{E}_{00}^O(X)\}\\
&\quad\quad+\frac{\hat{E}_{01}^O(X)-\hat{E}_{00}^O(X)}{\hat{P}_{01}^O(X)-\hat{P}_{00}^O(X)}\big\{-\frac{I(B=1)}{\hat{\rho}_1^O(X)}\{\underbrace{P(A=0\mid B=1,X,G=O)-\hat{P}_{01}^O(X)}_{=0}\}\\
&\quad\quad+\frac{I(B=0)}{\hat{\rho}_0^O(X)}\{\underbrace{P(A=0\mid B=0,X,G=O)-\hat{P}_{00}^O(X)}_{=0}\}\big\}\Big\}\\
&\quad+\pi^O(X)\{\frac{\hat{E}_{01}^O(X)-\hat{E}_{00}^O(X)}{\hat{P}_{01}^O(X)-\hat{P}_{00}^O(X)}\}+M-\hat{M}_0^E(B,X)\Bigg\}\\
&\quad-\frac{P(A=0\mid B,X,G=O)}{p(A=1,G=O)}\cdot\frac{P(G=E\mid B,X)}{1-\hat{P}_{AB}^E(X)}\cdot\frac{1-\hat{\tau}(B,X)}{\hat{\tau}(B,X)}\cdot\{M-\hat{M}_0^E(B,X)\}\Big]\\
&=\E\Big[\frac{I(G=O)}{p(A=1,G=O)}\Bigg\{I(A=1)I(B=1)E_{11}^O(X)-I(A=1)I(B=1)\{\hat{E}_{01}^O(X)+\hat{E}_{10}^O(X)-\hat{E}_{00}^O(X)\}\\
&\quad-I(A=0)I(B=1)\frac{\hat{P}_{11}^O(X)}{\hat{P}_{01}^O(X)}E_{01}^O(X)+I(A=0)I(B=1)\frac{\hat{P}_{11}^O(X)}{\hat{P}_{01}^O(X)}\hat{E}_{01}^O(X)+I(A=1)\{\hat{E}_{10}^O(X)-\hat{E}_{00}^O(X)\}\\
&\quad+(1-I(B=1))\big\{I(A=1)\{E_{10}^O(X)-\hat{E}_{10}^O(X)\}-I(A=0)I(B=0)\frac{\hat{P}_{10}^O(X)}{\hat{P}_{00}^O(X)}\{E_{00}^O(X)-\hat{E}_{00}^O(X)\}\\
&\quad+\hat{\pi}^O(X)\frac{E_{01}^O(X)-E_{00}^O(X)}{\hat{P}_{01}^O(X)-\hat{P}_{00}^O(X)}+\{\underbrace{\pi^O(X)-\hat{\pi}^O(X)}_{=0}\}\frac{\hat{E}_{01}^O(X)-\hat{E}_{00}^O(X)}{\hat{P}_{01}^O(X)-\hat{P}_{00}^O(X)}\\
&\quad+M-\hat{M}_0^E(B,X)\Bigg\}-\frac{P(A=0\mid B,X,G=O)}{p(A=1,G=O)}\cdot\frac{P(G=E\mid B,X)}{1-\hat{P}_{AB}^E(X)}\cdot\frac{1-\hat{\tau}(B,X)}{\hat{\tau}(B,X)}\cdot\{M-\hat{M}_0^E(B,X)\}\Big]\\
&=\E\Big[\frac{I(G=O)}{p(A=1,G=O)}\Bigg\{I(A=1)I(B=1)E_{11}^O(X)-I(A=1)I(B=1)\{\hat{E}_{01}^O(X)+\hat{E}_{10}^O(X)-\hat{E}_{00}^O(X)\}\\
&\quad-I(A=0)I(B=1)\frac{\hat{P}_{11}^O(X)}{\hat{P}_{01}^O(X)}E_{01}^O(X)+(1-I(A=1))I(B=1)\frac{\hat{P}_{11}^O(X)}{\hat{P}_{01}^O(X)}\hat{E}_{01}^O(X)+I(A=1)\{\hat{E}_{10}^O(X)-\hat{E}_{00}^O(X)\}\\
&\quad+\big\{I(A=1)-I(A=1)I(B=1)\big\}\{E_{10}^O(X)-\hat{E}_{10}^O(X)\}\\
&\quad-\big\{\underbrace{1-I(A=1)-I(B=1)+I(A=1)I(B=1)}_{=(1-I(B=1))P_{00}^O(X)}\big\}\frac{\hat{P}_{10}^O(X)}{\hat{P}_{00}^O(X)}\{E_{00}^O(X)-\hat{E}_{00}^O(X)\}\big\}\\
&\quad+\hat{\pi}^O(X)\frac{E_{01}^O(X)-E_{00}^O(X)}{\hat{P}_{01}^O(X)-\hat{P}_{00}^O(X)}+\mathbb{E}[M\mid G=O]\Bigg\}-\frac{P(G=O\mid B,X)}{p(A=1,G=O)}\mathbb{E}[M\mid A=0,B,X,G=E]\Big]\\
&=\E\Big[\frac{I(G=O)}{p(A=1,G=O)}\Bigg\{I(A=1)I(B=1)\{E_{11}^O(X)-E_{01}^O(X)-E_{10}^O(X)+E_{00}^O(X)\}+I(A=1)\{E_{10}^O(X)-E_{00}^O(X)\}\\
&\quad+I(B=1)\frac{\hat{P}_{11}^O(X)}{\hat{P}_{01}^O(X)}\underbrace{\Big\{P_{11}^O(X)\frac{\hat{P}_{01}^O(X)}{\hat{P}_{11}^O(X)}-P_{01}^O(X)\Big\}}_{=0}\{E_{01}^O(X)-\hat{E}_{01}^O(X)\}\\
&\quad+\hat{\pi}^O(X)\frac{E_{01}^O(X)-E_{00}^O(X)}{\hat{P}_{01}^O(X)-\hat{P}_{00}^O(X)}+\mathbb{E}[M\mid G=O]\Bigg\}-\frac{P(G=O\mid B,X)}{p(A=1,G=O)}\mathbb{E}[M\mid A=0,B,X,G=E]\Big]\\
&=\E\Big[\frac{I(G=O)}{p(A=1,G=O)}\Bigg\{I(A=1)I(B=1)\{E_{11}^O(X)-E_{01}^O(X)-E_{10}^O(X)+E_{00}^O(X)\}+I(A=1)\{E_{10}^O(X)-E_{00}^O(X)\}\\
&\quad+\hat{\pi}^O(X)\{\frac{E_{01}^O(X)-E_{00}^O(X)}{\hat{P}_{01}^O(X)-\hat{P}_{00}^O(X)}\}+\mathbb{E}[M\mid G=O]\Bigg\}-\frac{P(G=O\mid B,X)}{p(A=1,G=O)}\mathbb{E}[M\mid A=0,B,X,G=E]\Big]\\
&=\theta_{\text{ETT}}
\end{align*}}
\endgroup

Third, suppose the pair $\{\tau(B,X),P_{AB}^O(X),P_{AB}^E(X),\mathbb{E}[M\mid A,B,X,G=O],\mathbb{E}[Y\mid A,B,X,G=O]\}$ is correctly specified. We have
\begingroup
\allowdisplaybreaks
{\footnotesize \begin{align*}
&\E\Big[\frac{I(G=O)}{p(A=1,G=O)}\Bigg\{I(A=1)I(B=1)\{Y-M-\hat{E}_{01}^O(X)-\hat{E}_{10}^O(X)+\hat{E}_{00}^O(X)\}\\
&\quad-I(A=0)I(B=1)\frac{\hat{P}_{11}^O(X)}{\hat{P}_{01}^O(X)}\{Y-M-\hat{E}_{01}^O(X)\}+I(A=1)\{\hat{E}_{10}^O(X)-\hat{E}_{00}^O(X)\}\\
&\quad-I(B=0)\frac{\hat{\rho}_{1}^O(X)\hat{P}_{11}^O(X)}{\hat{\rho}_{0}^O(X)\hat{P}_{10}^O(X)}I(A=1)\{Y-M-\hat{E}_{10}^O(X)\}+I(B=0)\frac{\hat{\rho}_{1}^O(X)\hat{P}_{11}^O(X)}{\hat{\rho}_{0}^O(X)\hat{P}_{00}^O(X)}I(A=0)\{Y-M-\hat{E}_{00}^O(X)\}\\
&\quad+\frac{I(B=0)}{\hat{P}_{10}^O(X)}\frac{\hat{\pi}^O(X)}{\hat{\rho}_{0}^O(X)}\cdot I(A=1)\{Y-M-\hat{E}_{10}^O(X)\}-\frac{I(B=0)}{\hat{P}_{00}^O(X)}\frac{\hat{\pi}^O(X)}{\hat{\rho}_{0}^O(X)}\cdot I(A=0)\{Y-M-\hat{E}_{00}^O(X)\}\\
&\quad+\frac{\hat{\pi}^O(X)}{\hat{P}_{01}^O(X)-\hat{P}_{00}^O(X)}\Big\{I(A=0)\big\{\frac{I(B=1)}{\hat{P}_{01}^O(X)\hat{\rho}_1^O(X)}\{Y-M-\hat{E}_{01}^O(X)\}-\frac{I(B=0)}{\hat{P}_{00}^O(X)\hat{\rho}_0^O(X)}\{Y-M-\hat{E}_{00}^O(X)\}\big\}\\
&\quad\quad+\frac{\hat{E}_{01}^O(X)-\hat{E}_{00}^O(X)}{\hat{P}_{01}^O(X)-\hat{P}_{00}^O(X)}\big\{-\frac{I(B=1)}{\hat{\rho}_1^O(X)}\{I(A=0)-\hat{P}_{01}^O(X)\}+\frac{I(B=0)}{\hat{\rho}_0^O(X)}\{I(A=0)-\hat{P}_{00}^O(X)\}\big\}\Big\}\\
&\quad+I(A=1)\{\frac{\hat{E}_{01}^O(X)-\hat{E}_{00}^O(X)}{\hat{P}_{01}^O(X)-\hat{P}_{00}^O(X)}\}+M-\hat{M}_0^E(B,X)\Bigg\}\\
&\quad-\frac{1}{p(A=1,G=O)}\cdot\frac{I(A=0)I(G=E)}{1-\hat{P}_{AB}^E(X)}\cdot\frac{1-\hat{\tau}(B,X)}{\hat{\tau}(B,X)}\cdot\{M-\hat{M}_0^E(B,X)\}\Big]\\
&=\E\Big[\frac{I(G=O)}{p(A=1,G=O)}\Bigg\{I(A=1)I(B=1)\{\mathbb{E}[Y-M\mid A=1,B=1,X,G=O]-\hat{E}_{01}^O(X)-\hat{E}_{10}^O(X)+\hat{E}_{00}^O(X)\}\\
&\quad-I(A=0)I(B=1)\frac{\hat{P}_{11}^O(X)}{\hat{P}_{01}^O(X)}\{\underbrace{\mathbb{E}[Y-M\mid A=0,B=1,X,G=O]-\hat{E}_{01}^O(X)}_{=0}\}+I(A=1)\{\hat{E}_{10}^O(X)-\hat{E}_{00}^O(X)\}\\
&\quad-I(B=0)\frac{\hat{\rho}_{1}^O(X)\hat{P}_{11}^O(X)}{\hat{\rho}_{0}^O(X)\hat{P}_{10}^O(X)}I(A=1)\{\underbrace{\mathbb{E}[Y-M\mid A=1,B=0,X,G=O]-\hat{E}_{10}^O(X)}_{=0}\}\\
&\quad+I(B=0)\frac{\hat{\rho}_{1}^O(X)\hat{P}_{11}^O(X)}{\hat{\rho}_{0}^O(X)\hat{P}_{00}^O(X)}I(A=0)\{\underbrace{\mathbb{E}[Y-M\mid A=0,B=0,X,G=O]-\hat{E}_{00}^O(X)}_{=0}\}\\
&\quad+\frac{I(B=0)}{\hat{P}_{10}^O(X)}\frac{\hat{\pi}^O(X)}{\hat{\rho}_{0}^O(X)}\cdot I(A=1)\{\underbrace{\mathbb{E}[Y-M\mid A=1,B=0,X,G=O]-\hat{E}_{10}^O(X)}_{=0}\}\\
&\quad-\frac{I(B=0)}{\hat{P}_{00}^O(X)}\frac{\hat{\pi}^O(X)}{\hat{\rho}_{0}^O(X)}\cdot I(A=0)\{\underbrace{\mathbb{E}[Y-M\mid A=0,B=0,X,G=O]-\hat{E}_{00}^O(X)}_{=0}\}\\
&\quad+\frac{\hat{\pi}^O(X)}{\hat{P}_{01}^O(X)-\hat{P}_{00}^O(X)}\Big\{I(A=0)\big\{\frac{I(B=1)}{\hat{P}_{01}^O(X)\hat{\rho}_1^O(X)}\{\underbrace{\mathbb{E}[Y-M\mid A=0,B=1,X,G=O]-\hat{E}_{01}^O(X)}_{=0}\}\\
&\quad\quad-\frac{I(B=0)}{\hat{P}_{00}^O(X)\hat{\rho}_0^O(X)}\{\underbrace{\mathbb{E}[Y-M\mid A=0,B=0,X,G=O]-\hat{E}_{00}^O(X)}_{=0}\}\big\}\\
&\quad\quad+\frac{\hat{E}_{01}^O(X)-\hat{E}_{00}^O(X)}{\hat{P}_{01}^O(X)-\hat{P}_{00}^O(X)}\big\{-\frac{I(B=1)}{\hat{\rho}_1^O(X)}\{\underbrace{P(A=0\mid B=1,X,G=O)-\hat{P}_{01}^O(X)}_{=0}\}\\
&\quad\quad+\frac{I(B=0)}{\hat{\rho}_0^O(X)}\{\underbrace{P(A=0\mid B=0,X,G=O)-\hat{P}_{00}^O(X)}_{=0}\}\big\}\Big\}\\
&\quad+I(A=1)\{\frac{\hat{E}_{01}^O(X)-\hat{E}_{00}^O(X)}{\hat{P}_{01}^O(X)-\hat{P}_{00}^O(X)}\}+\mathbb{E}[M\mid G=O]-\hat{M}_0^E(B,X)\Bigg\}\\
&\quad-\frac{P(G=E\mid B,X)}{p(A=1,G=O)}\cdot\frac{P(A=0\mid B,X,G=E)}{1-\hat{P}_{AB}^E(X)}\cdot\frac{1-\hat{\tau}(B,X)}{\hat{\tau}(B,X)}\cdot\{\mathbb{E}[M\mid A=0,B,X,G=E]-\hat{M}_0^E(B,X)\}\Big]\\
&=\E\Big[\frac{I(G=O)}{p(A=1,G=O)}\Bigg\{I(A=1)I(B=1)\{E_{11}^O(X)-\hat{E}_{01}^O(X)-\hat{E}_{10}^O(X)+\hat{E}_{00}^O(X)\}+I(A=1)\{\hat{E}_{10}^O(X)-\hat{E}_{00}^O(X)\}\\
&\quad+I(A=1)\{\frac{\hat{E}_{01}^O(X)-\hat{E}_{00}^O(X)}{\hat{P}_{01}^O(X)-\hat{P}_{00}^O(X)}\}+\mathbb{E}[M\mid G=O]-\hat{M}_0^E(B,X)\Bigg\}\\
&\quad-\frac{1-\hat{\tau}(B,X)}{p(A=1,G=O)}\cdot\{\mathbb{E}[M\mid A=0,B,X,G=E]-\hat{M}_0^E(B,X)\}\Big]\\
&=\E\Big[\frac{I(G=O)}{p(A=1,G=O)}\Bigg\{I(A=1)I(B=1)\{E_{11}^O(X)-\hat{E}_{01}^O(X)-\hat{E}_{10}^O(X)+\hat{E}_{00}^O(X)\}+I(A=1)\{\hat{E}_{10}^O(X)-\hat{E}_{00}^O(X)\}\\
&\quad+I(A=1)\{\frac{\hat{E}_{01}^O(X)-\hat{E}_{00}^O(X)}{\hat{P}_{01}^O(X)-\hat{P}_{00}^O(X)}\}+\mathbb{E}[M\mid G=O]\Bigg\}-\frac{P(G=O\mid B,X)}{p(A=1,G=O)}\mathbb{E}[M\mid A=0,B,X,G=E]\Big]\\
&=\E\Big[\frac{I(G=O)}{p(A=1,G=O)}\Bigg\{I(A=1)I(B=1)\{E_{11}^O(X)-\hat{E}_{01}^O(X)-\hat{E}_{10}^O(X)+\hat{E}_{00}^O(X)\}+I(A=1)\{\hat{E}_{10}^O(X)-\hat{E}_{00}^O(X)\}\\
&\quad+I(A=1)\{\frac{\hat{E}_{01}^O(X)-\hat{E}_{00}^O(X)}{\hat{P}_{01}^O(X)-\hat{P}_{00}^O(X)}\}+\mathbb{E}[M\mid G=O]\Bigg\}-\frac{P(A=1,G=O\mid B,X)}{p(A=1,G=O)}\frac{\mathbb{E}[M\mid A=0,B,X,G=E]}{P(A=1\mid B,X,G=O)}\Big]\\
&=\E\Big[\frac{I(G=O)}{p(A=1,G=O)}\Bigg\{I(A=1)I(B=1)\{E_{11}^O(X)-\hat{E}_{01}^O(X)-\hat{E}_{10}^O(X)+\hat{E}_{00}^O(X)\}+I(A=1)\{\hat{E}_{10}^O(X)-\hat{E}_{00}^O(X)\}\\
&\quad+I(A=1)\{\frac{\hat{E}_{01}^O(X)-\hat{E}_{00}^O(X)}{\hat{P}_{01}^O(X)-\hat{P}_{00}^O(X)}\}+\mathbb{E}[M\mid G=O]\Bigg\}-\frac{P(B,X\mid A=1,G=O)}{P(B,X)}\frac{\mathbb{E}[M\mid A=0,B,X,G=E]}{P(A=1\mid B,X,G=O)}\Big]\\
&=\E\Big[\frac{I(G=O)}{p(A=1,G=O)}\Bigg\{I(A=1)I(B=1)\{E_{11}^O(X)-\hat{E}_{01}^O(X)-\hat{E}_{10}^O(X)+\hat{E}_{00}^O(X)\}+I(A=1)\{\hat{E}_{10}^O(X)-\hat{E}_{00}^O(X)\}\\
&\quad+I(A=1)\{\frac{\hat{E}_{01}^O(X)-\hat{E}_{00}^O(X)}{\hat{P}_{01}^O(X)-\hat{P}_{00}^O(X)}\}+\mathbb{E}[M\mid G=O]\Bigg\}\Big]-\E\Big[\frac{\mathbb{E}[M\mid A=0,B,X,G=E]}{P(A=1\mid B,X,G=O)}\Big|A=1,G=O\Big]\\
&=\theta_{\text{ETT}}
\end{align*}}
\endgroup

Finally, suppose the set $\{\rho_B^O(X),\rho_B^E(X),P_{AB}^O(X),\mathbb{E}[M\mid A,B,X,G=E]\}$ is correctly specified. We have
\begingroup
\allowdisplaybreaks
{\footnotesize \begin{align*}
&\E\Big[\frac{I(G=O)}{p(A=1,G=O)}\Bigg\{I(A=1)I(B=1)\{Y-M-\hat{E}_{01}^O(X)-\hat{E}_{10}^O(X)+\hat{E}_{00}^O(X)\}\\
&\quad-I(A=0)I(B=1)\frac{\hat{P}_{11}^O(X)}{\hat{P}_{01}^O(X)}\{Y-M-\hat{E}_{01}^O(X)\}+I(A=1)\{\hat{E}_{10}^O(X)-\hat{E}_{00}^O(X)\}\\
&\quad-I(B=0)\frac{\hat{\rho}_{1}^O(X)\hat{P}_{11}^O(X)}{\hat{\rho}_{0}^O(X)\hat{P}_{10}^O(X)}I(A=1)\{Y-M-\hat{E}_{10}^O(X)\}+I(B=0)\frac{\hat{\rho}_{1}^O(X)\hat{P}_{11}^O(X)}{\hat{\rho}_{0}^O(X)\hat{P}_{00}^O(X)}I(A=0)\{Y-M-\hat{E}_{00}^O(X)\}\\
&\quad+\frac{I(B=0)}{\hat{P}_{10}^O(X)}\frac{\hat{\pi}^O(X)}{\hat{\rho}_{0}^O(X)}\cdot\{I(A=1)\{Y-M-\hat{E}_{10}^O(X)\}-\frac{I(B=0)}{\hat{P}_{00}^O(X)}\frac{\hat{\pi}^O(X)}{\hat{\rho}_{0}^O(X)}\cdot I(A=0)\{Y-M-\hat{E}_{00}^O(X)\}\\
&\quad+\frac{\hat{\pi}^O(X)}{\hat{P}_{01}^O(X)-\hat{P}_{00}^O(X)}\Big\{I(A=0)\big\{\frac{I(B=1)}{\hat{P}_{01}^O(X)\hat{\rho}_1^O(X)}\{Y-M-\hat{E}_{01}^O(X)\}-\frac{I(B=0)}{\hat{P}_{00}^O(X)\hat{\rho}_0^O(X)}\{Y-M-\hat{E}_{00}^O(X)\}\big\}\\
&\quad\quad+\frac{\hat{E}_{01}^O(X)-\hat{E}_{00}^O(X)}{\hat{P}_{01}^O(X)-\hat{P}_{00}^O(X)}\big\{-\frac{I(B=1)}{\hat{\rho}_1^O(X)}\{I(A=0)-\hat{P}_{01}^O(X)\}+\frac{I(B=0)}{\hat{\rho}_0^O(X)}\{I(A=0)-\hat{P}_{00}^O(X)\}\big\}\Big\}\\
&\quad+I(A=1)\{\frac{\hat{E}_{01}^O(X)-\hat{E}_{00}^O(X)}{\hat{P}_{01}^O(X)-\hat{P}_{00}^O(X)}\}+M-\hat{M}_0^E(B,X)\Bigg\}\\
&\quad-\frac{1}{p(A=1,G=O)}\cdot\frac{I(A=0)I(G=E)}{1-\hat{P}_{AB}^E(X)}\cdot\frac{1-\hat{\tau}(B,X)}{\hat{\tau}(B,X)}\cdot\{M-\hat{M}_0^E(B,X)\}\Big]\\
&=\E\Big[\frac{I(G=O)}{p(A=1,G=O)}\Bigg\{I(A=1)I(B=1)E_{11}^O(X)-I(A=1)I(B=1)\{\hat{E}_{01}^O(X)+\hat{E}_{10}^O(X)-\hat{E}_{00}^O(X)\}\\
&\quad-I(A=0)I(B=1)\frac{\hat{P}_{11}^O(X)}{\hat{P}_{01}^O(X)}E_{01}^O(X)+I(A=0)I(B=1)\frac{\hat{P}_{11}^O(X)}{\hat{P}_{01}^O(X)}\hat{E}_{01}^O(X)+I(A=1)\{\hat{E}_{10}^O(X)-\hat{E}_{00}^O(X)\}\\
&\quad+I(B=0)\frac{\hat{\pi}^O(X)-\hat{\rho}_{1}^O(X)\hat{P}_{11}^O(X)}{\hat{\rho}_{0}^O(X)\hat{P}_{10}^O(X)}I(A=1)\{E_{10}^O(X)-\hat{E}_{10}^O(X)\}\\
&\quad-I(B=0)\frac{\hat{\pi}^O(X)-\hat{\rho}_{1}^O(X)\hat{P}_{11}^O(X)}{\hat{\rho}_{0}^O(X)\hat{P}_{00}^O(X)}I(A=0)\{E_{00}^O(X)-\hat{E}_{00}^O(X)\}\\
&\quad+\frac{\hat{\pi}^O(X)}{\hat{P}_{01}^O(X)-\hat{P}_{00}^O(X)}\Big\{\underbrace{\frac{P_{01}^O(X)\rho_1^O(X)}{\hat{P}_{01}^O(X)\hat{\rho}_1^O(X)}}_{=1}\{E_{01}^O(X)-\hat{E}_{01}^O(X)\}-\underbrace{\frac{P_{00}^O(X)\rho_0^O(X)}{\hat{P}_{00}^O(X)\hat{\rho}_0^O(X)}}_{=1}\{E_{00}^O(X)-\hat{E}_{00}^O(X)\}\\
&\quad\quad+\frac{\hat{E}_{01}^O(X)-\hat{E}_{00}^O(X)}{\hat{P}_{01}^O(X)-\hat{P}_{00}^O(X)}\big\{-\frac{I(B=1)}{\hat{\rho}_1^O(X)}\{\underbrace{P(A=0\mid B=1,X,G=O)-\hat{P}_{01}^O(X)}_{=0}\}\\
&\quad\quad+\frac{I(B=0)}{\hat{\rho}_0^O(X)}\{\underbrace{P(A=0\mid B=0,X,G=O)-\hat{P}_{00}^O(X)}_{=0}\}\big\}\Big\}\\
&\quad+\pi^O(X)\{\frac{\hat{E}_{01}^O(X)-\hat{E}_{00}^O(X)}{\hat{P}_{01}^O(X)-\hat{P}_{00}^O(X)}\}+\mathbb{E}[M\mid G=O]-\hat{M}_0^E(B,X)\Bigg\}\\
&\quad-\frac{1}{p(A=1,G=O)}\cdot\frac{I(A=0)I(G=E)}{1-\hat{P}_{AB}^E(X)}\cdot\frac{1-\hat{\tau}(B,X)}{\hat{\tau}(B,X)}\cdot\{\underbrace{\mathbb{E}[M\mid A=0,B,X,G=E]-\hat{M}_0^E(B,X)}_{=0}\}\Big]\\
&=\E\Big[\frac{I(G=O)}{p(A=1,G=O)}\Bigg\{I(A=1)I(B=1)E_{11}^O(X)-I(A=1)I(B=1)\{\hat{E}_{01}^O(X)+\hat{E}_{10}^O(X)-\hat{E}_{00}^O(X)\}\\
&\quad-I(A=0)I(B=1)\frac{\hat{P}_{11}^O(X)}{\hat{P}_{01}^O(X)}E_{01}^O(X)+I(A=0)I(B=1)\frac{\hat{P}_{11}^O(X)}{\hat{P}_{01}^O(X)}\hat{E}_{01}^O(X)+I(A=1)\{\hat{E}_{10}^O(X)-\hat{E}_{00}^O(X)\}\\
&\quad+(1-I(B=1))\big\{I(A=1)\{E_{10}^O(X)-\hat{E}_{10}^O(X)\}-I(A=0)I(B=0)\frac{\hat{P}_{10}^O(X)}{\hat{P}_{00}^O(X)}\{E_{00}^O(X)-\hat{E}_{00}^O(X)\}\\
&\quad+\hat{\pi}^O(X)\frac{E_{01}^O(X)-E_{00}^O(X)}{\hat{P}_{01}^O(X)-\hat{P}_{00}^O(X)}+\{\underbrace{\pi^O(X)-\hat{\pi}^O(X)}_{=0}\}\frac{\hat{E}_{01}^O(X)-\hat{E}_{00}^O(X)}{\hat{P}_{01}^O(X)-\hat{P}_{00}^O(X)}\\
&\quad+\mathbb{E}[M\mid G=O]-\hat{M}_0^E(B,X)\Bigg\}\Big]\\
&=\E\Big[\frac{I(G=O)}{p(A=1,G=O)}\Bigg\{I(A=1)I(B=1)\{E_{11}^O(X)-E_{01}^O(X)-E_{10}^O(X)+E_{00}^O(X)\}+I(A=1)\{E_{10}^O(X)-E_{00}^O(X)\}\\
&\quad+\hat{\pi}^O(X)\{\frac{E_{01}^O(X)-E_{00}^O(X)}{\hat{P}_{01}^O(X)-\hat{P}_{00}^O(X)}\}+\mathbb{E}[M\mid G=O]\Bigg\}-\frac{P(G=O\mid B,X)}{p(A=1,G=O)}\mathbb{E}[M\mid A=0,B,X,G=E]\Big]\\
&=\theta_{\text{ETT}}
\end{align*}}
\endgroup
\end{itemize}

\end{proof}

\begin{proof}[Proof of Proposition \ref{prop:DR:BSIV_ATE}]

\begin{itemize}
    \item $\psi_{\text{ATE}}^{\text{bsiv1}}$:
    First, suppose the set $\{P_{AB}^O(X),\mathbb{E}[M\mid B,X,G=O],\mathbb{E}[Y\mid B,X,G=O],\mathbb{E}[M\mid A,B,X,G=E]\}$ is correctly specified. We have
\begingroup
\allowdisplaybreaks
{\footnotesize \begin{align*}
&\E\Big[\frac{I(G=O)}{p(G=O)}\Bigg\{\frac{1}{\hat{P}_{11}^O(X)-\hat{P}_{10}^O(X)}
\frac{1}{\hat{\rho}_B^O(X)}\bigg\{I(B=1)\{Y-M-\hat{e}_1^O(X)\}
-I(B=0)\{Y-M-\hat{e}_0^O(X)\}\\
&\quad\quad\quad\quad\quad+\frac{\hat{e}_1^O(X)-\hat{e}_0^O(X)}{\hat{P}_{11}^O(X)-\hat{P}_{10}^O(X)}\big\{I(B=0)\{I(A=1)-\hat{P}_{10}^O(X)\}-I(B=1)\{I(A=1)-\hat{P}_{11}^O(X)\}\big\}\bigg\}\\
&\quad\quad\quad\quad\quad+\frac{\hat{e}_1^O(X)-\hat{e}_0^O(X)}{\hat{P}_{11}^O(X)-\hat{P}_{10}^O(X)}+\hat{M}_{1}^{E}(B,X)-\hat{M}_{0}^{E}(B,X)\Bigg\}\\
&\quad+\frac{I(G=E)}{p(G=O)}\cdot\frac{1-\hat{\tau}(B,X)}{\hat{\tau}(B,X)}\bigg\{\frac{I(A=1)}{\hat{P}_{1B}^E(X)}\cdot\{M-\hat{M}_{1}^{E}(B,X)\}-\frac{I(A=0)}{1-\hat{P}_{1B}^E(X)}\cdot\{M-\hat{M}_{0}^{E}(B,X)\}\bigg\}\Big]\\
&=\E\Big[\frac{I(G=O)}{p(G=O)}\Bigg\{\frac{1}{\hat{P}_{11}^O(X)-\hat{P}_{10}^O(X)}
\frac{1}{\hat{\rho}_B^O(X)}\bigg\{I(B=1)\{\underbrace{\E[Y-M\mid B=1,X,G=O]-\hat{e}_1^O(X)}_{=0}\}\\
&\quad\quad\quad\quad\quad\quad
-I(B=0)\{\underbrace{\E[Y-M\mid B=0,X,G=O]-\hat{e}_0^O(X)}_{=0}\}\\
&\quad\quad\quad\quad\quad+\frac{\hat{e}_1^O(X)-\hat{e}_0^O(X)}{\hat{P}_{11}^O(X)-\hat{P}_{10}^O(X)}\big\{I(B=0)\{\underbrace{P(A=1\mid B=0,X,G=O)-\hat{P}_{10}^O(X)}_{=0}\}\\
&\quad\quad\quad\quad\quad\quad-I(B=1)\{\underbrace{P(A=1\mid B=1,X,G=O)-\hat{P}_{11}^O(X)}_{=0}\}\big\}\bigg\}\\
&\quad\quad\quad\quad\quad+\frac{\hat{e}_1^O(X)-\hat{e}_0^O(X)}{\hat{P}_{11}^O(X)-\hat{P}_{10}^O(X)}+\hat{M}_{1}^{E}(B,X)-\hat{M}_{0}^{E}(B,X)\Bigg\}\\
&\quad+\frac{I(G=E)}{p(G=O)}\cdot\frac{1-\hat{\tau}(B,X)}{\hat{\tau}(B,X)}\bigg\{\frac{I(A=1)}{\hat{P}_{1B}^E(X)}\cdot\{\underbrace{\E[M\mid A=1,B,X,G=E]-\hat{M}_{1}^{E}(B,X)}_{=0}\}\\
&\quad\quad\quad\quad-\frac{I(A=0)}{1-\hat{P}_{1B}^E(X)}\cdot\{\underbrace{\E[M\mid A=0,B,X,G=E]-\hat{M}_{0}^{E}(B,X)}_{=0}\}\bigg\}\Big]\\
&=\E\Big[\frac{I(G=O)}{p(G=O)}\Bigg\{\frac{\hat{e}_1^O(X)-\hat{e}_0^O(X)}{\hat{P}_{11}^O(X)-\hat{P}_{10}^O(X)}+\hat{M}_{1}^{E}(B,X)-\hat{M}_{0}^{E}(B,X)\Bigg\}\Big]\\
&=\theta_{\text{ATE}}
\end{align*}}
\endgroup

Second, suppose the set $\{\tau(B,X),\rho_B^O(X),\rho_B^E(X),P_{AB}^O(X),P_{AB}^E(X)\}$ is correctly specified. We have
\begingroup
\allowdisplaybreaks
{\footnotesize \begin{align*}
&\E\Big[\frac{I(G=O)}{p(G=O)}\Bigg\{\frac{1}{\hat{P}_{11}^O(X)-\hat{P}_{10}^O(X)}
\frac{1}{\hat{\rho}_B^O(X)}\bigg\{I(B=1)\{Y-M-\hat{e}_1^O(X)\}
-I(B=0)\{Y-M-\hat{e}_0^O(X)\}\\
&\quad\quad\quad\quad\quad+\frac{\hat{e}_1^O(X)-\hat{e}_0^O(X)}{\hat{P}_{11}^O(X)-\hat{P}_{10}^O(X)}\big\{I(B=0)\{I(A=1)-\hat{P}_{10}^O(X)\}-I(B=1)\{I(A=1)-\hat{P}_{11}^O(X)\}\big\}\bigg\}\\
&\quad\quad\quad\quad\quad+\frac{\hat{e}_1^O(X)-\hat{e}_0^O(X)}{\hat{P}_{11}^O(X)-\hat{P}_{10}^O(X)}+\hat{M}_{1}^{E}(B,X)-\hat{M}_{0}^{E}(B,X)\Bigg\}\\
&\quad+\frac{I(G=E)}{p(G=O)}\cdot\frac{1-\hat{\tau}(B,X)}{\hat{\tau}(B,X)}\bigg\{\frac{I(A=1)}{\hat{P}_{1B}^E(X)}\cdot\{M-\hat{M}_{1}^{E}(B,X)\}-\frac{I(A=0)}{1-\hat{P}_{1B}^E(X)}\cdot\{M-\hat{M}_{0}^{E}(B,X)\}\bigg\}\Big]\\
&=\E\Big[\frac{I(G=O)}{p(G=O)}\Bigg\{\frac{1}{\hat{P}_{11}^O(X)-\hat{P}_{10}^O(X)}
\bigg\{\underbrace{\frac{I(B=1)}{\hat{\rho}_B^O(X)}}_{=1}\{e_1^O(X)-\hat{e}_1^O(X)\}
-\underbrace{\frac{I(B=0)}{\hat{\rho}_B^O(X)}}_{=1}\{e_0^O(X)-\hat{e}_0^O(X)\}\\
&\quad\quad\quad\quad\quad+\frac{\hat{e}_1^O(X)-\hat{e}_0^O(X)}{\hat{P}_{11}^O(X)-\hat{P}_{10}^O(X)}\big\{I(B=0)\{\underbrace{P(A=1\mid B=0,X,G=O)-\hat{P}_{10}^O(X)}_{=0}\}\\
&\quad\quad\quad\quad\quad\quad-I(B=1)\{\underbrace{P(A=1\mid B=1,X,G=O)-\hat{P}_{11}^O(X)}_{=0}\}\big\}\bigg\}\\
&\quad\quad\quad\quad\quad+\frac{\hat{e}_1^O(X)-\hat{e}_0^O(X)}{\hat{P}_{11}^O(X)-\hat{P}_{10}^O(X)}+\hat{M}_{1}^{E}(B,X)-\hat{M}_{0}^{E}(B,X)\Bigg\}\\
&\quad+\frac{1}{p(G=O)}\cdot\underbrace{\Big\{\underbrace{\frac{P(G=E\mid B,X)}{\hat{\tau}(B,X)}}_{=1}-I(G=E)\Big\}}_{=I(G=O)}\bigg\{\underbrace{\frac{P(A=1\mid B,X,G=E)}{\hat{P}_{1B}^E(X)}}_{=1}\cdot\{\E[M\mid A=1,B,X,G=E]-\hat{M}_{1}^{E}(B,X)\}\\
&\quad\quad\quad\quad\quad-\underbrace{\frac{P(A=0\mid B,X,G=E)}{1-\hat{P}_{1B}^E(X)}}_{=1}\cdot\{\E[M\mid A=0,B,X,G=E]-\hat{M}_{0}^{E}(B,X)\}\bigg\}\Big]\\
&=\E\Big[\frac{I(G=O)}{p(G=O)}\Bigg\{\frac{e_1^O(X)-\hat{e}_1^O(X)-e_0^O(X)+\hat{e}_0^O(X)}{\hat{P}_{11}^O(X)-\hat{P}_{10}^O(X)}+\frac{\hat{e}_1^O(X)-\hat{e}_0^O(X)}{\hat{P}_{11}^O(X)-\hat{P}_{10}^O(X)}\\
&\quad\quad\quad+\hat{M}_{1}^{E}(B,X)-\hat{M}_{0}^{E}(B,X)+\{M-\hat{M}_{1}^{E}(B,X)\}-\{M-\hat{M}_{0}^{E}(B,X)\}\bigg\}\Big]\\
&=\E\Big[\frac{I(G=O)}{p(G=O)}\Bigg\{\frac{e_1^O(X)-e_0^O(X)}{\hat{P}_{11}^O(X)-\hat{P}_{10}^O(X)}+\hat{M}_{1}^{E}(B,X)-\hat{M}_{0}^{E}(B,X)\\
&\quad\quad\quad+\{\E[M\mid A=1,B,X,G=E]-\hat{M}_{1}^{E}(B,X)\}-\{\E[M\mid A=0,B,X,G=E]-\hat{M}_{0}^{E}(B,X)\}\bigg\}\Big]\\
&=\E\Big[\frac{I(G=O)}{p(G=O)}\Bigg\{\frac{e_1^O(X)-e_0^O(X)}{\hat{P}_{11}^O(X)-\hat{P}_{10}^O(X)}+\E[M\mid A=1,B,X,G=E]-\E[M\mid A=0,B,X,G=E]\bigg\}\Big]\\
&=\theta_{\text{ATE}}
\end{align*}}
\endgroup

Parts 3 and 4 can be proven by combining the techniques used in parts 1 and 2, and thus we omit here.

    \item $\psi_{\text{ATE}}^{\text{bsiv2}}$:
    First, suppose the pair $\{P_{AB}^O(X),\mathbb{E}[M\mid A,B,X,G=O],\mathbb{E}[Y\mid A,B,X,G=O],\mathbb{E}[M\mid A,B,X,G=E]\}$ is correctly specified. We have
    \begingroup
\allowdisplaybreaks
{\footnotesize \begin{align*}
&\E\Big[\frac{I(G=O)}{p(G=O)}\Bigg\{\frac{I(A=1)I(B=1)}{\hat{P}_{11}^O(X)}\{Y-M-\hat{E}_{11}^O(X)\}-\frac{I(A=0)I(B=1)}{\hat{P}_{01}^O(X)}\{Y-M-\hat{E}_{01}^O(X)\}\\
    &\ \ \ \ \ \ \ \ \ \ \ \ \ \ \ \ \ \ -\frac{I(A=0)I(B=0)}{\hat{P}_{00}^O(X)}\{Y-M-\hat{E}_{00}^O(X)\}+\frac{I(A=1)I(B=0)}{\hat{P}_{10}^O(X)}\{Y-M-\hat{E}_{10}^O(X)\}\\
    &\ \ \ \ \ \ \ \ \ \ \ \ \ \ \ \ \ \ +I(B=1)\{\hat{E}_{11}^O(X)-\hat{E}_{01}^O(X)-\hat{E}_{10}^O(X)+\hat{E}_{00}^O(X)\}+\hat{E}_{10}^O(X)-\hat{E}_{00}^O(X)\\
    &\ \ \ \ \ \ \ \ \ \ \ \ \ \ \ \ \ \ -\frac{\{\hat{E}_{01}^O(X)-\hat{E}_{00}^O(X)\}\hat{P}_{1B}^O(X)+\{\hat{E}_{11}^O(X)-\hat{E}_{10}^O(X)\}(1-\hat{P}_{1B}^O(X))}{\hat{P}_{01}^O(X)-\hat{P}_{00}^O(X)}\\
    &\ \ \ \ \ \ \ \ \ \ \ \ \ \ \ \ \ \ +\hat{M}_1^E(B,X)-\hat{M}_0^E(B,X)\\
    &\ \ \ \ \ \ \ \ \ \ \ \ \ \ \ \ \ \ +\frac{1}{\hat{P}_{01}^O(X)-\hat{P}_{00}^O(X)}\bigg\{-\frac{I(A=1)I(B=1)\hat{P}_{01}^O(X)}{\hat{P}_{11}^O(X)\hat{\rho}_1^O(X)}\{Y-M-\hat{E}_{11}^O(X)\}\\
    &~~~~~~~~~~~~~~~~~~~~~~~~~~~-\frac{I(A=0)I(B=1)\hat{P}_{11}^O(X)}{\hat{P}_{01}^O(X)\hat{\rho}_1^O(X)}\{Y-M-\hat{E}_{01}^O(X)\}\\
    &~~~~~~~~~~~~~~~~~~~~~~~~~~~+\frac{I(A=1)I(B=0)\hat{P}_{00}^O(X)}{\hat{P}_{10}^O(X)\hat{\rho}_0^O(X)}\{Y-M-\hat{E}_{10}^O(X)\}\\
    &~~~~~~~~~~~~~~~~~~~~~~~~~~~+\frac{I(A=0)I(B=0)\hat{P}_{10}^O(X)}{\hat{P}_{00}^O(X)\hat{\rho}_0^O(X)}\{Y-M-\hat{E}_{00}^O(X)\}\\
    &~~~~~~~~~~~~~~~~~~~~~~~~~~~+\frac{\hat{E}_{11}^O(X)-\hat{E}_{01}^O(X)-\hat{E}_{10}^O(X)+\hat{E}_{00}^O(X)}{\hat{\rho}_B^O(X)}\big\{I(A=1)-\hat{P}_{1B}^O(X)\big\}\\
    &~~~~~~~~~~~~~~~~~~~~~~~~~~~ +\frac{\{\hat{E}_{11}^O(X)-\hat{E}_{01}^O(X)-\hat{E}_{10}^O(X)+\hat{E}_{00}^O(X)\}\hat{P}_{1B}^O(X)}{\hat{P}_{01}^O(X)-\hat{P}_{00}^O(X)}\\
    &~~~~~~~~~~~~~~~~~~~~~\cdot\big\{-\frac{I(B=1)}{\hat{\rho}_1^O(X)}\{I(A=0)-\hat{P}_{01}^O(X)\}+\frac{I(B=0)}{\hat{\rho}_0^O(X)}\{I(A=0)-\hat{P}_{00}^O(X)\}\big\}\bigg\}\Bigg\}\\
    &+\frac{I(G=E)}{p(G=O)}\frac{1-\hat{\tau}(B,X)}{\hat{\tau}(B,X)}\bigg\{\frac{I(A=1)}{\hat{P}_{1B}^E(X)}\{M-\hat{M}_1^E(B,X)\}-\frac{I(A=0)}{1-\hat{P}_{1B}^E(X)}\{M-\hat{M}_0^E(B,X)\}\bigg\}\Big]\\
    &=\E\Big[\frac{I(G=O)}{p(G=O)}\Bigg\{\frac{I(A=1)I(B=1)}{\hat{P}_{11}^O(X)}\{\underbrace{\E[Y-M\mid A=1,B=1,X,G=O]-\hat{E}_{11}^O(X)}_{=0}\}\\
    &\ \ \ \ \ \ \ \ \ \ \ \ \ \ \ \ \ \ -\frac{I(A=0)I(B=1)}{\hat{P}_{01}^O(X)}\{\underbrace{\E[Y-M\mid A=0,B=1,X,G=O]-\hat{E}_{01}^O(X)}_{=0}\}\\
    &\ \ \ \ \ \ \ \ \ \ \ \ \ \ \ \ \ \ -\frac{I(A=0)I(B=0)}{\hat{P}_{00}^O(X)}\{\underbrace{\E[Y-M\mid A=0,B=0,X,G=O]-\hat{E}_{00}^O(X)}_{=0}\}\\
    &\ \ \ \ \ \ \ \ \ \ \ \ \ \ \ \ \ \ +\frac{I(A=1)I(B=0)}{\hat{P}_{10}^O(X)}\{\underbrace{\E[Y-M\mid A=1,B=0,X,G=O]-\hat{E}_{10}^O(X)}_{=0}\}\\
    &\ \ \ \ \ \ \ \ \ \ \ \ \ \ \ \ \ \ +I(B=1)\{\hat{E}_{11}^O(X)-\hat{E}_{01}^O(X)-\hat{E}_{10}^O(X)+\hat{E}_{00}^O(X)\}+\hat{E}_{10}^O(X)-\hat{E}_{00}^O(X)\\
    &\ \ \ \ \ \ \ \ \ \ \ \ \ \ \ \ \ \ -\frac{\{\hat{E}_{01}^O(X)-\hat{E}_{00}^O(X)\}\hat{P}_{1B}^O(X)+\{\hat{E}_{11}^O(X)-\hat{E}_{10}^O(X)\}(1-\hat{P}_{1B}^O(X))}{\hat{P}_{01}^O(X)-\hat{P}_{00}^O(X)}\\
    &\ \ \ \ \ \ \ \ \ \ \ \ \ \ \ \ \ \ +\hat{M}_1^E(B,X)-\hat{M}_0^E(B,X)\\
    &\ \ \ \ \ \ \ \ \ \ \ \ \ \ \ \ \ \ +\frac{1}{\hat{P}_{01}^O(X)-\hat{P}_{00}^O(X)}\bigg\{-\frac{I(A=1)I(B=1)\hat{P}_{01}^O(X)}{\hat{P}_{11}^O(X)\hat{\rho}_1^O(X)}\{\underbrace{\E[Y-M\mid A=1,B=1,X,G=O]-\hat{E}_{11}^O(X)}_{=0}\}\\
    &~~~~~~~~~~~~~~~~~~~~~~~~~~~-\frac{I(A=0)I(B=1)\hat{P}_{11}^O(X)}{\hat{P}_{01}^O(X)\hat{\rho}_1^O(X)}\{\underbrace{\E[Y-M\mid A=0,B=1,X,G=O]-\hat{E}_{01}^O(X)}_{=0}\}\\
    &~~~~~~~~~~~~~~~~~~~~~~~~~~~+\frac{I(A=1)I(B=0)\hat{P}_{00}^O(X)}{\hat{P}_{10}^O(X)\hat{\rho}_0^O(X)}\{\underbrace{\E[Y-M\mid A=1,B=0,X,G=O]-\hat{E}_{10}^O(X)}_{=0}\}\\
    &~~~~~~~~~~~~~~~~~~~~~~~~~~~+\frac{I(A=0)I(B=0)\hat{P}_{10}^O(X)}{\hat{P}_{00}^O(X)\hat{\rho}_0^O(X)}\{\underbrace{\E[Y-M\mid A=0,B=0,X,G=O]-\hat{E}_{00}^O(X)}_{=0}\}\\
    &~~~~~~~~~~~~~~~~~~~~~~~~~~~+\frac{\hat{E}_{11}^O(X)-\hat{E}_{01}^O(X)-\hat{E}_{10}^O(X)+\hat{E}_{00}^O(X)}{\hat{\rho}_B^O(X)}\big\{\underbrace{P(A=1\mid B,X,G=O)-\hat{P}_{1B}^O(X)}_{=0}\big\}\\
    &~~~~~~~~~~~~~~~~~~~~~~~~~~~ +\frac{\{\hat{E}_{11}^O(X)-\hat{E}_{01}^O(X)-\hat{E}_{10}^O(X)+\hat{E}_{00}^O(X)\}\hat{P}_{1B}^O(X)}{\hat{P}_{01}^O(X)-\hat{P}_{00}^O(X)}\\
    &~~~~~~~~~~~~~~~~~~~~~\cdot\big\{-\frac{I(B=1)}{\hat{\rho}_1^O(X)}\{\underbrace{P(A=0\mid B=1,X,G=O)-\hat{P}_{01}^O(X)}_{=0}\}\\
    &~~~~~~~~~~~~~~~~~~~~~+\frac{I(B=0)}{\hat{\rho}_0^O(X)}\{\underbrace{P(A=0\mid B=0,X,G=O)-\hat{P}_{00}^O(X)}_{=0}\}\big\}\bigg\}\Bigg\}\\
    &+\frac{I(G=E)}{p(G=O)}\frac{1-\hat{\tau}(B,X)}{\hat{\tau}(B,X)}\bigg\{\frac{I(A=1)}{\hat{P}_{1B}^E(X)}\{\underbrace{\E[M\mid A=1,B,X,G=E]-\hat{M}_1^E(B,X)}_{=0}\}\\
    &~~~~~~~~~~~~~~~~~~~~~~~~~~~ -\frac{I(A=0)}{1-\hat{P}_{1B}^E(X)}\{\underbrace{\E[M\mid A=0,B,X,G=E]-\hat{M}_0^E(B,X)}_{=0}\}\bigg\}\Big]\\
    &=\E\Big[\frac{I(G=O)}{p(G=O)}\Bigg\{I(B=1)\{\hat{E}_{11}^O(X)-\hat{E}_{01}^O(X)-\hat{E}_{10}^O(X)+\hat{E}_{00}^O(X)\}+\hat{E}_{10}^O(X)-\hat{E}_{00}^O(X)\\
    &\ \ \ \ \ \ \ \ \ \ \ \ \ \ \ \ \ \ -\frac{\{\hat{E}_{01}^O(X)-\hat{E}_{00}^O(X)\}\hat{P}_{1B}^O(X)+\{\hat{E}_{11}^O(X)-\hat{E}_{10}^O(X)\}(1-\hat{P}_{1B}^O(X))}{\hat{P}_{01}^O(X)-\hat{P}_{00}^O(X)}\\
    &\ \ \ \ \ \ \ \ \ \ \ \ \ \ \ \ \ \ +\hat{M}_1^E(B,X)-\hat{M}_0^E(B,X)\Bigg\}\Big]\\
&=\theta_{\text{ATE}}
\end{align*}}
\endgroup

Second, suppose the set $\{\tau(B,X),\rho_B^O(X),\rho_B^E(X),P_{AB}^O(X),P_{AB}^E(X)\}$ is correctly specified. We have
    \begingroup
\allowdisplaybreaks
{\footnotesize \begin{align*}
&\E\Big[\frac{I(G=O)}{p(G=O)}\Bigg\{\frac{I(A=1)I(B=1)}{\hat{P}_{11}^O(X)}\{Y-M-\hat{E}_{11}^O(X)\}-\frac{I(A=0)I(B=1)}{\hat{P}_{01}^O(X)}\{Y-M-\hat{E}_{01}^O(X)\}\\
    &\ \ \ \ \ \ \ \ \ \ \ \ \ \ \ \ \ \ -\frac{I(A=0)I(B=0)}{\hat{P}_{00}^O(X)}\{Y-M-\hat{E}_{00}^O(X)\}+\frac{I(A=1)I(B=0)}{\hat{P}_{10}^O(X)}\{Y-M-\hat{E}_{10}^O(X)\}\\
    &\ \ \ \ \ \ \ \ \ \ \ \ \ \ \ \ \ \ +I(B=1)\{\hat{E}_{11}^O(X)-\hat{E}_{01}^O(X)-\hat{E}_{10}^O(X)+\hat{E}_{00}^O(X)\}+\hat{E}_{10}^O(X)-\hat{E}_{00}^O(X)\\
    &\ \ \ \ \ \ \ \ \ \ \ \ \ \ \ \ \ \ -\frac{\{\hat{E}_{01}^O(X)-\hat{E}_{00}^O(X)\}\hat{P}_{1B}^O(X)+\{\hat{E}_{11}^O(X)-\hat{E}_{10}^O(X)\}(1-\hat{P}_{1B}^O(X))}{\hat{P}_{01}^O(X)-\hat{P}_{00}^O(X)}\\
    &\ \ \ \ \ \ \ \ \ \ \ \ \ \ \ \ \ \ +\hat{M}_1^E(B,X)-\hat{M}_0^E(B,X)\\
    &\ \ \ \ \ \ \ \ \ \ \ \ \ \ \ \ \ \ +\frac{1}{\hat{P}_{01}^O(X)-\hat{P}_{00}^O(X)}\bigg\{-\frac{I(A=1)I(B=1)\hat{P}_{01}^O(X)}{\hat{P}_{11}^O(X)\hat{\rho}_1^O(X)}\{Y-M-\hat{E}_{11}^O(X)\}\\
    &~~~~~~~~~~~~~~~~~~~~~~~~~~~-\frac{I(A=0)I(B=1)\hat{P}_{11}^O(X)}{\hat{P}_{01}^O(X)\hat{\rho}_1^O(X)}\{Y-M-\hat{E}_{01}^O(X)\}\\
    &~~~~~~~~~~~~~~~~~~~~~~~~~~~+\frac{I(A=1)I(B=0)\hat{P}_{00}^O(X)}{\hat{P}_{10}^O(X)\hat{\rho}_0^O(X)}\{Y-M-\hat{E}_{10}^O(X)\}\\
    &~~~~~~~~~~~~~~~~~~~~~~~~~~~+\frac{I(A=0)I(B=0)\hat{P}_{10}^O(X)}{\hat{P}_{00}^O(X)\hat{\rho}_0^O(X)}\{Y-M-\hat{E}_{00}^O(X)\}\\
    &~~~~~~~~~~~~~~~~~~~~~~~~~~~+\frac{\hat{E}_{11}^O(X)-\hat{E}_{01}^O(X)-\hat{E}_{10}^O(X)+\hat{E}_{00}^O(X)}{\hat{\rho}_B^O(X)}\big\{I(A=1)-\hat{P}_{1B}^O(X)\big\}\\
    &~~~~~~~~~~~~~~~~~~~~~~~~~~~ +\frac{\{\hat{E}_{11}^O(X)-\hat{E}_{01}^O(X)-\hat{E}_{10}^O(X)+\hat{E}_{00}^O(X)\}\hat{P}_{1B}^O(X)}{\hat{P}_{01}^O(X)-\hat{P}_{00}^O(X)}\\
    &~~~~~~~~~~~~~~~~~~~~~\cdot\big\{-\frac{I(B=1)}{\hat{\rho}_1^O(X)}\{I(A=0)-\hat{P}_{01}^O(X)\}+\frac{I(B=0)}{\hat{\rho}_0^O(X)}\{I(A=0)-\hat{P}_{00}^O(X)\}\big\}\bigg\}\Bigg\}\\
    &+\frac{I(G=E)}{p(G=O)}\frac{1-\hat{\tau}(B,X)}{\hat{\tau}(B,X)}\bigg\{\frac{I(A=1)}{\hat{P}_{1B}^E(X)}\{M-\hat{M}_1^E(B,X)\}-\frac{I(A=0)}{1-\hat{P}_{1B}^E(X)}\{M-\hat{M}_0^E(B,X)\}\bigg\}\Big]\\
    &=\E\Big[\frac{I(G=O)}{p(G=O)}\Bigg\{\underbrace{\frac{P(A=1\mid B=1,X,G=O)}{\hat{P}_{11}^O(X)}}_{=1}I(B=1)\{E_{11}^O(X)-\hat{E}_{11}^O(X)\}\\
    &\ \ \ \ \ \ \ \ \ \ \ \ \ \ \ \ \ \ -\underbrace{\frac{P(A=0\mid B=1,X,G=O)}{\hat{P}_{01}^O(X)}}_{=1}I(B=1)\{E_{01}^O(X)-\hat{E}_{01}^O(X)\}\\
    &\ \ \ \ \ \ \ \ \ \ \ \ \ \ \ \ \ \ -\underbrace{\frac{P(A=0\mid B=0,X,G=O)}{\hat{P}_{00}^O(X)}}_{=1}I(B=0)\{E_{00}^O(X)-\hat{E}_{00}^O(X)\}\\
    &\ \ \ \ \ \ \ \ \ \ \ \ \ \ \ \ \ \ +\underbrace{\frac{P(A=1\mid B=0,X,G=O)}{\hat{P}_{10}^O(X)}}_{=1}I(B=0)\{E_{10}^O(X)-\hat{E}_{10}^O(X)\}\\
    &\ \ \ \ \ \ \ \ \ \ \ \ \ \ \ \ \ \ +I(B=1)\{\hat{E}_{11}^O(X)-\hat{E}_{01}^O(X)-\hat{E}_{10}^O(X)+\hat{E}_{00}^O(X)\}+\hat{E}_{10}^O(X)-\hat{E}_{00}^O(X)\\
    &\ \ \ \ \ \ \ \ \ \ \ \ \ \ \ \ \ \ -\frac{\{\hat{E}_{01}^O(X)-\hat{E}_{00}^O(X)\}\hat{P}_{1B}^O(X)+\{\hat{E}_{11}^O(X)-\hat{E}_{10}^O(X)\}(1-\hat{P}_{1B}^O(X))}{\hat{P}_{01}^O(X)-\hat{P}_{00}^O(X)}\\
    &\ \ \ \ \ \ \ \ \ \ \ \ \ \ \ \ \ \ +\hat{M}_1^E(B,X)-\hat{M}_0^E(B,X)\\
    &\ \ \ \ \ \ \ \ \ \ \ \ \ \ \ \ \ \ +\frac{1}{\hat{P}_{01}^O(X)-\hat{P}_{00}^O(X)}\bigg\{-\frac{P(A=1\mid B=1,X,G=O)P(B=1\mid X,G=O)\hat{P}_{01}^O(X)}{\hat{P}_{11}^O(X)\hat{\rho}_1^O(X)}\{E_{11}^O(X)-\hat{E}_{11}^O(X)\}\\
    &~~~~~~~~~~~~~~~~~~~~~~~~~~~-\frac{P(A=0\mid B=1,X,G=O)P(B=1\mid X,G=O)\hat{P}_{11}^O(X)}{\hat{P}_{01}^O(X)\hat{\rho}_1^O(X)}\{E_{01}^O(X)-\hat{E}_{01}^O(X)\}\\
    &~~~~~~~~~~~~~~~~~~~~~~~~~~~+\frac{P(A=1\mid B=0,X,G=O)P(B=0\mid X,G=O)\hat{P}_{00}^O(X)}{\hat{P}_{10}^O(X)\hat{\rho}_0^O(X)}\{E_{10}^O(X)-\hat{E}_{10}^O(X)\}\\
    &~~~~~~~~~~~~~~~~~~~~~~~~~~~+\frac{P(A=0\mid B=0,X,G=O)P(B=0\mid X,G=O)\hat{P}_{10}^O(X)}{\hat{P}_{00}^O(X)\hat{\rho}_0^O(X)}\{E_{00}^O(X)-\hat{E}_{00}^O(X)\}\\
    &~~~~~~~~~~~~~~~~~~~~~~~~~~~+\frac{\hat{E}_{11}^O(X)-\hat{E}_{01}^O(X)-\hat{E}_{10}^O(X)+\hat{E}_{00}^O(X)}{\hat{\rho}_B^O(X)}\big\{\underbrace{P(A=1\mid B,X,G=O)-\hat{P}_{1B}^O(X)}_{=0}\big\}\\
    &~~~~~~~~~~~~~~~~~~~~~~~~~~~ +\frac{\{\hat{E}_{11}^O(X)-\hat{E}_{01}^O(X)-\hat{E}_{10}^O(X)+\hat{E}_{00}^O(X)\}\hat{P}_{1B}^O(X)}{\hat{P}_{01}^O(X)-\hat{P}_{00}^O(X)}\\
    &~~~~~~~~~~~~~~~~~~~~~\cdot\big\{-\frac{I(B=1)}{\hat{\rho}_1^O(X)}\{\underbrace{P(A=0\mid B=1,X,G=O)-\hat{P}_{01}^O(X)}_{=0}\}\\
    &~~~~~~~~~~~~~~~~~~~~~~~~~~~+\frac{I(B=0)}{\hat{\rho}_0^O(X)}\{\underbrace{P(A=0\mid B=0,X,G=O)-\hat{P}_{00}^O(X)}_{=0}\}\big\}\bigg\}\Bigg\}\\
    &+\frac{1}{p(G=O)}\underbrace{\big\{\underbrace{\frac{P(G=E\mid B,X)}{\hat{\tau}(B,X)}}_{=1}-I(G=E)\big\}}_{=I(G=O)}\bigg\{\underbrace{\frac{P(A=1\mid B,X,G=E)}{\hat{P}_{1B}^E(X)}}_{=1}\{\E[M\mid A=1,B,X,G=E]-\hat{M}_1^E(B,X)\}\\
    &~~~~~~~~~~~~~~~~~~~~~-\underbrace{\frac{P(A=0\mid B,X,G=E)}{1-\hat{P}_{1B}^E(X)}}_{=1}\{\E[M\mid A=0,B,X,G=E]-\hat{M}_0^E(B,X)\}\bigg\}\Big]\\
    &=\E\Big[\frac{I(G=O)}{p(G=O)}\Bigg\{I(B=1)\big\{\{E_{11}^O(X)-\hat{E}_{11}^O(X)\}-\{E_{01}^O(X)-\hat{E}_{01}^O(X)\}\big\}\\
    &\ \ \ \ \ \ \ \ \ \ \ \ \ \ \ \ \ \ -(1-I(B=1)\big\{\{E_{00}^O(X)-\hat{E}_{00}^O(X)\}-\{E_{10}^O(X)-\hat{E}_{10}^O(X)\}\big\}\\
    &\ \ \ \ \ \ \ \ \ \ \ \ \ \ \ \ \ \ +I(B=1)\{\hat{E}_{11}^O(X)-\hat{E}_{01}^O(X)-\hat{E}_{10}^O(X)+\hat{E}_{00}^O(X)\}+\hat{E}_{10}^O(X)-\hat{E}_{00}^O(X)\\
    &\ \ \ \ \ \ \ \ \ \ \ \ \ \ \ \ \ \ -\frac{\{\hat{E}_{01}^O(X)-\hat{E}_{00}^O(X)\}\hat{P}_{1B}^O(X)+\{\hat{E}_{11}^O(X)-\hat{E}_{10}^O(X)\}(1-\hat{P}_{1B}^O(X))}{\hat{P}_{01}^O(X)-\hat{P}_{00}^O(X)}\\
    &\ \ \ \ \ \ \ \ \ \ \ \ \ \ \ \ \ \ +\hat{M}_1^E(B,X)-\hat{M}_0^E(B,X)\\
    &\ \ \ \ \ \ \ \ \ \ \ \ \ \ \ \ \ \ +\frac{1}{\hat{P}_{01}^O(X)-\hat{P}_{00}^O(X)}\bigg\{-\hat{P}_{01}^O(X)\{E_{11}^O(X)-\hat{E}_{11}^O(X)\}-\hat{P}_{11}^O(X)\{E_{01}^O(X)-\hat{E}_{01}^O(X)\}\\
    &~~~~~~~~~~~~~~~~~~~~~~~~~~~+\hat{P}_{00}^O(X)\{E_{10}^O(X)-\hat{E}_{10}^O(X)\}+\hat{P}_{10}^O(X)\{E_{00}^O(X)-\hat{E}_{00}^O(X)\}\bigg\}\Bigg\}\\
    &+\frac{I(G=O)}{p(G=O)}\bigg\{\{\E[M\mid A=1,B,X,G=E]-\hat{M}_1^E(B,X)\}-\{\E[M\mid A=0,B,X,G=E]-\hat{M}_0^E(B,X)\}\bigg\}\Big]\\
    &=\E\Big[\frac{I(G=O)}{p(G=O)}\Bigg\{I(B=1)\Big\{\{E_{11}^O(X)-\hat{E}_{11}^O(X)\}-\{E_{01}^O(X)-\hat{E}_{01}^O(X)\}\\
    &\ \ \ \ \ \ \ \ \ \ \ \ \ \ \ \ +\{E_{00}^O(X)-\hat{E}_{00}^O(X)\}-\{E_{10}^O(X)-\hat{E}_{10}^O(X)\}\Big\}\\
    &\ \ \ \ \ \ \ -\{E_{00}^O(X)-\hat{E}_{00}^O(X)\}+\{E_{10}^O(X)-\hat{E}_{10}^O(X)\}\\
    &\ \ \ \ \ \ \ \ \ \ \ \ \ \ \ \ \ \ +I(B=1)\{\hat{E}_{11}^O(X)-\hat{E}_{01}^O(X)-\hat{E}_{10}^O(X)+\hat{E}_{00}^O(X)\}+\hat{E}_{10}^O(X)-\hat{E}_{00}^O(X)\\
    &\ \ \ \ \ \ \ \ \ \ \ \ \ \ \ \ \ \ -\frac{\{\hat{E}_{01}^O(X)-\hat{E}_{00}^O(X)\}\hat{P}_{1B}^O(X)+\{\hat{E}_{11}^O(X)-\hat{E}_{10}^O(X)\}(1-\hat{P}_{1B}^O(X))}{\hat{P}_{01}^O(X)-\hat{P}_{00}^O(X)}\\
    &\ \ \ \ \ \ \ \ \ \ \ \ \ \ \ \ \ \ +\frac{1}{\hat{P}_{01}^O(X)-\hat{P}_{00}^O(X)}\bigg\{-(1-\hat{P}_{0B}^O(X))\{E_{11}^O(X)-\hat{E}_{11}^O(X)\}-\hat{P}_{1B}^O(X)\{E_{01}^O(X)-\hat{E}_{01}^O(X)\}\\
    &~~~~~~~~~~~~~~~~~~~~~~~~~~~+(1-\hat{P}_{0B}^O(X))\{E_{10}^O(X)-\hat{E}_{10}^O(X)\}+\hat{P}_{1B}^O(X)\{E_{00}^O(X)-\hat{E}_{00}^O(X)\}\bigg\}\\
    &~~~~~~~~~~~~~~~~~~~~~~~~~~~+\E[M\mid A=1,B,X,G=E]-\E[M\mid A=0,B,X,G=E]\Bigg\}\Big]\\
    &=\E\Big[\frac{I(G=O)}{p(G=O)}\Bigg\{I(B=1)\{E_{11}^O(X)-E_{01}^O(X)-E_{10}^O(X)+E_{00}^O(X)\}+E_{10}^O(X)-E_{00}^O(X)\\
    &\ \ \ \ \ \ \ \ \ \ \ \ \ \ \ \ \ \ -\frac{\hat{P}_{1B}^O(X)(E_{01}^O(X)-E_{00}^O(X))+(1-\hat{P}_{0B}^O(X))(E_{11}^O(X)-E_{10}^O(X))}{\hat{P}_{01}^O(X)-\hat{P}_{00}^O(X)}\\
    &~~~~~~~~~~~~~~~~~~~~~~~~~~~+\E[M\mid A=1,B,X,G=E]-\E[M\mid A=0,B,X,G=E]\Bigg\}\Big]\\
&=\theta_{\text{ATE}}
\end{align*}}
\endgroup

Parts 3 and 4 can be proven by combining the techniques used in parts 1 and 2, and thus we omit here.
    \end{itemize}
    \end{proof}

\begin{proof}[Proof of Theorem \ref{thm:IF:proximal-1_ett}]
    Define
    \begin{align*}
        &\psi_1=\frac{\mathbb{E}[Y\mid G=O]}{p(A=1\mid G=O)}\\
        &\psi_2=\frac{\mathbb{E}[\mathbb{E}[h(M,0,X)\mid A=0,X,G=E]\mid G=O]}{p(A=1\mid G=O)}
    \end{align*}

    We use the notation $\partial_tf(t)$ to denote $\frac{\partial f(t)}{\partial t}\big|_{t=0}$.
For parameter $\psi$, let $\psi_t$ be the parameter under a regular parametric sub-model indexed by $t$, that includes the ground-truth model at $t=0$. Let $V$ be the set of all observed variable.
In order to obtain an influence function, we need to find a random variable $\Gamma$ with mean zero, that satisfies
\[
\partial_t\psi_t=\E[\Gamma S(V)],
\]
where $S(V)=\partial_t\log p_t(V)$.

For $\psi_1$, we have,
\begin{align*}
    \partial_t\psi_{1t}&=\partial_t\mathbb{E}\Big[\frac{I(G=O)}{p(A=1,G=O)}\{Y-I(A=1)\psi_1\}S(V)\Big]
\end{align*}

Therefore,
\begin{align}\label{eq:IF:proxy-ett1}
    \frac{I(G=O)}{p(G=O)p(A=1\mid G=O)}\{Y-I(A=1)\psi_1\}
\end{align}
is the influence function of $\psi_1$.

For $\psi_2$, note that
\begin{equation}
\label{eq:IF:proxy-ett2}
\begin{aligned}
    \partial_t\psi_{2t}&=\partial_t \sum_{m,x} \frac{1}{p_t(A=1\mid G=O)}h_t(m,0,x)p_t(m\mid A=0,x,G=E)p_t(x\mid G=O)\\
&= \sum_{m,x} \partial_t\frac{1}{p_t(A=1,G=O)}h(m,0,x)p(m\mid A=0,x,G=E)p(x, G=O)\\
&= \sum_{m,x} \partial_t\frac{1}{p_t(A=1,G=O)}h(m,0,x)p(m\mid A=0,x,G=E)p(x, G=O)\\
&\quad+\sum_{m,x} \frac{1}{p(A=1, G=O)}\partial_t h_t(m,0,x)p(m\mid A=0,x,G=E)p(x, G=O) \\
&\quad+ \sum_{m,x} \frac{1}{p(A=1, G=O)}h(m,0,x)\partial_t p_t(m\mid A=0,x,G=E)p(x, G=O)\\
&\quad+ \sum_{m,x} \frac{1}{p(A=1, G=O)}h(m,0,x)p(m\mid A=0,x,G=E)\partial_t p_t(x, G=O).
\end{aligned}
\end{equation}

For the first term in \eqref{eq:IF:proxy-ett2}, we have
\begin{equation}
\label{eq:IF:proxy-ett2-1}
\begin{aligned}
    &\sum_{m,x} \partial_t\frac{1}{p_t(A=1, G=O)}h(m,0,x)p(m\mid A=0,x,G=E)p(x, G=O)\\
    &=-\sum_{m,x} \frac{1}{p(A=1,G=O)}h(m,0,x)p(m\mid A=0,x,G=E)p(x, G=O)S(A=1, G=O)\\
    &=-\psi_2S(A=1,G=O)\\
    &=-\E\Big[\frac{I(A=1)I(G=O)}{p(A=1,G=O)}\psi_2S(A,G)\Big]\\
    &=-\E\Big[\frac{I(A=1)I(G=O)}{p(A=1,G=O)}\psi_2S(V)\Big]\\
    &=-\E\Big[\frac{I(A=1)I(G=O)}{p(G=O)p(A=1\mid G=O)}\psi_2S(V)\Big].
\end{aligned}
\end{equation}

For the second term in \eqref{eq:IF:proxy-ett2}, we have
\begin{align*}
&\sum_{m,x} \frac{1}{p(A=1,G=O)}\partial_t h_t(m,0,x)p(m\mid A=0,x,G=E)p(x,G=O) \\
&\sum_{m,x} \frac{1}{p(A=1\mid G=O)}\partial_t h_t(m,0,x)p(m\mid A=0,x,G=E)p(x\mid G=O) \\
&=\sum_{m,x} \frac{1}{p(A=1\mid G=O)}\partial_t h_t(m,0,x)\frac{p(m\mid A=0,x,G=E)}{p(m\mid A=0,x,G=O)p(A=0\mid x,G=O)}p(m,A=0,x\mid G=O)\\
&=\sum_{z,m,x} \frac{1}{p(A=1\mid G=O)}\partial_t h_t(m,0,x)q(z,0,x)p(z,m,A=0,x\mid G=O)\\
&=\sum_{z,m,a,x}\frac{I(a=0)}{p(A=1\mid G=O)} \partial_t h_t(m,a,x)q(z,a,x)p(z,m,a,x\mid G=O)\\
&=\E[ \frac{I(A=0)}{p(A=1\mid G=O)}\partial_t h_t(M,A,X)q(Z,A,X)\mid G=O]\\
&=\E\Big[ \frac{I(A=0)}{p(A=1\mid G=O)}\partial_t \E[h_t(M,A,X)\mid Z,A,X,G=O]q(Z,A,X)\Big| G=O\Big].
\end{align*}

Note that by Assumption \ref{assumption:compexist1} $(ii)$
\begin{align*}
&\E[Y-h(M,A,X)\mid Z,A,X,G=O]=0\\
&\Rightarrow\partial_t\E_t[Y-h_t(M,A,X)\mid Z,A,X,G=O]=0\\
&\Rightarrow\E[\partial_t\{Y-h_t(M,A,X)\}\mid Z,A,X,G=O]\\
&\quad+\E[\{Y-h(M,A,X)\}S(Y,M\mid Z,A,X,G=O)\mid Z,A,X,G=O]=0\\
&\Rightarrow\E[\partial_t h_t(M,A,X)\mid Z,A,X,G=O]\\
&\quad\quad=\E[\{Y-h(M,A,X)\}S(Y,M\mid Z,A,X,G=O)\mid Z,A,X,G=O].
\end{align*}
Therefore,
\begin{align*}
    &\sum_{m,x} \frac{1}{p(A=1,G=O)}\partial_t h_t(m,0,x)p(m\mid A=0,x,G=E)p(x,G=O) \\
    &=\E\Big[ \frac{I(A=0)}{p(A=1\mid G=O)}q(Z,A,X)\\
    &\quad\quad \cdot\E[\{Y-h(M,A,X)\}S(Y,M\mid Z,A,X,G=O)\mid Z,A,X,G=O]\Big| G=O\Big]\\
    &=\E\Big[ \frac{I(A=0)}{p(A=1\mid G=O)}q(Z,A,X)\{Y-h(M,A,X)\}S(Y,M\mid Z,A,X,G=O)\Big| G=O\Big]\\
    &=\E\Big[ \frac{I(A=0)I(G=O)}{p(G=O)p(A=1\mid G=O)}q(Z,A,X)\{Y-h(M,A,X)\}S(Y,M\mid Z,A,X,G=O)\Big].
\end{align*}

Also, note that
\begin{align*}
    \E\Big[ \frac{I(A=0)I(G=O)}{p(G=O)p(A=1\mid G=O)}q(Z,A,X)\{Y-h(M,A,X)\}S(Z,A,X,G=O)\Big]=0.
\end{align*}
Therefore,
\begin{equation}
    \label{eq:IF:proxy-ett2-2}
    \begin{aligned}
        &\sum_{m,x} \frac{1}{p(A=1,G=O)}\partial_t h_t(m,0,x)p(m\mid A=0,x,G=E)p(x,G=O) \\
        &=\E\Big[ \frac{I(A=0)I(G=O)}{p(G=O)p(A=1\mid G=O)}q(Z,A,X)\{Y-h(M,A,X)\}S(V)\Big]\\
        &=\E\Big[ \frac{I(A=0)I(G=O)}{p(G=O)p(A=1\mid G=O)}q(Z,0,X)\{Y-h(M,0,X)\}S(V)\Big].
    \end{aligned}
\end{equation}

For the third term in \eqref{eq:IF:proxy-ett2}, we have
{\footnotesize \begin{align*}
&\sum_{m,x} \frac{1}{p(A=1,G=O)}h(m,0,x)\partial_t p_t(m\mid A=0,x,G=E)p(x, G=O)\\
&=\sum_{m,x} \frac{1}{p(A=1,G=O)}h(m,0,x)\partial_t p_t(m\mid A=0,x,G=E)\{\frac{1}{p(G=E\mid x)}-1\}p(x,G=E)\\
&=\sum_{m,x} \frac{1}{p(A=1,G=O)}h(m,0,x)S(m\mid A=0,x,G=E)\{\frac{1}{p(G=E\mid x)}-1\}\frac{p(m,A=0,x,G=E)}{p(A=0\mid x,G=E)}\\
&=\sum_{m,a,x,g} h(m,a,x)S(m\mid a,x,g)\{\frac{1}{p(G=E\mid x)}-1\}\frac{I(a=0)}{p(A=0\mid x,G=E)}\cdot\frac{I(g=E)}{p(A=1,G=O)}p(m,a,x,g)\\
&=\E\Big[\frac{1}{p(A=1,G=O)}\cdot
\frac{I(G=E)I(A=0)}{1-p(A=1\mid X,G=E)}
h(M,A,X)\{\frac{1}{p(G=E\mid X)}-1\}S(M\mid A,X,G)
\Big]\\
&=\E\Big[\frac{1}{p(A=1,G=O)}\cdot
\frac{I(G=E)I(A=0)}{1-p(A=1\mid X,G=E)}\{
h(M,A,X)-\eta(A,X)\}\{\frac{1}{p(G=E\mid X)}-1\}S(M\mid A,X,G)
\Big],
\end{align*}}
where
\begin{align*}
\eta(a,x)\coloneqq&\E[h(M,A,X)\mid A=a,X=x,G=E]\\
=&\sum_m h(m,a,x)p(m\mid a,x,G=E)\\
=&\sum_{z,m,\tilde{a}} I(\tilde{a}=a)h(m,\tilde{a},x)q(z,\tilde{a},x)p(z,m,\tilde{a}\mid x,G=E)\\
=&\E[ I(A=a)h(M,A,X)q(Z,A,X)\mid X=x,G=E ].
\end{align*}
Note that
{\footnotesize
\begin{align*}
\E\Big[\frac{1}{p(A=1,G=O)}\cdot
\frac{I(G=E)I(A=0)}{1-p(A=1\mid X,G=E)}\{
h(M,A,X)-\eta(A,X)\}\{\frac{1}{p(G=E\mid X)}-1\}S(A,X,G)
\Big]=0.
\end{align*}}
Therefore,
{\footnotesize\begin{equation}
\label{eq:IF:proxy-ett3}
\begin{aligned}
&\sum_{m,x} \frac{1}{p(A=1,G=O)}h(m,0,x)\partial_t p_t(m\mid A=0,x,G=E)p(x, G=O)\\
&=\E\Big[\frac{1}{p(A=1,G=O)}\cdot
\frac{I(G=E)I(A=0)}{1-p(A=1\mid X,G=E)}\{
h(M,A,X)-\eta(A,X)\}\{\frac{1}{p(G=E\mid X)}-1\}S(V)
\Big].
\end{aligned}
\end{equation}}

For the fourth term in \eqref{eq:IF:proxy-ett2}, we have
\begin{align*}
&\sum_{m,x} \frac{1}{p(A=1,G=O)}h(m,0,x)p(m\mid A=0,x,G=E)\partial_t p_t(x,G=O)\\
&=\sum_{x}\sum_m \frac{1}{p(A=1, G=O)}h(m,0,x)p(m\mid A=0,x,G=E)S(x, G=O)p(x, G=O)\\
&=\sum_{x,g}\frac{I(g=O)}{p(G=O)p(A=1\mid G=O)}\eta(0,x)S(x, g)p(x,g)\\
&=\E[\frac{I(G=O)}{p(G=O)p(A=1\mid G=O)}\eta(0,X)S(X, G)]\\
&=\E[\frac{I(G=O)}{p(G=O)p(A=1\mid G=O)}\eta(0,X)S(V)].
\end{align*}
Therefore,
\begin{equation}
\label{eq:IF:proxy-ett4}
\begin{aligned}
&\sum_{m,x}\frac{1}{p(A=1,G=O)}h(m,0,x)p(m\mid A=0,x,G=E)\partial_t p_t(x,G=O)\\
&=\E[\frac{I(G=O)}{p(G=O)p(A=1\mid G=O)}\eta(0,X)S(V)].
\end{aligned}
\end{equation}

Combining \eqref{eq:IF:proxy-ett1}-\eqref{eq:IF:proxy-ett4} concludes that
\begin{align*}
\partial_t\psi_{\text{ETT}}^{\text{proxy}}&=\E
\Big[\Big\{\frac{1}{p(G=O)p(A=1\mid G=O)}\Big\{I(G=O)\big\{Y-I(A=0)q(Z,0,X)\{Y-h(M,0,X)\}\\
&\hspace{90mm}-\eta(0,X)-I(A=1)\psi_{\text{ETT}}^{\text{proxy}}\big\}\\
&\quad-\frac{I(G=E)I(A=0)}{1-p(A=1\mid X,G=E)}
\{h(M,0,X)-\eta(0,X)\}\{\frac{1}{p(G=E\mid X)}-1\}\Big\}
\Big\}S(V)\Big].
\end{align*}

Therefore, 
\begin{align*}
&\frac{1}{p(G=O)p(A=1\mid G=O)}\Big\{I(G=O)\big\{Y-I(A=0)q(Z,0,X)\{Y-h(M,0,X)\}\\
&\hspace{90mm}-\eta(0,X)-I(A=1)\psi_{\text{ETT}}^{\text{proxy}}\big\}\\
&\quad-\frac{I(G=E)I(A=0)}{1-p(A=1\mid X,G=E)}
\{h(M,0,X)-\eta(0,X)\}\{\frac{1}{p(G=E\mid X)}-1\}\Big\}
\end{align*}
is the influence function of $\psi_{\text{ETT}}^{\text{proxy}}$.
\end{proof}

\begin{proof}[Proof of Theorem \ref{thm:IF:proximal}]

Recall the definition that for $a\in\{0,1\}$ 
\[
\psi^a=
\E\big[\E[h(M,A,X)\mid A=a,X,G=E]\big|  G=O\big].
\]
We use the notation $\partial_tf(t)$ to denote $\frac{\partial f(t)}{\partial t}\big|_{t=0}$.
For parameter $\psi^a$, let $\psi^a_t$ be the parameter under a regular parametric sub-model indexed by $t$, that includes the ground-truth model at $t=0$.
In order to obtain an influence function, we need to find a random variable $\Gamma$ with mean zero, that satisfies
\[
\partial_t\psi^a_t=\E[\Gamma S(V)],
\]
where $S(V)=\partial_t\log p_t(V)$.

Note that
\begin{equation}
\label{eq:IF:proxa1}
\begin{aligned}
\partial_t\psi^a_t
&=\partial_t \sum_{m,x} h_t(m,a,x)p_t(m\mid a,x,G=E)p_t(x\mid G=O)\\
&= \sum_{m,x} \partial_t h_t(m,a,x)p(m\mid a,x,G=E)p(x\mid G=O) \\
&\quad+ \sum_{m,x} h(m,a,x)\partial_t p_t(m\mid a,x,G=E)p(x\mid G=O)\\
&\quad+ \sum_{m,x} h(m,a,x)p(m\mid a,x,G=E)\partial_t p_t(x\mid G=O).
\end{aligned}
\end{equation}

For the first term in \eqref{eq:IF:proxa1}, we have
\begin{align*}
&\sum_{m,x} \partial_t h_t(m,a,x)p(m\mid a,x,G=E)p(x\mid G=O) \\
&=\sum_{m,x} \partial_t h_t(m,a,x)\frac{p(m\mid a,x,G=E)}{p(m\mid a,x,G=O)p(a\mid x,G=O)}p(m,a,x\mid G=O)\\
&=\sum_{z,m,x} \partial_t h_t(m,a,x)q(z,a,x)p(z,m,a,x\mid G=O)\\
&=\sum_{z,m,\tilde{a},x}I(\tilde{a}=a) \partial_t h_t(m,\tilde{a},x)q(z,\tilde{a},x)p(z,m,\tilde{a},x\mid G=O)\\
&=\E[ I(A=a)\partial_t h_t(M,A,X)q(Z,A,X)\mid G=O]\\
&=\E\big[ I(A=a)\partial_t \E[h_t(M,A,X)\mid Z,A,X,G=O]q(Z,A,X)\big| G=O\big].
\end{align*}
Note that by Assumption \ref{assumption:compexist1} $(ii)$
\begin{align*}
&\E[Y-h(M,A,X)\mid Z,A,X,G=O]=0\\
&\Rightarrow\partial_t\E_t[Y-h_t(M,A,X)\mid Z,A,X,G=O]=0\\
&\Rightarrow\E[\partial_t\{Y-h_t(M,A,X)\}\mid Z,A,X,G=O]\\
&\quad+\E[\{Y-h(M,A,X)\}S(Y,M\mid Z,A,X,G=O)\mid Z,A,X,G=O]=0\\
&\Rightarrow\E[\partial_t h_t(M,A,X)\mid Z,A,X,G=O]\\
&\quad\quad=\E[\{Y-h(M,A,X)\}S(Y,M\mid Z,A,X,G=O)\mid Z,A,X,G=O].
\end{align*}
Therefore,
\begin{align*}
&\sum_{m,x} \partial_t h_t(m,a,x)p(m\mid a,x,G=E)p(x\mid G=O) \\
&=\E\big[ I(A=a)q(Z,A,X)\E[\{Y-h(M,A,X)\}S(Y,M\mid Z,A,X,G=O)\mid Z,A,X,G=O]\big| G=O\big]\\
&=\E\big[ I(A=a)q(Z,A,X)\{Y-h(M,A,X)\}S(Y,M\mid Z,A,X,G=O)\big| G=O\big]\\
&=\E\big[ I(A=a)\frac{I(G=O)}{p(G=O)}q(Z,A,X)\{Y-h(M,A,X)\}S(Y,M\mid Z,A,X,G) \big].
\end{align*}
Also, note that
\begin{align*}
\E\big[ I(A=a)\frac{I(G=O)}{p(G=O)}q(Z,A,X)\{Y-h(M,A,X)\}S(Z,A,X,G) \big]=0.
\end{align*}
Therefore, 
{\footnotesize
\begin{equation}
\label{eq:eq:IF:proxa2}
\begin{aligned}
\sum_{m,x} \partial_t h_t(m,a,x)p(m\mid a,x,G=E)p(x\mid G=O)=
\E\big[ I(A=a)\frac{I(G=O)}{p(G=O)}q(Z,A,X)\{Y-h(M,A,X)\}S(V) \big].
\end{aligned}
\end{equation}
}

For the second term in \eqref{eq:IF:proxa1}, we have
{\footnotesize \begin{align*}
&\sum_{m,x} h(m,a,x)\partial_t p_t(m\mid a,x,G=E)p(x\mid G=O)\\
&=\sum_{m,x} h(m,a,x)\partial_t p_t(m\mid a,x,G=E)\{\frac{1}{p(G=E\mid x)}-1\}\frac{1}{p(G=O)}p(x,G=E)\\
&=\sum_{m,x} h(m,a,x)S(m\mid a,x,G=E)\{\frac{1}{p(G=E\mid x)}-1\}\frac{1}{p(A=a\mid x,G=E)}\cdot\frac{1}{p(G=O)}p(m,a,x,G=E)\\
&=\sum_{m,\tilde{a},x,g} h(m,\tilde{a},x)S(m\mid \tilde{a},x,g)\{\frac{1}{p(G=E\mid x)}-1\}\frac{I(\tilde{a}=a)}{p(A=a\mid x,G=E)}\cdot\frac{I(g=E)}{p(G=O)}p(m,\tilde{a},x,g)\\
&=\E\Big[
\frac{I(A=a)}{p(A=a\mid X,G=E)}\cdot\frac{I(G=E)}{p(G=O)}
h(M,A,X)\{\frac{1}{p(G=E\mid X)}-1\}S(M\mid A,X,G)
\Big]\\
&=\E\Big[
\frac{I(A=a)}{p(A=a\mid X,G=E)}\cdot\frac{I(G=E)}{p(G=O)}
\{h(M,A,X)-\eta(A,X)\}\{\frac{1}{p(G=E\mid X)}-1\}S(M\mid A,X,G)
\Big],
\end{align*}}
where
\begin{align*}
\eta(a,x)\coloneqq&\E[h(M,A,X)\mid A=a,X=x,G=E]\\
=&\sum_m h(m,a,x)p(m\mid a,x,G=E)\\
=&\sum_{z,m,\tilde{a}} I(\tilde{a}=a)h(m,\tilde{a},x)q(z,\tilde{a},x)p(z,m,\tilde{a}\mid x,G=E)\\
=&\E[ I(A=a)h(M,A,X)q(Z,A,X)\mid X=x,G=E ].
\end{align*}
Note that
\begin{align*}
\E\Big[
\frac{I(A=a)}{p(A=a\mid X,G=E)}\cdot\frac{I(G=E)}{p(G=O)}
\{h(M,A,X)-\eta(A,X)\}\{\frac{1}{p(G=E\mid X)}-1\}S(A,X,G)
\Big]=0.
\end{align*}
Therefore,
\begin{equation}
\label{eq:IF:proxa3}
\begin{aligned}
&\sum_{m,x} h(m,a,x)\partial_t p_t(m\mid a,x,G=E)p(x\mid G=O)\\
&=\E\Big[
\frac{I(A=a)}{p(A=a\mid X,G=E)}\cdot\frac{I(G=E)}{p(G=O)}
\{h(M,A,X)-\eta(A,X)\}\{\frac{1}{p(G=E\mid X)}-1\}S(V)
\Big].
\end{aligned}
\end{equation}

For the third term in \eqref{eq:IF:proxa1}, we have
\begin{align*}
&\sum_{m,x} h(m,a,x)p(m\mid a,x,G=E)\partial_t p_t(x\mid G=O)\\
&=\sum_{x}\sum_m h(m,a,x)p(m\mid a,x,G=E)S(x\mid G=O)p(x\mid G=O)\\
&=\sum_{x,g}\frac{I(g=O)}{p(G=O)}\eta(a,x)S(x\mid g)p(x,g)\\
&=\E[\frac{I(G=O)}{p(G=O)}\eta(a,X)S(X\mid G)]\\
&=\E[\frac{I(G=O)}{p(G=O)}\{\eta(a,X)-\E[\eta(a,X)\mid G=O]\}S(X\mid G)]\\
&=\E[\frac{I(G=O)}{p(G=O)}\{\eta(a,X)-\psi^a\}S(X\mid G)].
\end{align*}
Note that
\begin{align*}
\E[\frac{I(G=O)}{p(G=O)}\{\eta(a,X)-\psi^a\}S(G)]=0.
\end{align*}
Therefore,
\begin{equation}
\label{eq:IF:proxa4}
\begin{aligned}
&\sum_{m,x} h(m,a,x)p(m\mid a,x,G=E)\partial_t p_t(x\mid G=O)
=\E[\frac{I(G=O)}{p(G=O)}\{\eta(a,X)-\psi^a\}S(V)].
\end{aligned}
\end{equation}

Combining \eqref{eq:IF:proxa1}-\eqref{eq:IF:proxa4} concludes that
\begin{align*}
\partial_t\psi^a_t&=\E
\Big[\Big\{
\frac{I(G=O)}{p(G=O)}I(A=a)q(Z,A,X)\{Y-h(M,A,X)\}\\
&\qquad+\frac{I(G=E)}{p(G=O)}\cdot\frac{I(A=a)}{p(A=a\mid X,G=E)}
\{h(M,A,X)-\eta(A,X)\}\{\frac{1}{p(G=E\mid X)}-1\}\\
&\qquad+\frac{I(G=O)}{p(G=O)}\{\eta(a,X)-\psi^a\}
\Big\}S(V)\Big].
\end{align*}

Therefore, 
\begin{align*}
&\frac{I(G=O)}{p(G=O)}I(A=a)q(Z,A,X)\{Y-h(M,A,X)\}\\
&+\frac{I(G=E)}{p(G=O)}\cdot\frac{I(A=a)}{p(A=a\mid X,G=E)}
\{h(M,A,X)-\eta(A,X)\}\{\frac{1}{p(G=E\mid X)}-1\}\\
&+\frac{I(G=O)}{p(G=O)}\{\eta(a,X)-\psi^a\}
\end{align*}
is the influence function of $\psi^a$.

\end{proof}

\begin{proof}[Proof of Proposition \ref{prop:DR:proximal_ett}]
First, suppose the pair $\{h,p(m\mid A=0,x,G=E)\}$ is correctly specified. We have
{\footnotesize \begin{align*}
    &\E\Big[\frac{1}{p(G=O)p(A=1\mid G=O)}\Big\{I(G=O)\big\{Y-I(A=0)\hat{q}(Z,0,X)\{Y-\hat{h}(M,0,X)\}-\hat{\eta}(0,X)\big\}\\
&\quad-\frac{I(G=E)I(A=0)}{1-\hat{p}(A=1\mid X,G=E)}
\{\hat{h}(M,0,X)-\hat{\eta}(0,X)\}\{\frac{1}{\hat{p}(G=E\mid X)}-1\}\Big\}\Big]\\
&=\E\Big[\frac{1}{p(G=O)p(A=1\mid G=O)}\Big\{I(G=O)\big\{Y-\\
&\quad I(A=0)\hat{q}(Z,0,X)\underbrace{\E[Y-\hat{h}(M,0,X)\mid Z,A=0,G=O]}_{=0}-\hat{\eta}(0,X)\big\}\\
&\quad-\frac{I(G=E)I(A=0)}{1-\hat{p}(A=1\mid X,G=E)}
\underbrace{\E[\hat{h}(M,0,X)-\hat{\eta}(0,X)\mid A=0,X,G=E]}_{=0}\{\frac{1}{\hat{p}(G=E\mid X)}-1\}\Big\}\Big]\\
&=\E\Big[\frac{1}{p(G=O)p(A=1\mid G=O)}\Big\{I(G=O)\big\{Y-\hat{\eta}(0,X)\big\}\Big]\\
&=\theta_{\text{ETT}}.
\end{align*}}

Second, suppose the set $\{h,p(A=1\mid x,G=E),p(G=E\mid x)\}$ is correctly specified. We have
\begingroup
\allowdisplaybreaks
{\footnotesize \begin{align*}
&\E\Big[\frac{1}{p(G=O)p(A=1\mid G=O)}\Big\{I(G=O)\big\{Y-I(A=0)\hat{q}(Z,0,X)\{Y-\hat{h}(M,0,X)\}-\hat{\eta}(0,X)\big\}\\
&\quad-\frac{I(G=E)I(A=0)}{1-\hat{p}(A=1\mid X,G=E)}
\{\hat{h}(M,0,X)-\hat{\eta}(0,X)\}\{\frac{1}{\hat{p}(G=E\mid X)}-1\}\Big\}\Big]\\
&=\E\Big[\frac{1}{p(G=O)p(A=1\mid G=O)}\Big\{I(G=O)\big\{Y-I(A=0)\hat{q}(Z,0,X)\underbrace{\E[Y-\hat{h}(M,0,X)\mid Z,A=0,G=O]}_{=0}\big\}\\
&\quad-\frac{I(G=E)I(A=0)}{1-\hat{p}(A=1\mid X,G=E)}
\hat{h}(M,0,X)\{\frac{1}{\hat{p}(G=E\mid X)}-1\}\\
&\quad-\underbrace{\big\{I(G=O)-\frac{I(G=E)I(A=0)}{1-\hat{p}(A=1\mid X,G=E)}\{\frac{1}{\hat{p}(G=E\mid X)}-1\}\big\}}_{=0}\hat{\eta}(0,X)\Big\}\Big]\\
&=\E\Big[\frac{1}{p(G=O)p(A=1\mid G=O)}\Big\{I(G=O)Y-\frac{I(G=E)I(A=0)}{1-\hat{p}(A=1\mid X,G=E)}
\hat{h}(M,0,X)\{\frac{1}{\hat{p}(G=E\mid X)}-1\}\Big\}\Big]\\
&=\theta_{\text{ETT}}.
\end{align*}}
\endgroup

Third, suppose the set $\{q,p(A=1\mid x,G=E),p(G=E\mid x)\}$ is correctly specified. We have
{\footnotesize \begin{align*}
&\E\Big[\frac{1}{p(G=O)p(A=1\mid G=O)}\Big\{I(G=O)\big\{Y-I(A=0)\hat{q}(Z,0,X)\{Y-\hat{h}(M,0,X)\}-\hat{\eta}(0,X)\big\}\\
&\quad-\frac{I(G=E)I(A=0)}{1-\hat{p}(A=1\mid X,G=E)}
\{\hat{h}(M,0,X)-\hat{\eta}(0,X)\}\{\frac{1}{\hat{p}(G=E\mid X)}-1\}\Big\}\Big]\\
&=\E\Big[\frac{1}{p(G=O)p(A=1\mid G=O)}\Big\{I(G=O)\big\{Y-I(A=0)\hat{q}(Z,0,X)Y+I(A=0)\hat{q}(Z,0,X)\hat{h}(M,0,X)\big\}\\
&\quad-\frac{I(G=E)I(A=0)}{1-\hat{p}(A=1\mid X,G=E)}
\hat{h}(M,0,X)\{\frac{1}{\hat{p}(G=E\mid X)}-1\}\\
&\quad-\underbrace{\big\{I(G=O)-\frac{I(G=E)I(A=0)}{1-\hat{p}(A=1\mid X,G=E)}\{\frac{1}{\hat{p}(G=E\mid X)}-1\}\big\}}_{=0}\hat{\eta}(0,X)\Big\}\Big]\\
&=\E\Big[\frac{1}{p(G=O)p(A=1\mid G=O)}\Big\{I(G=O)\big\{Y-I(A=0)\hat{q}(Z,0,X)Y+I(A=0)\hat{q}(Z,0,X)\hat{h}(M,0,X)\big\}\\
&\quad-\frac{I(G=E)I(A=0)}{1-\hat{p}(A=1\mid X,G=E)}
\hat{h}(M,0,X)\{\frac{1}{\hat{p}(G=E\mid X)}-1\}\Big\}\Big]\\
\end{align*}}

Note that
\begin{align*}
&\E\Big[\frac{I(G=O)}{p(G=O)p(A=1\mid G=O)}I(A=0)\hat{q}(Z,0,X)\hat{h}(M,0,X)\Big]\\
&\E\Big[\frac{I(G=O)I(A=0)}{p(A=1,G=O)}\E[\hat{q}(Z,0,X)\mid M,A=0,X,G=O]\hat{h}(M,0,X)\Big]\\
&=\frac{1}{p(A=1,G=O)}\sum_{m,x}\frac{p(m\mid A=0,x,G=E)}{p(m\mid A=0,x,G=O)p(A=0\mid x,G=O)}\hat{h}(m,0,x)p(m,A=0,x,G=O)\\
&=\frac{1}{p(A=1,G=O)}\sum_{m,x}p(m\mid A=0,x,G=E)p(x,G=O)\hat{h}(m,0,x)\\
&=\frac{1}{p(A=1,G=O)}\sum_{m,x}\frac{1}{p(A=0,x,G=E)}p(x,G=O)\hat{h}(m,0,x)p(m,A=0,x,G=E)\\
&=\frac{1}{p(A=1,G=O)}\sum_{m,x}\frac{1}{p(A=0\mid x,G=E)}\frac{p(G=O\mid x)}{p(G=E\mid x)}\hat{h}(m,0,x)p(m,A=0,x,G=E)\\
&=\E\Big[\frac{I(G=E)}{p(A=1,G=O)}\cdot\frac{I(A=0)}{1-\hat{p}(A=1\mid X,G=E)}\cdot\{\frac{1}{\hat{p}(G=E\mid X)}-1\}\hat{h}(M,0,X)\Big].
\end{align*}

Therefore,
\begin{align*}
&\E\Big[\frac{1}{p(G=O)p(A=1\mid G=O)}\Big\{I(G=O)\big\{Y-I(A=0)\hat{q}(Z,0,X)\{Y-\hat{h}(M,0,X)\}-\hat{\eta}(0,X)\big\}\\
&\quad-\frac{I(G=E)I(A=0)}{1-\hat{p}(A=1\mid X,G=E)}
\{\hat{h}(M,0,X)-\hat{\eta}(0,X)\}\{\frac{1}{\hat{p}(G=E\mid X)}-1\}\Big\}\Big]\\
&=\E\Big[\frac{I(G=O)}{p(G=O)p(A=1\mid G=O)}\big\{Y-I(A=0)\hat{q}(Z,0,X)Y\Big]\\
&=\theta_{\text{ETT}}.
\end{align*}
\end{proof}

\begin{proof}[Proof of Proposition \ref{prop:DR:proximal_ate}]
First, suppose the pair $\{h,p(m\mid a,x,G=E)\}$ is correctly specified. We have
\begingroup
\allowdisplaybreaks
{\footnotesize \begin{align*}
&\E\Big[\frac{I(G=O)}{p(G=O)}I(A=a)\hat{q}(Z,a,X)\{Y-\hat{h}(M,a,X)\}\\
&\quad+\frac{I(G=E)}{p(G=O)}\cdot\frac{I(A=a)}{1-a+(-1)^{1-a}\hat{p}(A=1\mid X,G=E)}
\{\hat{h}(M,a,X)-\hat{\eta}(a,X)\}\{\frac{1}{\hat{p}(G=E\mid X)}-1\}\\
&\quad+\frac{I(G=O)}{p(G=O)}\hat{\eta}(a,X)\Big]\\
&=\E\Big[\frac{I(G=O)}{p(G=O)}I(A=a)\hat{q}(Z,a,X)\underbrace{\E[Y-\hat{h}(M,a,X)\mid Z,A=a,G=O]}_{=0}\\
&\quad+\frac{I(G=E)}{p(G=O)}\cdot\frac{I(A=a)}{1-a+(-1)^{1-a}\hat{p}(A=1\mid X,G=E)}
\{\hat{h}(M,a,X)-\hat{\eta}(a,X)\}\{\frac{1}{\hat{p}(G=E\mid X)}-1\}\\
&\quad+\frac{I(G=O)}{p(G=O)}\hat{\eta}(a,X)\Big]\\
&=\E\Big[
\frac{I(G=E)}{p(G=O)}\cdot\frac{I(A=a)}{1-a+(-1)^{1-a}\hat{p}(A=1\mid X,G=E)}
\underbrace{\E[\hat{h}(M,a,X)-\hat{\eta}(a,X)\mid A=a,X,G=E]}_{=0}\{\frac{1}{\hat{p}(G=E\mid X)}-1\}\\
&\quad+\frac{I(G=O)}{p(G=O)}\hat{\eta}(a,X)\Big]\\
&=\E\Big[
\frac{I(G=O)}{p(G=O)}\hat{\eta}(a,X)\Big]\\
&=\theta^{(a)}.
\end{align*}}
\endgroup

Second, suppose the set $\{h,p(A=1\mid x,G=E),p(G=E\mid x)\}$ is correctly specified. We have
\begingroup
\allowdisplaybreaks
{\footnotesize \begin{align*}
&\E\Big[\frac{I(G=O)}{p(G=O)}I(A=a)\hat{q}(Z,a,X)\{Y-\hat{h}(M,a,X)\}\\
&\quad+\frac{I(G=E)}{p(G=O)}\cdot\frac{I(A=a)}{1-a+(-1)^{1-a}\hat{p}(A=1\mid X,G=E)}
\{\hat{h}(M,a,X)-\hat{\eta}(a,X)\}\{\frac{1}{\hat{p}(G=E\mid X)}-1\}\\
&\quad+\frac{I(G=O)}{p(G=O)}\hat{\eta}(a,X)\Big]\\
&=\E\Big[\frac{I(G=O)}{p(G=O)}I(A=a)\hat{q}(Z,a,X)\underbrace{\E[Y-\hat{h}(M,a,X)\mid Z,A=a,G=O]}_{=0}\\
&\quad+\frac{I(G=E)}{p(G=O)}\cdot\frac{I(A=a)}{1-a+(-1)^{1-a}\hat{p}(A=1\mid X,G=E)}
\{\hat{h}(M,a,X)-\hat{\eta}(a,X)\}\{\frac{1}{\hat{p}(G=E\mid X)}-1\}\\
&\quad+\frac{I(G=O)}{p(G=O)}\hat{\eta}(a,X)\Big]\\
&=\E\Big[
\frac{I(G=E)}{p(G=O)}\cdot\frac{I(A=a)}{1-a+(-1)^{1-a}\hat{p}(A=1\mid X,G=E)}
\hat{h}(M,a,X)\{\frac{1}{\hat{p}(G=E\mid X)}-1\}\\
&\quad+\underbrace{\E\big[\frac{I(G=O)}{p(G=O)}-\frac{I(G=E)}{p(G=O)}\cdot\frac{I(A=a)}{1-a+(-1)^{1-a}\hat{p}(A=1\mid X,G=E)}
\cdot\{\frac{1}{\hat{p}(G=E\mid X)}-1\}\big| X\big]}_{=0}\hat{\eta}(a,X)\Big]\\
&=\E\Big[
\frac{I(G=E)}{p(G=O)}\cdot\frac{I(A=a)}{1-a+(-1)^{1-a}\hat{p}(A=1\mid X,G=E)}
\hat{h}(M,a,X)\{\frac{1}{\hat{p}(G=E\mid X)}-1\}\Big]\\
&=\theta^{(a)}.
\end{align*}}
\endgroup

Third, suppose the set $\{q,p(A=1\mid x,G=E),p(G=E\mid x)\}$ is correctly specified. We have
{\footnotesize \begin{align*}
&\E\Big[\frac{I(G=O)}{p(G=O)}I(A=a)\hat{q}(Z,a,X)\{Y-\hat{h}(M,a,X)\}\\
&\quad+\frac{I(G=E)}{p(G=O)}\cdot\frac{I(A=a)}{1-a+(-1)^{1-a}\hat{p}(A=1\mid X,G=E)}
\{\hat{h}(M,a,X)-\hat{\eta}(a,X)\}\{\frac{1}{\hat{p}(G=E\mid X)}-1\}\\
&\quad+\frac{I(G=O)}{p(G=O)}\hat{\eta}(a,X)\Big]\\
&=\E\Big[\frac{I(G=O)}{p(G=O)}I(A=a)\hat{q}(Z,a,X)Y\\
&\quad+\Big\{\frac{I(G=E)}{p(G=O)}\cdot\frac{I(A=a)}{1-a+(-1)^{1-a}\hat{p}(A=1\mid X,G=E)}\cdot\{\frac{1}{\hat{p}(G=E\mid X)}-1\}-\frac{I(G=O)}{p(G=O)}I(A=a)\hat{q}(Z,a,X)\Big\}\hat{h}(M,a,X)\\
&\quad+\underbrace{\E\big[\frac{I(G=O)}{p(G=O)}-\frac{I(G=E)}{p(G=O)}\cdot\frac{I(A=a)}{1-a+(-1)^{1-a}\hat{p}(A=1\mid X,G=E)}
\cdot\{\frac{1}{\hat{p}(G=E\mid X)}-1\}\big| X\big]}_{=0}\hat{\eta}(a,X)\Big].
\end{align*}}

Note that
\begin{align*}
&\E\Big[\frac{I(G=O)}{p(G=O)}I(A=a)\hat{q}(Z,a,X)\hat{h}(M,a,X)\Big]\\
&\E\Big[\frac{I(G=O)}{p(G=O)}I(A=a)\E[\hat{q}(Z,a,X)\mid M,A=a,X,G=O]\hat{h}(M,a,X)\Big]\\
&=\frac{1}{p(G=O)}\sum_{m,x}\frac{p(m\mid a,x,G=E)}{p(m\mid a,x,G=O)p(a\mid x,G=O)}\hat{h}(m,a,x)p(m,a,x,G=O)\\
&=\frac{1}{p(G=O)}\sum_{m,x}p(m\mid a,x,G=E)p(x,G=O)\hat{h}(m,a,x)\\
&=\frac{1}{p(G=O)}\sum_{m,x}\frac{1}{p(a,x,G=E)}p(x,G=O)\hat{h}(m,a,x)p(m,a,x,G=E)\\
&=\frac{1}{p(G=O)}\sum_{m,x}\frac{1}{p(a\mid x,G=E)}\frac{p(G=O\mid x)}{p(G=E\mid x)}\hat{h}(m,a,x)p(m,a,x,G=E)\\
&=\E\Big[\frac{I(G=E)}{p(G=O)}\cdot\frac{I(A=a)}{1-a+(-1)^{1-a}\hat{p}(A=1\mid X,G=E)}\cdot\{\frac{1}{\hat{p}(G=E\mid X)}-1\}\hat{h}(M,a,X)\Big].
\end{align*}

Therefore,
\begin{align*}
&\E\Big[\frac{I(G=O)}{p(G=O)}I(A=a)\hat{q}(Z,a,X)\{Y-\hat{h}(M,a,X)\}\\
&\quad+\frac{I(G=E)}{p(G=O)}\cdot\frac{I(A=a)}{1-a+(-1)^{1-a}\hat{p}(A=1\mid X,G=E)}
\{\hat{h}(M,a,X)-\hat{\eta}(a,X)\}\{\frac{1}{\hat{p}(G=E\mid X)}-1\}\\
&\quad+\frac{I(G=O)}{p(G=O)}\hat{\eta}(a,X)\Big]\\
&=\E\Big[\frac{I(G=O)}{p(G=O)}I(A=a)\hat{q}(Z,a,X)Y\Big]\\
&=\theta^{(a)}.
\end{align*}

\end{proof}

\begin{proof}[Proof of Proposition \ref{prop:qcondmomeq}]
\begin{align*}
&\E[q(Z,A,X)\mid M,A,X,G=O]=\frac{p(M\mid A,X,G=E)}{p(M\mid A,X,G=O)p(A\mid X,G=O)}\\
&\Rightarrow\sum_z q(z,a,x)p(z, m,a\mid x,G=O)=p(m\mid a,x,G=E)\\
&\Rightarrow\sum_z q(z,a,x)\frac{p(z,G=O\mid m,a,x)p(m,a\mid x)}{p(G=O\mid x)}=\frac{p(G=E\mid m,a, x)p(m,a\mid x)}{p(a,G=E\mid x)}\\
&\Rightarrow\sum_{z,g}I(g=O) q(z,a,x)\frac{p(z,g\mid m,a,x)}{p(G=O\mid x)}=\sum_gI(g=E)\frac{p(g\mid m,a, x)}{p(a,G=E\mid x)}\\
&\Rightarrow\E\Big[ \frac{I(G=O)}{p(G=O\mid X)}q(Z,A,X)-\frac{I(G=E)}{p(A,G=E\mid X)}\Big| M,A,X \Big]=0.
\end{align*}

\end{proof}

\end{document}